  \providecommand\BibTeX{{%
    \normalfont B\kern-0.5em{\scshape i\kern-0.25em b}\kern-0.8em\TeX}}}
\newcommand{\sref}[2]{\hyperref[#2]{#1 \ref{#2}}}
\newcommand{\algmargin}{\the\ALG@thistlm}
\newlength{\whilewidth}
\algnewcommand{\parState}[1]{\State%
	\parbox[t]{\dimexpr\linewidth-\algmargin}{\strut #1\strut}}
\patchcmd{\hyper@makecurrent}{%
    \ifx\Hy@param\Hy@chapterstring
        \let\Hy@param\Hy@chapapp
    \fi
}{%
    \iftoggle{inappendix}{%
        \@checkappendixparam{chapter}%
        \@checkappendixparam{section}%
        \@checkappendixparam{subsection}%
        \@checkappendixparam{subsubsection}%
        \@checkappendixparam{paragraph}%
        \@checkappendixparam{subparagraph}%
    }{}%
}{}{\errmessage{failed to patch}}
\newcommand*{\@checkappendixparam}[1]{%
    \def\@checkappendixparamtmp{#1}%
    \ifx\Hy@param\@checkappendixparamtmp
        \let\Hy@param\Hy@appendixstring
    \fi
}
\apptocmd{\appendix}{\toggletrue{inappendix}}{}{\errmessage{failed to patch}}
\apptocmd{\subappendices}{\toggletrue{inappendix}}{}{\errmessage{failed to patch}}
\DeclareMathOperator*{\argmin}{arg\,min}
\newcommand{\OSDM}{\texttt{OSDM}\xspace}
\newcommand{\OSDMS}{\texttt{OSDM-S}\xspace}
\newcommand{\OSDMT}{\texttt{OSDM-T}\xspace}
\newcommand{\PAAD}{\texttt{PAAD}\xspace}
\newcommand{\PALD}{\texttt{PALD}\xspace}
\newcommand{\PALDS}{\texttt{PALD-S}\xspace}
\newcommand{\PALDC}{\texttt{PALD-C}\xspace}
\newcommand{\ALG}{\texttt{ALG}\xspace}
\newcommand{\OPT}{\texttt{OPT}\xspace}
\newcommand{\ADV}{\texttt{ADV}\xspace}
\newcommand{\OCS}{\texttt{OCS}\xspace}
\newcommand{\RORO}{\texttt{RORO}\xspace}
\newcommand{\instance}{\texttt{instance}\xspace}
\colorlet{blue}{black}
\begin{document}

\title{Online Smoothed Demand Management}

\author{Adam Lechowicz}
\affiliation{%
  \institution{University of Massachusetts Amherst}
  \country{USA}
}
\email{alechowicz@cs.umass.edu}

\author{Nicolas Christianson}
\affiliation{%
  \institution{Stanford University}
  \country{USA}
}
\email{christianson@stanford.edu}

\author{Mohammad Hajiesmaili}
\affiliation{%
  \institution{University of Massachusetts Amherst}
  \country{USA}
}
\email{hajiesmaili@cs.umass.edu}

\author{Adam Wierman}
\affiliation{%
  \institution{California Institute of Technology}
  \country{USA}
}
\email{adamw@caltech.edu}

\author{Prashant Shenoy}
\affiliation{%
  \institution{University of Massachusetts Amherst}
  \country{USA}
}
\email{shenoy@cs.umass.edu}
\renewcommand{\shortauthors}{Lechowicz et al.}

\begin{abstract}

We introduce and study a class of online problems called online smoothed demand management $(\texttt{OSDM})$, motivated by paradigm shifts in grid integration and energy storage for large energy consumers such as data centers.  In $\texttt{OSDM}$, an operator makes two decisions at each time step: an amount of energy to be purchased, and an amount of energy to be delivered (i.e., used for computation).  The difference between these decisions charges (or discharges) the operator's energy storage (e.g., a battery).  Two types of demand arrive online: base demand, which must be covered at the current time, and flexible demand, which can be satisfied at any time steps before a demand-specific deadline $\Delta_t$.  The operator's goal is to minimize a cost (subject to the constraints above) that combines a cost of purchasing energy, a cost for delivering energy (if applicable), and smoothness penalties on the purchasing and delivery rates to discourage fluctuations and encourage ``grid healthy'' decisions.  
$\texttt{OSDM}$ generalizes several problems in the online algorithms literature while being the first to fully model applications of interest.  We propose a competitive algorithm for $\texttt{OSDM}$ called $\texttt{PAAD}$ (partitioned accounting \& aggregated decisions) and show it achieves the optimal competitive ratio.  To overcome the pessimism typical of worst-case analysis, we also propose a novel learning framework for $\texttt{OSDM}$ that provides guarantees on the worst-case competitive ratio (i.e., to provide robustness against nonstationarity) while allowing end-to-end differentiable learning of the best algorithm on historical instances of the problem.  We evaluate our algorithms in a case study of a grid-integrated data center with battery storage, showing that $\texttt{PAAD}$ effectively solves the problem and end-to-end learning achieves substantial performance improvements compared to $\texttt{PAAD}$.

\end{abstract}

\maketitle

\section{Introduction}
\label{sec:intro}
The proliferation of renewable energy sources and grid-scale energy storage, the growth of power-hungry AI data centers, and the implementation of dynamic pricing schemes like time-of-use tariffs are fundamentally reshaping the landscape of how energy and compute infrastructures interact. These changes are prompting a paradigm shift in how systems interact with power grids, shifting from a model of passive consumption to one of \textit{active, intelligent demand management}. 
The core challenge in demand management from an energy perspective is to better align the consumption of energy (e.g., from the grid) with external signals such as time-of-use prices, while meeting certain performance (e.g,. service-level agreement (SLA)) objectives on the demand side.
Concurrently, there is a need to consider the ``smoothness'' of these decisions: recent reports suggest that data centers' sudden large fluctuations in energy use are impacting power quality in U.S. grids~\cite{Bloomberg:24}, and sudden changes in load are known to cause cascading failures and even blackouts on the grid, such as the 2025 Iberian Peninsula blackout~\cite{Pons:2025}. 
This prompts an additional objective of \textit{smooth decisions} (e.g., gradual ramping up and down).

Many techniques have been developed for this class of problems, with solution designs that are usually tailored to specific applications such as energy storage management~\cite{Mamun:16:multi-objective-DR-battery} and HVAC control~\cite{Kou:21:HVAC}, among others.  These techniques often rely on machine learning somewhere in the pipeline, e.g., to provide predictions for a model predictive controller~\cite{Serale:18:MPC-HVAC, Morstyn:17:MPC-battery}, inform a day-ahead provisioning problem~\cite{Yang:25:CarbonAwareHVAC}, or to directly make decisions (e.g., using a reinforcement learning agent)~\cite{Azuatalam:20}.  
While these problem-specific solutions show promise, they are seldom deployed in practice.  One reason is that environments such as electricity markets are inherently nonstationary, making it difficult to develop heuristics that deal well with distribution shift~\cite{Levin:22:Texas21} or come with rigorous performance guarantees.
Another reason is that these approaches are typically developed with a single end application in mind, making generalization to other settings difficult. 

On the other side of the spectrum, there is a rich body of work in online algorithms~\cite{Yang:20,Lechowicz:24,ElYaniv:01, Yang:24, Chen:22:InBev} on related problems that are sufficiently abstract to model many different applications, and have been studied under frameworks such as competitive analysis, which provides rigorous performance guarantees.  
Despite this, however, there remains a gap between the existing models studied in the online algorithms literature and the complexity of real-world demand management problems---as of yet, no models exist that fully capture the moving pieces necessary to study the type of problem that arises in, e.g., a grid-integrated data center with local energy storage.

Informed by these challenges, we introduce and study a new class of online optimization problems called \textbf{online smoothed demand management (\OSDM)}.  \OSDM models the challenge faced by an operator who satisfies a stream of incoming demand by purchasing energy from a market with fluctuating prices (e.g., an electricity grid).  We model several features motivated by practice, namely: (i) a mix of inflexible (base) and deferrable (flexible) demands, each with their own unique SLA constraints; (ii) opportunities to arbitrage using local energy storage (e.g., a battery), which may come at a cost; and (iii) a few smoothing cost models for ``unsmooth'' behavior, penalizing large changes in both energy purchasing (ramping) and demand-side delivery.  These costs capture some desired behavior for any algorithmic scheme from both the energy provider (e.g., grid) perspective, as well as the consumer's perspective.  In studying \OSDM, we answer the following question:
\begin{center}
    \textit{Is it possible to design online algorithms for \OSDM that manage the challenges of flexibility, uncertain demand, and penalties for unsmooth decisions while providing meaningful competitive guarantees?}
\end{center}

\noindent Competitive (i.e., worst-case) guarantees are often overly pessimistic in practice, since real-world problem instances are not adversarial.  In recent years, the area of \textit{learning-augmented algorithms}~\cite{Lykouris:18, Purohit:18, Mitzenmacher:22:ALPS} has expanded the online algorithms literature with techniques that can take additional information as input (e.g., predictions about the instance) to improve average-case performance while retaining worst-case guarantees for robustness against nonstationarities.  While this paradigm has successfully been applied to problems that share similarities with \OSDM~\cite{Lechowicz:24,SunLee:21}, traditional learning-augmented designs make strong assumptions about the availability and learnability of \textit{suitable predictions}, which face challenges in \OSDM due to the multiple kinds and nonstationary sources of uncertainty (e.g., demand arrivals, deadlines, and market prices).
Thus, we also ask:
\begin{center}
    \textit{{\color{blue} Can we design data-driven algorithms for \OSDM that directly learn from data (e.g., historical instances) to significantly improve performance while maintaining worst-case guarantees?}}
\end{center}

\subsection{Contributions}

We obtain positive answers to both of the above questions, making the following contributions:
In \autoref{sec:problem}, we formally introduce the \OSDM problem, which captures several realistic features of demand management problems that arise in practice.  To the best of our knowledge, \OSDM is the first online optimization problem that simultaneously captures a mix of flexible and inflexible demands, inventory and/or energy storage dynamics, separate costs for procurement and delivery, and penalties that encourage ``smooth'' behavior.

In \autoref{sec:alg}, we provide the first competitive algorithm for \OSDM, which we call \PAAD (Partitioned Accounting \& Aggregated Decisions, see \autoref{alg:paad}).
\PAAD introduces a driver abstraction that splits \OSDM into subproblems that can be solved independently and aggregated to form a globally competitive solution (\autoref{thm:osdm-upper-bound-1}). 
\PAAD reflects, to the best of our knowledge, the most complex extension of the online pseudo-cost minimization algorithm design paradigm~\cite{SunZeynali:20}; we employ multiple kinds and instances of \textit{threshold functions} to decompose and accommodate the significant heterogeneity of the \OSDM setting.
\PAAD achieves a worst-case competitive ratio that is sublinear in $\nicefrac{p_{\max}}{p_{\min}}$ (the ratio of the procurement price bounds) and linear in other parameters of the problem setting such as the switching and delivery cost magnitudes---in \autoref{sec:fundamental-limits}, we give lower bounds that imply \PAAD achieves the best-possible competitive ratio for \OSDM under our assumptions.

To improve average-case performance and reduce the pessimism of our approach, in \autoref{sec:learn} we present a learning framework for \OSDM called \PALD (Partitioned Accounting \& Learned Decisions).  Compared to existing work on learning-augmented algorithms, the \OSDM setting brings unique challenges that make previously-considered techniques for integrating predictions impractical or ineffective.  To address this challenge, \PALD proposes a novel end-to-end differentiable learning framework that directly predicts the best algorithm parameters (e.g., threshold functions) that minimize \PALD's cost on historical instances, {\color{blue} while ensuring that the learned algorithm satisfies a \textit{robustness certificate} (see \autoref{thm:pald-robustness}) that guarantees a bounded worst-case competitive ratio under adversarial instances, serving as a safeguard against nonstationarity.} This framework will likely be of independent interest as a more flexible and expressive methodology to learn high-performing and robust algorithms across problem domains.

Finally, in \autoref{sec:exp}, we present a case study of \OSDM in a simulation of a grid-integrated data center with co-located battery storage, using real electricity price data and production Alibaba traces~\cite{Alibaba:18}.  We show that \PAAD solves the \OSDM problem effectively in practice, and that \PALD's end-to-end learning capability significantly improves performance compared to \PAAD using data-driven insights.

\subsection{Related Work}

Our study of \OSDM is informed by and builds upon several lines of related work in online algorithms, learning-augmented algorithms, and relevant application areas.  We briefly summarize these below.

\smallskip
\noindent \textbf{Applications of Demand Management and Related Ideas. \ }
\OSDM-like problems arise across multiple domains---%
below, we review some heuristic and control-theoretic approaches that have been proposed to solve these.  Given forecasts (e.g., prices), a common approach is to solve an offline “day-ahead’’ optimization problem, as applied in battery arbitrage~\cite{MohsenianRad:16:day-ahead-battery, Zhang:24:day-ahead-battery, Faraji:20:day-ahead-prosumer}, microgrids~\cite{Silva:20:day-ahead-microgrid, Tandukar:18:day-ahead-VPP}, HVAC~\cite{Yang:25:CarbonAwareHVAC, Zeng:23:day-ahead-AC}, and sustainable computing~\cite{radovanovic2022carbon, Li:17:battery-DC}.
In online settings, a frequently-applied technique is model predictive control (MPC), which uses forecasts to solve a finite-horizon optimization problem at each time step. MPC has been applied to battery management~\cite{Blonsky:20:MPC-bat-management, chehade:2024:MPC-bat-management, Sage:24:RL-battery-storage, Morstyn:17:MPC-battery},  building systems~\cite{Afram:15:MPC-HVAC, Serale:18:MPC-HVAC, Blum:22:MPC-HVAC, Mariano:21:BMS}, microgrids~\cite{MorenoCastro:23:microgrids, Lopez:21:MPC-hybridpp}, and data centers~\cite{Wang:23:hierarchical, Lazic:18:MPC-BMS, Mirhoseininejad:21:DC}.
Select works consider approaches that do not require predictions such as Lyapunov-based control~\cite{Urgaonkar:11:lyapunov-power-cost-management, Guo:13:Lyapunov-battery-DC, Yu:15:Lyapunov-battery-DC, Stai:21:BatteryStorage, Sun:23:Lyapunov-battery-DC}, ADMM for distributed storage~\cite{RigoMariani:22:ADMM}, feedback-based demand response~\cite{Mamun:16:multi-objective-DR-battery}, and stochastic optimization~\cite{Nasiriani:17:stochastic-peak-shaving, Bedi:18}.
These approaches are generally developed with a single application in mind, limiting their generalizability, and do not typically come with theoretical guarantees on performance, particularly in cases where predictions or stochastic models are used.

\smallskip
\noindent \textbf{Related Online Algorithms Problems. \ }
In the literature %
there are a few online problems that share structure with \OSDM.
Of these, perhaps the closest related problem is online linear inventory management (\texttt{OLIM})~\cite{Yang:20}, which models the problem faced by an operator (e.g., a data center) with an inventory (e.g., battery) that must meet a demand at each time step, while minimizing their procurement cost subject to time-varying prices.  Compared to \texttt{OLIM}, \OSDM additionally models a delivery cost to serve demand (i.e., separate from procurement), smoothing penalties, and a mixture of base (inelastic) and flexible demand.
A recent model called online conversion with switching costs (\texttt{OCS})~\cite{Lechowicz:24} has modeled the problem faced by an operator procuring energy (e.g., an EV charger) that must meet a single flexible demand before a deadline (e.g., charging an EV by the end of the day), paying a switching cost when decisions change.
\OSDM generalizes this by including inflexible (i.e., base) demand constraints, delivery costs, and multiple flexible units with heterogeneous deadlines.
Other formulations address energy storage without time-varying procurement prices~\cite{Kim:16, Mo:21} or microgrid scheduling with variable prices but no storage~\cite{Lu:13, Hajiesmaili:16, Zhang:2018:microgrid}.

Each component dynamic of \OSDM has been studied in isolation: procurement and flexible demand via online knapsack and search~\cite{Zhou:08,ElYaniv:01,Fung:21}, per-step demand via online matching~\cite{Mehta:07, Huang:24}, and smoothness penalities via metrical task systems~\cite{Borodin:92, FriedmanLinial:93}
However, the combination of these in \OSDM introduces new technical challenges in the design of robust and efficient algorithms that preclude the direct application of existing results (see, e.g., \autoref{sec:warmup}).  
Online resource allocation problems~\cite{jaillet2011online, Devanur:19, Wang:17:ORA, Jiao:17:ORA, Klimm:19, Zhang:17:ORA, Balseiro:23} are also related, though they usually assume a fixed (non-replenishable) resource. Some variants consider inventories~\cite{Larsen:10, Cheung:21, Chen:22:InBev, He:22, HIHAT:23, Yang:24}, but typically treat either demand or inventory evolution as exogenous, whereas in \OSDM both are decisions.

\section{Problem Formulation} 
\label{sec:problem}

In this section, we introduce the online smoothed demand management (\OSDM) problem, provide motivating examples as context for our modeling, and state assumptions we place on the problem.

\subsection{Online Smoothed Demand Management (\OSDM)} \label{sec:OSDM_statement}

\noindent \textbf{Problem Statement. \ }
{\color{blue} Consider an operator who makes decisions over a discrete time horizon $t \in [T]$.} At each time step $t$, several inputs are revealed online.  The first of these is a \textit{market price} for energy, denoted by $p_t \geq 0$.  Furthermore, two types of demand arrive: a base (inflexible) demand $b_t \ge 0$ that must be satisfied immediately, and a flexible demand $f_t \ge 0$ that can be deferred until a specified deadline $\Delta_t > t$.  Further, the operator can control the state of a local energy storage unit (e.g., a battery) with a maximum capacity of $S \geq 0$. We denote the state of charge at time step $t$ as $s_{t}$, where the initial storage state is empty (i.e., $s_0 = 0$).\footnote{The assumption that the initial storage state is $s_0 = 0$ is without loss of generality---our results hold when $s_0 > 0$.}

At each time step $t$, the operator must make two decisions without knowledge of the future.  First, they must decide how much energy to purchase, denoted by $x_t \ge 0$.  Second, they must decide how much energy to deliver to satisfy demand, denoted by $z_t \ge 0$.  The delivery decision can use a combination of newly purchased energy and energy drawn from storage.
The operator's goal is to satisfy the above constraints while minimizing a cost consisting of four components: the cost of purchasing energy, penalties for an unsmooth purchasing rate (e.g., a switching or tracking cost), a cost for delivering energy (which may depend on the storage state and the market price), and a penalty for an unsmooth delivery rate (we consider a switching cost). The offline (i.e., full time horizon) version of \OSDM is formalized as follows:
\begin{align}
    [\OSDM] \min_{\{x_t, z_t\}_{t\in [T]}} & \sum\nolimits_{t=1}^{T} \Big [ \underbrace{p_t x_t}_{\text{purchase cost}}  + \underbrace{\mathcal{D}(z_t, s_{t-1}, p_t)}_{\text{delivery cost}} \Big ] + \sum\nolimits_{t=1}^{T+1} \Big[ \underbrace{ \mathcal{S}(x_t, x_{t-1}) + \delta \vert z_t - z_{t-1} \vert}_{\text{smoothing penalty}} \Big] \label{eq:osdm-objective},\\ 
    \text{ s.t. } \quad & {\color{blue} s_t = s_{t-1} + x_t - z_t, \text{  with  } 0 \le s_t \le S, \nonumber} \\
    &b_t \le z_t \le b_t + \sum\nolimits_{\tau=1}^{t} f_\tau - \sum\nolimits_{\tau: \Delta_\tau < t} f_\tau, \nonumber \\
    & {\color{blue} \sum\nolimits_{t = \tau_1}^{\tau_2} z_t - b_t \geq \sum\nolimits_{t : t \geq \tau_1 \text{ and } \Delta_t \leq \tau_2} f_t \quad \quad\forall (\tau_1, \tau_2) : \tau_1 \leq \tau_2 \text{ and } \tau_1, \tau_2 \in [T].}
\end{align}
In the objective \eqref{eq:osdm-objective}, $\delta \geq 0$ is a known coefficient for the delivery switching penalty, $\mathcal{S}(\cdot)$ is a function that implements the desired smoothness penalty on purchasing decisions, and $\mathcal{D}(\cdot)$ is a function that captures the delivery cost. For both of the latter terms, we consider several models inspired by real-world applications---see \autoref{sec:smoothness-penalty-def} and \autoref{sec:delivery-cost-def} for a discussion of each, respectively.  We assume decisions outside of the time horizon are zero: $x_0 = z_0 = 0$ and $x_{T+1} = z_{T+1} = 0$.

As outlined above, the problem is subject to several constraints for all $t \in [T]$.  In order, they are as follows:  First, the storage state is governed by the inventory dynamics $s_t = s_{t-1} + x_t - z_t$, and the state must remain within physical limits, i.e., $0 \le s_t \le S$.  \OSDM models a ``discharged'' storage state as $s_t = 0$, but this can capture the more realistic case where $s_t = 0$ corresponds to a minimum charge level for the storage.
Next, the amount of energy purchased, i.e., $x_t$, must be sufficient to cover the desired delivery decision, potentially drawing from storage, but cannot overfill the storage. These constraints are captured by $z_t - s_{t-1} \le x_t \le z_t + (S - s_{t-1})$.  
{\color{blue}
The delivery decision at time $t$ ($z_t$) must \emph{cover} the base demand and is allowed to satisfy any flexible demand that has arrived but is not past due---this is captured by $b_t \le z_t \le b_t + \sum_{\tau=1}^{t} f_\tau - \sum_{\tau: \Delta_\tau < t} f_\tau$. Finally, the cumulative delivery for any subinterval of $[T]$ must satisfy base demand and any flexible demands whose lifetime lies within this interval: this is captured by $\sum\nolimits_{t = \tau_1}^{\tau_2} z_t - b_t \geq \sum\nolimits_{t : t \geq \tau_1 \text{ and } \Delta_t \leq \tau_2} f_t \; \forall (\tau_1, \tau_2) : \tau_1 \leq \tau_2 \text{ and } \tau_1, \tau_2 \in [T]$.
We summarize the key pieces of \OSDM in \autoref{fig:osdm-diagram}.
We remark that if flexible demands are not considered, the two decision variables in the \OSDM formulation can be simplified to a single decision variable (i.e., $\{ x_t - z_t \}_{t\in [T]}$)---this simplified version of \OSDM is of independent interest, and we refer to it as ``\OSDM with only base demand'' throughout the paper (see e.g., \sref{Corollary}{cor:osdm-lower-bound-base-demand} for tighter results in this regime).
}

While \eqref{eq:osdm-objective} captures the offline version of the problem, our goal is to design an online algorithm that chooses $\{x_t, z_t\}$ at each time step $t$ \textit{without knowledge of the future online inputs} of $\{ p_\tau, b_\tau, f_\tau, \Delta_\tau \}_{\tau > t}$.  To do so, we follow the literature on online algorithms and competitive analysis~\cite{Manasse:88, Borodin:92} and evaluate performance via the \textit{competitive ratio}~\cite{Manasse:88, Borodin:92}, defined as follows:

\begin{definition}[Competitive ratio] \label{dfn:comp-ratio}
{\it
Let $\OPT(\mathcal{I})$ denote the optimal offline cost for an arbitrary \OSDM input instance $\mathcal{I} \in \Omega$ (where $\Omega$ is the set of all feasible inputs), and let $\ALG(\mathcal{I})$ denote the cost of an online algorithm $\ALG$ over the same instance.
$\ALG$ is said to be $\alpha$-competitive if for all $\mathcal{I}$, $\ALG(\mathcal{I}) \leq \alpha \OPT(\mathcal{I}) + C$ holds, where $\alpha \geq 1$ is the \textit{competitive ratio}, and $C \geq 0$ is an additive constant.}
\end{definition}
\noindent Note that a \textit{smaller} competitive ratio implies that the online algorithm is guaranteed to be \textit{closer} to the offline optimal solution. The competitive ratio is a \textit{worst-case} performance metric for online algorithms; as such, bounds on this metric ensure robustness against nonstationarity in the underlying environment, which is crucial for the applications motivating \OSDM. For instance, electricity prices are known to be volatile and non-stationary~\cite{Qu:24}, and demand profiles may be non-stationary in certain applications.
However, algorithms optimized for competitive guarantees are often overly conservative in practice---to address this limitation in \OSDM, we build on the learning-augmented algorithms literature~\cite{Lykouris:18, Purohit:18,Mitzenmacher:22:ALPS} to propose data-driven algorithms that can leverage historical data to significantly improve average-case performance while still maintaining worst-case competitive guarantees against non-stationary inputs.  We discuss this further in \autoref{sec:learn}.

\begin{figure*}[t]
    \centering
    \vspace{-1em}
    \includegraphics[width=\textwidth]{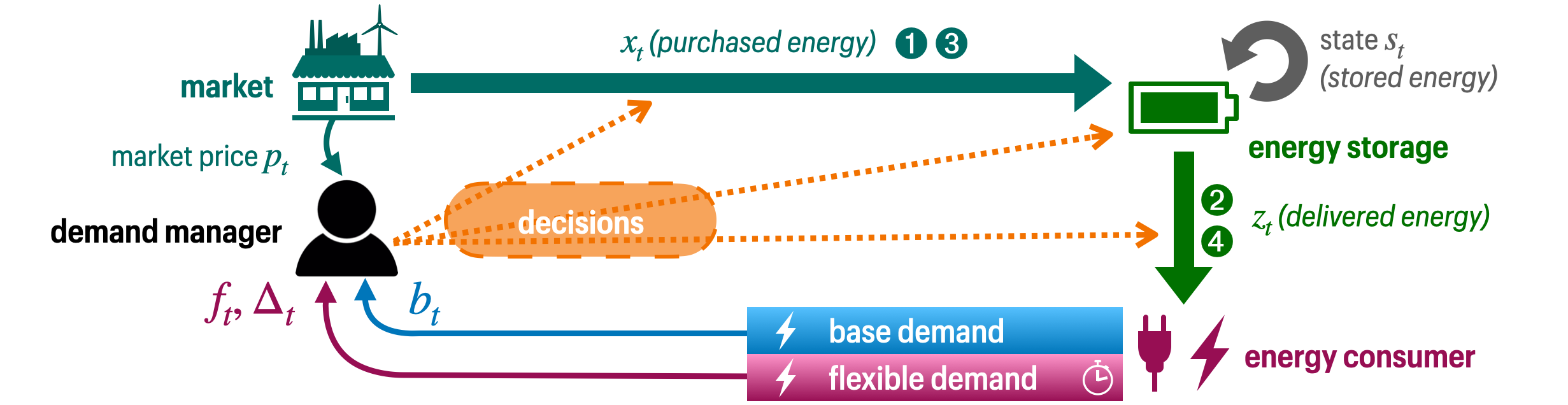} \vspace{-1em}
    \caption{Diagram of the \OSDM problem at a single (discrete) time step $t$. The demand manager specifies a purchasing decision $x_t$, a delivery decision $z_t$, and (implicitly) a storage state $s_t$.  As a function of these decisions, they pay \ding{202} a purchasing cost $p_t x_t$, \ding{204} a smoothness penalty $\mathcal{S}(x_t, x_{t-1})$ that discourages fluctuations in the purchasing rate, \ding{203} a delivery cost of serving demand $\mathcal{D}(z_t, s_{t-1}, p_t)$, and \ding{205} a switching penalty $\delta \vert z_t - z_{t-1} \vert$ that discourages fluctuations in the delivery rate. 
    }
    \label{fig:osdm-diagram}
\end{figure*}

\subsection{Motivating Applications} \label{sec:examples}

We present several \textit{demand management} applications that can be modeled by \OSDM, motivated by emerging \textit{grid integration} problems. Owing to its generality, \OSDM applies broadly across energy and operations domains. We focus here on the high-level mapping from problems to their \OSDM instantiations, deferring a longer discussion to \autoref{apx:examples}.

\smallskip
\noindent \textbf{Grid-integrated Data Center with Storage. \ } Consider a grid-connected data center with local energy storage.
At each time step $t$, the operator decides how much electricity to purchase ($x_t$) from the grid at price $p_t$, and how much energy to use to meet demand ($z_t$); the difference is stored in or drawn from the battery with state $s_t$.
Base demands $b_t$ represent e.g., interactive workloads that must be satisfied immediately, while flexible demands $f_t$ capture delay-tolerant batch jobs with deadlines $\Delta_t > t$.
The operator minimizes their cost of electricity subject to a smoothing penalty ($\mathcal{S}(\cdot)$) that penalizes them for large fluctuations, as sudden shifts in load cause instability and even blackouts on the power grid~\cite{Kwon:25}.
Additional penalties model the cost of discharging the battery ($\mathcal{D}(\cdot)$) to account for degradation, promote keeping the battery sufficiently charged to accommodate the risk of an outage, and smooth the rate of internal consumption ($\delta |z_t - z_{t-1}|$) to reduce wear-and-tear on components.
Unlike prior demand-management formulations~\cite{Yang:20}, \OSDM jointly captures both flexible and inflexible workloads and the need for smooth grid- and demand-side decisions.  This application is the focus of our case study in \autoref{sec:exp}.

\smallskip
\noindent \textbf{Thermal Energy Demand Management. \ } An operator manages a thermal energy system (for example, district heating) with a local storage tank~\cite{NREL:ThermalStorage}.
At each time step $t$, the operator decides how much thermal energy to purchase ($x_t$) at price $p_t$ and how much to deliver to meet demand ($z_t$); the difference is stored in or drawn from the tank with state $s_t$.
Base demands $b_t$ represent immediate thermal needs, while flexible demands $f_t$ can be deferred until a deadline $\Delta_t>t$.
{\color{blue}
The operator minimizes the total cost to procure energy $p_tx_t$ and deliver energy (e.g., using heat exchangers) $\mathcal{D}(\cdot)$, all subject to a smoothing penalty that models power grid stability needs (e.g., if electricity is used to generate thermal energy), since demand spikes can cause grid failure~\cite{FT:25:europe}. }
An optional penalty on the delivery side $\delta |z_t-z_{t-1}|$ models wear-and-tear on thermal components.

\smallskip
\noindent \textbf{Just-in-Time Manufacturing with Material Inventory. \ } 
A factory converts raw materials into finished products using on-site inventory.
At each step $t$, it decides how much material to purchase ($x_t$) at price $p_t$ and how much product to produce/deliver ($z_t$), maintaining warehouse state $s_t$.
Base orders $b_t$ must be filled immediately, while flexible orders $f_t$ can be deferred until deadline $\Delta_t>t$.
{\color{blue}
The objective is to minimize a purchase cost and a fixed per-unit processing cost $p_tx_t + \mathcal{D}(\cdot)$, while smoothing the purchasing rate ($\mathcal{S}(\cdot)$) and production rate ($\delta |z_t-z_{t-1}|$).
}

\smallskip
\noindent \textbf{Flow Battery Storage Management. \ }
A large electricity consumer operates a grid-connected flow battery to shift energy usage~\cite{Breeze:19}.
At each time $t$, the operator purchases electricity ($x_t$) at price $p_t$ and delivers electricity to meet demand ($z_t$), with battery charge state $s_t$.
{\color{blue}
The goal is to minimize the total cost $p_tx_t$ subject to a smoothing penalty informed by grid considerations as above, and accounting for system losses and auxiliary energy needs through $\mathcal{D}(\cdot)$.
}

\subsection{Modeling Smoothness Penalty} \label{sec:smoothness-penalty-def}

In \OSDM, the smoothness penalty $\mathcal{S}(\cdot)$ in \autoref{eq:osdm-objective} captures an objective to make ``smooth'' purchasing decisions over time. This is often an important goal (see, e.g., the applications in \autoref{sec:examples}); for instance, industrial consumers like data centers can have electricity consumption patterns with large fluctuations that destabilize the grid and reduce power quality~\cite{Bloomberg:24}.
Motivated by these considerations, we consider two natural models for the smoothness penalty $\mathcal{S}(\cdot)$.
First, we consider a \textit{switching cost} model (with corresponding problem referred to as \OSDMS, hereinafter), where the penalty charges the operator proportionally to the \textit{change} in purchasing decisions:
\vspace{-0.2em}
\begin{definition}[Switching cost for purchasing decisions] \label{def:osdm-switching}
{\it
    In \eqref{eq:osdm-objective}, let $\mathcal{S}(x_t, x_{t-1}) \coloneqq \gamma |x_t - x_{t-1}|$, where $\gamma \geq 0$ is a known coefficient. }
\end{definition}
\vspace{-0.2em}

\noindent Second, we consider a \textit{tracking cost} model (\OSDMT, hereinafter), where the goal is not just to smooth the purchasing rate, but to follow an externally-provided target signal $\{ a_t \}_{t \in [T]}$.  
This is motivated by recent interest in programs where grid operators provide \textit{load shaping} signals to large electricity consumers to encourage consumption patterns that aid grid stability~\cite{Jiang:14:LoadShaping,Aryandoust:17:LoadShaping}.
This target signal may be known in advance (e.g., day-ahead) or revealed online (e.g., real-time), and the tracking cost penalizes the operator based on the deviation from this target signal. 
\vspace{-0.2em}
\begin{definition}[Tracking cost for purchasing decisions] \label{def:osdm-tracking}
{\it
    In \eqref{eq:osdm-objective}, let $\mathcal{S}(x_t, x_{t-1}) \coloneqq \eta |x_t - a_t|$,
    where $\eta \geq 0$ is a tracking cost coefficient known a priori, and $a_t \geq 0$ is a target signal revealed at time $t$.}
\end{definition}
\vspace{-0.2em}
{\color{blue}
We remark that the \OSDM objective in \autoref{eq:osdm-objective} models the smoothness penalty for the delivery cost (i.e., decisions $\{z_t\}_{t \in [T]}$) as a fixed switching cost (with coefficient $\delta$).  In principle, this penalty could also be modeled as a tracking cost, and we believe that analogous results are attainable in such a model.  We focus on the case where specifically the purchasing decisions are subject to a tracking cost for brevity---in a motivating application such as electricity purchasing, the grid operator provides suitable signals (e.g., reserve capacity signals~\cite{Emerald:25}) to realize this model (on the purchasing side) in practice.
}

\subsection{Modeling Delivery Cost} \label{sec:delivery-cost-def}

The delivery cost $\mathcal{D}(\cdot)$ in \autoref{eq:osdm-objective} represents the additional expense incurred when delivering energy to the consumer. While often negligible ($\mathcal{D}(\cdot)=0$), it can vary significantly across applications.
In a thermal energy system, for instance, efficiency improves when the reservoir (i.e., storage) is fuller or when the temperature differential is small, implying a delivery cost that \textit{decreases} with the storage state and depends on energy price due to pumping or heat-exchange losses~\cite{Ruhnau:19}.
In a just-in-time manufacturing context, $\mathcal{D}(\cdot)$ may reflect a fixed processing cost independent of storage or price.
On the other hand, in a flow battery storage system, using stored energy requires some additional input energy to drive pumps and circulate liquid~\cite{Breeze:19}, yielding a cost that \textit{increases} with discharge volume and depends on the contemporaneous energy price.

{\color{blue}
To capture this diversity of cases across multiple applications, in what follows, we define a class of monotone and price-dependent delivery costs. We believe this form of delivery cost to be a particularly challenging case that, if analyzed, offers corollary insights for other delivery cost models.  It is defined as follows:
}

\begin{definition}[Monotone affine price-dependent delivery cost] \label{def:osdm-delivery-pd-monotone}
{\it
{\color{blue}
    Let $c \geq 0$ and $\varepsilon \geq 0$ be delivery cost coefficients known a priori.  We say that the delivery cost $\mathcal{D}(\cdot)$ in \eqref{eq:osdm-objective} is a \textbf{monotone affine price-dependent delivery cost} if it satisfies one of the following:} {\color{blue}
    \begin{enumerate}[label=(\alph*)]
        \item \textit{(Increasing case)} Let $\mathcal{D}(z_t, s_{t-1}, p_t) \coloneqq (\frac{c s_{t-1}}{S} + \varepsilon) p_t z_t$---
        this delivery cost increases linearly from $\varepsilon p_t z_t$ to $(c + \varepsilon) p_t z_t$ as the storage state increases from $s_{t-1} = 0$ to $s_{t-1} = S$.
        \item \textit{(Decreasing case)} Let $\mathcal{D}(z_t, s_{t-1}, p_t) \coloneqq (c - \frac{c s_{t-1}}{S} + \varepsilon) p_t z_t$---
        this delivery cost decreases linearly from $(c + \varepsilon) p_t z_t$ to $\varepsilon p_t z_t$ as the storage state increases from $s_{t-1} = 0$ to $s_{t-1} = S$.
    \end{enumerate}}
}
\end{definition}

{\color{blue}
\noindent Several alternative delivery cost models are ``covered by'' \sref{Def.}{def:osdm-delivery-pd-monotone}---for instance, setting $c = 0$ recovers a price-dependent delivery cost that is \textit{constant} as a function of the storage state.  Similarly, a reasonable alternative model might be a delivery cost that \textit{is} state-dependent but doesn’t depend on the price $p_t$---such a model is captured as a simpler sub-case of \sref{Def.}{def:osdm-delivery-pd-monotone}, and our algorithm (see \autoref{sec:alg}) would directly capture any such simpler cases.
In the rest of the paper, we primarily give results for the \OSDMS and \OSDMT problems using the \textit{decreasing} monotone delivery cost case defined in \sref{Def.}{def:osdm-delivery-pd-monotone}(b), because it is the most challenging case to analyze. \autoref{thm:osdm-decreasing-worse-increasing} shows that given the same values of $c$ and $\varepsilon$, the best achievable competitive ratios for the increasing case in \sref{Def.}{def:osdm-delivery-pd-monotone}(a) imply it is a strictly easier problem than the decreasing case.  Thus, our results primarily focus on this ``hard case,'' holding for the general monotone definition in \sref{Def.}{def:osdm-delivery-pd-monotone} and any simpler subcases of it.}
We note that setting the delivery cost according to \sref{Def.}{def:osdm-delivery-pd-monotone}(b) with $c > 0$ results in a non-convex (bilinear) objective in \eqref{eq:osdm-objective} due to the dependence on $s_{t-1}$.\footnote{The online problem is unaffected by non-convexity because the state $s_{t-1}$ is a fixed value at time $t$ (not a decision variable).}

\subsection{Assumptions and Notation}\label{sec:assumptions}

In this section, we summarize our main assumptions and additional notation for \OSDM.
First, we assume that the market prices are bounded: $p_t \in [p_{\min}, p_{\max}] \ \forall t \in [T]$, where $0 < p_{\min} \le p_{\max}$. $p_{\min}$ and $p_{\max}$ are known to the operator a priori.  This is a standard assumption in the literature on online search and related problems such as online knapsack~\cite{ElYaniv:01,Zhou:08,Larsen:10}.  It is known that without such an assumption, no online algorithm can achieve a bounded competitive ratio~\cite{ElYaniv:01,Lorenz:08}.

We also assume that it is always feasible to satisfy demand as it becomes due, i.e., if necessary, it is always possible to set $x_t = b_t + \sum_{\tau: \Delta_\tau = t} f_\tau$ and $z_t = b_t + \sum_{\tau: \Delta_\tau = t} f_\tau$ for any $t \in [T]$. This is a natural assumption that ensures the problem is well-posed---without such an assumption, an online algorithm can be backed into a corner where feasibility becomes impossible.  While we do not consider \textit{rate constraints} in our setting of \OSDM (i.e., constraints on the amount of energy that can be purchased or delivered at any time), this gives an implicit condition relevant to such constraints\textemdash see discussion of future work in \autoref{sec:conc} for more on this point.

We assume that the switching cost coefficients are not ``too large'' relative to the prices.  Formally, $\gamma$ and $\delta$ satisfy $\gamma + \delta \leq \frac{p_{\max} - p_{\min}}{2}$.  %
Note that if $\gamma + \delta > \frac{p_{\max} - p_{\min}}{2}$, then competitive decision-making is simple: an algorithm should choose decisions that minimize the total switching cost (i.e., by spreading decisions evenly over time).  Thus, we focus on the ``interesting regime'' where an online algorithm must balance the purchasing and switching costs.  %

{\color{blue}
Using the notation of \sref{Def.}{def:osdm-delivery-pd-monotone}, we assume the delivery cost coefficients $c \geq 0$ and $\varepsilon \geq 0$ satisfy $0 \leq c + \varepsilon \leq 1$.  This ensures that the delivery cost is not ``too large'' relative to the purchasing cost, and is realistic in the target applications of \OSDM, where the delivery cost (e.g., the added cost to serve demand using stored energy) is a fraction of the original cost to procure said energy.
}

In \OSDMT, we make two additional assumptions related to the target signal and the tracking cost.  First, we assume that the cumulative target is at most the demand, i.e., $\sum_{t=1}^{T} a_t \leq \sum_{t=1}^{T} (b_t + f_t)$.
This ensures that $\sum_{t \in [T]} a_t$ is reasonable relative to the total demand over the horizon.  Without this assumption, there are unrealistic \OSDM instances with, e.g., a small amount of demand and a consistently large target signal that force any algorithm to pay a large penalty for failing to have enough demand to meet the target.
Finally, we assume that the tracking cost coefficient $\eta$ satisfies $\eta \leq \frac{p_{\max} - p_{\min}}{2}$---this ensures that the tracking cost coefficient is not ``too large'' relative to the range of market prices.  If this assumption is violated (i.e., if $\eta > \frac{p_{\max} - p_{\min}}{2}$), then it is always preferable (from a cumulative purchasing cost perspective) to set $x_t = a_t$ wherever possible.  Thus, we focus on the ``interesting regime'' where an online algorithm balances the tracking cost and the purchasing cost, allowing for slight deviations from the target signal when they are advantageous.

Throughout the paper, we use the shorthand notation $\Phi(w, z)$ and $\Psi(w, z)$ to denote the \textit{definite integrals} of functions $\int_w^z \phi(u) \ du$ and $\int_w^z \psi(u) \ du$, respectively, for any $0 \leq w \leq z$.  We also use the shorthand $P \coloneqq (1+c+\varepsilon)p_{\max}$, $\kappa \coloneqq \gamma + \delta$, and $\omega \coloneqq \tfrac{(1+c+\varepsilon)}{(1+\varepsilon)}$.

\section{Warmup and Preliminary Results}
\label{sec:warmup}

In this section, we start by discussing similarities between \OSDM and related problems in the literature.  We review one such problem and an existing algorithm for it, showing that an extension of this algorithm to \OSDM fails to achieve
a good
competitive ratio, even in a simplified setting---this warmup result illustrates the complexity of the problem and motivates observations that we use in the design of our main algorithm, \PAAD, described in \autoref{sec:alg}.

\smallskip
\noindent \textbf{Relation to Online Conversion with Switching Costs. \ } 
A much-simplified version of the \OSDM problem defined in \autoref{sec:problem} has been studied under the name of \textit{online conversion with switching costs} (\OCS)~\cite{Lechowicz:24}.
This problem is a special case of \OSDMS where there is a single unit of flexible demand (whose size is known a priori) that must be satisfied by the end of the time horizon (i.e., $f_0 = 1$ and $\Delta_0 = T$).
\OCS also does not consider base demand (i.e., $b_t = 0$ for all $t$), there is no storage (i.e., $S=0$), and there is no delivery cost (i.e., $c = \varepsilon = \delta = 0$).
The goal is to minimize the total cost of purchasing an asset subject to time-varying prices, while also minimizing a switching cost incurred from changing the purchase amount over time.
For the \OCS problem, an algorithmic framework called ``ramp-on, ramp-off'' (\RORO) achieves the best possible competitive ratio~\cite{Lechowicz:24}.
The key idea underlying \RORO is to maintain a \textit{single} threshold function and completion state (i.e., ``how much demand has been satisfied'') that guides the algorithm's decision-making; this threshold is designed to capture the marginal trade-off between purchasing at the current price and waiting for better prices that may not arrive (see \sref{Appendix}{apx:roro-details} for more details on the algorithm and setting).  %

A key challenge preventing the direct application of \RORO to the \OSDM setting is uncertainty in how much demand will arrive over time, since \RORO is designed to handle an a priori fixed total demand. A natural approach to %
deal with such uncertainty is a 
\textit{doubling strategy} that progressively increases the total demand that \RORO is specified to satisfy.  We describe this extension below.

\smallskip
\noindent \textbf{Doubling Strategy for \RORO. \ }
To apply the \RORO framework in the setting of \OSDM, our key idea is to maintain a guess $\smash{\hat{d}}$ of the total flexible demand that will arrive over time, and create instances of \RORO as if there is a single unit of flexible demand of size $\smash{\hat{d}}$ that must be satisfied by the end of the total time horizon $T$ (known a priori).  Initially, we set $\smash{\hat{d}} = 2^0 = 1$ and $j = 0$.  Whenever a new flexible demand $f_t$ arrives, we first attempt to ``assign it'' to the most-recently initialized instance of \RORO by checking whether adding $f_t$ to its total assigned demand would violate the guessed bound $\smash{\hat{d}}$. If not, then we assign $f_t$ to this current \RORO instance and allow it to continue; otherwise, 
we double the guess $\smash{\hat{d}} \gets 2^{j+1}$, increment $j$, and create a new instance of \RORO that is responsible for satisfying all flexible demands arriving from time $t$ onward (including $f_t$). Prior instances of \RORO operate until all of the flexible demands assigned to them are satisfied (or until their deadlines have passed), and the global purchasing decision is the sum of the decisions of all active instances of \RORO.
We formalize this doubling extension of \RORO in the Appendix (see \autoref{alg:roro-doubling}). 
To analyze this extension of \RORO, we restrict to a subset of \OSDMS instances with only flexible demand (i.e., $b_t = 0$ for all $t$), no delivery cost (i.e., $c = \varepsilon = \delta = 0$), no storage ($S=0$), and we assume that the maximum demand unit is upper-bounded by some $\sigma$ (i.e., $\max_{t \in [T]} f_t \le \sigma$), where $\sigma < 1$.  Then, we can characterize the competitive ratio of the doubling extension of \RORO as follows:

\begin{theorem}\label{thm:roro-doubling-upper-bound}
    For the subset of \OSDMS instances described above, the ``doubling extension'' of \RORO is $\zeta$-competitive, where $\alpha_{\texttt{RORO}}$ is the optimal competitive ratio for \OCS (see \autoref{eq:alpha_roro}) %
    and $\zeta$ is at least:
    \[
    \zeta \geq \min \Bigg\{ \frac{\alpha_{\texttt{RORO}}}{1-\sigma} + \frac{\alpha_{\texttt{RORO}}}{1-\sigma} \cdot \frac{\big( p_{\max} - 2\gamma - \frac{p_{\max}}{\alpha_{\texttt{RORO}}} \big) \exp\big( \frac{\sigma}{\alpha_{\texttt{RORO}}} \big) - \frac{\sigma p_{\max} (1-\sigma)}{\alpha_{\texttt{RORO}}}}{p_{\min}}, \frac{p_{\max} + 2\gamma}{p_{\min}} \Bigg\}.
    \]
\end{theorem}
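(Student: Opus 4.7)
The plan is to prove the lower bound by constructing an adversarial \OSDMS instance on which the doubling extension of \RORO incurs a cost whose ratio to the offline optimum matches the stated expression. I interpret the $\min$ as follows: the adversary I construct achieves the complicated first expression, but this is naturally capped by the trivial ceiling $\frac{p_{\max} + 2\gamma}{p_{\min}}$, since no reasonable algorithm pays more than $p_{\max}$ per unit plus two switching events (one ramp-up and one ramp-down) while \OPT pays at best $p_{\min}$ per unit. So the job reduces to exhibiting a single adversary that achieves the first bound, with the outer $\min$ ensuring the bound remains tight when the ceiling binds.

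The adversary's main strategy is to force repeated doublings, each of which resets a fresh \RORO instance with no memory of prior purchases. First I would release a short initial segment of prices at $p_{\min}$ accompanied by a tiny flexible demand to ``activate'' the first \RORO instance. Then, I release a sequence of flexible demands of size just under $\sigma$ at geometrically spaced times, each engineered to arrive right after the current \RORO instance has processed a fraction $(1-\sigma)$ of its guess $\hat d = 2^j$. This arrival exceeds the remaining budget and triggers a doubling, spawning a new instance that starts from zero utilization. Prices within each epoch are tuned so that the new \RORO instance's pseudo-cost threshold $\phi$ is traversed cleanly up through utilization $\sigma$ before the next rollover. Because \RORO's threshold has exponential form in its completion state, integrating $\phi$ over $[0,\sigma]$ produces exactly the factor $\bigl(p_{\max} - 2\gamma - p_{\max}/\alpha_{\texttt{RORO}}\bigr)\exp(\sigma/\alpha_{\texttt{RORO}})$ appearing in the bound, with the subtracted correction $\frac{\sigma p_{\max}(1-\sigma)}{\alpha_{\texttt{RORO}}}$ arising from the linear ``base'' part of the threshold that each instance pays over its active fraction.

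For the upper bound on \OPT, I would exhibit the simple offline policy that concentrates all purchases in the initial $p_{\min}$-window, paying approximately $p_{\min} \cdot D + 2\gamma \cdot \max_t f_t$ where $D$ is the total flexible demand released. Summing the per-instance lower bound for \ALG across the geometric chain of instances (of total size $\Theta(D)$) and dividing by \OPT's cost produces the prefactor $\frac{\alpha_{\texttt{RORO}}}{1-\sigma}$: the numerator reflects that each instance individually is $\alpha_{\texttt{RORO}}$-competitive on its own sub-problem, while the $(1-\sigma)$ in the denominator reflects that each instance only serves a fraction $(1-\sigma)$ of its guess before being superseded, so its overhead must be amortized across that smaller quantity of useful demand. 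Combining and applying the trivial ceiling yields the stated bound.

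The hard part will be the second step --- designing the price schedule so that it simultaneously (i) triggers exactly the intended geometric doubling pattern without premature rollover, (ii) drives each new \RORO instance through precisely utilization $\sigma$ (neither overshooting nor stopping short, which requires prices to ramp up inside each epoch just fast enough to cross the threshold at the right rate), and (iii) preserves the cheap $p_{\min}$ window at the start that \OPT exploits. The accounting at instance boundaries is delicate: a fresh instance's threshold re-starts at $\phi(0)$ regardless of the market price actually paid moments earlier, and matching the precise coefficient $\frac{\sigma p_{\max}(1-\sigma)}{\alpha_{\texttt{RORO}}}$ inside the bound will require careful bookkeeping of the ``leftover'' fraction of the final instance in the chain, which is not fully utilized because the adversary terminates at the deadline $T$.
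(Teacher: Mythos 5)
Your proposal has the right outer structure (adversarial construction for the first term, trivial feasibility ceiling for the second), but the core mechanism differs from what actually makes the doubling extension fail, and two of your steps would break. First, your offline benchmark is infeasible: with $S=0$ the constraints force $x_t = z_t$ and $z_t$ can only cover flexible demand that has \emph{already arrived}, so \OPT cannot ``concentrate all purchases in the initial $p_{\min}$-window'' when only a tiny demand has been released at that point. Relatedly, if cheap prices appear at the start, the algorithm's first \RORO instance (whose guess is $\hat d = 1$ and whose utilization is near zero) would also buy aggressively there, so the ratio would not be large. The paper's construction inverts this: each round presents \emph{descending} prices punctuated by spikes to $p_{\max}$, and the $p_{\min}$ prices are revealed only \emph{after} the current instance has purchased all of the demand actually assigned to it --- at which point the instance is constrained from buying more (its assigned demand, $S^{(j)}-\sigma+\upsilon$, is strictly less than its guess $S^{(j)}$), while \OPT, whose demands have by then arrived and are not yet due, buys everything at $p_{\min}$.

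Second, your amortization story misidentifies where the worst case lives. You attribute the $\tfrac{1}{1-\sigma}$ factor to a long geometric chain in which ``each instance serves a fraction $(1-\sigma)$ of its guess,'' but for an instance of size $S^{(j)}=2^j$ the served fraction is $\tfrac{S^{(j)}-\sigma}{S^{(j)}}$, which tends to $1$ as $j$ grows. The paper shows the ratio $\tfrac{S^{(j)}}{S^{(j)}-\sigma}$ is maximized at $j=0$, so the binding instance is the \emph{single} initial instance ($K=0$) with guess $1$ and realized demand $1-\sigma$ --- no doublings occur at all in the worst case. The exponential term likewise does not come from integrating $\phi$ over $[0,\sigma]$; it arises because \OPT's cost is governed by $\phi^{(j)}(\bar w + \sigma)$ (the threshold at the utilization the instance \emph{would} have reached had its guess been met) while the algorithm's cost is charged against the integral up to $\bar w$ only, and the gap $\phi^{(j)}(\bar w) - \phi^{(j)}(\bar w+\sigma)$ of the exponential threshold produces the $\exp(\sigma/\alpha_{\texttt{RORO}})$ factor. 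The missing idea in your write-up is precisely this mismatch between guessed and realized demand within one instance, which is the ``gap'' the theorem is designed to expose; without it the construction does not yield the stated bound.
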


\noindent We give the full proof of \autoref{thm:roro-doubling-upper-bound} in \sref{Appendix}{apx:roro-doubling-proof}.  
This result shows that the competitive ratio of the doubling \RORO algorithm can be \emph{arbitrarily bad}---in particular, the first entry in the $\min$ expression above grows to infinity as $\sigma$ approaches one, implying that the competitive ratio of the doubling \RORO algorithm is not better than the trivial $O\big( \tfrac{p_{\max}}{p_{\min}} \big)$ bound achievable by any feasible \OSDM algorithm. 
{\color{blue}
Intuitively, a simple extension of \RORO cannot account for the demand that arrives in a coarse-grained manner: it fails to resolve the ``gap'' between the demand the algorithm anticipates and the actual demand that arrives, and this gap leads to its suboptimality.
This limitation motivates the modular and adaptive design of our main algorithm \PAAD, which we present next.
}

\section{A Unified Competitive Algorithm}
\label{sec:alg}

In this section, we present a unified algorithm called \textbf{P}artitioned \textbf{A}ccounting \& \textbf{A}ggregated \textbf{D}ecisions (\PAAD) to solve \OSDM.
Motivated by the result in \autoref{sec:warmup} that demonstrates the necessity of \textit{accounting} for \OSDM, \PAAD introduces abstractions called \textit{drivers} that perform a double duty of accounting for distinct demand units while making decisions that can be aggregated into
global purchasing and delivery actions.
We formally introduce the \PAAD algorithm in \autoref{sec:paad-algorithm} and analyze its competitive ratio for \OSDMS and \OSDMT in \autoref{sec:paad-analysis}.

\subsection{The \PAAD Algorithm} \label{sec:paad-algorithm}

The core idea of \PAAD is to handle each newly-arrived unit of demand with an abstract \textit{driver}, which produces decisions which are aggregated into a globally feasible purchasing and delivery action at each time.
We start by detailing the driver types used in \PAAD before describing the algorithm itself.

\noindent \textbf{Drivers. \ } \PAAD uses two types of drivers to manage the purchasing and delivery decisions: (i) \textit{base demand} drivers (denoted by set $\mathcal{B}$) that manage units of base demand, and (ii) \textit{flexible demand} drivers (denoted by set $\mathcal{F}$) that manage units of flexible demand.  Each driver is assigned a \textit{size} based on the demand it is responsible for, and makes local decisions based on its own state.  %

At initialization ($t=0$), \PAAD creates an initial \textit{manager} driver that charges the storage when the market price is sufficiently low---this driver is instantiated as a base demand driver with index $0$ ($\mathcal{B} \gets 0$) and size $d^{(0)} \coloneqq S$ (the storage capacity).  
At each time step $t \in [T]$, upon the arrival of $p_t, b_t,$ and $f_t$ (with deadline $\Delta_t$), drivers are created as follows.  If $b_t > 0$, then a new base demand driver with even index $2t$ is created and added to the set of base drivers $\mathcal{B}$, with size $d^{(2t)} \coloneqq b_t$.  If $f_t > 0$, then a new flexible demand driver with odd index $2t+1$ is created and added to the set of flexible drivers $\mathcal{F}$, with size $d^{(2t+1)} \coloneqq f_t$ and deadline slack $\Delta_t$.  

Each base demand driver $i \in \mathcal{B}$ makes local decisions governed by a \textit{purchasing threshold function} $\smash{\phi_b^{(i)}(w) : w \in [0,d^{(i)}]}$, and updates a state variable $\smash{w_b^{(i)}}$ that indicates how much has been purchased so far.  $\smash{\phi_b^{(i)}}$ is non-increasing in $w$, and captures the marginal trade-off between making progress towards satisfying the driver's total demand and waiting for (potentially) better prices.
\begin{definition}[Base demand threshold function] \label{def:base-threshold}
{\it
    Given a base driver with size denoted by $d$, the threshold function $\phi_b$ for \OSDMS is defined as follows, where $\alpha$ is the competitive ratio (see \sref{Thm.}{thm:osdm-upper-bound-1}):
    \[
    \phi_b(w) = p_{\max} + 2 \gamma + p_{\min} c + \Big( \frac{p_{\max}(1+c+\varepsilon) + 2\kappa}\alpha - \Big( p_{\max}(1+\varepsilon) + p_{\min}c + \frac{2\kappa}{T} \Big) \Big) \exp\Big( \frac{w}{\alpha d} \Big) : w \in [0,d],
    \]
}
\end{definition} 

\noindent Each flexible demand driver $l \in \mathcal{F}$ maintains a \textit{purchasing threshold function} $\smash{\phi_f^{(l)}(w) : w \in [0, d^{(i)}]}$ and a \textit{delivery threshold function} $\smash{\psi^{(l)}(v): v \in [0, d^{(i)}]}$ (defined below), with state variables $\smash{w_f^{(l)}}$ and $\smash{v_f^{(l)}}$ denoting the amount of their demand already purchased and delivered, respectively.  %
\begin{definition}[Flexible demand threshold functions] \label{def:flex-thresholds}
{\it
    Given a flexible driver with size denoted by $d$, the purchasing threshold function $\phi_f$ for \OSDMS is defined as follows, where $\omega = \tfrac{(1+c+\varepsilon)}{(1+\varepsilon)}$ and $\alpha' = \nicefrac{\alpha}{\omega}$:
    \[
    \phi_f(w) = p_{\max} + p_{\min} c + 2 \gamma + \Big( \frac{p_{\max}+ 2\gamma}{\alpha'} - \Big( p_{\max} + p_{\min}c + \frac{2\gamma}{T}\omega \Big) \Big) \exp\Big( \frac{w}{\alpha' d} \Big) : w \in [0,d],
    \]
    and the delivery threshold function $\psi$ for \OSDMS is defined as:
    \[
    \psi(v) = p_{\max} (c + \varepsilon) + 2\delta + \Big( \frac{p_{\max}(c+\varepsilon) + 2\delta}{\alpha'} - \Big( p_{\max}(c+\varepsilon) + \frac{2\delta}{T} \omega \Big) \Big) \exp\left( \frac{v}{\alpha' d} \right) : v \in [0,d].
    \]
    }
\end{definition}

\begin{figure*}[b]
    \vspace{-1em}
    \centering
    \includegraphics[width=0.95\textwidth]{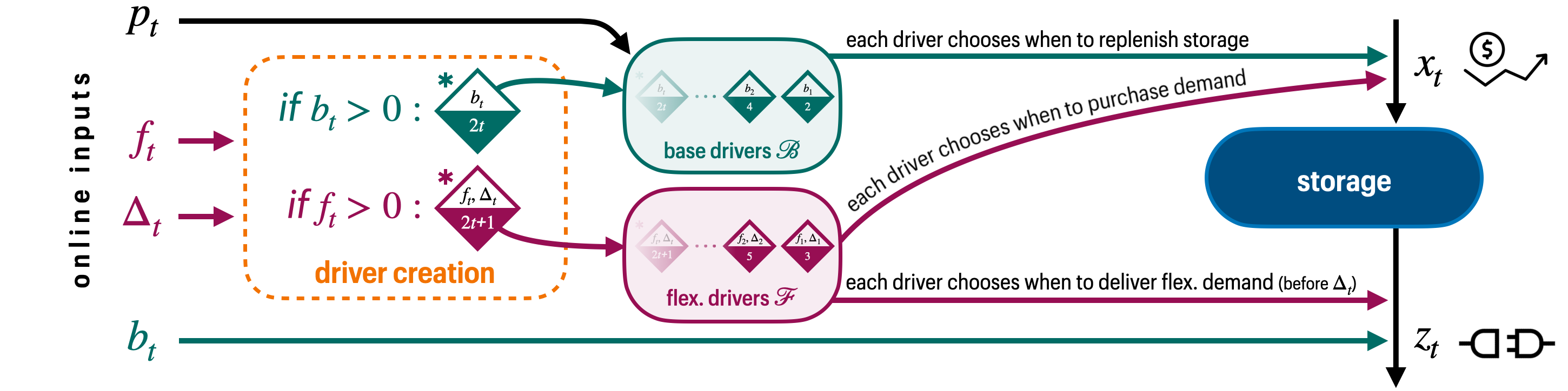} \vspace{-1em}
    \caption{{\color{blue}Diagram of the intuition behind the \PAAD algorithm.  Each driver accounts for a single unit of base or flexible demand, making local decisions based on its own state and role that are aggregated into a single global decision. Purchasing decisions $x_t$ govern how much to purchase at the current market price, and delivery decisions $z_t$ dictate how much demand is satisfied at the current time (subject to feasibility constraints).}}
    \label{fig:paad-diagram}
    \vspace{-1em}
\end{figure*}

\noindent \textbf{Algorithm. \ } We now describe the \PAAD algorithm itself.  The main pseudocode is provided in \autoref{alg:paad}, and the pseudo-cost minimization subroutine for individual drivers is detailed in \autoref{alg:pcm-subroutine}. %
The key idea guiding \PAAD's behavior is to have each driver make local decisions based on its own state and threshold functions, while also carefully accounting for switching costs across multiple drivers, allowing the algorithm to amortize global switching costs.  %
All local decisions are \textit{aggregated} into a single global decision: $\smash{x_{t} \leftarrow \sum_{i \in \mathcal{B} \cup \mathcal{F}} x_{t}^{(i)},  z_{t} \leftarrow b_t + \sum_{i \in \mathcal{F}} z_{t}^{(i)}}$.  \PAAD ensures feasibility by checking for binding constraints (i.e., flexible demands nearing their deadlines, Line 4 in \autoref{alg:pcm-subroutine}) and ensuring that sufficient energy is purchased to meet the delivery decision (Line 21). %
\PAAD amortizes global switching costs by communicating \textit{pseudo-decisions} to each driver, which are adjusted based on the global purchasing decision from the previous time step.  This allows each driver to recognize when increasing its purchase amount can help to amortize the total switching cost across multiple drivers, reducing the pessimism inherent in a purely local approach.   

For each driver $i \in \mathcal{B} \cup \mathcal{F}$, this is formalized as the definition of 
$\smash{\hat{x}^{(i)}_{t-1}}$ on Line 2 in \autoref{alg:pcm-subroutine}, where $q(x_{t-1})$ is the difference between the sum of all driver decisions and the global decision at time $t-1$. 
Intuitively, $q(x_{t-1})$ captures any ``extra'' purchasing from the previous time step that was not attributable to any currently active driver (e.g., due to base demand or inactive drivers). 
{\color{blue}
Note that Line 3 (\autoref{alg:pcm-subroutine}) includes a $\gamma \vert x \vert$ term---this term accounts for the worst-case future cost of ``switching off'' whenever $x > 0$ (e.g., when the sequence ends at time $T$, decisions $x_{T+1} = z_{T+1} = 0$).  By considering the pseudo-decision $\smash{\hat{x}_{t-1}^{(i)}}$ instead of the actual previous decision $\smash{x_{t-1}^{(i)}}$ when making the next decision, \PAAD amortizes this ``future cost'' across multiple drivers, resulting in smoother global decisions.} The same is applied to the delivery decisions of flexible drivers.
Note that \PAAD may generate as many as $\Omega(T)$ drivers on a given problem instance. The necessity of these many drivers is supported by our warmup result in \autoref{sec:warmup}, which shows that a granular, per-demand accounting is likely necessary to achieve the best possible competitive ratio when future demands are unknown. However, \PAAD's driver refresh (Line 23) ensures that, in practice, there will likely be far fewer than $\Omega(T)$ drivers active at any given time. In particular, a driver is only active for a finite period---e.g., until flexible drivers' deadlines pass, or until base drivers complete their purchasing. 
We discuss this point further and give runtime measurements in \sref{Appendix}{apx:runtime}.

\noindent \begin{minipage}[t]{\linewidth}
\null 
\vspace{-1em}
\begin{algorithm}[H]
\caption{Pseudo-cost Minimization Subroutine for \PAAD Drivers (\texttt{GetDecision})}
\label{alg:pcm-subroutine}
{\small
\begin{algorithmic}[1]
\State \textbf{input:} driver sets $\mathcal{B}$ and $\mathcal{F}$, driver index $i$, excess $q(x_{t-1})$, price $p_t$, switching coeff. $\gamma$, threshold $\Phi$
\State \textbf{define} \textit{pseudo-decision} $\hat{x}^{(i)}_{t-1} \gets x^{(i)}_{t-1} + \frac{ q(x_{t-1}) \cdot d^{(i)} }{\sum_{i\in \mathcal{B} \cup \mathcal{F}} d^{(i)} }$
\State \textbf{solve} $x^{(i)}_t \gets \text{argmin}_{x \in [0, d^{(i)} - w_{t-1}^{(i)}]} \ p_t x + \gamma \vert x - \hat{x}^{(i)}_{t-1} \vert + \gamma \vert x \vert - \Phi^{(i)}(w^{(i)}_{t-1}, w^{(i)}_{t-1}+x)$
\If {$i \in \mathcal{F}$ \textbf{and} $\Delta^{(i)} = t$} $x^{(i)}_t \gets d^{(i)} - w_{t-1}^{(i)}$ \Comment{flexible demand must be satisfied \textit{now} (at the deadline)}
\EndIf
\State \textbf{output:} driver's decision for time step $t$: $x^{(i)}_t$
\end{algorithmic}
}
\end{algorithm}
\end{minipage}
\noindent \begin{minipage}[t]{\linewidth}
\null 
\vspace{-0.5em}
\begin{algorithm}[H]
\caption{\textbf{P}artitioned \textbf{A}ccounting \& \textbf{A}ggregated \textbf{D}ecisions (\PAAD) Algorithm for \OSDMS}
\label{alg:paad}
{\small
\begin{algorithmic}[1]
\State \textbf{input:} Storage capacity $S$, switching cost coefficients $\gamma, \delta$, delivery cost $\mathcal{D}(\cdot)$, threshold functions $\Phi_b, \Phi_f, \Psi$
\State \textbf{initialize:} Storage state $s_0 \gets 0$, decisions $x_0 \gets 0$, $z_0 \gets 0$, driver sets $\mathcal{B} \gets \emptyset$, $\mathcal{F} \gets \emptyset$
\For{$t = 1$ to $T$: \textbf{observe} $p_t, b_t, f_t, \Delta_t$, }
    \If{$s_{t-1} = 0$ or $b_{t} > S$} \Comment{refresh base drivers and storage manager driver}
        \State Create new storage manager driver with index $0$ and size $d^{(0)} = S$; $\mathcal{B} \gets \{0\}$ 
    \EndIf
    \If{$0 < b_t < S$}
        \State Create base driver with index $2t$ and size $d^{(2t)} = b_t$; $\mathcal{B} \gets \mathcal{B} \cup \{2t\}$
    \EndIf
    \If{$0 < f_t$}
        \State Create flexible driver with index $2t+1$, size $d^{(2t+1)} = f_t$ and deadline $\Delta^{(2t+1)} = \Delta_t$; $\mathcal{F} \gets \mathcal{F} \cup \{2t+1\}$
    \EndIf
    \State \textbf{compute} $q(x_{t-1}) \gets x_{t-1} - \sum_{i \in \mathcal{B} \cup \mathcal{F}} x^{(i)}_{t-1}$, \quad $q(z_{t-1}) \gets z_{t-1} - \sum_{i \in \mathcal{F}} z^{(i)}_{t-1}$ 
    \State \Comment{captures ``extra'' purchase/delivery due to prior demand or inactive drivers}
    \For{each active flexible driver $i \in \mathcal{F}$} 
        \State $z^{(i)}_t \gets \texttt{GetDecision}(\mathcal{B}, \ \mathcal{F}, \ i, \ q(z_{t-1}), \ \mathcal{D}(z_t, s_{t-1}, p_t), \ \delta, \ \Psi^{(i)})$ 
        \State \textbf{update} delivery state: $v^{(i)}_t = v^{(i)}_{t-1} + z^{(i)}_t$
    \EndFor
    \State \textbf{aggregate} delivery decisions $z_t \gets b_t + \sum_{i \in \mathcal{F}} z_t^{(i)}$ and \textbf{compute} buying $\texttt{cap}_t \gets z_t + (S - s_{t-1})$
    \For{each active driver $i \in \mathcal{B} \cup \mathcal{F}$ \textbf{while} $\texttt{cap}_t > 0$}
        \State $x^{(i)}_t \gets \texttt{GetDecision}(\mathcal{B}, \ \mathcal{F}, \ i, \ q(x_{t-1}), \ p_t, \ \gamma, \ \Phi^{(i)})$ 
        \State $x^{(i)}_t \gets \min\{x^{(i)}_t, \texttt{cap}_t \}$ \Comment{ note $\Phi^{(i)}$ indicates $\Phi_b$ or $\Phi_f$ depending on driver type }
        \State \textbf{update} buying $\texttt{cap}_t \gets \texttt{cap}_t - x^{(i)}_t$ and purchasing state $w^{(i)}_t = w^{(i)}_{t-1} + x^{(i)}_t$
    \EndFor
    \State \textbf{aggregate} global decisions: $x_t \gets \sum_{i \in \mathcal{B} \cup \mathcal{F}} x_t^{(i)}$ 
    \State $x_t \gets \max\{ x_t, z_t - s_{t-1} \}$ \Comment{ensure sufficient energy to deliver}
    \State \textbf{update} storage state: $s_t = s_{t-1} + x_t - z_t$
    \State \textbf{refresh} drivers: remove any driver $i$ with $w^{(i)}_t = d^{(i)}$ or $\Delta^{(i)} = t$ from $\mathcal{B}$ or $\mathcal{F}$, respectively
\EndFor
\State \textbf{output:} Purchase decisions $\{x_t\}_{t=1}^T$, delivery decisions $\{z_t\}_{t=1}^T$
\end{algorithmic}
}
\end{algorithm}
\vspace{0.5em}
\end{minipage}

\subsection{Competitive Analysis} \label{sec:paad-analysis}

We now analyze the \PAAD algorithm and show that it achieves the best possible competitive ratio for \OSDM.  We start by stating our main result.

\begin{theorem}
\label{thm:osdm-upper-bound-1}
\PAAD is $\alpha$-competitive for \OSDMS, where $\alpha$ is given by:
{\color{blue}
\begin{equation}
\alpha = \frac{\omega}{\Bigg[ W\Bigg(-\frac{\left( P - (1+\varepsilon) p_{\min} \right) \exp\big( \frac{- \nicefrac{\omega 2\kappa}{T} - p_{\min} c - P}{P + 2\kappa} \big) }{P + 2\kappa}\Bigg)+\frac{P + p_{\min} c - \nicefrac{\omega 2\kappa}{T}}{P + 2\kappa} \Bigg] }, \label{eq:alpha1}
\end{equation}}
where $W(\cdot)$ is the Lambert W function~\cite{Corless:96LambertW}, $P = (1+c+\varepsilon)p_{\max}$, $\kappa = \gamma + \delta$ and $\omega = \frac{(1+c+\varepsilon)}{(1+\varepsilon)}$.
\end{theorem}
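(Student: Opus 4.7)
My plan is to carry out a pseudo-cost minimization analysis at the driver level, in the spirit of~\cite{SunZeynali:20} but adapted to the heterogeneous, partitioned structure of \PAAD. The approach has four main steps: (i) decompose \PAAD's global cost into per-driver contributions; (ii) upper bound each driver's cost by its pseudo-cost using the design of \texttt{GetDecision}; (iii) lower bound \OPT in terms of the same pseudo-costs; and (iv) solve the boundary conditions on the thresholds to obtain the specific $\alpha$ in~\eqref{eq:alpha1}.

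First I would establish the decomposition. By construction $x_t = \sum_{i \in \mathcal{B}\cup\mathcal{F}} x_t^{(i)}$ and $z_t = b_t + \sum_{i \in \mathcal{F}} z_t^{(i)}$, so the purchase cost $p_t x_t$ and the delivery cost $\mathcal{D}(z_t, s_{t-1}, p_t)$ split additively across drivers (the latter because $\mathcal{D}$ is linear in $z_t$ for each fixed state, per \sref{Def.}{def:osdm-delivery-pd-monotone}(b)). For the switching terms, the central observation is that the pseudo-decision $\hat{x}_{t-1}^{(i)} = x_{t-1}^{(i)} + q(x_{t-1}) d^{(i)} / \sum_j d^{(j)}$ redistributes the residual $q(x_{t-1})$ proportionally across active drivers, so a triangle inequality yields $\gamma|x_t - x_{t-1}| \le \sum_i \gamma |x_t^{(i)} - \hat{x}_{t-1}^{(i)}|$, and analogously for $\delta|z_t - z_{t-1}|$. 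The extra $\gamma|x|$ penalty inside \texttt{GetDecision} accounts for each driver's start-up and shut-down switch, which absorbs the $t=0$ and $t=T+1$ boundary terms of~\eqref{eq:osdm-objective}. The storage manager driver (index $0$, size $S$) absorbs the inventory dynamics, allowing us to avoid explicit reasoning about the shared state $s_t$ in the per-driver bounds.

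Second, for each driver $i$ I would use the usual online threshold argument: define its pseudo-cost trajectory as $\Phi^{(i)}(0, w_t^{(i)}) = \int_0^{w_t^{(i)}} \phi^{(i)}(u)\,du$ (and analogously $\Psi^{(i)}$ for delivery). The one-step minimization on Line~3 of \autoref{alg:pcm-subroutine} guarantees by optimality that the per-step cost $p_t x_t^{(i)} + \gamma|x_t^{(i)} - \hat{x}_{t-1}^{(i)}| + \gamma|x_t^{(i)}|$ is bounded by the pseudo-cost increment $\Phi^{(i)}(w_{t-1}^{(i)}, w_t^{(i)})$ whenever the driver is in its interior regime. Telescoping over $t$ then yields per-driver online cost $\le \Phi^{(i)}(0, d^{(i)})$, with a separate forced-completion argument handling flexible drivers whose deadline fires before $w^{(i)}$ reaches $d^{(i)}$ (Line~4 of \autoref{alg:pcm-subroutine}); the additive slack $+2\gamma$, $+2\delta$, and $+p_{\min} c$ baked into Definitions~\ref{def:base-threshold} and~\ref{def:flex-thresholds} is precisely what makes this forced step affordable in the worst case. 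For the offline lower bound, monotonicity of each threshold lets me show that whenever \PAAD advances $w^{(i)}$, the current price (or marginal delivery cost) lies below $\phi^{(i)}(w^{(i)})$, so by a standard argument~\cite{ElYaniv:01, SunZeynali:20, Lechowicz:24} any feasible offline solution must pay at least $\Phi^{(i)}(0, d^{(i)}) / \alpha$ on the portion of demand tracked by driver $i$; summing across drivers and invoking step one gives $\PAAD(\mathcal{I}) \le \alpha \cdot \OPT(\mathcal{I}) + C$.

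Finally, the constant $\alpha$ is pinned by the requirement that each threshold reaches the worst-case marginal cost when its state hits $d^{(i)}$: for the flexible driver this is $\phi_f(d) + \psi(d) = P + p_{\min} c + 2\kappa$ (purchase plus delivery plus total switching), and the reduction $\omega = (1+c+\varepsilon)/(1+\varepsilon)$ ties the combined flexible purchase-plus-delivery analysis back to the scalar $\alpha$ through $\alpha' = \alpha/\omega$. Substituting the exponential forms from Definitions~\ref{def:base-threshold} and~\ref{def:flex-thresholds} into this boundary equation and collecting terms produces a transcendental equation of the form $y e^{y} = \text{const}$ with $y \propto 1/\alpha$, whose solution is expressed via the Lambert $W$ function and yields~\eqref{eq:alpha1}. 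The hardest part, and where I expect to spend most of the technical effort, is step one: the switching-cost amortization through $q(x_{t-1})$ must be tight enough that the residual from inactive or newly retired drivers can be charged to at most $O(1)$ entries or exits per driver, because the $\Omega(T)$ drivers that \PAAD may spawn would otherwise inflate $\alpha$ by factors destroying the desired sublinear dependence on $p_{\max}/p_{\min}$.
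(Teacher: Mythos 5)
Your overall paradigm matches the paper's: per-driver pseudo-cost charging, a switching-cost amortization via the pseudo-decisions $\hat{x}^{(i)}_{t-1}$ (your triangle-inequality observation that $\sum_i \hat{x}^{(i)}_{t-1} = x_{t-1}$ is exactly the intended mechanism), threshold-relation identities, and a boundary condition yielding a transcendental equation solved by the Lambert $W$ function. However, there is a genuine gap in your step (iii), and it is precisely the step the paper identifies as the main technical obstacle. You assert that any feasible offline solution ``must pay at least $\Phi^{(i)}(0,d^{(i)})/\alpha$ on the portion of demand tracked by driver $i$,'' and that summing across drivers finishes the argument. For base demand this is false as stated: \OPT is not required to purchase base demand during the lifetime of the corresponding driver, because it can pre-purchase into storage at an earlier, cheaper time and deliver from inventory when the demand arrives. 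A per-driver, per-lifetime lower bound therefore does not sum to a valid global lower bound on \OPT. The paper's proof resolves this by partitioning the horizon into \emph{active} and \emph{inactive} periods according to \PAAD's storage state, lower-bounding \OPT by $\sum_i G_i(\beta_i) + (D_b - \sum_i \beta_i)\tilde{p}$ where $\tilde{p}$ is the best price during inactive periods (itself controlled because the storage-manager driver declined it), and then invoking a worst-case instance-padding lemma (\sref{Lemma}{lem:upper-bound-B-beta-tilde-p}) showing $G_i(\beta') - G_i(\beta) \le \tilde{p}(\beta'-\beta)$, which lets the $\beta_i$-dependent bound be converted into one indexed by the demand actually assigned to each base driver. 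None of this machinery appears in your proposal, and without it the summation over base drivers does not go through.

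A secondary issue: you claim the delivery cost ``splits additively across drivers'' and that the storage-manager driver lets you ``avoid explicit reasoning about the shared state $s_t$.'' The decreasing delivery cost $\mathcal{D}(z_t,s_{t-1},p_t) = (c - c s_{t-1}/S + \varepsilon)p_t z_t$ couples all drivers through the global state $s_{t-1}$, so the per-driver delivery charge is not separable; the paper handles this by bounding every driver's delivery coefficient by the worst case $(c+\varepsilon)$ and then crediting the aggregate improvement $-c\,p_{\min}\hat{W}$ globally (term \eqref{eq:paad-upper-bound-4} of \sref{Lemma}{lem:osdm-paad-upper-bound-1}), an argument that requires reasoning about $\sum_t s_t$ across the whole horizon rather than driver by driver. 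You correctly anticipated that the switching-cost amortization is delicate, but the step that actually requires the new idea is the \OPT lower bound for base demand under storage, and your proposal would need to be substantially restructured there.
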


\noindent This expression for $\alpha$ is complex, but a leading order approximation yields asymptotics that are sublinear in the ratio $\nicefrac{p_{\max}}{p_{\min}}$ (specifically, $\alpha \approx O\left(\sqrt{\nicefrac{p_{\max}}{p_{\min}}}\right)$), and linear in both the switching cost coefficients $\gamma, \delta$ and the delivery cost coefficient $c$. We plot $\alpha$ as a function of these parameters in \sref{Appendix}{apx:paad-proof}, and prove that it is the optimal competitive ratio in \autoref{sec:fundamental-limits}.

{\color{blue}
We remark on several regimes of interest for $\alpha$ where \OSDMS essentially ``recovers'' the optimal competitive ratio of other online problems.  First, in a case with zero delivery cost (and zero delivery switching cost), letting $T \to \infty$, $\omega = 1$, $\kappa = \beta$, $c = \epsilon = 0$, and $P = U-2\beta$,\footnote{{\color{blue}In contrast to \OSDM, the original \OCS paper~\cite{Lechowicz:24} assumed that any necessary compulsory trading at the end of the time horizon could be accomplished without incurring additional switching cost, captured by the $-2\beta$ in this term.}} $\alpha$ exactly reduces to the optimal competitive ratio for online conversion with switching costs, where $\beta$ is the switching coefficient~\cite{Lechowicz:24}. Furthermore, letting $T \to \infty$, $\omega \to 1$, $\kappa = 0$, $c = \epsilon = 0$, and $P \to U$, $\alpha$ reduces to the optimal competitive ratio for the minimization version of one-way trading~\cite{Lorenz:08}.
}

\begin{corollary}
\label{cor:osdmt-upper-bound-2}
\PAAD is $\alpha_{\texttt{T}}$-competitive for \OSDMT, for $\alpha_{\texttt{T}}$ given by:
{\color{blue}
\begin{equation}
\alpha_{\texttt{T}} = \frac{\omega}{\Bigg[ W\Bigg(-\frac{\left( P - (1+\varepsilon) p_{\min} \right) \exp\big( \frac{- \nicefrac{\omega 2\delta}{T} - p_{\min} c - P}{P + 2(\eta+\delta)}\big) }{P + 2(\eta + \delta)}\Bigg)+\frac{P + p_{\min} c -  \nicefrac{\omega 2\delta}{T}}{P + 2(\eta+\delta)} \Bigg] }, \label{eq:alphaT}
\end{equation}}
This result uses a redefinition of the thresholds and pseudo-cost minimization problem (see \sref{Def.}{dfn:new-thresholds-and-pseudo-cost-osdmt}).
\end{corollary}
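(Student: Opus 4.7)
The plan is to adapt the proof of Theorem~\ref{thm:osdm-upper-bound-1} by substituting the tracking cost $\eta|x_t - a_t|$ for the switching cost $\gamma|x_t - x_{t-1}|$ in the three places it enters the \PAAD analysis: (i) the purchasing threshold functions $\phi_b, \phi_f$, (ii) the pseudo-cost minimization subroutine in \autoref{alg:pcm-subroutine}, and (iii) the amortization that converts per-driver cost bounds into a global bound. The delivery side (the threshold $\psi$, the coefficient $\delta$, and the pseudo-decisions and $q(z_{t-1})$ machinery that govern delivery) remains unchanged, since the delivery switching-cost model is identical across \OSDMS and \OSDMT.

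First, I would partition the target signal across active drivers proportionally to their sizes: $a_t^{(i)} := a_t \cdot d^{(i)} / \sum_{j \in \mathcal{B} \cup \mathcal{F}} d^{(j)}$, so that $\sum_i a_t^{(i)} = a_t$ and, by the triangle inequality, $\eta|x_t - a_t| \le \sum_i \eta|x_t^{(i)} - a_t^{(i)}|$. The redefined pseudo-cost subroutine replaces the pseudo-decision $\hat{x}_{t-1}^{(i)}$ by $a_t^{(i)}$ and has each driver locally solve $x_t^{(i)} \gets \operatorname*{argmin}_{x \in [0, d^{(i)} - w_{t-1}^{(i)}]}\, p_t x + \eta|x - a_t^{(i)}| - \Phi^{(i)}(w_{t-1}^{(i)}, w_{t-1}^{(i)}+x)$. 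Note that because tracking cost is separable across time, no carryover analogous to $q(x_{t-1})$ is needed on the purchasing side. The threshold functions are then rederived from the standard pseudo-cost condition -- that a marginal purchase $dw$ be ``worth it'' at a price equal to the threshold -- yielding the same ODE form as in \sref{Def.}{def:base-threshold} and \sref{Def.}{def:flex-thresholds}, but with the constant $2\gamma$ replaced by $2\eta$ and, critically, with the $2\kappa/T$ term (which captured the horizon-amortized switching cost) replaced by $2\delta/T$, since tracking cost is paid per step against a revealed target rather than being coupled across consecutive purchases.

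Next, I would carry the competitive-ratio argument of Theorem~\ref{thm:osdm-upper-bound-1} through essentially line by line. The lower bound on $\OPT(\mathcal{I})$ proceeds identically: it relies only on the bounded-price structure and on the requirement that all demand eventually be served, neither of which touches the purchasing-smoothness model. The upper bound on \PAAD's cost is actually simpler in the tracking setting, because the global tracking penalty is controlled by the sum of per-driver penalties via the triangle inequality and the partition $\sum_i a_t^{(i)} = a_t$; the assumption $\sum_t a_t \le \sum_t (b_t + f_t)$ from \autoref{sec:assumptions} guarantees that this sum is in turn bounded in terms of total demand, playing the role that the a priori switching budget plays in \OSDMS. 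Substituting $\gamma \mapsto \eta$ in the per-unit switching terms and $\gamma \mapsto 0$ in the per-horizon amortization terms of the threshold construction, and then reading off the boundary condition $\Phi^{(i)}(0, d^{(i)}) = \alpha_T \cdot (\text{worst-case per-unit cost}) \cdot d^{(i)}$ in the same way as in the proof of Theorem~\ref{thm:osdm-upper-bound-1}, yields precisely the Lambert-W expression \eqref{eq:alphaT}.

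The main obstacle I anticipate is verifying that the per-driver proportional partition of $a_t$ preserves the feasibility and aggregation structure that \PAAD relies on. In particular, one must check: (a) the global purchasing decision $x_t = \sum_i x_t^{(i)}$ still respects the cap-decrementing loop in \autoref{alg:paad} (Lines 17--20) and the delivery-induced lower bound $z_t - s_{t-1}$; (b) the redefined per-driver objective remains well-posed, so that each $x_t^{(i)} \ge 0$ is attained in the feasible interval $[0, d^{(i)} - w_{t-1}^{(i)}]$; and (c) the replacement of $2\kappa/T$ by $2\delta/T$ preserves the key monotonicity inequality relating $\Phi^{(i)}$ to the marginal pseudo-cost used throughout the \OSDMS argument. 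Items (a) and (b) follow immediately from the fact that \PAAD's outer control flow is unchanged and the tracking penalty is continuous in $x$, while (c) is the one place where one must be careful to retain precisely the delivery contribution $2\delta/T$ while dropping the purchasing contribution $2\gamma/T$; once this bookkeeping is discharged, the proof collapses onto the algebra of Theorem~\ref{thm:osdm-upper-bound-1} and produces the claimed $\alpha_T$.
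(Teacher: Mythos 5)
Your proposal is correct and follows essentially the same route as the paper: the paper likewise redefines the pseudo-cost subroutine with a size-proportional pseudo-target $\hat{a}^{(i)}_{t} = a_t d^{(i)}/\sum_j d^{(j)}$, replaces $2\gamma$ by $2\eta$ in the per-unit terms and $2\kappa/T$ by $2\delta/T$ in the horizon-amortized terms of the thresholds and of the \OPT lower bound (since \OPT may pay zero tracking cost but still pays the delivery switching cost), and then reruns the charging and contradiction argument of \autoref{thm:osdm-upper-bound-1} to obtain \eqref{eq:alphaT}. The bookkeeping concerns you flag in (a)--(c) are exactly the points the paper's appendix discharges.
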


\subsection{Proof Overview}

We sketch the proof of \autoref{thm:osdm-upper-bound-1}, relegating its full proof and that of \sref{Corollary}{cor:osdmt-upper-bound-2} to \autoref{apx:paad-proof}.

\noindent \textbf{Proof Sketch of \autoref{thm:osdm-upper-bound-1}. \ } 
To show this result, we prove lemmas to characterize the cost of \OPT, the cost of \PAAD, and the relationship between the two.  First, note that \PAAD's solution is feasible (i.e., it satisfies all \OSDM constraints, see \sref{Lemma}{lem:osdm-feasibility-1})---this follows by \PAAD's definition.

For an arbitrary \OSDM instance, we then carefully lower bound the cost of \OPT---compared to prior applications of the pseudo-cost minimization framework, the key technical challenge is that it is not sufficient to derive a single global lower bound on \OPT's cost (e.g., as in \OCS~\cite{Lechowicz:24}), since there are many units of demand with differing constraints.  The key idea of our analysis in \sref{Lemma}{lem:osdm-optimal-lower-bound-1} is to partition the instance into \textit{active} and \textit{inactive} periods based on \PAAD's base driver state. This enables us to derive a lower bound on \OPT's cost that is a combination of the ``local'' best prices during each active period and the best price during all inactive periods, which is itself lower bounded.
Next, we derive an upper bound on \PAAD's total cost by leveraging the definition of the pseudo-cost minimization to ``charge'' each driver's decisions to their threshold functions (see \sref{Lemma}{lem:osdm-paad-upper-bound-1}), while accounting for any worst-case purchasing, switching, and delivery costs that arise due to binding constraints for a particular unit of demand.  To relate the cost of \PAAD to that of \OPT, we prove two lemmas that give a relation between each threshold function and its integral:
\begin{lemma}
\label{lem:osdm-threshold-function-relation-1}
By the definition of the threshold function $\phi_b(\cdot)$, the following relation always holds:
$\Phi_b (0, w) + (1\!-\!w) (p_{\max}+2\gamma) + p_{\max}(c+\varepsilon) + 2\delta - c w p_{\min} = \alpha\left[ \phi_b(w) - 2\gamma + \varepsilon p_{\max} + \frac{2\kappa}{T}\right] \ \forall w \in [0,1].
$
\end{lemma}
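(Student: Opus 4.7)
The plan is to verify this identity by direct computation: substitute the closed-form expression for $\phi_b$ from \sref{Def.}{def:base-threshold} (with the normalization $d = 1$, which is implicit from the lemma statement $w \in [0,1]$), compute the antiderivative $\Phi_b(0,w)$ in closed form, and then compare coefficients on both sides. Write $\phi_b(w) = A + B\exp(w/\alpha)$, where I abbreviate $A \coloneqq p_{\max} + 2\gamma + p_{\min}c$ and $B \coloneqq \tfrac{P + 2\kappa}{\alpha} - \bigl(p_{\max}(1+\varepsilon) + p_{\min}c + \tfrac{2\kappa}{T}\bigr)$, with $P = (1+c+\varepsilon)p_{\max}$. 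A direct integration then gives $\Phi_b(0,w) = Aw + B\alpha\bigl[\exp(w/\alpha) - 1\bigr]$.

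Next, I would substitute these expressions into both sides of the claimed identity. On the LHS, collecting terms by their functional dependence on $w$ yields three groups: a term proportional to $\exp(w/\alpha)$ with coefficient $B\alpha$; a term linear in $w$ with coefficient $A - (p_{\max} + 2\gamma) - cp_{\min}$; and a constant term $-B\alpha + (p_{\max}+2\gamma) + p_{\max}(c+\varepsilon) + 2\delta$. On the RHS, the same decomposition gives $B\alpha \exp(w/\alpha)$ for the exponential part, no $w$-linear part, and constant $\alpha A + \alpha\bigl(-2\gamma + \varepsilon p_{\max} + \tfrac{2\kappa}{T}\bigr)$.

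The exponential coefficients manifestly match. The $w$-linear coefficient on the LHS vanishes by the definition of $A$. It therefore remains only to check equality of the constants. Expanding $-B\alpha$, the LHS constant becomes $-[P + 2\kappa] + \alpha\bigl(p_{\max}(1+\varepsilon) + p_{\min}c + \tfrac{2\kappa}{T}\bigr) + (p_{\max}+2\gamma) + p_{\max}(c+\varepsilon) + 2\delta$. Using $\kappa = \gamma + \delta$ and $P = p_{\max}(1+c+\varepsilon)$, the non-$\alpha$ terms cancel, leaving $\alpha\bigl(p_{\max}(1+\varepsilon) + p_{\min}c + \tfrac{2\kappa}{T}\bigr)$, which matches the RHS after substituting $A$ and collecting terms.

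The argument is essentially algebraic bookkeeping rather than a deep insight; the only mild obstacle is verifying that the constants align precisely, particularly tracking how the ``$(1-w)(p_{\max} + 2\gamma)$'' and ``$p_{\max}(c+\varepsilon) + 2\delta$'' correction terms on the LHS combine with the $-B\alpha$ boundary term from the integral to reproduce exactly the RHS constant. This cancellation is in fact exactly what motivates the particular choice of constants appearing in the design of $\phi_b$ in \sref{Def.}{def:base-threshold}: the threshold function is engineered so that this identity holds, enabling the later step of relating $\PAAD$'s per-driver pseudo-cost to $\alpha$ times the appropriate lower bound on $\OPT$.
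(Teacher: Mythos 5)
Your proposal is correct: the direct substitution of $\phi_b(w) = A + B\exp(w/\alpha)$ with $A = p_{\max}+2\gamma+p_{\min}c$, the integration $\Phi_b(0,w) = Aw + B\alpha[\exp(w/\alpha)-1]$, and the coefficient-matching (exponential terms equal, $w$-linear terms cancel via the definition of $A$, and the constants agree because $P + 2\kappa = p_{\max} + p_{\max}(c+\varepsilon) + 2\gamma + 2\delta$) all check out. The paper states this lemma as following directly from the definition of $\phi_b$ without writing out the computation, so your verification is exactly the intended argument and matches the paper's approach.
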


\begin{lemma}
\label{lem:osdm-threshold-function-relation-2}
By the definitions of the threshold functions $\phi_f(\cdot)$ and $\psi_f(\cdot)$, the following relation always holds:
$
\Phi_f (0, w) + (1\!-\!w) (p_{\max}+2\gamma) - c w p_{\min} + \Psi_f (0, v) + (1\!-\!v) (p_{\max}(c+\varepsilon) + 2\delta) =
\alpha'\Big[ \phi_f(w) + \psi_f(v) - 2\kappa + \frac{2\kappa \omega}{T }\Big] \ \forall w \in [0,1], v \in [0,w].
$
\end{lemma}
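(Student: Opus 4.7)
The plan is to exploit the fact that the claimed identity is \emph{separable}: both sides decompose into a sum of a function of $w$ alone plus a function of $v$ alone. Using $\kappa = \gamma + \delta$, I would first rewrite the right-hand side as
\[
\alpha'\Big[\phi_f(w) - 2\gamma + \tfrac{2\gamma\omega}{T}\Big] + \alpha'\Big[\psi_f(v) - 2\delta + \tfrac{2\delta\omega}{T}\Big],
\]
and reduce the claim to the two independent sub-identities
\[
\Phi_f(0,w) + (1-w)(p_{\max}+2\gamma) - c w p_{\min} \;=\; \alpha'\Big[\phi_f(w) - 2\gamma + \tfrac{2\gamma\omega}{T}\Big] \quad (\mathrm{i})
\]
\[
\Psi_f(0,v) + (1-v)\bigl(p_{\max}(c+\varepsilon) + 2\delta\bigr) \;=\; \alpha'\Big[\psi_f(v) - 2\delta + \tfrac{2\delta\omega}{T}\Big] \quad (\mathrm{ii})
\]
Because the cross terms cancel, the purchasing and delivery thresholds can be analyzed independently, mirroring the structure used for \sref{Lemma}{lem:osdm-threshold-function-relation-1}.

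Next I would prove each sub-identity by direct substitution. For (i), I would write $\phi_f(w) = A_1 + B_1 \exp(w/\alpha')$ with $A_1 = p_{\max} + p_{\min} c + 2\gamma$ and $B_1$ the constant prefactor of the exponential in \sref{Def.}{def:flex-thresholds}, so that integrating term by term yields $\Phi_f(0,w) = A_1 w + \alpha' B_1 \bigl(\exp(w/\alpha')-1\bigr)$. Substituting into (i), the coefficient of $w$ on the left equals $A_1 - (p_{\max}+2\gamma) - c p_{\min} = 0$ by the choice of $A_1$, so the linear-in-$w$ terms vanish; the exponential pieces on the two sides both carry the coefficient $\alpha' B_1$ and hence match; what remains is a constant-term identity that closes after substituting the closed form of $B_1$. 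Sub-identity (ii) follows the same template with $\psi_f(v) = A_2 + B_2 \exp(v/\alpha')$ and $A_2 = p_{\max}(c+\varepsilon) + 2\delta$, where $A_2$ is again chosen so that the linear-in-$v$ coefficient on the left vanishes and $B_2$ is tuned so that the residual constants agree.

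The main obstacle is bookkeeping: tracking the many parameters $(p_{\max}, p_{\min}, \gamma, \delta, c, \varepsilon, \omega, \alpha', T)$ while verifying that each constant-term identity closes. This is by design, however---the threshold functions in \sref{Def.}{def:flex-thresholds} are precisely the solutions to the first-order linear ODEs whose integrated forms are (i) and (ii), with additive and exponential constants selected exactly to make both sub-identities hold. Hence no genuinely new computation is required beyond a careful expansion and collection of like terms in $\{1,\; w,\; \exp(w/\alpha'),\; v,\; \exp(v/\alpha')\}$.
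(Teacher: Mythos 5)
Your proposal is correct and matches the paper's treatment: the paper offers no separate derivation for this lemma beyond asserting it holds "by the definitions of the threshold functions," i.e., direct algebraic verification, which is exactly what you carry out. Your separable decomposition into the two sub-identities (i) and (ii) is valid (summing their right-hand sides recovers $-2\kappa + \tfrac{2\kappa\omega}{T}$ since $\kappa=\gamma+\delta$), and both sub-identities do close under the stated choices of $A_1, B_1, A_2, B_2$ from \sref{Def.}{def:flex-thresholds}.
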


\noindent The rest follows by contradiction---we show that if \sref{Lemma}{lem:osdm-optimal-lower-bound-1} and \ref{lem:osdm-paad-upper-bound-1} hold, then if the threshold functions adhere to \sref{Lemmas}{lem:osdm-threshold-function-relation-1} and \ref{lem:osdm-threshold-function-relation-2}, $\frac{\PAAD(\mathcal{I}) - p_{\max} \hat{s}}{\OPT(\mathcal{I})}$ must be at most $\alpha$, completing the proof. \hfill $\square$

\section{Fundamental Limits}
\label{sec:fundamental-limits}

In this section, given the results in \autoref{sec:alg}, we ask whether any algorithm can achieve a better competitive bound for \OSDM than \PAAD.  In \sref{Theorems}{thm:osdm-lower-bound-1} and \ref{thm:osdmt-lower-bound-2} below, we answer this question in the negative, showing that \PAAD's competitive ratio is the best achievable among all deterministic online algorithms for \OSDMS and \OSDMT, respectively. Along the way, we show results to contextualize the relative ``hardness'' of \OSDM in subcases of interest, such as the case of only base demand (see \sref{Corollary}{cor:osdm-lower-bound-base-demand}) or the case of price-dependent increasing delivery cost (recall \sref{Def.}{def:osdm-delivery-pd-monotone}).

\smallskip
\noindent \textbf{Lower Bound for \OSDMS. \ } 
We start by considering the switching cost case of \OSDMS.  \autoref{thm:osdm-upper-bound-1} shows that \PAAD is $\alpha$-competitive (defined in \eqref{eq:alpha1}) for \OSDMS.  We now prove that $\alpha$ is optimal by constructing a difficult set of \OSDMS instances on which no deterministic algorithm can achieve a better competitive ratio; this \textit{lower bounds} the achievable ratio of any deterministic online algorithm for \OSDMS (see \sref{Def.}{dfn:comp-ratio}).
We give a sketch of the result, deferring the full proof to \autoref{apx:lowerbounds}.

\begin{theorem}
\label{thm:osdm-lower-bound-1}
There exists a set of \OSDMS instances for which no deterministic online algorithm \ALG can achieve a competitive ratio better than $\alpha$ (for $\alpha$ defined in \eqref{eq:alpha1}).
\end{theorem}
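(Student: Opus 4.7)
I would construct a one-parameter adversarial family of \OSDMS instances whose ``price-reveal'' structure traps any deterministic online algorithm, and then argue via a continuous/differential argument that the best competitive ratio achievable on this family coincides with the $\alpha$ of \eqref{eq:alpha1}. This mirrors the classical ``$y$-prices'' lower-bound technique for one-way trading and \OCS~\cite{ElYaniv:01, Lechowicz:24}, with new care needed to handle the delivery cost $\mathcal{D}(\cdot)$ and the coupled switching penalties.

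\textbf{Step 1 (Instance family).} For each $y \in [p_{\min}, p_{\max}]$ I would construct an instance $\mathcal{I}_y$ with a single unit of flexible demand $f_1 = 1$ arriving at time $1$ with deadline $\Delta_1 = T$, no base demand, and storage capacity $S$ chosen to maximize the adversary's leverage over the decreasing affine delivery cost. The price sequence is non-increasing, stepping down in small decrements from $p_{\max}$ to $y$ over a long prefix of length $k(y)$; once $y$ is reached, the price jumps back to $p_{\max}$ and remains there until $T$. Then $\OPT(\mathcal{I}_y) = (1+c+\varepsilon)\,y + O(\kappa/T)$, achieved by purchasing and delivering near price $y$, while any online algorithm that has not completed its purchases by time $k(y)$ is forced to buy its remaining share at $p_{\max}$.

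\textbf{Step 2 (Reduction to a scalar strategy).} By a standard exchange argument on this family, the behavior of a deterministic \ALG is characterized up to vanishing terms by a non-decreasing function $g : [p_{\min}, p_{\max}] \to [0,1]$, where $g(y)$ is the cumulative fraction of the flexible unit purchased before any price strictly below $y$ is revealed. The analogous delivery function $h(\cdot)$ can be collapsed to $h \equiv g$ in the worst case because, under \sref{Def.}{def:osdm-delivery-pd-monotone}(b), keeping $s_{t-1}=0$ yields delivery cost $(c+\varepsilon) p_t$, so decoupling delivery from purchase only shifts cost between $\delta$-switching and $\mathcal{D}$ without improving the instance-wise lower bound.

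\textbf{Step 3 (Functional inequality).} If \ALG is $\alpha^*$-competitive on every $\mathcal{I}_y$, aggregating (i) incremental purchase cost in Riemann--Stieltjes form, (ii) the $2\gamma$ ramp-up/down on purchasing, (iii) delivery cost incurred at $s_{t-1}=0$, (iv) the $2\delta$ delivery-switching, and (v) the forced completion of $1-g(y)$ units at $p_{\max}$ after the reveal gives
\begin{equation*}
\int_{y}^{p_{\max}} (u + p_{\min} c)\,(-dg(u)) + \bigl(P + 2\kappa\bigr)\bigl(1 - g(y)\bigr) \;\le\; \alpha^{*}\bigl[(1+c+\varepsilon)\,y + o(1)\bigr],
\end{equation*}
with boundary condition $g(p_{\min}) = 1$ (feasibility by the deadline).

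\textbf{Step 4 (Differential argument and closing).} Differentiating this inequality in $y$ yields a first-order linear ODE for $g$, whose unique tight solution $g_{\alpha^{*}}$ is an exponential in $y$ reminiscent of the structure of $\phi_f$ from \sref{Def.}{def:flex-thresholds}. Substituting $g_{\alpha^{*}}$ back into the inequality at $y = p_{\min}$ and enforcing $g_{\alpha^{*}}(p_{\min}) = 1$ gives a transcendental equation whose root, expressed via the Lambert $W$ function, is exactly the $\alpha$ of \eqref{eq:alpha1}, and in particular $\alpha^{*} \ge \alpha$. The passage from the discrete price schedule to the continuous ODE is handled by taking $T \to \infty$ with price increments shrinking proportionally, which produces the $o(1)$ terms that become the $\nicefrac{2\kappa}{T}$ contributions in the formula for $\alpha$.

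\textbf{Main obstacle.} The chief difficulty is ensuring that the adversary simultaneously pins down both the purchasing and the delivery subproblems while respecting the coupling through storage. In particular, I need to prevent \ALG from exploiting stored inventory to reduce $\mathcal{D}$ below $(c+\varepsilon)p_t$: one clean way is to enrich the family with tiny auxiliary flexible demands released at $p_{\max}$ right after the stopping time, which forces any stored energy to be consumed at the worst delivery cost. Getting the accounting correct so that the exchange argument of Step~2 is lossless, and so that the $\omega = (1+c+\varepsilon)/(1+\varepsilon)$ factor emerges as the ratio of ``forced $s_{t-1}=0$ delivery'' to ``fully-stored delivery,'' is the technical crux that produces the matching $\omega$ in the numerator of $\alpha$.
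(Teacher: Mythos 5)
Your overall strategy is the same as the paper's: a one-parameter family of decreasing-price instances with a single unit of flexible demand, reduction of any deterministic \ALG to non-increasing conversion functions for purchasing and delivery, collapsing delivery onto purchasing, an integral/differential inequality with boundary condition $g(p_{\min})=1$, and a transcendental equation solved via Lambert $W$ (the paper uses Gr\"onwall's inequality where you differentiate, which is an equivalent but more robust route for non-smooth $g$). However, two concrete gaps in your execution would prevent the bound from matching \eqref{eq:alpha1}. First, your price sequence is a single monotone-decreasing prefix followed by a jump to $p_{\max}$. Against such an instance, \ALG can ramp its purchases gradually over the long prefix and pay a vanishing switching cost on the $g(y)$ units it buys early; only the compulsory purchase of $1-g(y)$ units at the deadline forces a switch. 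Your displayed inequality accordingly charges only $2\kappa(1-g(y))$, whereas the correct hard instance (\sref{Def.}{def:x-decreasing-instance-switching}) interleaves each new low price with a batch of $m$ prices equal to $p_{\max}$, so that every marginal purchase at a good price must be switched on and off, forcing a total switching cost of $2\gamma h(x)+2\gamma(1-h(x))=2\gamma$ (and similarly $2\delta$), i.e., a \emph{constant} $2\kappa$ in the functional inequality. Without the interleaved spikes your lower bound is strictly weaker than $\alpha$.

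Second, the delivery-cost accounting in your Step~3 does not produce the $\omega=(1+c+\varepsilon)/(1+\varepsilon)$ structure, as you yourself flag in your ``Main obstacle.'' Your integrand $(u+p_{\min}c)$ charges $p_{\min}c$ as an additional \emph{cost} per unit purchased early, whereas in the decreasing case of \sref{Def.}{def:osdm-delivery-pd-monotone} storage can only \emph{reduce} the delivery coefficient below $c+\varepsilon$. The paper's resolution is to charge the full $(c+\varepsilon)$ times the price-weighted delivery (with $z=h$, giving an effective integrand of $(1+c+\varepsilon)u$), and then credit \ALG a savings term bounded by $\sum_t c\,s_t \le c\,h(x)$, realized in the worst case at the lowest price, i.e., a term $-p_{\min}c\,h(x)$. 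It is exactly this combination of the $(1+c+\varepsilon)$ factor on the left and the $(1+\varepsilon)x$ in $\OPT(\mathcal{I}_x)$ that makes $\omega$ emerge after Gr\"onwall; your auxiliary-demand device is not needed and does not by itself fix the sign and placement of the $p_{\min}c$ term. Until these two accounting issues are repaired, the transcendental equation you obtain in Step~4 will not have $\alpha$ from \eqref{eq:alpha1} as its root.
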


\noindent \textbf{Proof Sketch of \autoref{thm:osdm-lower-bound-1}. \ }
In \sref{Def.}{def:x-decreasing-instance-switching}, we define a set of \textit{$x$-decreasing instances} for \OSDMS (denoted by $\{\mathcal{I}_{x}\}_{x\in [p_{\min}, p_{\max}]}$) where $x$ is the ``best'' (i.e., lowest) price in the instance.
These instances consist of a single unit of flexible demand at time $t=1$ with a deadline of $\Delta_1 = T$.  An adversarial price sequence presents market prices that generally decrease over time, interrupted by ``spikes'' to the highest price $p_{\max}$ that force an online algorithm \ALG to pay a high switching cost.  This captures a trade-off between being eager or reluctant to accept a certain price $x$---if \ALG purchases too much at prices $> x$, it may miss better prices later on, but if it waits to purchase at prices $< x$, it may be forced to purchase and deliver at a high price $p_{\max}$ (e.g., if the sequence ends after $x$).

Under these special instances, the cost of any deterministic online algorithm \ALG can be described by two (arbitrary) purchasing and delivery functions $h(x), z(x): [p_{\min}, p_{\max}] \rightarrow [0,1]$ that describe the fraction of the demand purchased and delivered at prices $\geq x$.  Since \OPT is easy to describe on these instances (i.e., $\OPT(\mathcal{I}_{x}) = x + \frac{2\gamma}{T} + \varepsilon x + \frac{2\delta}{T})$, we show that the purchasing and delivery functions of an $\alpha^\star$-competitive \ALG must satisfy a necessary condition that relates $h(x), z(x)$ to $\alpha^\star$ for all $x \in [p_{\min}, p_{\max}]$.  By applying Grönwall's Inequality~\cite[Theorem 1, p.356]{Mitrinovic:91}, this gives the following condition that $\alpha^\star$ must satisfy at optimality: $\smash{1 = \frac{\alpha^\star}{\omega} \ln \Big[ \frac{P - (1+\varepsilon) p_{\min}}{P - \frac{\omega [ P + 2 \kappa ]}{\alpha^\star} + \frac{\omega 2\kappa}{T} + p_{\min} c} \Big]}$.  Solving this transcendental for $\alpha^\star$ yields the result, completing the proof. \hfill $\square$

\smallskip
{\color{blue}
In the above result, we have shown that \PAAD achieves the best possible competitive ratio for \OSDMS.  The proof of \autoref{thm:osdm-lower-bound-1} relies on a class of \OSDM instances that consist purely of flexible demand.  In the following corollary, we use a similar class of ``hard instances'' to argue that the competitive ratio \textit{improves} if we focus on the case of \OSDMS instances with \textit{only base demand}---recall from \autoref{sec:OSDM_statement} that this case corresponds to a simplification of \OSDM wherein the two decision variables of \OSDM can be simplified to a single decision variable (i.e., $\{ x_t - z_t \}_{t\in [T]}$).
}%

\begin{corollary}
\label{cor:osdm-lower-bound-base-demand}
There exists a set of \OSDMS instances with only base demand for which no deterministic online algorithm \ALG can achieve a competitive ratio better than $\alpha_{\texttt{B}}$, given by: 
\begin{equation}
\alpha_{\texttt{B}} \geq \Bigg[ W\Bigg(-\frac{\big( (1+c)p_{\max} - p_{\min} \big) \exp \big( - \frac{P +2\delta + \frac{2\gamma}{T}}{P + 2\kappa} \big)}{P + 2\kappa}\Bigg)+\frac{P +2\delta + \frac{2\gamma}{T}}{P + 2\kappa} \Bigg]^{-1} \kern-1em, \label{eq:alphaB}
\end{equation}
where $P = (1+c+\varepsilon)p_{\max}$. Given fixed $p_{\min}, p_{\max}, c, \varepsilon, \gamma$, and $\delta$, we have $\alpha_{\texttt{B}} < \alpha$ (for $\alpha$ defined in \eqref{eq:alpha1}).
\end{corollary}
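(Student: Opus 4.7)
The plan is to mirror the argument of \autoref{thm:osdm-lower-bound-1}, restricted to the strict subclass of \OSDMS instances that contain no flexible demand. Intuitively, removing flexible demand removes the delivery-side online degrees of freedom that amplify $\alpha$ by the factor $\omega = (1+c+\varepsilon)/(1+\varepsilon) \geq 1$; on the restricted class the delivery trajectory is fixed by the instance, so the resulting lower bound should be strictly smaller. The proof has two pieces: (i) exhibit a parametric family of adversarial base-demand instances that force any deterministic online algorithm to have competitive ratio at least $\alpha_{\texttt{B}}$, and (ii) show algebraically that $\alpha_{\texttt{B}} < \alpha$.

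For (i), I would build a family $\{\mathcal{I}_x^{\texttt{B}}\}_{x \in [p_{\min}, p_{\max}]}$ directly analogous to \sref{Def.}{def:x-decreasing-instance-switching} but replacing the single flexible unit with a single unit of base demand concentrated at the end of the horizon: $S = 1$, $b_T = 1$, $b_t = 0$ for $t < T$, and an $x$-decreasing staircase of prices punctuated by $p_{\max}$-spikes to penalize switching. Because $b_T = 1$ forces $z_T = 1$ and $z_t = 0$ otherwise, the delivery trajectory is entirely determined by the instance, and the only mechanism any algorithm has to pre-commit is the storage manager driver of capacity $S$. Computing $\OPT(\mathcal{I}_x^{\texttt{B}})$ then yields a purchase cost of $\approx x$ (with OPT buying one unit at the minimum-price tail, spread so as to incur only $2\gamma/T$ in purchase switching), a fixed $2\delta$ in delivery switching, and a delivery term $\mathcal{D}(1, s_{T-1}^\star, p_T) = \varepsilon p_T$ along OPT's full-storage trajectory in the decreasing case of \sref{Def.}{def:osdm-delivery-pd-monotone}(b); this matches the $P + 2\delta + 2\gamma/T$ structure visible in \eqref{eq:alphaB}. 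I would then parameterize an arbitrary deterministic online algorithm by a single monotone purchasing function $h : [p_{\min}, p_{\max}] \to [0,1]$ representing the fraction of the unit committed at prices at least $x$---no delivery function is needed because delivery is forced. Writing out $\ALG(\mathcal{I}_x^{\texttt{B}})$ in terms of $h$ (accounting for the worst-case delivery cost paid on the residual $1 - h(x)$ at the adversarial final price) and imposing $\ALG(\mathcal{I}_x^{\texttt{B}}) \leq \alpha^\star \OPT(\mathcal{I}_x^{\texttt{B}})$ for every $x$ yields a first-order integral inequality on $h$. Applying Grönwall's inequality~\cite[Theorem 1, p.356]{Mitrinovic:91} exactly as in \autoref{thm:osdm-lower-bound-1}, and invoking the principal branch of the Lambert $W$ function, gives a transcendental identity whose smallest feasible root is precisely the expression in \eqref{eq:alphaB}.

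For (ii), the comparison $\alpha_{\texttt{B}} < \alpha$ reduces to a term-by-term comparison of the reciprocals $\alpha^{-1} = B/\omega$ and $\alpha_{\texttt{B}}^{-1} = B'$, where $B$ and $B'$ are the bracketed quantities in \eqref{eq:alpha1} and \eqref{eq:alphaB}, respectively. Using that $\omega \geq 1$, $c + \varepsilon \leq 1$, $\kappa = \gamma + \delta$, and that $W$ is strictly increasing on its principal branch, one shows $\omega B' > B$; the critical observations are that the $W$-argument in $B'$ is less negative (since $(1+c)p_{\max} - p_{\min} \leq P - (1+\varepsilon)p_{\min}$) and that its additive term carries the large constant $2\delta$ rather than the comparatively small $p_{\min}c - \omega \cdot 2\kappa/T$ appearing in $B$.

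The main obstacle will be in step (i): ensuring the construction is legitimately adversarial despite base demand being revealed only at time $t$ (so the storage manager driver must commit to its purchases online without knowing when, or whether, the demand will arrive), and carefully tracking $\mathcal{D}(z_t, s_{t-1}, p_t)$ along OPT's storage trajectory so that the $(1+c)p_{\max} - p_{\min}$ term in the $W$-argument of \eqref{eq:alphaB} arises exactly from the worst-case price-dependent delivery contribution. A secondary technical subtlety is reconciling the forced ``one-shot'' delivery at $t = T$ (which yields the $2\delta$ factor) with the spread-out purchase schedule that OPT uses to achieve only $2\gamma/T$ in purchase switching; correctly weighting these asymmetric contributions in the integral inequality on $h$ is what produces the mismatch between $2\delta$ and $2\gamma/T$ inside \eqref{eq:alphaB}, and is essential for recovering the exact stated form.
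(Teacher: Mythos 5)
Your lower-bound construction and argument coincide with the paper's proof in \autoref{apx:osdm-lower-bound-base-demand}: the same $x$-decreasing price staircase with a single unit of base demand arriving at $t=T$, a single monotone purchasing function $h$ (no delivery function, since delivery is forced at the deadline), the same $2\gamma$ purchase-side and $2\delta$ delivery-side switching accounting with the delivery savings $p_{\max}\, c\, h(x)$ realized at the final price, and Gr\"onwall's inequality followed by the Lambert $W$ inversion to obtain \eqref{eq:alphaB}. The one place you go beyond the paper is part (ii): the appendix derives the formula and stops without explicitly verifying $\alpha_{\texttt{B}} < \alpha$, and while your comparison is plausible in outline, the sub-claim that the additive constant $2\delta + \frac{2\gamma}{T}$ in \eqref{eq:alphaB} dominates $p_{\min}c - \omega\,\frac{2\kappa}{T}$ in \eqref{eq:alpha1} can fail for admissible parameters (e.g., $\delta = 0$ with $p_{\min}c$ large), so that step would need a joint analysis of the $W$-argument and the additive term rather than a term-by-term comparison.
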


\noindent We relegate the full proof of \sref{Corollary}{cor:osdm-lower-bound-base-demand} to \autoref{apx:osdm-lower-bound-base-demand}.  The main insight of \sref{Corollary}{cor:osdm-lower-bound-base-demand} is that the additional complexity of flexible demands makes the general \OSDM problem (i.e., allowing both flexible and base demand) strictly harder to solve in the online setting. 

{\color{blue}
We remark that \autoref{thm:osdm-lower-bound-1} and \sref{Corollary}{cor:osdm-lower-bound-base-demand} do not rule out the possibility of randomization improving the achievable competitive ratio.  While several works~\cite{ElYaniv:01, Lorenz:08, Zhou:08} have shown that randomization does not improve the competitive ratio for continuous or fractional online allocation problems such as one-way trading and online knapsack, the smoothness terms in the objective of \OSDM (and related problems such as \OCS) may change this property.
}

{\color{blue}
\smallskip
\noindent \textbf{Delivery Cost Dynamics. \ }
In the above results and throughout the paper, we focus on the case of a price-dependent delivery cost that is \textit{decreasing} as a function of the storage state.
Amongst the class of delivery costs defined in \sref{Def.}{def:osdm-delivery-pd-monotone}, we claimed this is the most challenging case to consider in \autoref{sec:problem}.  We now prove this claim by showing that \OSDM
with a price-dependent \textit{increasing} delivery cost is strictly easier.  As a corollary, any simpler special cases of \sref{Def.}{def:osdm-delivery-pd-monotone} (e.g., a price-dependent delivery cost that is constant in the storage state) are also easier in a competitive sense.
}

\begin{theorem}
\label{thm:osdm-decreasing-worse-increasing}
Amongst the class of monotone delivery costs defined in \sref{Def.}{def:osdm-delivery-pd-monotone}, the best achievable competitive ratios for \OSDM with an \textit{increasing delivery cost} are strictly better than those for \OSDM with a decreasing delivery cost (i.e., the cases considered in \autoref{thm:osdm-lower-bound-1}, \ref{thm:osdmt-lower-bound-2} and \sref{Corollary}{cor:osdm-lower-bound-base-demand}).
\end{theorem}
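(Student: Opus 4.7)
The plan is to prove the theorem by exhibiting a variant of \PAAD tailored to the increasing delivery cost case (\sref{Def.}{def:osdm-delivery-pd-monotone}(a)) that attains a competitive ratio $\alpha_{\mathrm{inc}} < \alpha$ (with $\alpha$ as in \eqref{eq:alpha1}); combined with \autoref{thm:osdm-lower-bound-1}, this establishes the strict separation. The central observation is that the delivery term is maximized at opposite storage extremes in the two cases ($s_{t-1} = 0$ in the decreasing case, $s_{t-1} = S$ in the increasing case), and the adversarial $x$-decreasing instances used to prove \autoref{thm:osdm-lower-bound-1} exploit exactly the low-storage regime. On those instances, an online algorithm is typically caught with empty storage at a high price, paying $(c+\varepsilon) p_{\max}$ per unit of delivery in the decreasing case but only $\varepsilon p_{\max}$ in the increasing case, while \OPT achieves the same cost $x(1+\varepsilon) + \nicefrac{2\kappa}{T}$ in either model---\OPT either pre-stores energy and delivers when storage is full (decreasing case), or delivers immediately after purchasing at the minimum price (increasing case), both strategies yielding the same total.

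I would then redefine the threshold functions $\phi_b, \phi_f, \psi$ of \sref{Defs.}{def:base-threshold}--\ref{def:flex-thresholds} by substituting each occurrence of $p_{\max}(c+\varepsilon)$ by $p_{\max}\varepsilon$ (reflecting the smaller worst-case per-unit delivery cost when storage is empty) and by removing each $-c w p_{\min}$ term (which previously credited \OPT for its storage-aided savings, no longer warranted in the increasing case). Rerunning the proof sketch of \autoref{thm:osdm-upper-bound-1}---the partitioning of \OPT's cost into active/inactive periods, the driver-based upper bound on \PAAD, and the threshold-function relations \sref{Lem.}{lem:osdm-threshold-function-relation-1} and \ref{lem:osdm-threshold-function-relation-2}---carries through with the adjusted constants, yielding a transcendental for $\alpha_{\mathrm{inc}}$ of the same structural form as \eqref{eq:alpha1} but with $P = (1+c+\varepsilon)p_{\max}$ replaced by $P' = (1+\varepsilon)p_{\max}$, $\omega$ replaced by $1$, and the $p_{\min} c$ term set to zero.

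Finally, I would establish $\alpha_{\mathrm{inc}} < \alpha$ by an implicit-function comparison of the two transcendentals. Treating $\alpha(c)$ and $\alpha_{\mathrm{inc}}(c)$ as roots of their respective equations parameterized by the delivery-cost coefficient $c \geq 0$, the two ratios coincide at $c = 0$ (the two delivery models degenerate to the same problem); differentiating implicitly for $c > 0$ shows that $\alpha$ grows strictly faster in $c$ than $\alpha_{\mathrm{inc}}$, giving the strict inequality. The main obstacle will be this last step: both ratios are defined via the principal branch $W_0$ of the Lambert~W function, and $c$ enters the transcendentals both inside and outside $W_0(\cdot)$ in coupled ways, so one must rule out the possibility that these changes cancel. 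The cleanest route will be implicit differentiation of the defining equations combined with the monotonicity of $W_0$ on $[-e^{-1}, 0)$, rather than a closed-form comparison. A secondary subtlety, though easily handled, is verifying that the modified thresholds remain positive and monotone on $[0, d]$ so that the pseudo-cost minimization subroutine (\autoref{alg:pcm-subroutine}) still produces feasible and aggregatable decisions in the modified \PAAD.
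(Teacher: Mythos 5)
Your overall strategy is genuinely different from the paper's. The paper never constructs an algorithm for the increasing case: it reruns the lower-bound machinery of \autoref{thm:osdm-lower-bound-1} and \sref{Corollary}{cor:osdm-lower-bound-base-demand} on the same $x$-decreasing instance families under the increasing delivery cost, obtains transcendental lower bounds $\alpha_{\texttt{IF}}$ (flexible-only) and $\alpha_{\texttt{IB}}$ (base-only) via Gr\"onwall's inequality, and then asserts these are strictly smaller than $\alpha$ and $\alpha_{\texttt{B}}$. You instead propose an upper bound---a modified \PAAD achieving $\alpha_{\mathrm{inc}} < \alpha$---which, combined with the decreasing-case lower bound, yields the separation. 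If completed, your route is the logically tighter one for the claim as stated (``best achievable ratios are strictly better''), since a smaller lower bound alone does not cap the best achievable ratio from above; the paper's route is shorter because the hard-instance analysis is nearly a copy of the decreasing-case proofs with the delivery coefficient flipped. Both routes terminate in a transcendental comparison that neither fully discharges: the paper's proof says the inequality ``can be verified,'' and you correctly flag the implicit-function comparison as your main remaining obstacle (noting also that at $c=0$ the two models coincide, so strictness requires $c>0$).

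There is one concrete flaw in your threshold recipe. For base demand drivers, substituting $p_{\max}(c+\varepsilon)\mapsto p_{\max}\varepsilon$ and deleting the $-cwp_{\min}$ credit is not the right modification. Base demand must be delivered at arrival, and the storage-manager and base drivers exist precisely to pre-store energy; in the increasing case, holding inventory at delivery time \emph{raises} the delivery coefficient toward $c+\varepsilon$, so the storage term flips from a credit $-cwp_{\min}$ to a penalty of order $+cwp_{\max}$ rather than vanishing. The paper's base-only analysis for the increasing case (\sref{Lemma}{lem:lb-base-increasing}) reflects exactly this: \ALG pays an extra $p_{\max} c\, h(x)$ proportional to the fraction pre-stored, and the resulting numerator becomes $(1-c)p_{\max}-p_{\min}$. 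Your substitution is correct for flexible drivers, which can keep net storage at zero and deliver at coefficient $\varepsilon$ (consistent with the paper's $\alpha_{\texttt{IF}}$, from which $c$ drops out entirely), but the base-driver threshold and its transcendental need the sign-flipped storage term, which in turn changes the function you must compare against $\alpha_{\texttt{B}}$ in the final step. The plan is repairable, but as written the base-driver case would not go through.
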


{\color{blue}
\noindent We relegate the full proof of \autoref{thm:osdm-decreasing-worse-increasing} to \autoref{apx:osdm-decreasing-worse-increasing}, which relies on lower bound ``hard instance'' constructions similar to those used in \autoref{thm:osdm-lower-bound-1} and \sref{Corollary}{cor:osdm-lower-bound-base-demand}.  The main insight is that our general competitive results focus on the most challenging setting of a delivery cost that is price-dependent and decreasing as a function of the storage state\textemdash these results directly extend to the other ``simpler'' cases of delivery cost models.
}

\smallskip
\noindent \textbf{Lower Bound for \OSDMT. \ }
Following the results for \OSDMS, we now consider the tracking cost case of \OSDMT.  Recall that \sref{Corollary}{cor:osdmt-upper-bound-2} shows that \PAAD is $\alpha_{\texttt{T}}$-competitive (defined in \eqref{eq:alphaT}) for \OSDMT.  We now prove this is optimal by constructing 
a difficult set of instances and tracking target
sequences
on which no algorithm can do better than $\alpha_{\texttt{T}}$-competitive, thus showing that \PAAD achieves the best possible competitive ratio for \OSDMT.  We defer the proof to \autoref{apx:osdmt-lower-bound-2}.

\begin{theorem}
\label{thm:osdmt-lower-bound-2}
There exists a set of \OSDMT instances for which no deterministic online algorithm \ALG can achieve a competitive ratio better than $\alpha_{\texttt{T}}$ (for $\alpha_{\texttt{T}}$ defined in \eqref{eq:alphaT}).
\end{theorem}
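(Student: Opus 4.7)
\smallskip
\noindent \textbf{Proof plan for \autoref{thm:osdmt-lower-bound-2}. \ }
The plan is to construct a family of $x$-decreasing \OSDMT instances analogous to those used in the proof of \autoref{thm:osdm-lower-bound-1}, with the adversarial target signal $\{a_t\}$ playing the structural role that the purchasing switching cost played in \OSDMS. The key motivating observation is that $\eta + \delta$ appears in \eqref{eq:alphaT} in exactly the same positions where $\kappa = \gamma + \delta$ appears in \eqref{eq:alpha1}, so I would design an adversary that uses the target to force any deterministic online algorithm to incur tracking cost proportional to $2\eta$ per unit of ``off-target'' purchase, while simultaneously allowing \OPT to achieve zero tracking cost.

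Concretely, each $\mathcal{I}_x^{\texttt{T}}$ consists of a single flexible demand $f_1 = 1$ with deadline at the end of the horizon, no base demand, and no storage. Prices proceed in blocks of $T$ consecutive steps at decreasing levels $p_{\max} = y_0 > y_1 > \cdots > y_N = x$. The adversary reveals $a_t = 0$ during every block at a price $y_i > x$ and $a_t = 1/T$ throughout the final block at price $x$, so that $\sum_t a_t = 1$ respects the assumption $\sum_t a_t \leq \sum_t f_t$. With its offline view, \OPT concentrates all purchasing into the $x$-block (matching the target perfectly) and spreads delivery across the horizon to minimize boundary delivery-switching, achieving $\OPT(\mathcal{I}_x^{\texttt{T}}) = x + \varepsilon x + \tfrac{2\delta}{T}$ with zero tracking cost. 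Any online algorithm \ALG that has purchased a fraction $h(x)$ of the demand at prices strictly greater than $x$ pays $\eta h(x)$ in tracking cost across those blocks (since $a_t = 0$ there) and must cover the remaining $1 - h(x)$ inside the $x$-block; an elementary $\ell^1$ argument shows that the tracking-optimal way to do so is to keep $x_t \leq 1/T$ at every step of the $x$-block, which contributes an additional $\eta h(x)$. Aggregating yields a total tracking lower bound of $2\eta h(x)$, which is the mechanism by which the coefficient $2\eta$ enters \eqref{eq:alphaT} in the same location as $2\gamma$ does in \eqref{eq:alpha1}.

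With this accounting in hand, the remaining argument mirrors the switching cost case. I would characterize \ALG's full cost on $\mathcal{I}_x^{\texttt{T}}$ via purchasing and delivery functions $h(x), z(x) : [p_{\min}, p_{\max}] \to [0, 1]$ representing the fractions of demand purchased and delivered at prices at least $x$, impose the competitiveness inequality $\ALG(\mathcal{I}_x^{\texttt{T}}) \leq \alpha_{\texttt{T}}^\star \OPT(\mathcal{I}_x^{\texttt{T}})$ across all $x \in [p_{\min}, p_{\max}]$, and apply Gr\"onwall's inequality~\cite[Theorem 1, p.356]{Mitrinovic:91} in the same fashion as the proof of \autoref{thm:osdm-lower-bound-1}. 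The resulting necessary condition on $\alpha_{\texttt{T}}^\star$ is structurally identical to the one obtained for \OSDMS, with $\gamma$ replaced by $\eta$ in the purchasing-smoothness terms and with the $\OPT$-side boundary term $\tfrac{2\gamma}{T}$ replaced by $0$ (since \OPT achieves zero tracking cost on $\mathcal{I}_x^{\texttt{T}}$). Solving the resulting transcendental via the Lambert $W$ function recovers \eqref{eq:alphaT}.

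The main obstacle I anticipate is translating the $\eta$-tracking accounting into exactly the same integral structure as the $\gamma$-switching accounting used in the proof of \autoref{thm:osdm-lower-bound-1}, so that the resulting differential inequality on $h(x), z(x)$ admits the same Gr\"onwall-based resolution. The switching-cost analysis effectively charges $2\gamma$ per infinitesimal purchase $-dh(y)$ through a ramp-up/ramp-down argument localized at each block boundary; in the tracking case the same aggregate coefficient emerges, but from two \emph{non-local} sources---the $a_t = 0$ blocks above $x$ and the $a_t = 1/T$ block at $x$---so the bookkeeping must carefully align these contributions with the adversary's choice of the cut $x$. A secondary subtlety is pinning down \OPT's delivery schedule on $\mathcal{I}_x^{\texttt{T}}$ so that the $\tfrac{2\delta}{T}$ estimate is tight, which follows the same argument used in the \OSDMS lower bound. \hfill $\square$
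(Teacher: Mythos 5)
Your high-level strategy matches the paper's: an $x$-decreasing family carrying a single flexible demand $f_1=1$ with deadline $T$, conversion functions $h(x), z(x)$, the characterization $\OPT(\mathcal{I}_x) = x + \varepsilon x + \tfrac{2\delta}{T}$ with zero tracking cost, and a Gr\"onwall/Lambert-$W$ resolution of the resulting differential inequality. However, there are two concrete gaps in your construction and accounting that would prevent the argument from going through.

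First, your instance terminates with the $x$-block, and you have \ALG ``cover the remaining $1-h(x)$ inside the $x$-block,'' i.e., at price $x$. The paper's construction (\sref{Def.}{def:x-decreasing-instance-tracking}) appends a \emph{final batch of $m$ prices at $p_{\max}$} (with target zero) after the $x+\iota$ batch; this is what forces the term $(1-h(x))p_{\max}$ in $\ALG(\mathcal{I}_x)$, and that term is the engine of the entire differential inequality --- it is what makes waiting costly. Without it, an algorithm that defers all purchasing until the cheapest block buys everything at price $x$ while matching the target $a_t = \nicefrac{1}{T}$ essentially exactly, and is close to $1$-competitive on every instance in your family, so no nontrivial lower bound follows.

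Second, your tracking accounting yields $2\eta h(x)$, whereas the paper's yields the \emph{constant} $2\eta$: $\eta h(x)$ from off-target purchases at prices above $x$, plus roughly $\eta$ for the unmatched target mass in the good-price batch and $\eta(1-h(x))$ for the forced purchase during the trailing $p_{\max}$ batch where the target is zero. The distinction matters analytically. In the paper's derivation the constant $2\eta$ enters the boundary condition $\ell = \frac{(1+c+\varepsilon)p_{\max}+2(\eta+\delta)}{\alpha^\star(1+\varepsilon)} - \frac{2\delta}{T(1+\varepsilon)}$ and thereby lands in the $P + 2(\eta+\delta)$ positions of \eqref{eq:alphaT}, exactly parallel to how the constant $2\gamma$ (not $2\gamma h(x)$) enters in the \OSDMS case. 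A term of the form $2\eta h(x)$ would instead migrate into the coefficient of $h(x)$ in the Gr\"onwall integrand, altering the argument of the logarithm and producing a different transcendental equation that does not recover \eqref{eq:alphaT}. Your instinct that $\eta+\delta$ plays the role of $\kappa$ is correct, but the mechanism is an additive worst-case constant, not a per-unit charge on $h$; repairing both issues requires restoring the trailing $p_{\max}$ batch and placing the nonzero target on the second-to-last (good-price) batch as the paper does.
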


\section{Learning Data-Driven Decisions}
\label{sec:learn}

{\color{blue}
In \autoref{sec:OSDM_statement}, we mentioned that algorithms designed using competitive analysis are often overly pessimistic, and can sometimes perform poorly compared to simple heuristics used in practice. However, these heuristics---as well as data-driven, machine-learned algorithms---can suffer a lack of robustness to nonstationarity or distribution shift.
In this section, we propose a novel paradigm that directly learns ``the best \OSDM algorithm'' lying within a defined set of algorithms that satisfy a given (desired) worst-case performance guarantee, improving average-case performance while retaining robustness.
To accomplish this, we show that the pseudo-cost design of \PAAD provides a mechanism by which one can theoretically describe the general set of ``$\rho$-robust algorithms'' (for any $\rho > \alpha$, see \sref{Def.}{dfn:robust-threshold-base}).
To realize this learning paradigm in practice, we show a \textit{end-to-end learning methodology} (see \autoref{sec:pald-learning}) that leverages recent techniques from the literature on learning-to-optimize (L2O)~\cite{Chen:21:L20}, such as differentiable optimization layers.  We begin by reviewing some relevant preliminaries.

}

\smallskip
\noindent \textbf{Preliminaries \& Challenges. \ }
{\color{blue}
In the literature on learning-augmented online algorithms, the competitive ratio (\sref{Def.}{dfn:comp-ratio}) is replaced with two metrics: \textit{consistency} and \textit{robustness}~\cite{Lykouris:18, Purohit:18, Mitzenmacher:22:ALPS}.  If \ALG is a learning-augmented online algorithm provided with some advice \ADV (e.g., a prediction about the instance), then \ALG is said to be $c$-\textit{consistent} if it is $c$-competitive when \ADV is \textit{perfectly accurate}, and $\rho$-\textit{robust} if it is $\rho$-competitive, regardless of how bad (e.g., adversarial) \ADV is.

There are two common models for advice. The first is \textit{decision advice}, where \ADV predicts the optimal decisions at each time step---this type of advice is frequently used for problems with multiple time steps and switching costs, such as metrical task systems~\cite{Antoniadis:20MTS} and the \OCS problem discussed in \autoref{sec:warmup}~\cite{Lechowicz:24}
The second is \textit{input prediction advice}, where \ADV predicts a characteristic of the problem instance, such as the number of skiing days in a ski rental problem~\cite{Purohit:18}. }%
This type of advice has been used for related problems such as online knapsack~\cite{Daneshvaramoli:25} and one-way trading~\cite{SunLee:21}, which is itself closely related to \OCS.
Both models face limitations in the context of \OSDM. 
Obtaining decision advice directly from an ML model can be difficult due to the generalization challenges posed by complex problem structure---existing methods instead predict input features (e.g., a price forecast in \OCS), solve the corresponding ``certainty equivalent'' offline problem, and use that solution as advice~\cite{Lechowicz:24}.
{\color{blue}
In \OSDM, however, joint uncertainty in prices and demand arrivals/deadlines makes this approach brittle to prediction error and impractical in practice.
On the other hand, point predictions about instances (e.g., estimates of the best price) are straightforward to obtain, but are too simplistic to provide useful insights for \OSDM. 
}
For instance, even given a perfect prediction of the best (i.e., lowest) price, it is 
not possible to determine optimal
purchasing and delivery schedules subject to the unknown demand arrivals/deadlines and switching/tracking costs.

{\color{blue}
On top of these limitations to existing prediction models,
there is a gap between theory and practice where metrics are concerned. In particular, while robustness is useful as a worst-case ``guardrail'' against, e.g., significantly out-of-distribution inputs, the \textit{consistency} metric is less meaningful---it is only defined for perfect predictions that are unrealistic in practice, and it is still a worst-case metric (i.e., it is defined as the competitive ratio given accurate predictions). While recent work has considered ``smooth degradation'' of algorithm performance to small prediction errors in online search problems \cite{Benomar:25, NEURIPS2024_11c6625b}, it is not clear that these methods can be extended to the substantially more complex setting of \OSDM.
In practice, the actual objective is for an algorithm to perform well \textit{on average}: if an algorithm performs very well in practice but has bad theoretical consistency, that is preferable to an algorithm that has good consistency in theory but poor real-world performance.  This motivates us to propose a model that circumvents the challenges above while still providing the key benefits of ``both worlds''---improved performance in practice with worst-case guarantees.
}

\subsection{The \PALD Framework}
\label{sec:pald}

{\color{blue}
To tackle the above challenges, rather than defining a new prediction model and designing an algorithm around it, we propose a new learning framework (Partitioned Accounting \& Learned Decisions (\PALD)) that 
directly ``learns the best algorithm'' from within a general characterized set of algorithms, where each algorithm in the set satisfies a certain desired robustness (i.e., each algorithm guarantees a certain competitive ratio).
}

{\color{blue}
\smallskip
\noindent \textbf{Motivation. \ } 
Under the pseudo-cost minimization framework, a key step in bounding the competitive ratio of \PAAD (\sref{Thm.}{thm:osdm-upper-bound-1}) uses characteristics of the threshold functions ($\phi_b$ for base drivers, $\phi_f$ and $\psi_f$ for flexible drivers) that govern the algorithm's decisions. 
In particular, the definition of the pseudo-cost minimization problem allows us to relate \PAAD's cost to a lower bound on \OPT via these threshold functions.
Then, for a given \textit{target robustness} $\rho > \alpha$ (where $\alpha$ is the competitive ratio of \PAAD for \OSDMS), we can 
exactly define a \textit{feasible set} of threshold functions, based on the necessary relations for the competitiveness proof, that guarantee $\rho$-robustness for \PALD. 
A similar characterization of ``robust and consistent thresholds'' was derived by~\citet{Benomar:25} for one-max search to analytically investigate smoothness in algorithms with predictions.
In \PALD however, our goal is to directly \textit{learn} the best thresholds within a feasible set that \textit{only} considers robustness, empirically optimizing performance on historical instances while retaining a bounded competitive ratio.
Given learned thresholds $\hat{\phi}_b, \hat{\phi}_f$ and $\hat{\psi}_f$, \PALD's pseudocode is exactly the same as that of \PAAD (see \autoref{alg:paad}) but with $\hat{\phi}_b, \hat{\phi}_f$ and $\hat{\psi}_f$ in place of $\phi_b, \phi_f$ and $\psi_f$  (analytical thresholds).
}

\smallskip
\noindent \textbf{Robustness Certificate. \ }
In what follows, we define feasible sets of \textit{robust threshold functions} that guarantee $\rho$-robustness.  The \PALD framework accommodates any arbitrary learning mechanism as long as one can parameterize threshold functions and ensure they lie within these feasible sets, providing a robustness certificate that we use to prove \PALD's worst-case guarantees.
We focus on the setting of \OSDMS in the main body, deferring definitions, theorems, and proofs for \OSDMT to \autoref{apx:pald-robustness-osdmt}. We start with the definition of the feasible set for base demand drivers:

\begin{definition}[Robust threshold set for base drivers in \OSDMS]
\label{dfn:robust-threshold-base}
{\it
Given a target robustness $\rho > \alpha$ ($\alpha$ as defined in \eqref{eq:alpha1}), a learned threshold function $\hat{\phi}_b$ must lie in the following feasible set:
}
{\small
\begin{align*}
\mathcal{R}_b(\rho) \coloneqq \big\{ & \phi_b: [0,1] \to [p_{\min}, p_{\max}] \ \big| \ \phi_b \text{ is monotone non-increasing},  \phi_b(1) \leq p_{\min} + 2\gamma, \text{ and } \forall w \in [0,1] : \\
& \quad \Phi_b(0,w) + (1-w)(p_{\max} + 2\gamma) + p_{\max}(c+\varepsilon) + 2\delta - c w p_{\min} \leq \rho \left[ \phi_b(w) - 2\gamma + \varepsilon p_{\max} + \nicefrac{2\kappa}{T}\right] \big\}.
\end{align*}
}
\end{definition}

\noindent If a learned threshold function $\hat{\phi}_b$ lies in the set $\mathcal{R}_b(\rho)$, we have the necessary conditions to prove that \PALD is $\rho$-robust (with respect to base demand drivers).  The first two conditions (monotonicity and the inequality at $w\!=\!1$) ensure that $\hat{\phi}_b$ ``sweeps'' the range of possible prices to provide a necessary range of lower bounds on \OPT.  The third condition is the key inequality that relates \PALD's cost to that of the offline optimal via the pseudo-cost paradigm---on the left-hand-side, the integral and additive terms correspond to an upper bound on \PALD's total cost, while the right-hand-side corresponds to a lower bound on \OPT's cost scaled by the target robustness $\rho$.  The same logic extends, with some additional complexity, to the case of flexible demand drivers:

\begin{definition}[Robust threshold set for flexible drivers]
\label{dfn:robust-threshold-flexible}
{\it
Given a target robustness $\rho > \alpha$, learned threshold functions $\hat{\phi}_f$ and $\hat{\psi}_f$ for flexible demand drivers must lie in the following joint feasible set: 
}
{\small
\begin{align*}
\mathcal{R}_f(\rho) \coloneqq \big\{ & (\phi_f, \psi_f): [0,1]^2 \to [p_{\min}, p_{\max}]^2 \ \big| \ \phi_f, \psi_f \text{ are monotone non-increasing}, \ \phi_f(1) \leq p_{\min} + 2\gamma, \\
& \psi_f(1) \leq p_{\min}(c+\varepsilon) + 2\delta,  \text{ and } \forall w \in [0,1], v \in [0,w]: \Phi_f(0,w) + (1\!-\!w)(p_{\max}\!+\!2\gamma) - c w p_{\min} + \\
& \hspace{9.5em} \Psi_f(0,v) + (1\!-\!v)(p_{\max}(c\!+\!\varepsilon) \!+\! 2\delta) \leq \rho \big[ \nicefrac{1}{\omega} \big( \phi_f(w) + \psi_f(v) - 2\kappa \big) + \nicefrac{2\kappa}{T}\big] \big\}.
\end{align*}
}
\end{definition}

\noindent Under \sref{Def.}{dfn:robust-threshold-base} and \sref{Def.}{dfn:robust-threshold-flexible}, we have the following robustness certificate for \PALD when given any learned thresholds $\hat{\phi}_b, \hat{\phi}_f$ and $\hat{\psi}_f$ that lie in the respective feasible sets:
\begin{theorem}
\label{thm:pald-robustness}
Given learned threshold functions that lie in the feasible sets $\hat{\phi}_b \in \mathcal{R}_b(\rho)$ and $(\hat{\phi}_f, \hat{\psi}_f) \in \mathcal{R}_f(\rho)$ for some $\rho > \alpha$ ($\alpha$ defined in \eqref{eq:alpha1}), the \PALD algorithm is $\rho$-robust for \OSDMS.
\end{theorem}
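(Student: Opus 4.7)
\textbf{Proof Proposal for \autoref{thm:pald-robustness}.}
The plan is to mirror the competitive analysis used for \PAAD in \autoref{thm:osdm-upper-bound-1}, replacing the two equalities (\sref{Lemmas}{lem:osdm-threshold-function-relation-1} and \ref{lem:osdm-threshold-function-relation-2}) with the two defining inequalities of $\mathcal{R}_b(\rho)$ and $\mathcal{R}_f(\rho)$, and carrying $\rho$ through the chain of bounds in place of $\alpha$. Since \PALD's pseudocode is identical to that of \PAAD (only the threshold inputs change), every structural property of the algorithm that was used in the \PAAD proof is still available to us.

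First, I would verify that \PALD's solution is feasible for \OSDMS. The feasibility lemma (\sref{Lemma}{lem:osdm-feasibility-1}) argues only from Lines 4, 21 and 23 of \autoref{alg:paad}, together with the monotonicity of each threshold function and the terminal upper bound $\hat\phi_b(1)\!\leq\! p_{\min}+2\gamma$, $\hat\phi_f(1)\!\leq\! p_{\min}+2\gamma$ and $\hat\psi_f(1)\!\leq\! p_{\min}(c+\varepsilon)+2\delta$. All of these are baked into the definitions of $\mathcal{R}_b(\rho)$ and $\mathcal{R}_f(\rho)$, so feasibility transfers verbatim. Next, I would reuse \sref{Lemma}{lem:osdm-optimal-lower-bound-1} unchanged, since the lower bound on $\OPT(\mathcal{I})$ depends only on the \OSDMS instance and on how \PALD partitions time into active and inactive periods via its base driver states, not on the specific shape of the threshold functions.

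The heart of the proof is then to adapt the upper bound on \PAAD's cost (\sref{Lemma}{lem:osdm-paad-upper-bound-1}) to \PALD and bound it by $\rho\cdot\OPT(\mathcal{I}) + C$. For each active base driver $i\in\mathcal{B}$, the pseudo-cost minimization subroutine charges \PALD's purchasing, switching, and worst-case delivery cost contributions to the definite integral $\hat\Phi_b(0,w)$ of the learned threshold, plus boundary terms $(1-w)(p_{\max}+2\gamma)+p_{\max}(c+\varepsilon)+2\delta -cwp_{\min}$ that account for residual deadline-forced and storage-related costs. By the defining inequality of $\mathcal{R}_b(\rho)$, this entire expression is at most $\rho\big[\hat\phi_b(w)-2\gamma+\varepsilon p_{\max}+2\kappa/T\big]$ evaluated at the driver's final state $w$; this pointwise bound is, by the structure of the pseudo-cost argument, exactly the per-driver contribution to $\rho$ times the matching term in the lower bound on \OPT. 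Exactly the analogous computation for flexible drivers uses the joint inequality in $\mathcal{R}_f(\rho)$ to bound the sum of purchasing and delivery contributions of driver $l\in\mathcal{F}$ by $\rho$ times the corresponding per-driver lower bound on \OPT, with the factor $1/\omega$ matching the $\omega$ scaling introduced by the price-dependent delivery cost.

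Summing these per-driver bounds across $\mathcal{B}\cup\mathcal{F}$ and using the aggregation identities $x_t=\sum_i x_t^{(i)}$, $z_t=b_t+\sum_{l\in\mathcal{F}}z_t^{(l)}$, together with the pseudo-decision amortization on Line 2 of \autoref{alg:pcm-subroutine} (which telescopes the global switching cost into the per-driver local switching terms), yields $\PALD(\mathcal{I}) \leq \rho\,\OPT(\mathcal{I}) + p_{\max}\hat s$, where $p_{\max}\hat s$ is the same additive constant appearing in the \PAAD analysis. The main obstacle is precisely this amortization step: I need to confirm that the decomposition of global switching costs used in the \PAAD proof only invokes the monotonicity of the thresholds and the pseudo-cost minimization structure, neither of which depends on the closed-form expressions in \sref{Def.}{def:base-threshold} or \sref{Def.}{def:flex-thresholds}. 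Once this is in hand, the contradiction argument from \autoref{thm:osdm-upper-bound-1} carries through with $\alpha\mapsto\rho$, establishing $\rho$-robustness and completing the proof. \hfill $\square$
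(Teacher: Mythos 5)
Your proposal follows essentially the same route as the paper's proof: inherit feasibility from the \PAAD analysis (since it does not depend on the closed-form thresholds), reuse the lower bound on \OPT via active/inactive periods, re-derive the charging-based upper bound with the learned thresholds, and substitute the feasible-set inequalities of $\mathcal{R}_b(\rho)$ and $\mathcal{R}_f(\rho)$ for the threshold-function equalities so that the per-driver contradiction argument carries through with $\rho$ in place of $\alpha$. The one detail you gloss over is that bounding \OPT's cost during \emph{inactive} periods (the second branch of the $\max$ in the ratio decomposition) also requires invoking the $\mathcal{R}_b(\rho)$ inequality at $w=0$ together with the relation between $\tilde{p}$ and the learned base threshold, since the active/inactive partition is itself induced by $\hat{\phi}_b$ — but this is subsumed in the ``carry $\rho$ through the chain of bounds'' step you describe.
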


{\color{blue}
\noindent \textbf{Proof Sketch of \autoref{thm:pald-robustness}.} (full proof in~\autoref{apx:pald-robustness}) \ 
Since \PALD is identical to \PAAD save for the learned threshold functions, we inherit feasibility from \sref{Lemma}{lem:osdm-feasibility-1}.  
First, we lower bound the cost of \OPT; as in the proof of \sref{Thm.}{thm:osdm-upper-bound-1}, we partition the time horizon into active and inactive periods. %
Since the partitioning relies on the value of the base driver threshold function and \PALD's is learned, we must use the definition of $\mathcal{R}_b$ to prove that $\OPT$'s cost during inactive periods can be bounded in the same manner. %
Next, we upper bound the cost of \PALD by ``charging'' each driver's decisions to their corresponding threshold functions and accounting for any worst-case purchasing, switching, and delivery costs that arise due to constraints. %
We %
show that the same charging argument from \sref{Thm.}{thm:osdm-upper-bound-1} still holds with learned thresholds.
Finally, we use the definitions of $\mathcal{R}_b(\rho)$ and $\mathcal{R}_f(\rho)$ to relate the cost of \PALD to that of \OPT.  Using a similar contradiction argument as in the proof of \sref{Thm.}{thm:osdm-upper-bound-1}, the last inequalities in the definition of the feasible sets are exactly those needed to show that the ratio $\frac{\PALD(\mathcal{I}) - p_{\max} \hat{s}}{\OPT(\mathcal{I})}$ must be at most $\rho$, completing the proof.  \hfill $\square$
}

{\color{blue}
\smallskip
\noindent \textbf{Remark. \ } 
\PALD is similar to certain existing approaches, but distinct in important ways.
For instance, \citet{Zeynali:21:DataDriven} propose learning the best exponent for an exponential threshold function in the online knapsack problem---our approach is distinguished by its expressiveness in allowing any arbitrary threshold function while retaining worst-case guarantees.
Our concept of ``learning a threshold'' is structurally similar to the notion of learning dual functions in online convex optimization~\cite{Lobos:21}---\PALD is distinguished by the ``robustness certificate,'' which additionally guarantees a certain (adversarial) competitive ratio.
\citet{Li:22:ExpertCalibrated} learn an ML model to solve online convex optimization that is calibrated (although not guaranteed) to be robust with respect to a baseline---in contrast, \PALD allows us to specify the worst-case level performance we are willing to incur, and to learn an algorithm that performs well subject to that constraint with certainty. 

}

\subsection{Learning Methodology}
\label{sec:pald-learning}

{\color{blue}
In this section, we describe the methodology we use in our case study (see \autoref{sec:exp}) 
to instantiate a ``proof-of-concept'' of the \PALD framework in practice.
We parameterize each threshold function as a piecewise-affine function using a ``knot'' representation at a fixed grid of $K >1$ points $\{0, \nicefrac{1}{K-1}, \nicefrac{2}{K-1}, \ldots, 1\}$.  We learn a parameter vector $\mathbf{y} \in \mathbb{R}_{+}^K$ where each element defines the function's value at a knot (between knots, the function is interpolated linearly).  We denote by $\mathbf{y}_b, \mathbf{y}_f, \mathbf{y}_\psi$ the parameter vectors for $\smash{\hat{\phi}_b, \hat{\phi}_f, \hat{\psi}_f}$, respectively.  This makes the robust sets tractable:
}
\begin{lemma}
\label{lem:convex-robust-set}
Letting $\smash{\hat{\phi}_b, \hat{\phi}_f, \hat{\psi}_f}$ be piecewise-affine functions parameterized by $\mathbf{y}_b, \mathbf{y}_f, \mathbf{y}_\psi$ %
as outlined above, the feasible sets defined in \sref{Def.}{dfn:robust-threshold-base} and \ref{dfn:robust-threshold-flexible} are convex sets in $\mathbf{y}_b$ and $(\mathbf{y}_f, \mathbf{y}_\psi)$, respectively. 
\end{lemma}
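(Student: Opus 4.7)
My plan is to show that, under the piecewise-affine parameterization, every defining constraint of $\mathcal{R}_b(\rho)$ and $\mathcal{R}_f(\rho)$ becomes an affine inequality in the knot-value vectors $\mathbf{y}_b$, $\mathbf{y}_f$, $\mathbf{y}_\psi$. This realizes each feasible set as a (possibly infinite) intersection of half-spaces, which is automatically convex. The key observation is that when $\hat{\phi}_b, \hat{\phi}_f, \hat{\psi}_f$ are piecewise-linear with fixed knot locations, both pointwise evaluation and definite integration are \emph{linear} operations on the knot values.

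First, I would dispatch the ``boundary-type'' constraints. Since linear interpolation between two values in $[p_{\min}, p_{\max}]$ stays within that interval, the range constraint $\hat{\phi}_b(w) \in [p_{\min}, p_{\max}]$ for all $w$ reduces to the $2K$ linear inequalities $p_{\min} \leq (\mathbf{y}_b)_k \leq p_{\max}$. Monotone non-increasingness of a piecewise-affine function with linear interpolation is equivalent to the knot values being non-increasing, i.e., $(\mathbf{y}_b)_0 \geq (\mathbf{y}_b)_1 \geq \cdots \geq (\mathbf{y}_b)_{K-1}$, which is a finite collection of linear inequalities. The endpoint conditions $\hat{\phi}_b(1) \leq p_{\min} + 2\gamma$, and analogously $\hat{\phi}_f(1) \leq p_{\min}+2\gamma$ and $\hat{\psi}_f(1) \leq p_{\min}(c+\varepsilon)+2\delta$, each reduce to a single linear inequality on the last knot value.

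Next, I would verify that the key integral inequality in $\mathcal{R}_b(\rho)$ is affine in $\mathbf{y}_b$. For any fixed $w \in [0,1]$, the pointwise value $\hat{\phi}_b(w)$ is an affine combination of the two knot values bracketing $w$ (via the standard hat-function basis), so it is linear in $\mathbf{y}_b$. The definite integral $\Phi_b(0,w) = \int_0^w \hat{\phi}_b(u)\, du$ is a linear functional of $\hat{\phi}_b$; explicitly, it decomposes into a sum of trapezoidal areas over the segments fully contained in $[0,w]$ plus the partial contribution from the segment containing $w$, with coefficients depending only on $w$ and the (fixed) knot locations. Substituting these expressions, the defining inequality at each $w$ becomes an affine inequality in $\mathbf{y}_b$, and the entire feasible set $\mathcal{R}_b(\rho)$ emerges as an intersection of half-spaces (the finite constraints above together with the family indexed by $w \in [0,1]$), hence convex.

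The same argument extends verbatim to $\mathcal{R}_f(\rho)$. The pointwise values and integrals of $\hat{\phi}_f$ and $\hat{\psi}_f$ are affine in $\mathbf{y}_f$ and $\mathbf{y}_\psi$ respectively, so the joint inequality at each $(w,v)$ in the domain $\{(w,v) : w \in [0,1],\, v \in [0,w]\}$ splits into a sum of a term affine in $\mathbf{y}_f$ and a term affine in $\mathbf{y}_\psi$, and is thus jointly affine in $(\mathbf{y}_f, \mathbf{y}_\psi)$; intersecting this continuous family with the finite constraints yields a convex set. The only mildly technical step is writing out the integral representation explicitly in the knot values, but this is a routine quadrature computation rather than a real obstacle; once linearity of evaluation and integration is established, convexity of every defining constraint follows mechanically.
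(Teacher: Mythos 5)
Your proposal is correct and follows essentially the same route as the paper's proof: both express evaluation and definite integration of the piecewise-affine thresholds as affine functionals of the knot vectors via the hat-function basis, reduce every defining condition (box, monotonicity, endpoint, and the robustness inequality at each fixed $w$ or $(w,v)$) to an affine half-space, and conclude by intersecting this (possibly infinite) family of convex sets. No gaps.
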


{\color{blue}
\noindent The full proof is in \autoref{apx:convex-robust-set}---the key idea is that the necessary robustness certificates for \PALD reduce to box constraints, linear inequalities, and affine halfspaces, whose intersection is convex.
}

Recall that the analytical thresholds $\phi_b, \phi_f$ and $\psi_f$ (\sref{Defs.}{def:base-threshold} \& \ref{def:flex-thresholds}) are continuous and monotone non-increasing functions---by definition of the \PAAD algorithm, these lie in the feasible sets $\mathcal{R}_b(\alpha)$ and $\mathcal{R}_f(\alpha)$, respectively.  For $\rho$ increasing away from $\alpha$, the feasible sets $\mathcal{R}_b(\rho)$ and $\mathcal{R}_f(\rho)$ become strictly larger, since the robustness conditions become easier to satisfy.  Since piecewise-affine functions are known to be dense in continuous functions~\cite{Cheney:1998:Density, Pinkus:2005:Density}, the analytical thresholds can be approximated arbitrarily well using piecewise-affine functions with sufficiently many knots.  Thus, for a given robustness $\rho > \alpha$, the feasible sets $\mathcal{R}_b(\rho)$ and $\mathcal{R}_f(\rho)$ are non-empty since they at least contain the analytical thresholds. 
\sref{Lemma}{lem:convex-robust-set} implies that it is efficient to \textit{project} any learned piecewise-affine function into the feasible sets---this makes robustness tractable to enforce during learning, and unlocks several learning techniques which we discuss below.\footnote{ {\color{blue} In principle, \PALD could use substantially different learning techniques than what we consider here---our main contribution is the notion of directly finding ``the best algorithm'' amongst a set of algorithms satisfying a given worst-case guarantee. }}

{\color{blue}
\smallskip
\noindent \textbf{Directly Learning Thresholds. \ }
We implement the \PALD algorithm by using differentiable convex optimization layers (e.g., CVXPYLayers~\cite{CVXPYlayers:19}) to solve the pseudo-cost minimization problem at each time step.
This allows backpropagation through the entire \PALD procedure to minimize empirical competitive ratio on historical instances.
Given a data set of historical instances $\mathcal{D} = \{\mathcal{I}_1, \ldots, \mathcal{I}_N\}$, we can use this differentiable implementation of \PALD to find a single set of parameters $\mathbf{y}_b, \mathbf{y}_f, \mathbf{y}_\psi$ that minimizes the empirical competitive ratio on $\mathcal{D}$, using projected gradient descent (PGD) to ensure that, during training, the learned thresholds lie in the robust sets $\mathcal{R}_b(\rho)$ and $\mathcal{R}_f(\rho)$ for a target $\rho > \alpha$.  We refer to this as \texttt{PALD-S} (``\texttt{S}'' for simple) in our case study.
}

{\color{blue}
\smallskip
\noindent \textbf{End-to-End Contextual Learning. \ }
While directly learning a single set of thresholds improves average-case performance, it cannot leverage instance-specific \textit{context} that is extremely useful in practice.  For instance, our case study in \autoref{sec:exp} uses locational marginal price (LMP) data for different grids; LMPs are known to correlate with factors such as the time of day, the weather, and the season~\cite{Qu:24}.} Using a learning method that is aware of such context can enable better performance in practice by learning thresholds that are tailored to likely operating conditions. 

{\color{blue}
To realize this context-aware approach, we employ a neural network $F_\Theta(\mathbf{x}) \to \mathbf{y}_b, \mathbf{y}_f, \mathbf{y}_\psi$ (with parameters $\Theta$) that takes as input a context vector $\mathbf{x}$ about a given driver and outputs predicted vectors $\mathbf{y}_b, \mathbf{y}_f, \mathbf{y}_\psi$ that define the threshold functions.  The final layer of this neural network is a differentiable projection layer that projects the output parameters $\mathbf{y}_b, \mathbf{y}_f, \mathbf{y}_\psi$ into the feasible sets $\mathcal{R}_b(\rho)$ and $\mathcal{R}_f(\rho)$ to ensure target robustness $\rho > \alpha$.
We construct an unsupervised learning pipeline that simulates the differentiable \PALD algorithm on a historical data set $\mathcal{D}$.  For each instance, each new driver in \PALD makes a forward pass through the neural network $F_\Theta$ to obtain its context-aware thresholds $\smash{\hat{\phi}_b, \hat{\phi}_f}$, and $\smash{\hat{\psi}_f}$.  After observing the resulting performance, we backpropagate through the entire pipeline to learn a better mapping between context and thresholds.  We refer to this technique as \texttt{PALD-C} in our case study, and give a diagram of it in \sref{Appendix}{apx:pald-implementation-details}.
}

\section{Case Study: A Grid-integrated Data Center with Local Energy Storage}
\label{sec:exp}
We conclude with a case study application of \OSDM and our algorithms for the motivating application of demand management in a grid-integrated data center with local energy storage.

\smallskip
\noindent \textbf{Experimental Setup. \ } We simulate a data center with co-located energy storage that purchases electricity from a real-time market and serves a mix of base and flexible demand (e.g., interactive and batch jobs, respectively).

\noindent \textit{$\triangleright$ Electricity Price Data. \ } We use a year of locational marginal price (LMP) data at a 15-minute granularity in four U.S. grid regions: California ISO (\texttt{CAISO}), Electric Reliability Council of Texas (\texttt{ERCOT}), PJM Interconnection (\texttt{PJM}), and ISO New England (\texttt{ISONE})~\cite{GridStatus}.  For \texttt{PALD-Contextual}, we also use day-ahead LMP forecasts at an hourly granularity as contextual features.  We truncate any negative or small prices to \$$1$ per MWh to avoid a divide-by-zero in the competitive ratio, and cap prices at the 99.9$^\text{th}$ percentile to remove outliers.  We summarize key statistics in \autoref{tab:characteristics}.

\begin{table}[h]
\centering %
\caption{Statistics for LMP traces.  All traces span the time period 01/01/2024-12/31/2024 at a 15-minute granularity, and LMP data is reported in \$ (USD) per megawatt-hour (MWh). } \vspace{-1em}
\label{tab:characteristics}
\begin{tabular}{|l|l|l|l|l|l|}
\hline
Trace & Min. & Max. & Mean & Std. Dev. & Coeff. of Var.~\cite{GridStatus} \\ \hline
\texttt{CAISO} & 1.00 & 523.85 & 45.63 & 34.91 & 0.765 \\ \hline 
\texttt{ISONE} & 1.00 & 336.27 & 39.57 & 31.44 & 0.794 \\ \hline 
\texttt{PJM} & 1.00 & 325.93 & 31.10 & 26.95 & 0.867 \\ \hline 
\texttt{ERCOT} & 1.00 & 877.48 & 25.02 & 41.30 & 1.651 \\ \hline
\end{tabular}
\end{table}

\noindent \textit{$\triangleright$ Demand Data. \ } To model the data center's demand, we use Alibaba's 2018 production trace~\cite{Alibaba:18}, which provides seven days of task start and end times at a 1-second granularity.  In our experiments, time steps are fixed to 15 minutes (matching the price data), so we \textit{aggregate} the demand: we partition the trace into 15-minute buckets and compute a \textit{weighted number of tasks} for each bucket based on the number of jobs and their active duration (a task that runs for the entire bucket contributes 1, while one that runs for only 5 minutes contributes $\nicefrac{1}{3}$).  We plot this quantity for the full trace in the Appendix, in \autoref{fig:weighted-active-jobs}. 
Based on this, we compute a scaled demand as $\nicefrac{\text{weighted no. of tasks}}{\text{scale factor}}$, where a scale factor is chosen based on the size of the energy storage.  In most of our experiments, we set this factor to reflect a battery capacity that can meet the daily peak demand for a single (15-minute) time step on $\nicefrac{2}{7}$ days in the trace.  In some experiments, we vary this scale factor to study the effect of storage capacity.  Since the trace does not specify task types, we also set a parameter $\texttt{prop\_base} \in [0.5, 1]$ to probabilistically classify tasks as base or flexible: with probability \texttt{prop\_base}, a task is ``base'', and otherwise it is ``flexible.''  We typically set $\texttt{prop\_base}=0.5$, which yields instances that consist of half base and half flexible demand on average.

We simulate 1,200 instances of \OSDM for each parameter configuration.  In addition to the data parameters above, we isolate the effect of parameters including the delivery cost coefficients $c, \varepsilon$, the switching cost coefficients $\gamma, \delta$, time horizon $T$, and the tracking cost coefficient $\eta$ (for \OSDMT).  Unless otherwise specified, we set $T=48$ (i.e., a 12-hour horizon), $\gamma=10$, $\delta=5$, and $\eta=0$ (i.e., no tracking cost).  
In experiments with a tracking cost (i.e., with non-zero $\eta$), we set a tracking target sequence $\{a_t \}_{t\in [T]}$ that evenly distributes the total demand over the time horizon (i.e., $a_t \approx \nicefrac{D}{T}$, where $D$ is the total demand), while randomly choosing between two to four time steps to set $a_t = 0$, simulating load-shedding events.  
We implement a price-dependent decreasing delivery cost as a soft penalty to encourage the decision maker to keep the battery charged, setting $c=0.2$ and $\varepsilon=0.05$ unless otherwise specified.  These parameters quantify the risk tolerance of the data center as it relates to keeping the battery charged for, e.g., backup power.

{\color{blue}
\noindent \textit{$\triangleright$ Algorithms. \ } We implement and test three algorithms, denoted as \PAAD, \PALDS, and \PALDC.  We solve for the offline optimal solution using GurobiPy~\cite{gurobi} to compute the empirical competitive ratio (ECR).
\PAAD is the competitive algorithm described in \autoref{sec:alg}, while \PALDS and \PALDC are instantiations of the learning framework described in \autoref{sec:learn}---we typically set a robustness factor of $\rho = 5\alpha$ (where $\alpha$ is the standard competitive ratio) to ensure robustness while allowing for some benefit from learning, and set $K=10$.  
We implement the \PALD algorithm using CVXPYLayers~\cite{CVXPYlayers:19,CVXPY} to optimize downstream task performance %
via backpropagation. 
\PALDS directly learns one set of threshold functions to minimize overall ECR---to capture seasonality in the traces, we train a set of \PALDS thresholds for each month in a given region's LMP data (i.e., 12 sets of thresholds total), using 100 random training instances for each month.
In contrast, \PALDC is a neural network-based approach that uses context features to predict threshold functions for each driver---we train it on 100 random training instances using all 12 months in a given region's LMP data, using the time, month, tracking target (if $\eta > 0$), and statistics about day-ahead LMP forecasts (i.e., $\min$, $\max$, avg., std. dev.), as features.
We defer more implementation details of \PALDS \& \PALDC to~\autoref{apx:pald-implementation-details}.
}

\begin{figure*}[t]
    \centering
    \vspace{-1em}
    \minipage{\textwidth}
    \begin{center}
    \includegraphics[width=0.35\linewidth]{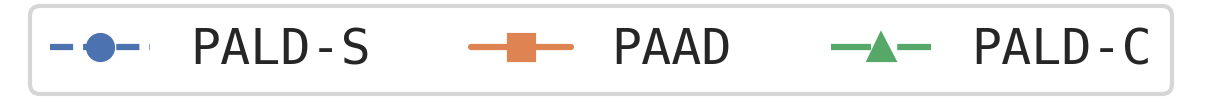}\vspace{0.1em}
    \end{center}
    \endminipage\hfill\\
        
    \minipage{0.24\textwidth}
    \includegraphics[width=\linewidth]{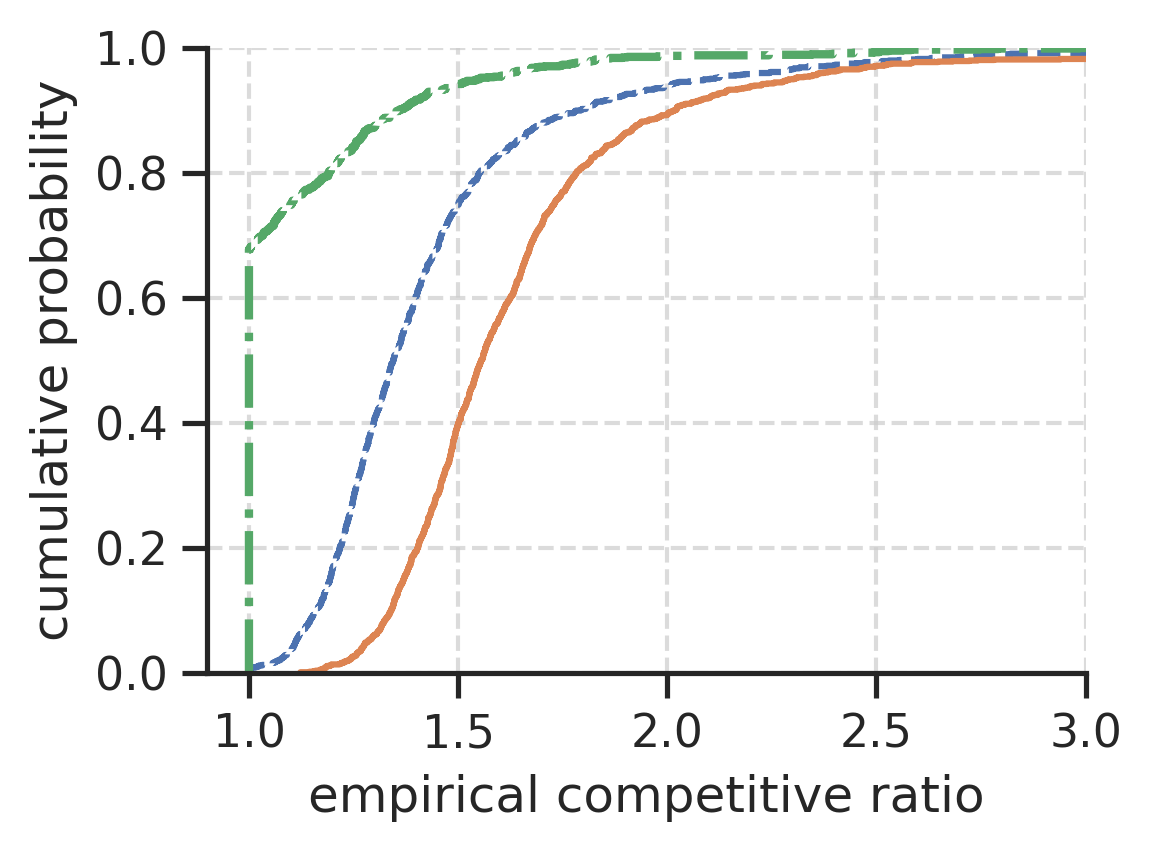}\vspace{-1em}
    \caption{ CDFs of ECR for all algorithms in default \texttt{CAISO} experiments.}\label{fig:cdf}
    \endminipage 
    \hfill
    \minipage{0.73\textwidth}
    \null
    \vspace{-2em}
    \begin{subfigure}{0.32\textwidth}
        \centering
        \includegraphics[width=\textwidth]{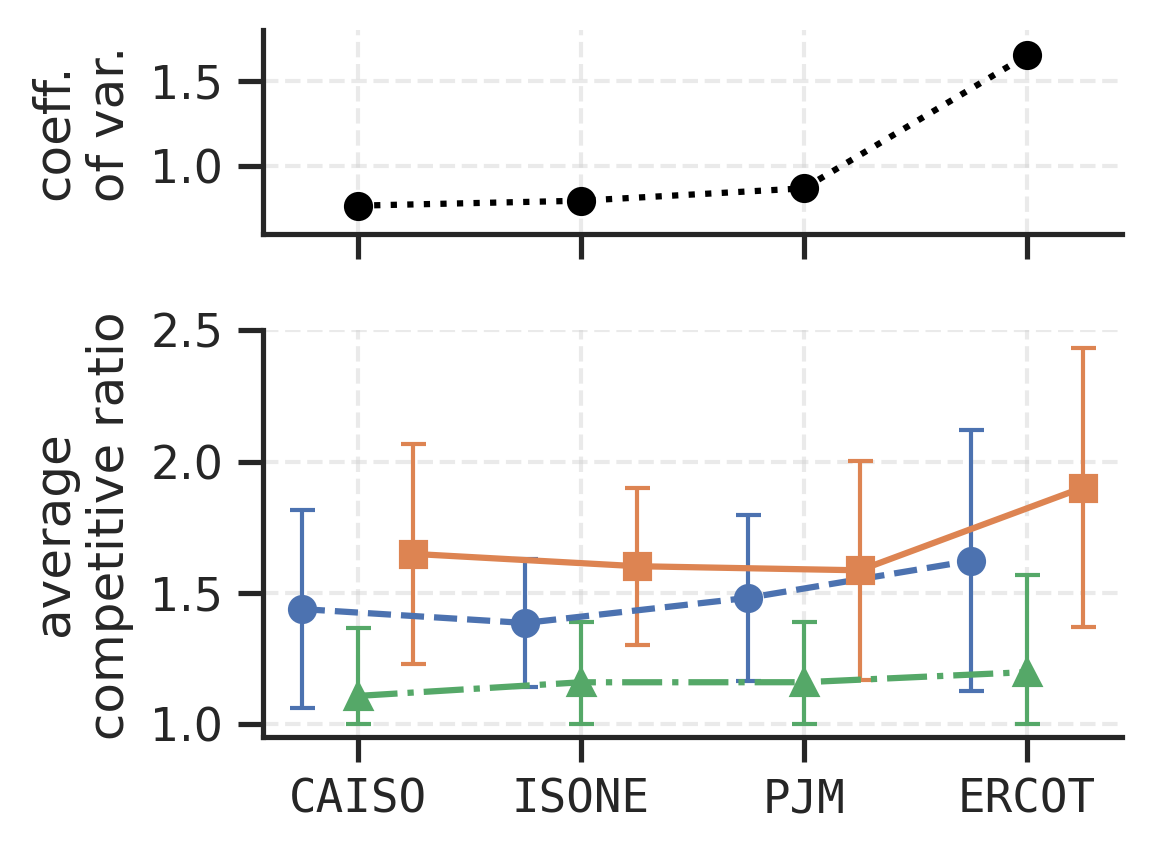}\vspace{-0.5em}
        \caption{Grid regions}
        \label{fig:regions}
    \end{subfigure}
    \hfill
    \begin{subfigure}{0.32\textwidth}
        \centering
        \includegraphics[width=\textwidth]{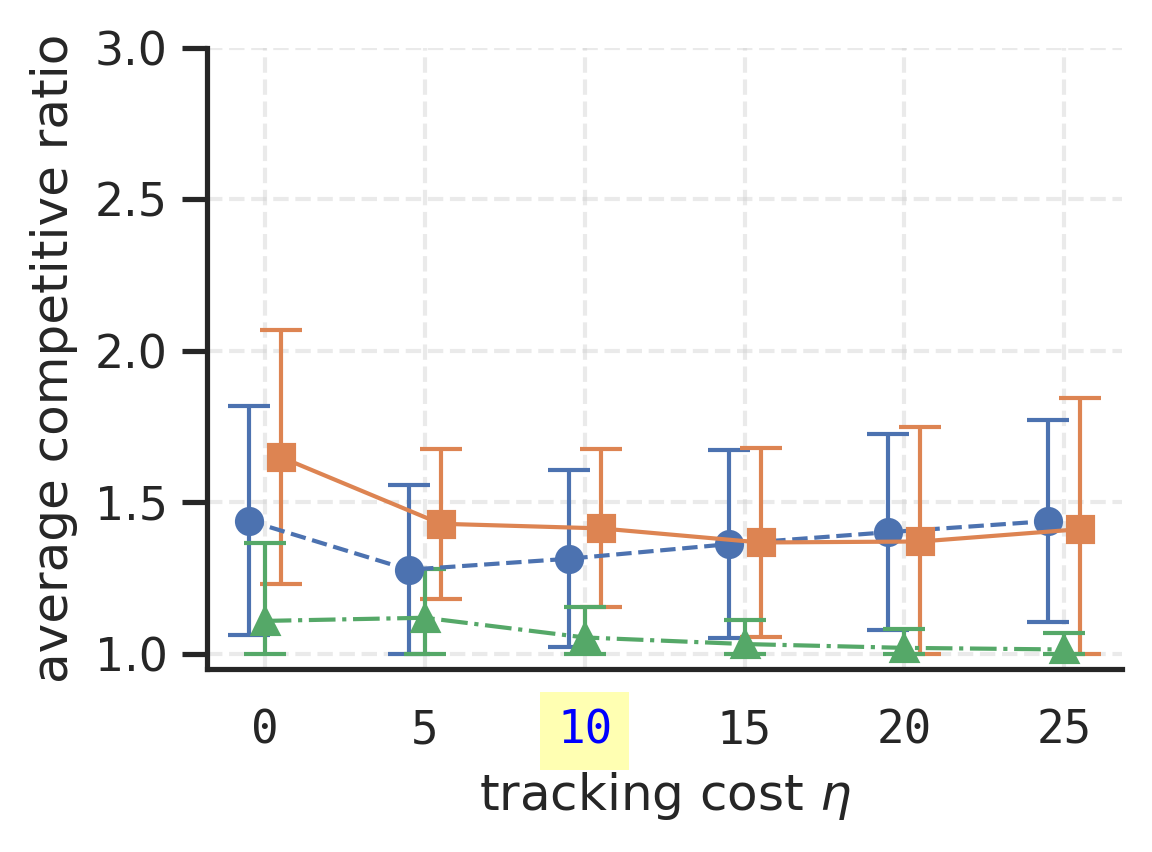}\vspace{-0.5em}
        \caption{Tracking cost $\eta$}
        \label{fig:etas}
    \end{subfigure}
    \begin{subfigure}{0.32\textwidth}
        \centering
        \includegraphics[width=\textwidth]{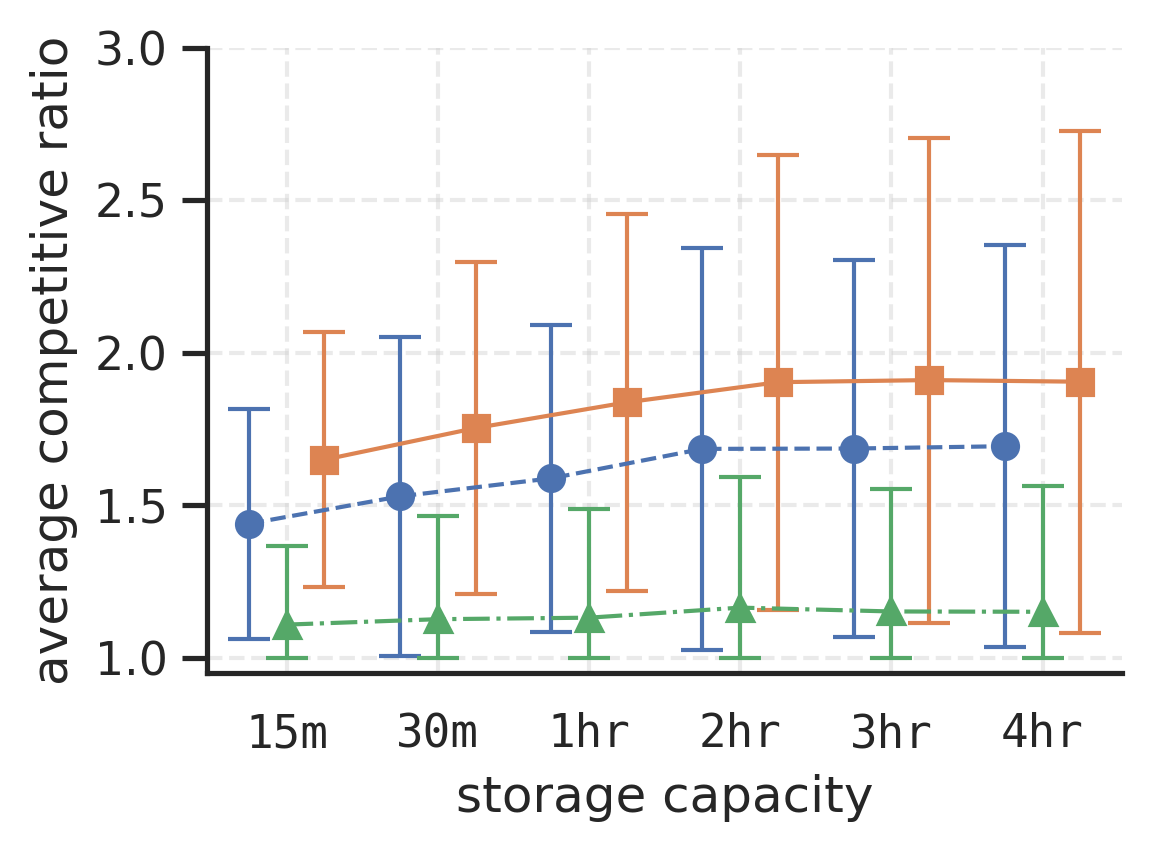}\vspace{-0.5em}
        \caption{Storage size}
        \label{fig:scale-factors}
    \end{subfigure}
    \vspace{-1em}
    \caption{Average ECR for all algorithms in varying regions, with a tracking cost, and varying storage size.  In (a), top subfigure is coeff. of var. for each region.}
    \label{fig:varied1}
    \vspace{-2em}
    \endminipage
\end{figure*}

\smallskip
\noindent \textbf{Experimental Results. \ } 
We highlight key experiments in the main body, referring to \autoref{apx:exp-results} for an investigation of algorithm runtimes and the effect of parameters $\delta$ and $\rho$. 
A summary is given in \autoref{fig:cdf}, which plots a cumulative distribution function (CDF) of the ECR for all tested algorithms in the \texttt{CAISO} region, with the default parameter settings. 
Our competitive algorithm \PAAD is $1.65$-competitive on average and $\leq 2.3$-competitive in 95\% of cases.  By learning a single set of thresholds, \PALDS improves on \PAAD by $12.7\%$ on average and $8.7\%$ at the 95$^\text{th}$ percentile.  Finally, by leveraging contextual features about instances, \PALDC significantly improves on \PAAD by $32.8\%$ on average and $31.61\%$ at the 95$^\text{th}$ percentile.  Remarkably, \PALDC's performance is nearly optimal (i.e., ECR $\approx$ 1) in roughly $\nicefrac{1}{2}$ of instances, highlighting the value of a context-aware approach.
We now consider the impact of grid region, tracking cost, and storage capacity on performance, training \PALD once for each region, for the tracking cost $\eta = 10$, and for each capacity.

\noindent \textit{$\triangleright$ Grid Regions. \ } To capture the effect of different electricity price distributions and characteristics, \autoref{fig:varied1}(a) shows the results of the main experiment with the same parameters in three other regions, namely \texttt{ERCOT}, \texttt{PJM}, and \texttt{ISONE}.  Both \PALDS and \PALDC are retrained on a test of training instances generated from each region.
We expect to see some variance in performance across these regions, since the global price bounds ($p_{\min}$ and $p_{\max}$) are not homogeneous---interestingly, we find that the performance of algorithms roughly corresponds to the variability of prices in each region, with \texttt{ERCOT} (highest coefficient of variation) showing the most challenging case.

\noindent \textit{$\triangleright$ Tracking Case. \ } In \autoref{fig:varied1}(b), we show algorithm performance in a setting with a tracking cost (i.e., instead of a switching cost, so $\gamma=0$), varying $\eta$ between $0$ and $25$.
Both \PALDS and \PALDC are retrained with $\eta=10$. %
We find that the performance of \PAAD and \PALDC slightly improves as $\eta$ grows---\PALDS continues to improve on \PAAD in the small $\eta$ regime, but for large (out-of-distribution) $\eta > 15$, \PALDS's performance degrades because the single set of thresholds \PALDS learns becomes overly conservative (i.e., waits too long to purchase) when $\eta$ grows.

\noindent \textit{$\triangleright$ Storage Capacities. \ } 
Recall that the default setting is where the storage is sized to meet the peak demand on two of the seven days in the Alibaba trace for a single 15-minute time slot---in \autoref{fig:varied1}(c), we plot the effect of larger storage sizes that meet up to $4$ hours of the same peak demand on two of the seven days (i.e., a $16\times$ increase in storage size).  As the storage size increases and \OPT gains flexibility for shifting, the performance of all algorithms slightly degrades, although \PALDC consistently outperforms the other algorithms across all storage cases.

\noindent Next, we plot the effect of parameters that are \textit{out-of-distribution} for \PALD.  Throughout the plots, we \textit{highlight} the instance that is in-distribution (i.e., parameters match training set).

\begin{figure*}[t]
    \centering
    \vspace{-1em}
    \begin{subfigure}{\textwidth}
        \centering
        \includegraphics[width=0.35\textwidth]{figs/legend.png}
        \label{fig:top}
    \end{subfigure}
        
    \begin{subfigure}{0.24\textwidth}
        \centering
        \includegraphics[width=\textwidth]{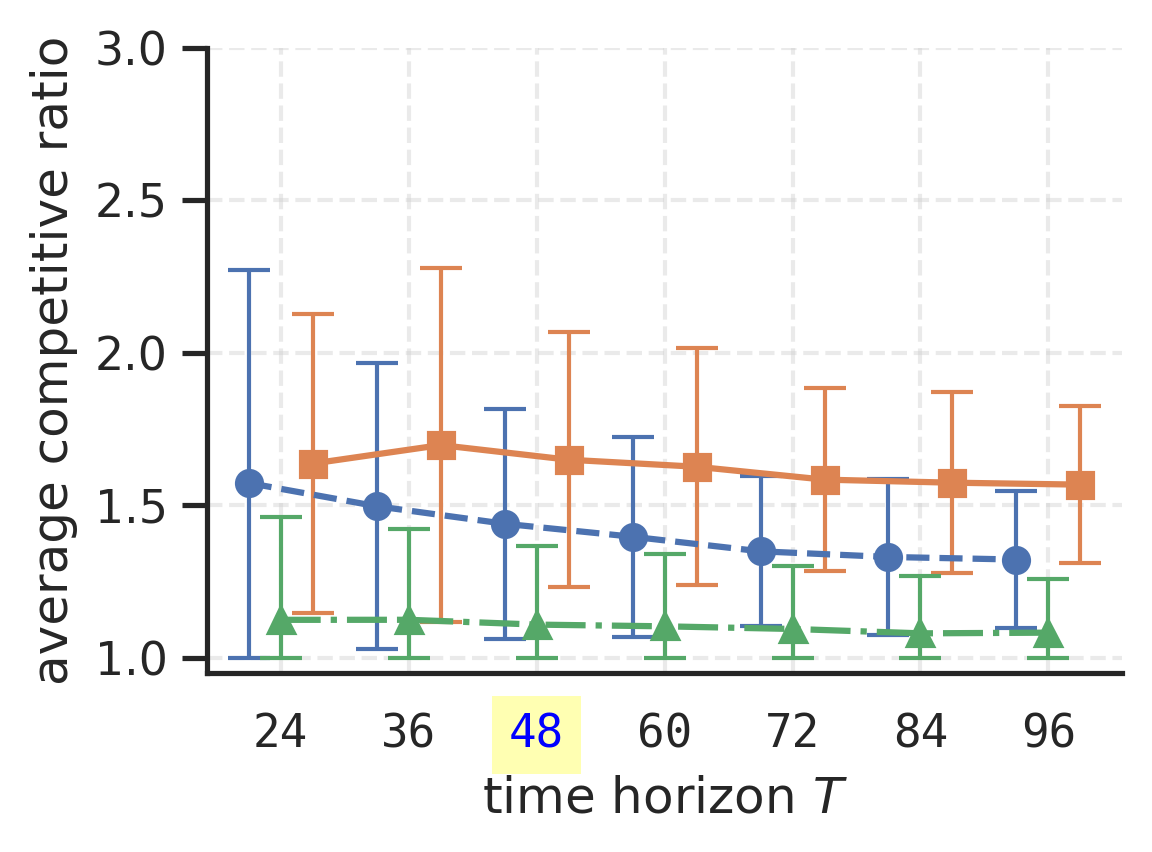}\vspace{-0.5em}
        \caption{Time horizon $T$}
        \label{fig:horizons}
    \end{subfigure}
    \hfill
    \begin{subfigure}{0.24\textwidth}
        \centering
        \includegraphics[width=\textwidth]{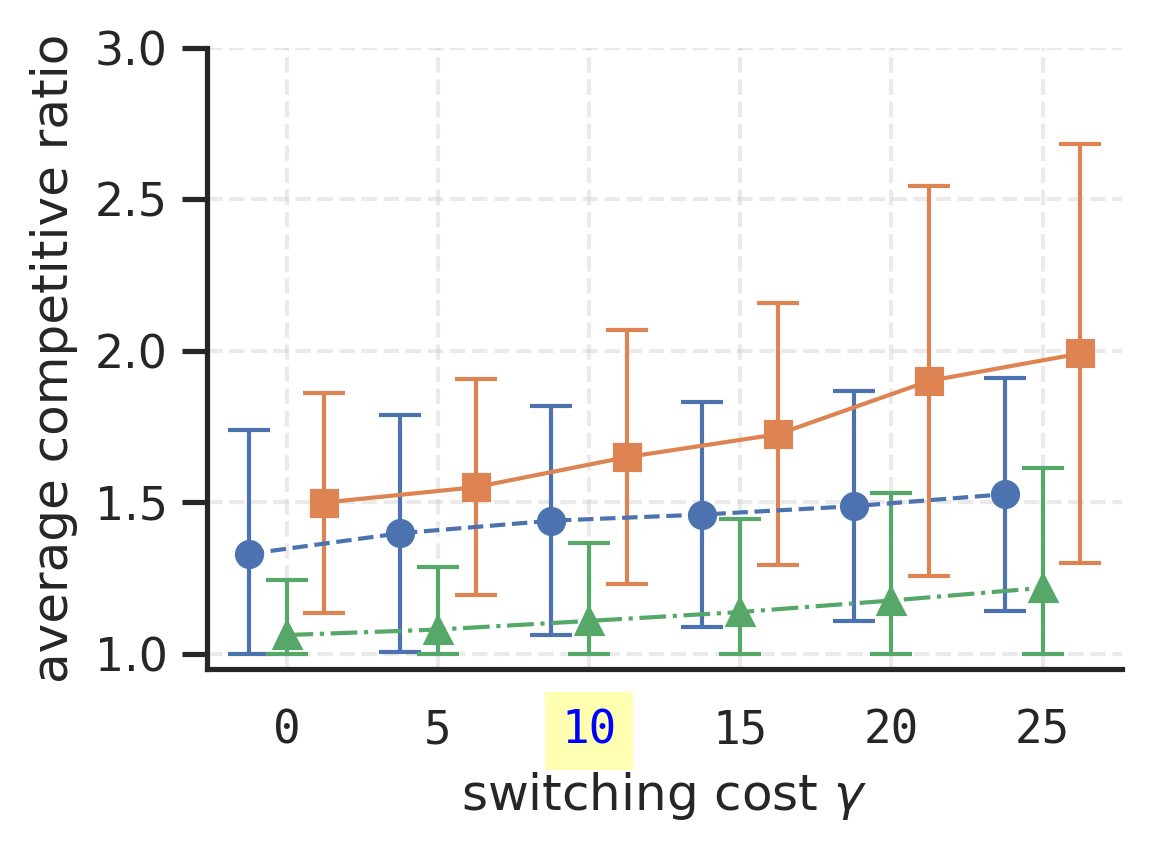}\vspace{-0.5em}
        \caption{Switching cost $\gamma$}
        \label{fig:gammas}
    \end{subfigure}
    \hfill
    \begin{subfigure}{0.24\textwidth}
        \centering
        \includegraphics[width=\textwidth]{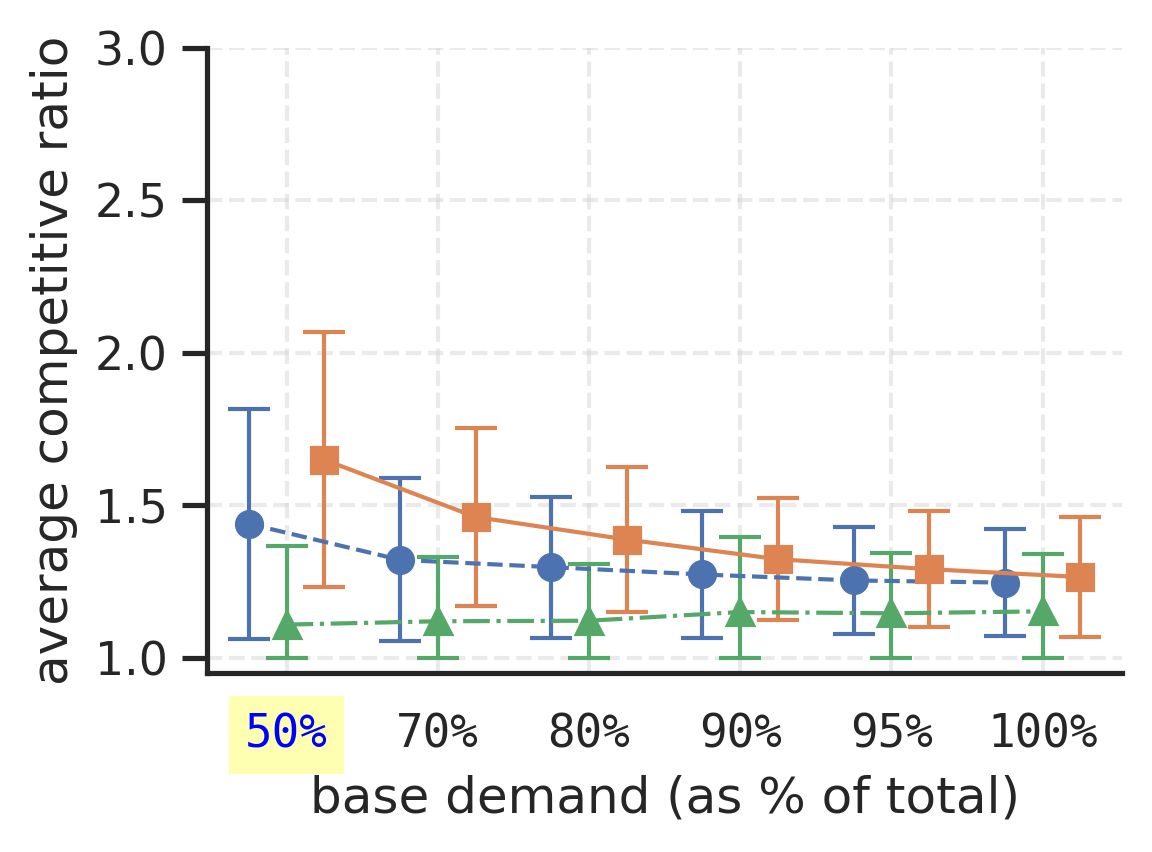}\vspace{-0.5em}
        \caption{Demand mix}
        \label{fig:prop-base}
    \end{subfigure}
    \begin{subfigure}{.24\textwidth}
        \centering
        \includegraphics[width=\textwidth]{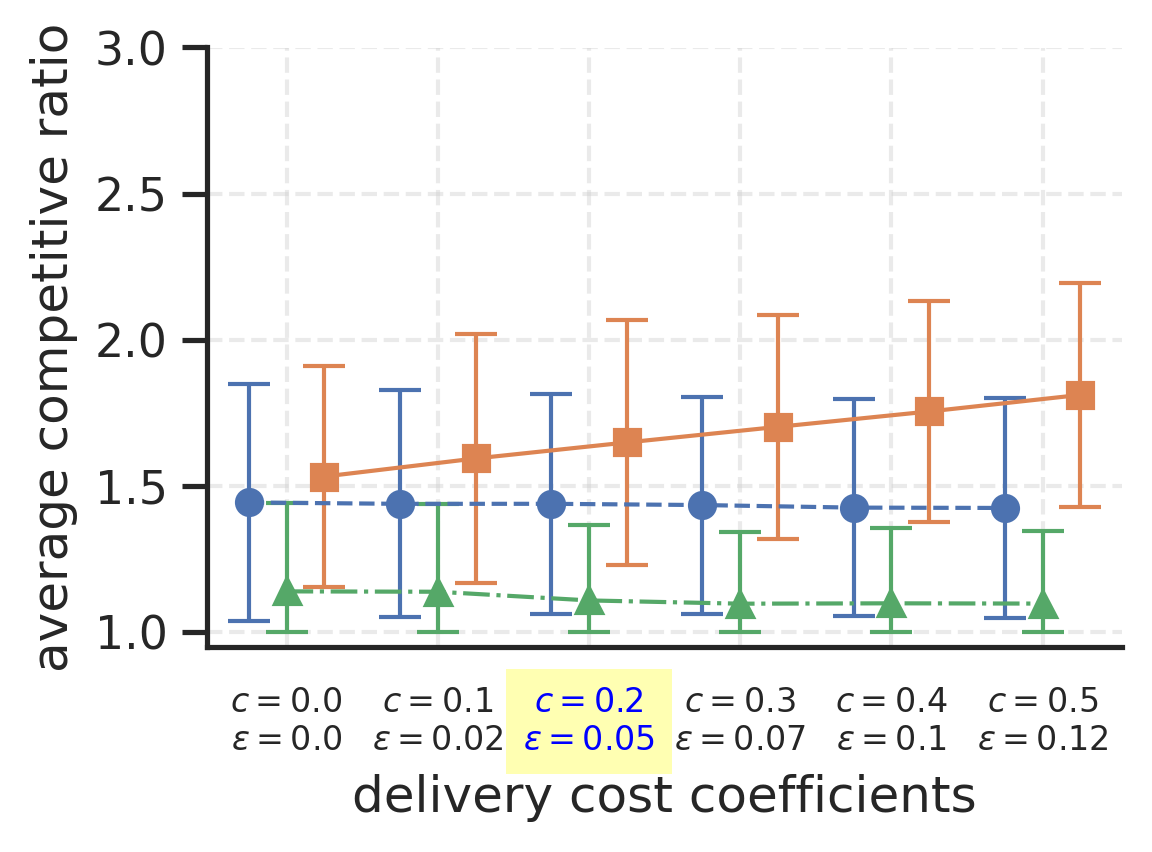}\vspace{-0.5em}
        \caption{Delivery cost ($c$ and $\varepsilon$)}
        \label{fig:deliveries}
    \end{subfigure}
    \vspace{-1em}
    \caption{Average ECR for all algorithms in \texttt{CAISO} with varying parameters as specified.}
    \label{fig:varied2}
    \vspace{-1.5em} %
\end{figure*}

\noindent \textit{$\triangleright$ Time Horizons. \ } \autoref{fig:varied2}(a) shows the effect of varying the time horizon $T$ between $24$ and $96$ in \texttt{CAISO}.  We find that the performance of all algorithms improves slightly with longer horizons, which we attribute to the increased flexibility in using storage over longer time periods.  Notably, \PALDC consistently outperforms the other algorithms across all horizon lengths.

\noindent \textit{$\triangleright$ Switching Costs. \ } \autoref{fig:varied2}(b) shows the effect of varying the switching cost coefficient $\gamma$ (with fixed $\delta = 5$) between $0$ and $25$ in \texttt{CAISO}.  As predicted by the theoretical results, the performance of all algorithms degrades slightly as the switching cost coefficient $\gamma$ increases---in this respect, \PALDS and \PALDC are less sensitive to increases in switching costs than \PAAD.  

\noindent \textit{$\triangleright$ Varying Demands. \ } 
We estimate the effect of varying demand distribution on the performance of all algorithms in the \texttt{CAISO} region.  Recall that the default setting splits demand evenly between base and flexible demand ($\texttt{prop\_base} = 0.5$)---in \autoref{fig:varied2}(c), we plot the effect of varying \texttt{prop\_base} between $0.5$ and $1.0$ in the \texttt{CAISO} region.  We find that the performance of \PAAD and \PALDS generally improves as the proportion of base demand increases (deviating further from the training instances), while \PALDC's performance slightly degrades, although it still outperforms the other algorithms.

\noindent \textit{$\triangleright$ Delivery Costs. \ } In \autoref{fig:varied2}(d), we show the effect of varying the delivery cost parameters $c$ and $\varepsilon$ (see \sref{Def.}{def:osdm-delivery-pd-monotone}) in the \texttt{CAISO} region.  As $c$ and $\varepsilon$ increase, the delivery cost has a more significant effect on the resulting cost. As predicted by the theoretical results, \PAAD's performance degrades slightly as a function of $c$ and $\varepsilon$, while the performance of \PALDS and \PALDC remains relatively constant.

\section{Conclusion}
\label{sec:conc}

Motivated by paradigm shifts promoting active and intelligent demand management models (e.g., demand response), we introduce and study online smoothed demand management (\OSDM)---the first online problem to jointly capture key practical features such as mixed inflexible and flexible demand, local energy storage, and costs penalizing unsmooth behavior.
We present \PAAD, an algorithm achieving the optimal competitive ratio for \OSDM, and \PALD, a novel differentiable framework that learns the best algorithm from data while preserving worst-case guarantees. We evaluate both in a case study of a grid-integrated data center with co-located storage.

{\color{blue}
Several directions remain for future work. One natural extension is to incorporate \textit{rate constraints}, where purchase and/or delivery rates are capped per time step as in some related models~\cite{Lechowicz:24, Yang:20}.  While the monotone state- and price-dependent delivery cost we consider (\sref{Def.}{def:osdm-delivery-pd-monotone}) is sufficiently general to capture some motivating applications, the delivery cost could be generalized---for instance, one might consider a cost that depends on a second adversarial price sequence (separate from $p_t$). 
More broadly, while our \PALD learning framework is developed for \OSDM, its approach of learning ``the best algorithm'' out of a set of algorithms that satisfy a given worst-case competitive ratio may generalize to other online problems, motivating further study of which problem classes and algorithm parameterizations admit such a framework.
}

\bibliographystyle{ACM-Reference-Format}
\bibliography{main}

\appendix
\clearpage
\section*{Appendix}

\section{Deferred Experiment Setup and Results} \label{apx:exp}

In this section, we provide additional experimental results and details that were deferred from \autoref{sec:exp}.  We start by discussing the runtime of our algorithms in \autoref{apx:runtime}, before presenting additional experimental results in \autoref{apx:exp-results} and presenting details of the \PALD implementations in \autoref{apx:pald-implementation-details}.  Note that \autoref{fig:weighted-active-jobs} plots the ``\textit{weighted active jobs}'' metric used to model demand using the Alibaba trace~\cite{Alibaba:18}, described in \autoref{sec:exp}.  

\begin{figure*}[h]
    \hfill
    \minipage{0.75\textwidth}
    \begin{center}
    \includegraphics[width=0.5\linewidth]{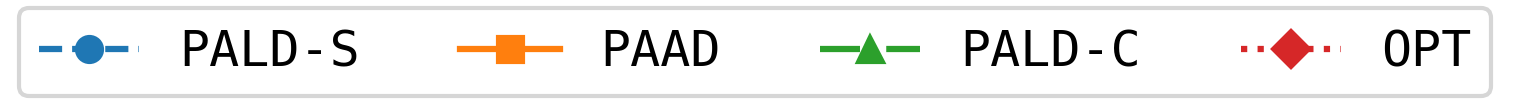}\vspace{0.1em}
    \end{center}
    \endminipage\hfill\\
    \minipage{0.24\textwidth}
    \includegraphics[width=\linewidth]{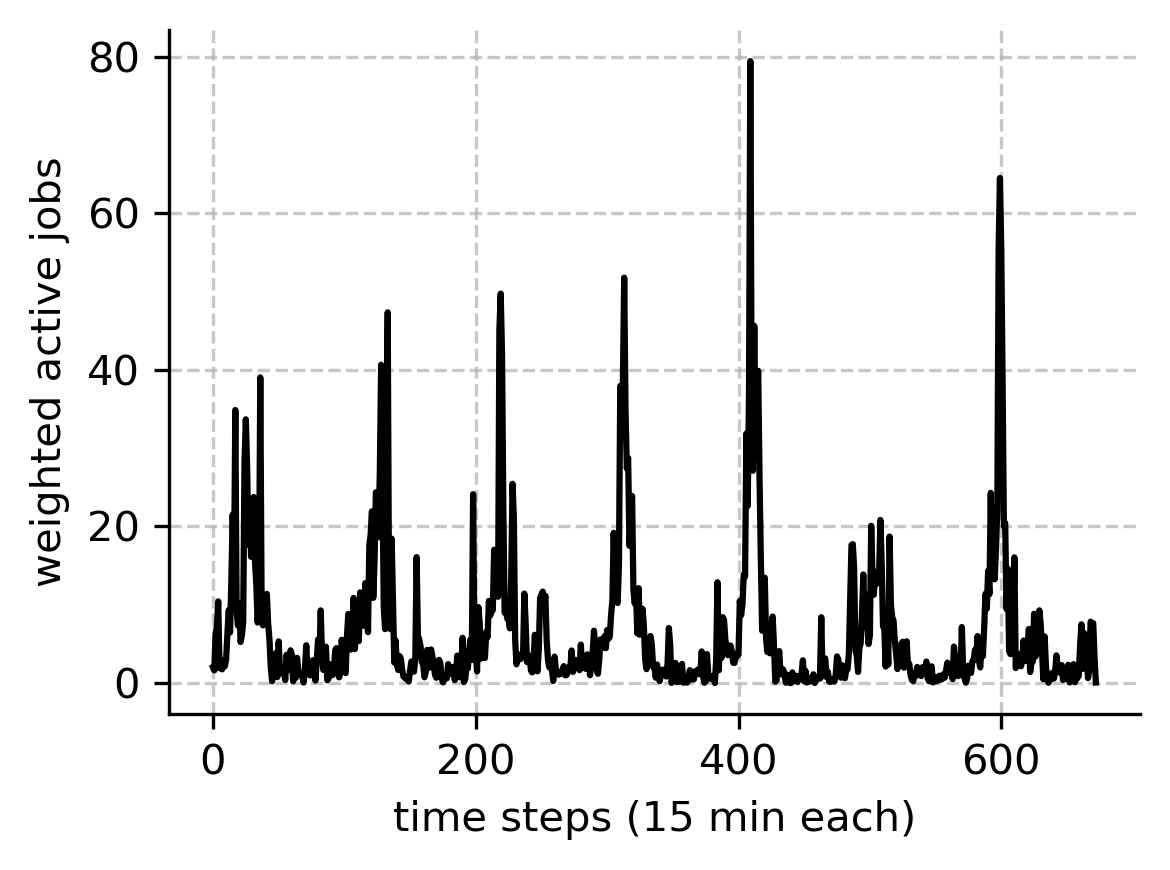}\vspace{-1em}
     \caption{ Weighted active jobs over time (bucketed in 15-min intervals) for the Alibaba trace~\cite{Alibaba:18} }\label{fig:weighted-active-jobs}
    \endminipage\hfill 
    \minipage{0.24\textwidth}
    \includegraphics[width=\linewidth]{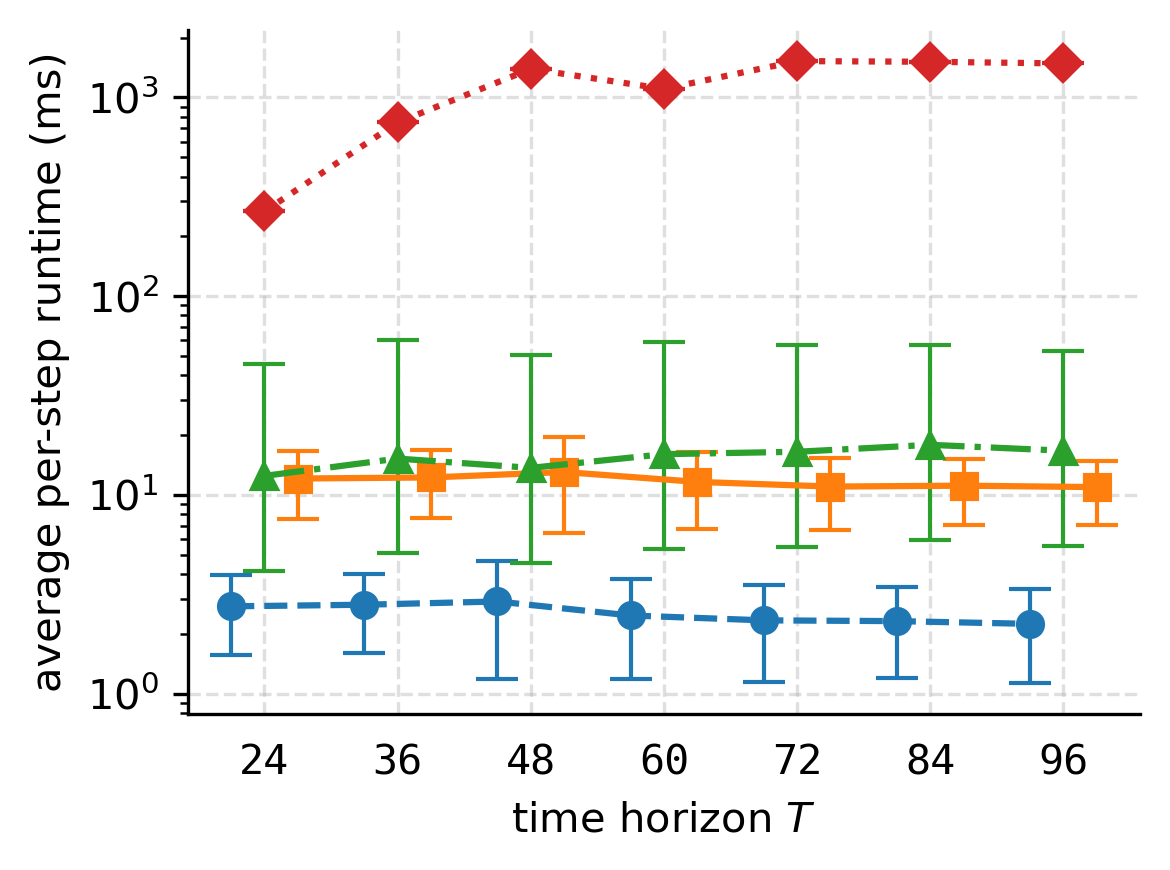}\vspace{-1em}
    \caption{ Average per-time-step runtime (in ms) for all algorithms with varying instance lengths $T$. }\label{fig:runtimes}
    \endminipage\hfill
    \minipage{0.24\textwidth}
    \includegraphics[width=\linewidth]{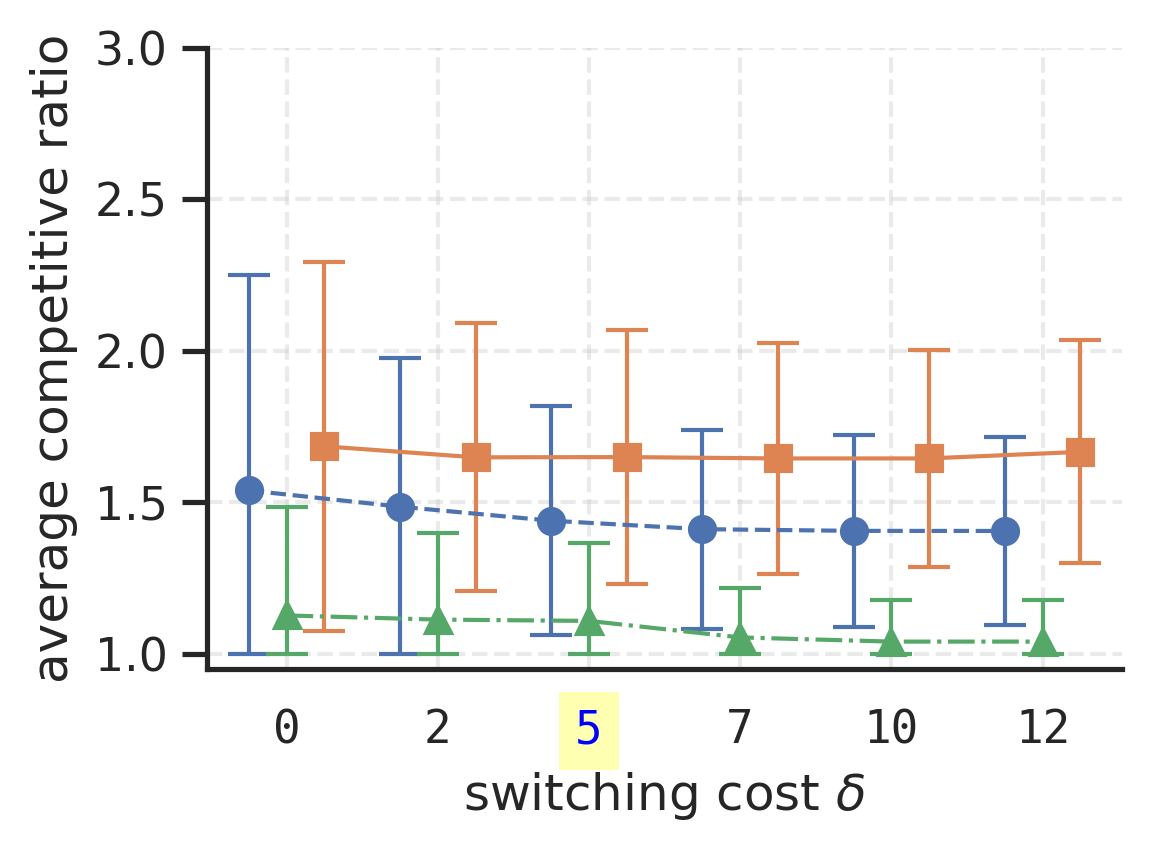}\vspace{-1em}
     \caption{ Average ECR for all algorithms in \texttt{CAISO} with varying delivery switching coefficients $\delta$. }\label{fig:deltas}
    \endminipage\hfill 
    \minipage{0.24\textwidth}
    \includegraphics[width=\linewidth]{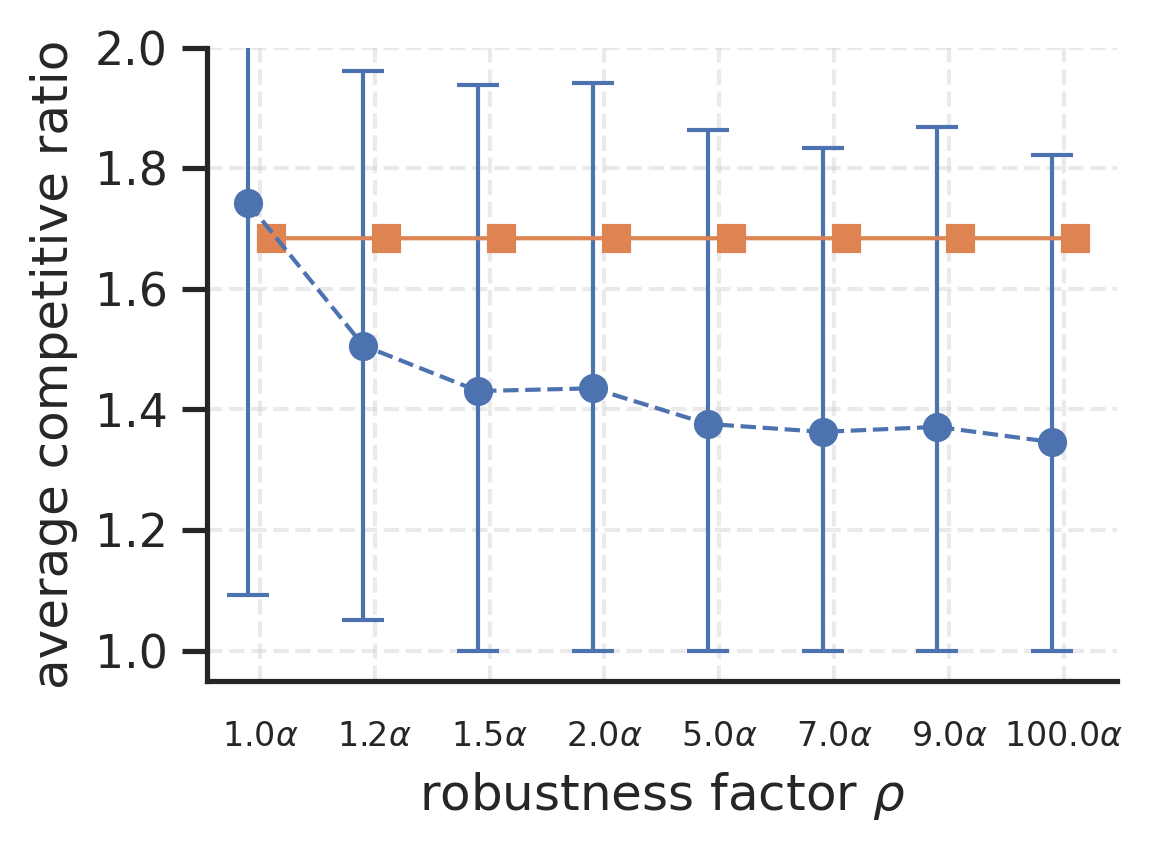}\vspace{-1em}
    \caption{ Average ECR for \PALDS in default \texttt{CAISO} experiment trained with varying $\rho$. }\label{fig:rhos}
    \endminipage\hfill  
\end{figure*}

\subsection{Runtime Measurements, Discussion, and Optimizations} \label{apx:runtime}

Building off the discussion in \autoref{sec:paad-algorithm}, we provide additional details about the empirical runtime of our algorithms and discuss some practical optimizations that can be used to improve the runtime of \PAAD and \PALD.

In \autoref{fig:runtimes}, we show the average per-time-step runtime (in milliseconds) for all algorithms with varying instance lengths $T$, averaging over 8,400 instances (1,200 for each setting of $T$).  
We also include the runtime of the offline solver (computed as a bilinear program using GurobiPy~\cite{gurobi}), normalized by the instance length $T$ to give a fair comparison with the online algorithms.
Each instance and each algorithm was run in a single thread on a MacBook Air with M2 processor and 24 GB of RAM.
Unsurprisingly, we find that the offline optimal solution is the most computationally expensive, exhibiting a ``per-time-step'' runtime an order of magnitude higher than the next closest online algorithm.  \PAAD and \PALDC have comparable runtimes, with averages of 62.5 ms and 65.7 ms per time step, respectively, which remains practical for real-time decision-making.

Interestingly, we find that \PALDS is the fastest online algorithm, with an average runtime of just 13.9 ms per time step.  We find that this is because of the re-parameterization of the psuedo-cost minimization problem in both variants of \PALD (see \autoref{apx:pald-implementation-details}).  Specifically, when solving the pseudo-cost minimization problem for each driver (see \autoref{alg:pcm-subroutine}), our \PAAD implementation solves the exact integral over the threshold function $\phi(\cdot)$, while our \PALD implementation uses a squared hinge re-parameterization that is faster for the CVXPY solver.  \PALDS is particularly fast because it uses a single set of thresholds for all time steps, allowing the CVXPY solver to cache and re-use computations across time steps.  \PALDC makes a forward pass through a neural network to predict the thresholds whenever a driver is created and project them into the feasible sets, which adds some overhead to the per-time-step runtime and brings it closer to \PAAD in runtime.

\PALDS's performance improvement demonstrates that several optimizations can be used to speed-up the core psuedo-cost minimization problem, e.g., by caching results, using approximations, or re-parameterizing the integral---we do not explore these optimizations in depth in this work, but they offer further evidence to suggest that the \PAAD and \PALD frameworks are practical for real-time decision-making in large-scale systems.

We also remark that additional driver-level optimizations can be employed to reduce the number of times that the psuedo-cost problem needs to be solved without affecting the theoretical results.  Perhaps the highest impact of these optimizations is \textit{driver consolidation}---if multiple drivers have the same type and relative state, they can be consolidated into a single driver with a larger size, allowing one solution to the pseudo-cost problem to essentially cover multiple drivers.  Formally, if for example, two drivers $i$ and $j$ have the same type (i.e., $d^{(i)} = d^{(j)}$) and the same relative state (i.e., $\nicefrac{w_b^{(i)}}{d^{(i)}} = \nicefrac{w_b^{(j)}}{d^{(j)}}$), then they can be consolidated into a single driver with size $d^{(i)} + d^{(j)}$.  We did not implement this optimization in our experiments, but it could be effective in practice to even further reduce the number of times that the pseudo-cost problem needs to be solved.

\subsection{Deferred Experimental Results} \label{apx:exp-results}

In this section, we present additional results and figures to complement those in the main body.  In subsequent figures (discussed below), we give experiment results for the effect of varying delivery switching coefficient $\delta$ and varying the robustness factor $\rho$ in our \PALD implementation.

\smallskip
\noindent \textit{$\triangleright$ Delivery Switching Cost. \ } \autoref{fig:deltas} shows the effect of varying the switching cost coefficient $\delta$ (with $\gamma$ fixed at $\gamma = 10$) between $0$ and $12.5$ in \texttt{CAISO}---other parameters are set to the experiment defaults (see \autoref{sec:exp}).  We find that the performance of all algorithms slightly improves as the switching cost coefficient $\delta$ increases.

\smallskip
\noindent \textit{$\triangleright$ Robustness Factor $\rho$. \ }
In \autoref{fig:rhos}, we show the effect of varying the robustness factor $\rho$ in our \PALDS implementation in \texttt{CAISO}.  
Recall that $\rho$ controls the degree of robustness guaranteed by the \PALD framework.  In our main experiments, we typically set $\rho = 5 \alpha$.  As $\rho \to \alpha$, the feasible sets for learning thresholds shrink to approximately those thresholds that would be valid for the robust \PAAD algorithm (see \autoref{sec:pald-learning}).  On the other hand, as $\rho \to \infty$, the feasible sets become unconstrained, allowing \PALDS to learn thresholds that may not be valid for any competitive online algorithm.  For \PALDS, we find that there is a steep drop in empirical competitive ratio as $\rho$ increases from $\alpha$ to $2\alpha$, after which the performance stabilizes. 
This suggests that there is a ``sweet spot'' for $\rho$ that balances robustness and flexibility, and that \PALDS can learn effective thresholds even when the feasible sets are relatively constrained. 
This also shows that in the case of \PALDS, where we just learn a single set of thresholds for each month in the LMP data, there is a practical limit to how well \PALDS can perform, as it cannot adapt to a specific instance and must ``average over'' many instances with different characteristics.

\subsection{\PALD Implementation Details} \label{apx:pald-implementation-details}

In this section, we provide additional details about our implementations of the \PALD framework in our case study (see \autoref{sec:exp})

We implement the \PALD algorithm using CVXPYLayers~\cite{CVXPYlayers:19,CVXPY} to enable optimization of the downstream task (i.e., minimizing the empirical competitive ratio on training instances) via backpropagation 
through the optimization problems that define \PALD's decisions.

As outlined in \autoref{sec:pald-learning}, \PALDS directly learns one set of piecewise-affine threshold functions to minimize overall empirical competitive ratio on training instances.  We implement projected gradient descent using PyTorch~\cite{Paszke:19:PyTorch} and CVXPY~\cite{CVXPY} to project the learned thresholds (specifically the parameter vectors) into the feasible sets.  To capture some degree of seasonality, we train a set of \PALDS thresholds for each month in a given region's LMP data (i.e., 12 sets of thresholds in total), using 100 random training instances for each month.  
In \autoref{fig:palds-diagram}, we give a high-level diagram of the \PALDS learning framework.

\begin{figure*}[h]
    \centering
    \vspace{-1em}
    \includegraphics[width=0.7\textwidth]{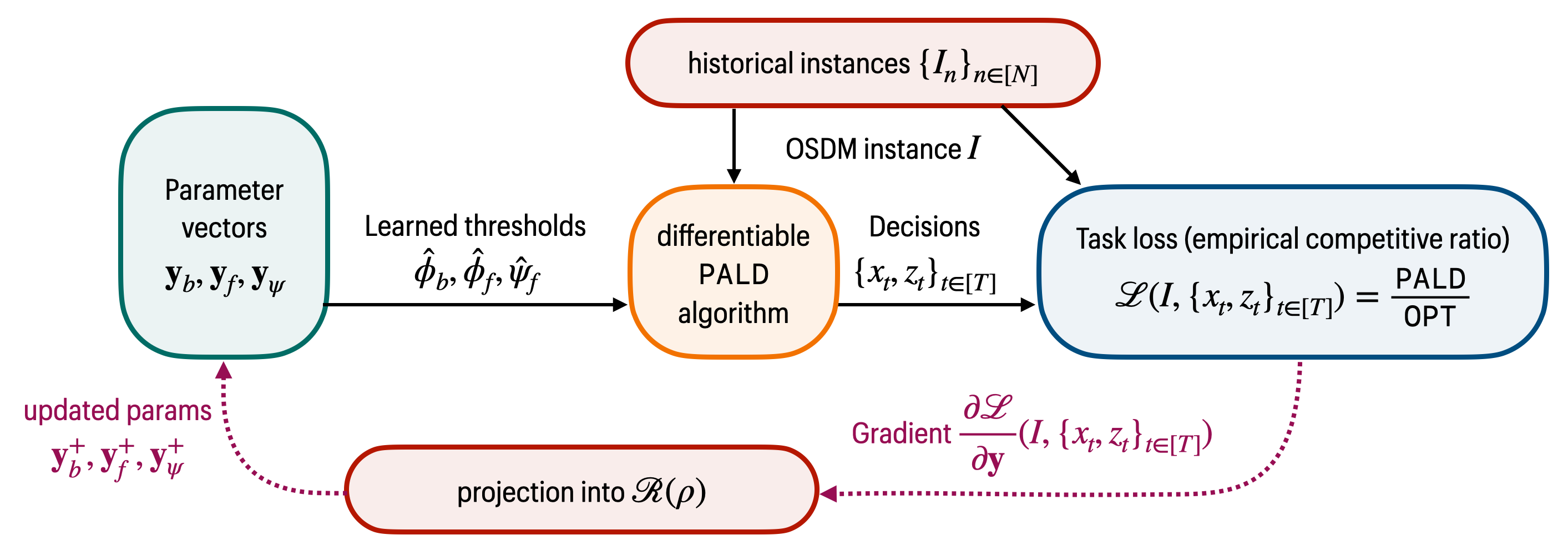} \vspace{-1em}
    \caption{\PALDS learns a single set of thresholds (via the parameter vectors $\mathbf{y}_b, \mathbf{y}_f, \mathbf{y}_\psi)$ to directly optimize a downstream task loss of the empirical competitive ratio.
    }
    \label{fig:palds-diagram}
    \vspace{-1em}
\end{figure*}

In contrast, \PALDC is a neural network-based approach (see \autoref{sec:pald-learning}) that uses contextual features to predict a set of piecewise-affine threshold functions for each driver.  In our implementation, we use a PyTorch neural network with two hidden layers (64 neurons each) and ReLU activations.  The output of this neural network is a monotone head that ensures the output parameter vectors (3 $K$-length vectors for the base and flexible driver threshold functions) are non-increasing by construction: we use a softplus activation to ensure non-negativity, and then take a cumulative sum in reverse order to ensure monotonicity.  The output of the monotone head is then passed through a CVXPYLayer that projects the outputs into the feasible sets.  We train \PALDC on 100 random training instances using all 12 months in a given region's LMP data, using the time, month, tracking target (if applicable), and statistics about the day-ahead LMP forecasts for future time steps (i.e., $\min$, $\max$, average, standard deviation), as the contextual features. In \autoref{fig:paldc-diagram}, we give a high-level diagram of the \PALDC learning framework.

\begin{figure*}[h]
    \centering
    \vspace{-1em}
    \includegraphics[width=0.8\textwidth]{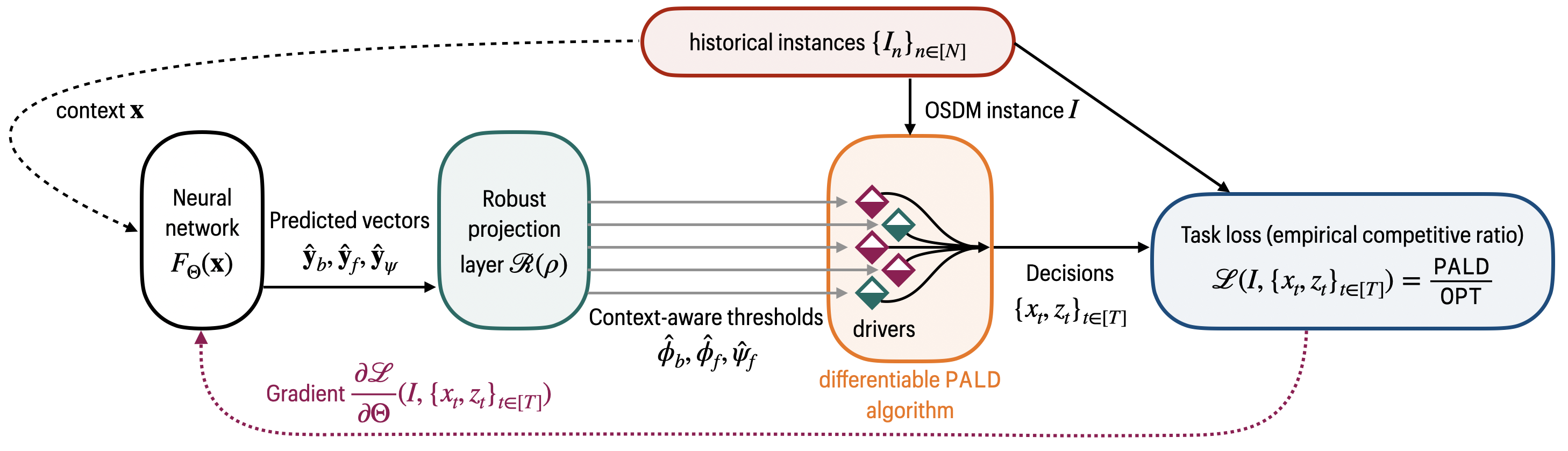} \vspace{-1em}
    \caption{\PALDC learns a mapping between context about each driver (see \autoref{alg:paad}) and the instance-specific threshold functions that minimize overall task loss.  Gradients flow through the differentiable \PALD implementation and the final projection layer of the neural network to improve this mapping.
    }
    \label{fig:paldc-diagram}
    \vspace{-1em}
\end{figure*}

In our implementation of both \PALDS and \PALDC, we use $K=10$ grid points to parameterize the piecewise-affine threshold functions.  We restrict the thresholds to lie in a concave region by enforcing that the second-order finite differences of the parameter vectors are non-positive.  The reason we do this is because we use a nonnegative sum of squared hinges to parameterize the computation of the integral over the threshold functions in the pseudo-cost minimization problem (i.e., $- \int_w^{w+x} \phi_b(u) du$, $- \int_w^{w+x} \phi_f(u) du$, and $- \int_v^{v+x} \psi_f(u) du$, see \autoref{alg:pcm-subroutine} for details).  This parameterization is convex and DPP (s disciplined parameterized program) in $x$ if the thresholds are linear decreasing or concave decreasing.  We implement this parameterization to employ CVXPYLayers, which require the optimization problem to be DPP in the optimization variables (i.e., $x$) and the parameters (i.e., the threshold function parameters).  However, we note that this concavity constraint is not required for the theoretical guarantees of \PALD and could likely be circumvented with a different implementation---the underlying integral is analytically convex in $x$ for any non-increasing threshold function.

\section{Deferred Discussion from \autoref{sec:problem}} \label{apx:examples}
In this section, we provide some deferred discussion about motivating examples for the \OSDM problem introduced in \autoref{sec:problem}.

\smallskip
\noindent \textbf{Grid-integrated Data Center with Storage. \ } Consider a grid-connected data center with local energy storage.  
At each discrete time step $t$ (e.g., 15-minute intervals), the data center must decide how much energy to purchase ($x_t$) from the grid at a price $p_t$, and how much energy to deliver (i.e., use now) to meet demand ($z_t$)---the difference between these quantities is stored in (resp. delivered from) the local energy storage with state of charge $s_t$.
In this setting, base demands $b_t$ model e.g., interactive workloads that must be satisfied immediately, while flexible demands $f_t$ model e.g., delay-tolerant batch jobs that can be deferred until a deadline $\Delta_t > t$.
The data center's goal is to minimize its electricity cost ($p_tx_t$) while smoothing its purchasing rate ($\mathcal{S}(\cdot)$), since large fluctuations in electricity consumption negatively impact the grid~\cite{Bloomberg:24}.  
The data center may also wish to place a small penalty on the cost of delivering energy ($\mathcal{D}(\cdot)$) from the battery to e.g., amortize the battery's degradation over time, as well as a penalty on the smoothness of the delivery rate (i.e., its own rate of energy consumption within the data center) to reduce wear-and-tear on internal components ($\delta |z_t - z_{t-1}|$).
Compared to prior work that has studied demand management in data centers~\cite{Yang:20}, \OSDM captures the practical considerations of \textit{both} flexible and inflexible demand, as well as the desire to make ``smooth'' decisions on the grid- and demand-sides of the problem.

\smallskip
\noindent \textbf{Thermal Energy Demand Management. \ } Consider a thermal energy system with a local energy storage tank~\cite{NREL:ThermalStorage}.  At each time step $t$, the operator must decide how much thermal energy to purchase ($x_t$) at a price $p_t$ (e.g., by converting grid electricity into heat), and how much thermal energy to deliver to meet demand ($z_t$)---the difference between these quantities is stored in (resp. delivered from) the local storage tank with state of charge $s_t$.  Base demands $b_t$ model e.g., heating or cooling loads that must be satisfied immediately, while flexible demands $f_t$ model e.g., delay-tolerant heating or cooling loads that can be deferred until a deadline $\Delta_t > t$.
The operator's goal is to minimize their energy cost ($p_tx_t$) while smoothing their purchasing rate ($\mathcal{S}(\cdot)$).
The delivery cost ($\mathcal{D}(\cdot)$) captures the extra input energy required to drive the delivery system (e.g., pumps, heat exchangers, etc.), which may depend on the price of energy ($p_t$).
The operator may also wish to place a small penalty on the smoothness of the delivery rate (i.e., its own rate of thermal energy consumption) to reduce wear-and-tear on distribution components such as valves ($\delta |z_t - z_{t-1}|$).

\smallskip
\noindent \textbf{Just-in-Time Manufacturing with Material Inventory. \ } 
Consider a factory that processes raw materials into a finished product, and has a warehouse to store raw materials.  At each time step $t$, the factory must decide how much raw material to purchase ($x_t$) at a price $p_t$, and how much finished product to deliver to meet demand ($z_t$)---the difference between these quantities is stored in (resp. delivered from) the warehouse with state $s_t$.
Orders for the product arrive over time, where base demands $b_t$ model e.g., orders that must be satisfied immediately, while flexible demands $f_t$ model scheduled orders that can be deferred until a deadline $\Delta_t > t$.
The factory's goal is to minimize their material cost ($p_tx_t$) while smoothing their purchasing rate ($\mathcal{S}(\cdot)$), which can reduce supply chain uncertainty and lead to discounts from suppliers.
The delivery cost ($\mathcal{D}(\cdot)$) captures the processing cost for converting raw materials into the finished product at time $t$.  The factory may also wish to place a small penalty on the smoothness of the delivery rate (i.e., its own rate of production) to encourage a more predictable production schedule ($\delta |z_t - z_{t-1}|$).

\smallskip
\noindent \textbf{Flow Battery Storage Management. \ }
Consider a large energy consumer that operates a grid-connected flow battery to store energy during off-peak hours.  Flow batteries are a type of rechargeable battery where energy is stored in the form of two (or more) liquids contained in external tanks, and energy is extracted by pumping the liquids through a cell stack to generate electricity~\cite{Breeze:19}.  At each time step $t$, the operator must decide how much electricity to purchase ($x_t$) from the grid at a price $p_t$, and how much electricity to deliver to meet demand ($z_t$)---the difference between these quantities is stored in (resp. delivered from) the flow battery with state of charge $s_t$.
The operator's goal is to minimize their electricity cost ($p_tx_t$) while smoothing their purchasing rate ($\mathcal{S}(\cdot)$).
The delivery cost ($\mathcal{D}(\cdot)$) captures the extra input energy required to drive the flow battery system (e.g., pumps, valves, etc.), which may depend on the price of energy ($p_t$).

\section{Details of the \OCS Problem and the \RORO Algorithm} \label{apx:roro-details}

In this section, we provide deferred details about the \OCS problem and the ``ramp-on, ramp-off'' (\RORO) algorithm proposed to solve it optimally~\cite{Lechowicz:24}, which we discuss in the warmup (see \autoref{sec:warmup}).
We start by formally defining the \OCS problem, which can be cast as a special case of \OSDM with a single unit of flexible demand and no storage.  We then describe the \RORO algorithm and its guarantee in \autoref{sec:roro-description}.

\subsection{Online Conversion with Switching Costs (\OCS) \cite{Lechowicz:24}}
\noindent \textbf{Problem Statement. \ } 
Consider an operator who must purchase an asset of unit size (without loss of generality) before a given deadline $T$ while minimizing their total cost.  At each time step $t \in [T]$, a price $p_t$ arrives online, and the decision-maker must choose the amount of the asset to purchase at the current time step, represented by $x_t \in [0,1]$.
Given a decision $x_t$, the operator's cost at time $t$ is $p_t x_t + \gamma \vert x_t - x_{t-1} \vert$, where the first term is the cost due to the time-varying price, and the second term penalizes the decision-maker for ``unsmooth jumps'' across time steps.  The coefficient $\gamma > 0$ is known a priori.  An offline formulation of \OCS is as follows:
\begin{align}
    [\OCS] \quad \min_{\mathbf{x} \coloneqq \{x_t\}_{t\in T}} & \underbrace{\sum_{t=1}^{T} p_t x_t}_{\text{total purchasing cost}} + \underbrace{\sum_{t=1}^{T+1} \gamma \vert x_t - x_{t-1} \vert,}_{\text{switching penalty}}\\
    \text{s.t.} & \underbrace{\sum_{t=1}^{T} x_t = 1,}_{\text{deadline constraint}} \quad  x_t \in [0, d_t] \ \forall t \in [T].
\end{align}
\citet{Lechowicz:24} focus on the online version of \OCS, where the decision-maker can only observe price signals up to time $t$ when selecting $x_t$, and each choice of $x_t$ is irrevocable (i.e., it cannot be revised at future time steps).  The goal is to design an online algorithm that achieves a small competitive ratio (see \sref{Def.}{dfn:comp-ratio})
They make the following assumptions on the problem:

\noindent \textbf{Assumptions. \ } 
Prices have bounded support, i.e., $p_t \in [p_{\min}, p_{\max}] \ \forall t \in [T]$, where $p_{\min}$ and $p_{\max}$ are known positive constants.  
They also assume that the switching coefficient $\gamma$ is ``not too large''. Formally, it is bounded within $\gamma \in \left[ 0, \frac{(p_{\max} - p_{\min})}{2} \right)$.  If $\gamma$ exceeds this range, its impact on the total cost exceeds that of the prices, and any competitive algorithm should simply minimize this term, making decision-making trivial~\cite{Lechowicz:24}.
They typically assume that the deadline $T$ is known in advance to facilitate a ``compulsory trade'' that ensures the entire asset is purchased before the end of the sequence.   If the operator has completed $w_T$ fraction of the workload at time $T$, they are forced to purchase the remaining $1 - w_T$ fraction at time $T$.

\subsection{The \RORO Algorithm \cite{Lechowicz:24}} \label{sec:roro-description}
In the classic competitive setting, \OCS is solved optimally by an online algorithm framework called ``ramp-on, ramp-off'' (\RORO), shown by \cite{Lechowicz:24}.  In what follows, we present this algorithm to contextualize our results in \autoref{sec:warmup}.

In \RORO~\cite{Lechowicz:24}, the online decision at each time step is made by solving a \emph{pseudo-cost minimization problem} to determine the amount to purchase at the current time step (i.e., $x_t \in [0, 1-w_{t-1}]$).  This minimization balances between the extreme options of buying ``too much'' early (thus incurring suboptimal costs if prices later drop) and waiting too long to purchase ( risking being forced to purchase a large portion at once, potentially at high cost).

Whenever the price is ``sufficiently attractive'', the pseudo-cost minimization finds the best decision that purchase just enough to maintain a certain competitive ratio.  To define this trade-off, \citet{Lechowicz:24} introduce a \emph{dynamic threshold function} $\phi(w) : [0,1] \rightarrow [p_{\min},p_{\max}]$, which is a function that maps the current amount that has been purchased $w$ to a price value.  It is defined as:
\begin{equation}
    \phi(w) = p_{\max} - \gamma + \left( \frac{p_{\max}}{\alpha_{\RORO}} - p_{\max} + 2 \gamma \right) \exp \left( \frac{w}{\alpha_{\RORO}} \right), 
\end{equation}
where $\alpha_{\RORO}$ is the optimal competitive ratio for \OCS, defined as the unique positive solution to $\exp \left( \frac{1}{\alpha_{\texttt{RORO}}}\right) = \frac{p_{\max} - p_{\min} - 2\gamma}{\nicefrac{p_{\max}}{\alpha_{\texttt{RORO}}} - p_{\max} - 2\gamma}$~\cite{Lechowicz:24}:
\begin{equation}
    \alpha_{\RORO} \coloneqq \left[ W \left( \left( \frac{2\gamma + p_{\min}}{p_{\max}} - 1\right) \exp \left( \frac{2\gamma}{p_{\max}} - 1 \right) \right) - \frac{2\gamma}{p_{\max}} + 1 \right]^{-1}. \label{eq:alpha_roro}
\end{equation}
\noindent In the above, $W(\cdot)$ is the Lambert $W$ function~\cite{Corless:96LambertW}.  Given this definition of $\alpha_{\RORO}$, note that $\phi(\cdot)$ is monotonically decreasing on the interval $w \in [0,1]$ (i.e., non-increasing in the amount of demand that has already been purchased), and captures the marginal trade-off of making progress towards satisfying the single unit of demand. 
We summarize the \RORO algorithm in \sref{Algorithm}{alg:roro}.
\begin{algorithm}[t]
\caption{Online Ramp-On, Ramp-Off Algorithm (\RORO)~\cite{Lechowicz:24}}\label{alg:roro}
{\small
    \begin{algorithmic}[1]
        \item \textbf{input: } threshold function $\phi(w) : [0,1] \rightarrow [p_{\min},p_{\max}]$
        \item \textbf{initialize: } initial decision $x_0 = 0$, current utilization $w_0 = 0$
        \While {price $p_t$ is revealed and $w_{t-1} < 1$}:
            \State{solve \textbf{pseudo-cost minimization problem} to obtain decision:}
            \begin{align*}
                x_t \gets \argmin_{x \in [0,1-w_{t-1}]} & p_t x_t + \gamma \vert x - x_{t-1} \vert - \int_{w_{t-1}}^{w_{t-1}+x} \phi(u) du
            \end{align*}
            \State{update utilization as $w^{(t)} = w_{t-1} + x_t$}
        \EndWhile
    \end{algorithmic}
}
\end{algorithm}

\section{Deferred Proofs from \autoref{sec:warmup}} \label{apx:roro-doubling-proof}
In this section, we give a full description and analysis of the warmup ``doubling strategy'' for \RORO.  We first give the pseudocode for this extension in \autoref{alg:roro-doubling}.  The notation $\instance_j(d)$ denotes the $j^\text{th}$ instance of \RORO that is created with a demand of size $\hat{d}$.  Note that each instance of \RORO operates until all of the flexible demands assigned to it are satisfied (or until their deadlines have passed), at which point they always return $0$ as their purchasing decision.

\begin{algorithm}[t]
\caption{Doubling Version of \RORO for \OSDM \textit{(only flexible demand \& no delivery cost)}}
\label{alg:roro-doubling}
{\small
\begin{algorithmic}[1]
\State \textbf{initialize} initial demand guess $\hat{d} \gets 1$,  index $j \gets 0$, set of active \RORO instances $\mathcal{R} \gets \emptyset$
\State \textbf{create} initial instance of \RORO $\mathcal{R} \gets \mathcal{R} \cup \instance_0(\hat{d})$ and associated demands $\mathcal{A}_0 \gets \emptyset$
\For{each time step $t$}
    \If{new non-zero flexible demand $f_t > 0$ arrives}
        \If{$\vert \mathcal{A}_0 \vert + f_t \leq \hat{d}$}
            \State Assign $f_t$ to current instance of \RORO: $\mathcal{A}_j \gets f_t$
        \Else
            \State $\hat{d} \gets 2^{j+1}, \quad j \gets j + 1$
            \State $\mathcal{R} \gets \mathcal{R} \cup \instance_j(\hat{d})$ and $\mathcal{A}_j \gets f_t$
            \State Assign $f_t$ to new instance of \RORO: $\mathcal{A}_j \gets f_t$
        \EndIf
    \EndIf
    \State \textbf{compute} initial purchase decision $x'_t \gets \sum_{\instance_j \in \mathcal{R}} x_t^{(j)}$ 
    \State \Comment $x_t^{(j)}$ is the purchase decision of $\instance_j$
    \State \textbf{compute} initial delivery decision $z'_t \gets \sum_{\instance_j \in \mathcal{R}} x_t^{(j)}$
    \If{demand is binding, i.e., $\sum_{\tau=1}^{t-1} z_\tau + z_t < \sum_{\tau: \Delta_\tau \le t} f_\tau$}
        \State \textbf{increment} $x_t \gets x'_t + \left( \sum_{\tau: \Delta_\tau \le t} f_\tau - \sum_{\tau=1}^{t-1} z_\tau - z'_t \right)$, \ $z_t \gets z'_t + \left( \sum_{\tau: \Delta_\tau \le t} f_\tau - \sum_{\tau=1}^{t-1} z_\tau - z'_t \right)$
    \Else
        \State $x_t \gets x'_t, \quad z_t \gets z'_t$
    \EndIf
\EndFor
\end{algorithmic}
}
\end{algorithm}

We now prove \autoref{thm:roro-doubling-upper-bound}, which states that under the assumptions in \autoref{sec:assumptions} and \autoref{sec:warmup}, the ``doubling extension'' of \RORO defined in \autoref{sec:warmup} is $\zeta$-competitive for \OSDM with only flexible demand and no delivery cost, where $\zeta$ is at least:
\[
    \zeta \geq \min \left\{ \frac{\alpha_{\texttt{RORO}}}{1-\sigma} + \frac{\alpha_{\texttt{RORO}}}{1-\sigma} \cdot \frac{\left( p_{\max} - 2\gamma - \frac{p_{\max}}{\alpha_{\texttt{RORO}}} \right) \exp\left( \frac{\sigma}{\alpha_{\texttt{RORO}}} \right) - \frac{\sigma p_{\max} (1-\sigma)}{\alpha_{\texttt{RORO}}}}{p_{\min}}, \frac{p_{\max} + 2\gamma}{p_{\min}} \right\}.
\]
In the above, $\alpha_{\texttt{RORO}}$ is the optimal competitive ratio for \OCS, which is defined in the context of \OSDM as the unique positive solution to $\exp \left( \frac{1}{\alpha_{\texttt{RORO}}}\right) = \frac{p_{\max} - p_{\min} - 2\gamma}{\nicefrac{p_{\max}}{\alpha_{\texttt{RORO}}} - p_{\max} - 2\gamma} $~\cite{Lechowicz:24}.

\begin{proof}[Proof of \autoref{thm:roro-doubling-upper-bound}]
    Consider the following \textit{worst-case} \OSDM instance $\mathcal{I}_K$ that satisfies the assumptions given in \autoref{sec:assumptions} and \autoref{sec:warmup}. Suppose that there are $K$ instances of \RORO created over the time horizon.  We denote the $j^\text{th}$ instance of \RORO as $\instance_j$, $j \in 0, 1, \dots K$.  Let $\mathcal{A}_j$ denote the set of flexible demands assigned to $\instance_j$, and let $D^{(j)} = \sum_{f_t \in \mathcal{A}_j} f_t$ denote the total size of these demands.

    We fix an arbitrary amount of total demand $D$, and a maximum size of each flexible demand $\sigma < 1$, and a small increment factor $\iota$.
    An adaptive adversary presents \RORO with a sequence of prices and flexible demands as follows.  Whenever a new instance of \RORO is created, the adversary presents a ``round'' of prices in iteratively descending order, starting from $p_{\max}$ and decreasing by $\iota$ at each time step.  Whenever \RORO purchases a non-zero amount of asset, the adversary's next presented price is $p_{\max}$, forcing \RORO to ``switch off'' and incur a high switching cost.

    Suppose the $j^\text{th}$ instance of \RORO has size $S^{(j)} = 2^j$.  From the start of the $j^\text{th}$ ``round'' of descending prices, the adversary begins presenting flexible demands with size smaller than or equal to $\sigma$ and a uniform deadline of $T$, until the total size of flexible demands presented so far is equal to $S^{(j)} - \sigma + \upsilon$, for small $\upsilon > 0$.  Note that at this point, the total size of flexible demands assigned to $\instance_j$ is exactly $S^{(j)} - \sigma + \upsilon$.  For sufficiently small $\iota$, the adversary reaches this point before the descending prices reach $p_{\min}$---for the rest of the ``round'', the adversary presents zero new flexible demand.

    By presenting prices in descending order (and for sufficiently small $\iota$) the adversary ensures that \RORO's purchasing is exactly captured by $\int_0^{S^{(j)} - \sigma + \upsilon} \phi^{(j)}(u) du$, where $\phi^{(j)}$ is the pseudo-cost function of the $j^\text{th}$ instance of \RORO.  Furthermore, due to the price fluctuations, \RORO will have incurred an extra switching cost (not accounted for by the integral) of at least $\gamma \cdot ( S^{(j)} - \sigma + \upsilon)$, because the adversary forces \RORO to switch off after each purchase.

    Once $\instance_j$ has purchased the total amount of demand assigned to it, the adversary continues to present descending prices until it reaches $p_{\min}$, presenting this best price up to $m$ times (for sufficiently large $m$).  Note that \RORO cannot purchase at these good prices, because it has already satisfied all of the flexible demand assigned to $\instance_j$ and is thus constrained from doing so.

    After presenting $m$ copies of $p_{\min}$, the adversary then presents a new flexible demand with size $\sigma$, which is large enough to create a new instance of \RORO.  At this point, a new ``round'' of descending prices and adversarial demand begins, repeating the above process for $\instance_{j+1}$, and so on.  This continues until $K$ instances of \RORO have been created.

    In what follows, we denote $D^{(j)} = S^{(j)} - \sigma + \upsilon$ as the total size of flexible demands assigned to $\instance_j$ (i.e., the set $\mathcal{A}_j$).  With a slight abuse of notation, let $\OPT^{(j)}$ denote the cost incurred by \OPT to satisfy the flexible demands assigned to $\instance_j$.  Since all flexible demands assigned to $\instance_j$ arrive at or after time $t_{-}^{(j)}$, and must be satisfied before or at time $T$, the cost incurred by \OPT to satisfy these demands is at least the best price that arrives during this time period (i.e., $p_{\min}$), times the total demand size $D^{(j)}$.  We assume switching costs are negligible for sufficiently large $m$ (see the instance description above). Thus, we have:
    \begin{equation}
        \OPT^{(j)} = D^{(j)} p_{\min}. 
    \end{equation}
    Furthermore, the total cost of \OPT on the \OSDM instance $\mathcal{I}_K$ is exactly the sum of the costs incurred to satisfy each $\mathcal{A}_j$, i.e., 
    $$\OPT(\mathcal{I}_K) = \sum_{j=0}^K \OPT^{(j)}.$$
    In the original \RORO competitive proof for \OCS, they use the following facts to relate the cost of \OPT and \RORO for a single instance of \RORO:
    \begin{lemma}[\protect{\cite[Lemma B.2]{Lechowicz:24}}] \label{lem:lb-roro-single-instance}
        For a single instance of \RORO $\mathbf{I}$ with unit-size demand, we have:
        \[ 
        \OPT(\mathbf{I}) \geq \phi(\bar{w}) - \gamma,
        \] 
        where $\bar{w}$ is the total amount of demand satisfied by \RORO (excluding demand purchased due to binding constraints).
    \end{lemma}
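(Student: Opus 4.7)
The plan is to prove the bound by relating \OPT's cost on the single-instance \OCS-style problem $\mathbf{I}$ to the minimum market price observed during the execution of \RORO, and then to relate that minimum price to the threshold function evaluated at the voluntary progress $\bar{w}$ via the KKT conditions of \RORO's pseudo-cost minimization. Concretely, let $p^\star \defeq \min_{t \in [T]} p_t$ denote the best price in the sequence. Since \OPT must satisfy the full unit of demand by time $T$ and prices are non-negative, the total purchase cost of \OPT is bounded below by $p^\star \cdot 1 = p^\star$; any switching costs that \OPT may also incur only strengthen this lower bound. Thus it suffices to show $p^\star \geq \phi(\bar{w}) - \gamma$.

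To establish this, I would unpack the pseudo-cost minimization subroutine of \RORO (\autoref{alg:roro}) at a time step $t$ where the market price $p_t$ attains the minimum $p^\star$. The decision $x_t \in [0, 1 - w_{t-1}]$ minimizes $p_t x + \gamma |x - x_{t-1}| - \int_{w_{t-1}}^{w_{t-1} + x} \phi(u)\,du$. Writing out the first-order optimality condition on the right-derivative at the selected $x_t$, the marginal gain from buying an additional infinitesimal unit is $\phi(w_{t-1} + x_t) - p_t - \gamma\,\partial_x|x - x_{t-1}|$, where the switching subgradient term is bounded above by $+\gamma$ in absolute value. If $w_{t-1} + x_t < 1$ (i.e., demand has not yet been fully satisfied at this step from \RORO's ordinary schedule, so this purchase contributes to $\bar{w}$), optimality forces this marginal gain to be nonpositive, giving $\phi(w_{t-1} + x_t) \leq p_t + \gamma = p^\star + \gamma$. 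Using monotonicity of $\phi$ and the fact that $w_{t-1} + x_t \leq \bar{w}$ (since $\bar{w}$ is \RORO's final voluntary progress), I would conclude $\phi(\bar{w}) \leq \phi(w_{t-1} + x_t) \leq p^\star + \gamma$, i.e., $p^\star \geq \phi(\bar{w}) - \gamma$.

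The edge case requiring care is when \RORO never encounters a price that triggers a voluntary purchase up to level $\bar{w}$; however, $\bar{w}$ is defined precisely to exclude purchases forced by binding constraints (the compulsory trade at $t = T$), so any contribution to $\bar{w}$ must come from some $t$ at which the pseudo-cost optimality condition above is active. The argument extends by taking the limit in $\bar{w}$ (or noting that the threshold is left-continuous at $\bar{w}$) so that $\phi(\bar{w})$ is the relevant value. Combining with the initial observation that $\OPT(\mathbf{I}) \geq p^\star$ yields $\OPT(\mathbf{I}) \geq \phi(\bar{w}) - \gamma$.

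The main obstacle I anticipate is the correct handling of the switching term $\gamma|x - x_{t-1}|$ in the KKT condition: its subgradient depends on the sign of $x - x_{t-1}$, and in principle one could have $x_{t-1} > 0$ so that the relevant subgradient is $-\gamma$ rather than $+\gamma$. I would resolve this by a short case analysis on whether \RORO was already purchasing at time $t-1$ (``ramp-on'' versus ``steady'') and by using the definition of $\bar{w}$ to identify a canonical time step at which the switching cost contributes at most $+\gamma$ to the marginal buying cost; this is exactly the source of the single $-\gamma$ offset in the statement, as opposed to a potentially tighter or looser constant.
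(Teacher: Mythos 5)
The paper does not reprove this lemma---it imports it verbatim from \cite[Lemma B.2]{Lechowicz:24}---and your argument correctly reconstructs the standard proof from that source: $\OPT(\mathbf{I}) \geq p^\star$ for unit demand, and the first-order condition of the pseudo-cost minimization at the step attaining $p^\star$, combined with monotonicity of $\phi$ and $w_{t-1}+x_t \leq \bar{w}$, gives $\phi(\bar{w}) \leq p^\star + \gamma$. Your worry about the switching subgradient resolves uniformly (every element of $\partial|x - x_{t-1}|$ lies in $[-1,1]$, so $\phi(w_{t-1}+x_t) \leq p_t + \gamma$ in all three cases), and the only edge case left implicit---$x_t$ hitting the right boundary so $\bar{w}=1$---is covered by the boundary design $\phi(1)-\gamma = p_{\min} \leq p^\star$.
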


    \begin{lemma}[\protect{\cite[Lemma B.3]{Lechowicz:24}}] \label{lem:ub-roro-single-instance}
        For a single instance of \RORO $\mathbf{I}$ with unit-size demand, we have:
        \[
        \RORO(\mathbf{I}) \leq \int_0^{\bar{w}} \phi(u) du + \gamma \bar{w} + (1-\bar{w})p_{\max}.
        \]
    \end{lemma}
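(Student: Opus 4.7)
The plan is to establish the bound by leveraging the optimality of \RORO's per-step choice against the ``do nothing'' alternative, then telescoping the integral terms and bookkeeping the switching costs. Specifically, I would first invoke the pseudo-cost minimization definition: at each time $t$, \RORO picks $x_t$ to minimize $\text{pc}_t(x) \coloneqq p_t x + \gamma |x - x_{t-1}| - \int_{w_{t-1}}^{w_{t-1}+x} \phi(u)\,du$ over $x \in [0, 1-w_{t-1}]$. Comparing the chosen value to the feasible alternative $x = 0$ yields $\text{pc}_t(x_t) \leq \text{pc}_t(0) = \gamma x_{t-1}$, which rearranges to
\begin{equation*}
p_t x_t + \gamma |x_t - x_{t-1}| \;\leq\; \int_{w_{t-1}}^{w_t} \phi(u)\,du + \gamma x_{t-1}.
\end{equation*}

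Next, I would sum this inequality from $t = 1$ up through the last non-compulsory step (where $w_t$ grows from $0$ to $\bar{w}$). The integral terms telescope to $\int_0^{\bar{w}} \phi(u)\,du$, and the $\gamma x_{t-1}$ terms on the right sum to $\gamma \sum_{t=0}^{T-1} x_t \leq \gamma \bar{w}$, since the non-compulsory purchases sum to exactly $\bar{w}$. The left-hand side accumulates \RORO's non-compulsory purchasing cost together with all switching costs except the final ``ramp-off'' $\gamma x_T$ (i.e., the $t = T+1$ term $\gamma |0 - x_T|$). Moving that missing final switching term to the right gives the bound $\sum p_t x_t + \sum_{t=1}^{T+1} \gamma |x_t - x_{t-1}| \leq \int_0^{\bar{w}} \phi(u)\,du + \gamma \bar{w}$ on the non-compulsory portion of the cost, because $\bar{w} - x_T + x_T = \bar{w}$ consolidates cleanly.

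Finally, I would add back the compulsory-trade contribution: if \RORO has $w_T = \bar{w} < 1$ at the deadline, it is forced to buy the remaining $1 - \bar{w}$ at a price at most $p_{\max}$, contributing an extra cost of at most $(1-\bar{w})p_{\max}$ (the switching associated with this forced burst is already absorbed, since the terminal switch $\gamma|x_{T+1}-x_T|$ was included above and any price-bounded term for the compulsory purchase itself is upper-bounded by $p_{\max}$ times its size). Combining the three terms yields the claimed inequality $\RORO(\mathbf{I}) \leq \int_0^{\bar{w}} \phi(u)\,du + \gamma \bar{w} + (1-\bar{w}) p_{\max}$.

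The main obstacle I anticipate is the switching-cost bookkeeping at the two endpoints of the time horizon, particularly ensuring that the final ramp-off term $\gamma x_T$ and the switching incurred by the compulsory purchase are not double-counted or dropped. The summation naturally produces $\gamma \sum_{t=0}^{T-1} x_t$, which is $\gamma(\bar{w} - x_T)$, so absorbing the missing $\gamma x_T$ into the $\gamma \bar{w}$ term on the right is tight but requires care. A secondary subtlety is verifying that the pseudo-cost optimality inequality remains valid when $x_t = 0$ (trivially) and when the feasibility cap $1 - w_{t-1}$ is binding (which only further restricts the feasible set and hence preserves the comparison with $x = 0$).
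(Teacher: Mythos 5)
The paper does not actually prove this lemma---it imports it verbatim from \cite{Lechowicz:24} (their Lemma B.3)---so the relevant comparison is with the standard pseudo-cost charging argument from that work, and your skeleton is exactly that argument: compare the minimizer $x_t$ against the always-feasible alternative $x=0$ to obtain $p_t x_t + \gamma|x_t - x_{t-1}| \le \int_{w_{t-1}}^{w_t}\phi(u)\,du + \gamma x_{t-1}$, telescope the integrals to $\int_0^{\bar w}\phi(u)\,du$, note $\gamma\sum_{t=0}^{T-1}x_t = \gamma(\bar w - x_T)$, and fold the terminal ramp-down $\gamma x_T$ back in to get $\gamma\bar w$. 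That portion of your argument is correct, including the observations that $x=0$ remains feasible when the cap $1-w_{t-1}$ binds.

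The genuine gap is exactly where you flagged it, and your parenthetical resolution does not close it. The $\gamma x_T$ you absorb into $\gamma\bar w$ is the ramp-down of the \emph{pseudo-cost} decision at the final step. When $\bar w < 1$, the compulsory trade inflates the actual final decision from $x_T$ to $x_T + (1-\bar w)$, so the terminal switch $\gamma|x_{T+1}-x_T|$ grows by $\gamma(1-\bar w)$ and the ramp-up $\gamma|x_T - x_{T-1}|$ at the compulsory step can grow by another $\gamma(1-\bar w)$; neither excess is covered by the telescoped integral, by $\gamma\bar w$, or by $(1-\bar w)p_{\max}$ (the purchase term has no slack of order $\gamma$, since the standing assumption only gives $2\gamma \le p_{\max}-p_{\min}$). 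As written, your accounting yields $\RORO(\mathbf{I}) \le \int_0^{\bar w}\phi(u)\,du + \gamma\bar w + (1-\bar w)(p_{\max}+2\gamma)$, which is weaker than the claim by $2\gamma(1-\bar w)$. Closing this requires a dedicated argument for the compulsory step---either charging its switching cost against slack built into $\phi$, or stating the residual term with coefficient $p_{\max}+2\gamma$, which is in fact how this paper itself accounts for last-minute forced purchases in its own analysis (cf.\ the $(p_{\max}+2\gamma)$ terms in \sref{Lemma}{lem:osdm-paad-upper-bound-1}).
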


    To apply these bounds in our context, we partition the cost of doubling \RORO into the sum of the costs incurred by each $\instance_j$.  With a slight abuse of notation, let $\RORO^{(j)}$ denote the cost incurred by $\instance_j$ to satisfy the flexible demands captured by $\mathcal{A}_j$.  By the definition of the worst-case instance, we have:
    \[
    \RORO^{(j)} = \int_0^{D^{(j)}} \phi^{(j)}(u) du + \gamma D^{(j)}.
    \]
    \sref{Lemma}{lem:lb-roro-single-instance} suggests that we should be able to lower bound $\OPT^{(j)}$ by $D^{(j)} \left[ \phi^{(j)}(D^{(j)}) - \gamma \right] $, where $D^{(j)}$ is the total amount of demand satisfied by $\instance_j$.  However, this is \textit{not true} in \OSDM---we already established that $\OPT^{(j)} = D^{(j)} p_{\min}$, and the definition of the pseudo-cost function $\phi^{(j)}$ ensures that $\phi^{(j)}(S^{(j)}) - \gamma = p_{\min}$.  However, since $D^{(j)} < S^{(j)}$ (because $D^{(j)} = S^{(j)} - \sigma + \upsilon$) and $\phi^{(j)}$ is decreasing, we have $\phi^{(j)}(D^{(j)}) - \gamma > p_{\min}$, so the bound does not hold.  Thus, we must modify the lower bound on $\OPT^{(j)}$ to account for the difference between $D^{(j)}$ and $S^{(j)}$.

    Thus, we say that $\phi^{(j)} (\bar{w}^{(j)} + (S^{(j)} - D^{(j)})) - \gamma$ captures the best price that arrives during the lifetime of $\instance_j$ if it satisfies $\bar{w}^{(j)}$ amount of demand.  Since we have $\bar{w}^{(j)} = D^{(j)}$ in our worst-case instance, we can use the following lower bound on $\OPT^{(j)}$:
    \begin{equation}
        \OPT^{(j)} = D^{(j)} \left[ \phi^{(j)} (\bar{w}^{(j)} + (S^{(j)} - D^{(j)})) - \gamma \right]. \label{eq:opt-lb}
    \end{equation}
    Note that this lower bound is tight in our worst-case instance, because the best price that arrives during the lifetime of $\instance_j$ is exactly $p_{\min} = \phi^{(j)}(S^{(j)}) - \gamma$, and $\bar{w}^{(j)} + (S^{(j)} - D^{(j)}) = S^{(j)}$.

    In the case of unit-size demand, it is given that $\int_0^w \phi(u) du + \gamma w + (1-w)p_{\max} = \alpha_{\texttt{RORO}} (\phi(w) - \gamma)$ for all $w \in [0,1]$~\cite[Theorem 3.2]{Lechowicz:24}.  Extending this to the case of instance $j$ with size $S^{(j)}$, we have the following:
    \begin{align*}
    \int_0^{\bar{w}^{(j)}} \phi^{(j)}(u) du + \gamma \bar{w}^{(j)} + (S^{(j)} - \bar{w}^{(j)}) p_{\max} &= \alpha_{\texttt{RORO}} S^{(j)} (\phi^{(j)}(\bar{w}^{(j)}) - \gamma), \\
    \int_0^{\bar{w}^{(j)}} \phi^{(j)}(u) du + \gamma \bar{w}^{(j)} + (D^{(j)} - \bar{w}^{(j)}) p_{\max} &= \alpha_{\texttt{RORO}} S^{(j)} (\phi^{(j)}(\bar{w}^{(j)}) - \gamma) - (S^{(j)} - D^{(j)}) p_{\max}.
    \end{align*}
    Since we know that $\OPT^{(j)}$ is lower bounded as in \eqref{eq:opt-lb}, we can relate $S^{(j)} (\phi^{(j)}(\bar{w}^{(j)})$ to $\OPT^{(j)}$ as follows:
    \[
    S^{(j)} (\phi^{(j)}(\bar{w}^{(j)}) - \gamma) = \frac{S^{(j)}}{D^{(j)}} \cdot \left[ \OPT^{(j)} + D^{(j)} \left( p_{\max} - 2\gamma - \frac{p_{\max}}{\alpha_{\texttt{RORO}}}\right) \exp\left( \frac{S^{(j)} - D^{(j)}}{\alpha_{\texttt{RORO}}} \right)\right],
    \]
    where the trailing term comes from the additive $(S^{(j)} - D^{(j)})$ term inside the $\phi^{(j)}$ function in \eqref{eq:opt-lb}.      Combining the above two equations, we have the following characterization of $\RORO^{(j)}$:
    {\small
    \begin{equation}
    \RORO^{(j)} = \alpha_{\texttt{RORO}} \frac{S^{(j)}}{D^{(j)}} \OPT^{(j)} + \alpha_{\texttt{RORO}} S^{(j)} \left( p_{\max} - 2\gamma - \frac{p_{\max}}{\alpha_{\texttt{RORO}}}\right) \exp\left( \frac{S^{(j)} - D^{(j)}}{\alpha_{\texttt{RORO}}} \right) - (S^{(j)} - D^{(j)}) p_{\max}. \label{eq:roro-ub}
    \end{equation}
    }

    \noindent Furthermore, note that the total cost of the doubling \RORO instances on the \OSDM instance $\mathcal{I}_K$ is exactly the sum of the costs incurred to satisfy each $\mathcal{A}_j$, i.e.,
    $$\RORO(\mathcal{I}_K) = \sum_{j=0}^K \RORO^{(j)}.$$
    In what follows, we let $\zeta$ denote the competitive ratio of the doubling \RORO algorithm.  We have the following $\frac{\RORO(\mathcal{I}_K)}{\OPT(\mathcal{I}_K)}$ for the given adversarial instance $\mathcal{I}_K$, where $D = \sum_{j=0}^K D^{(j)}$ is the total demand in the instance:

    \begin{align*}
    \frac{\RORO(\mathcal{I}_K)}{\OPT(\mathcal{I}_K)} &= \frac{\sum_{j=0}^K \RORO^{(j)}}{\sum_{j=0}^K \OPT^{(j)}}, \\
    &= \sum_{j=0}^K \frac{D^{(j)}}{D} \frac{\RORO^{(j)}}{\OPT^{(j)}}
    \end{align*}

    \noindent We can use the fact that $D^{(j)} = S^{(j)} - \sigma$ to simplify the expression of $\nicefrac{\RORO^{(j)}}{\OPT^{(j)}}$ using the following facts: $S^{(j)} - D^{(j)} = \sigma$, $\frac{S^{(j)}}{D^{(j)}} = \frac{S^{(j)}}{S^{(j)}-\sigma}$, $\OPT^{(j)} \geq D^{(j)} p_{\min}$.  Thus, we have:
    \begin{align*}
    \frac{\RORO^{(j)}}{\OPT^{(j)}} &= \frac{S^{(j)} \alpha_{\texttt{RORO}}}{S^{(j)}-\sigma} + \frac{S^{(j)}\alpha_{\texttt{RORO}}}{S^{(j)}-\sigma} \cdot \frac{\left( p_{\max} - 2\gamma - \frac{p_{\max}}{\alpha_{\texttt{RORO}}} \right) \exp\left( \frac{\sigma}{\alpha_{\texttt{RORO}}} \right) - \frac{\sigma p_{\max} (S^{(j)}-\sigma)}{S^{(j)}\alpha_{\texttt{RORO}}}}{p_{\min}}.
    \end{align*}
    Note that the term $\frac{S^{(j)}}{S^{(j)}-\sigma}$ is maximized when $S^{(j)}$ is closest to $\sigma$, and thus the worst-case ratio occurs when $j = 0$ and $S^{(0)} = 1$.  This corresponds to the case where $K = 0$ (i.e., only the initial instance of \RORO is created).  In this instance, we have:
    \begin{align*}
    \frac{\RORO(\mathcal{I}_0)}{\OPT(\mathcal{I}_0)} &= \frac{D^{(0)}}{D} \frac{\RORO^{(0)}}{\OPT^{(0)}},
    \end{align*}
    where $D^{(0)} = D$ and $\frac{\RORO^{(0)}}{\OPT^{(0)}}$ is given by the following:
    \begin{align*}
    \frac{\RORO(\mathcal{I}_0)}{\OPT(\mathcal{I}_0)} &= \frac{\RORO^{(0)}}{\OPT^{(0)}} =  \left( \frac{\alpha_{\texttt{RORO}}}{1-\sigma} + \frac{\alpha_{\texttt{RORO}}}{1-\sigma} \cdot \frac{\left( p_{\max} - 2\gamma - \frac{p_{\max}}{\alpha_{\texttt{RORO}}} \right) \exp\left( \frac{\sigma}{\alpha_{\texttt{RORO}}} \right) - \frac{\sigma p_{\max} (1-\sigma)}{\alpha_{\texttt{RORO}}}}{p_{\min}} \right).
    \end{align*}
    Since the competitive ratio $\zeta$ is defined as the supremum over all valid \OSDM instances which includes $\mathcal{I}_0$, we have the following lower bound on $\zeta$:
    \[
    \zeta \geq \frac{\alpha_{\texttt{RORO}}}{1-\sigma} + \frac{\alpha_{\texttt{RORO}}}{1-\sigma} \cdot \frac{\left( p_{\max} - 2\gamma - \frac{p_{\max}}{\alpha_{\texttt{RORO}}} \right) \exp\left( \frac{\sigma}{\alpha_{\texttt{RORO}}} \right) - \frac{\sigma p_{\max} (1-\sigma)}{\alpha_{\texttt{RORO}}}}{p_{\min}}.
    \]

    We note that in a regime where $\sigma$ approaches $1$, the above lower bound on $\zeta$ is ill-defined.  In the \OCS problem, it is known that the worst-possible competitive ratio is $\frac{p_{\max} + 2\gamma}{p_{\min}}$~\cite{Lechowicz:24}, which is achieved by any feasible algorithm that purchases at any time step (i.e., possibly incurring the worst cost $p_{\max}$), and incurs the worst-possible switching cost to do so (i.e., $2\gamma$).  
    
    In the \OSDM problem with no delivery costs (i.e., $c = \varepsilon = 0$), no storage, and only flexible demands, we have a similar result---any algorithm is $\frac{p_{\max} + 2\gamma}{p_{\min}}$-competitive as long as it is feasible.  Thus, we have the following breakpoint for the lower bound on $\zeta$:
    \[
    \zeta \geq \min \left\{ \frac{\alpha_{\texttt{RORO}}}{1-\sigma} + \frac{\alpha_{\texttt{RORO}}}{1-\sigma} \cdot \frac{\left( p_{\max} - 2\gamma - \frac{p_{\max}}{\alpha_{\texttt{RORO}}} \right) \exp\left( \frac{\sigma}{\alpha_{\texttt{RORO}}} \right) - \frac{\sigma p_{\max} (1-\sigma)}{\alpha_{\texttt{RORO}}}}{p_{\min}}, \frac{p_{\max} + 2\gamma}{p_{\min}} \right\}.
    \]
    This completes the proof.

\end{proof}

\section{Deferred Proofs from \autoref{sec:alg} (Analysis of \PAAD Algorithm)} \label{apx:paad-proof}
In this section, we provide full proofs for the results in \autoref{sec:alg}, which describes and analyzes the \PAAD algorithm for the \OSDM problem.  We start by providing deferred plots of the competitive ratio $\alpha$ as a function of the problem's parameters to showcase the important dependencies in \autoref{fig:alpha-plots}.
We then prove \autoref{thm:osdm-upper-bound-1}, which states the competitive bound of \PAAD for \OSDMS, before proving \autoref{cor:osdmt-upper-bound-2}, which states the corresponding bound for \OSDMT.

\begin{figure*}[h]
    \centering
    \begin{subfigure}{0.24\textwidth}
        \centering
        \includegraphics[width=\textwidth]{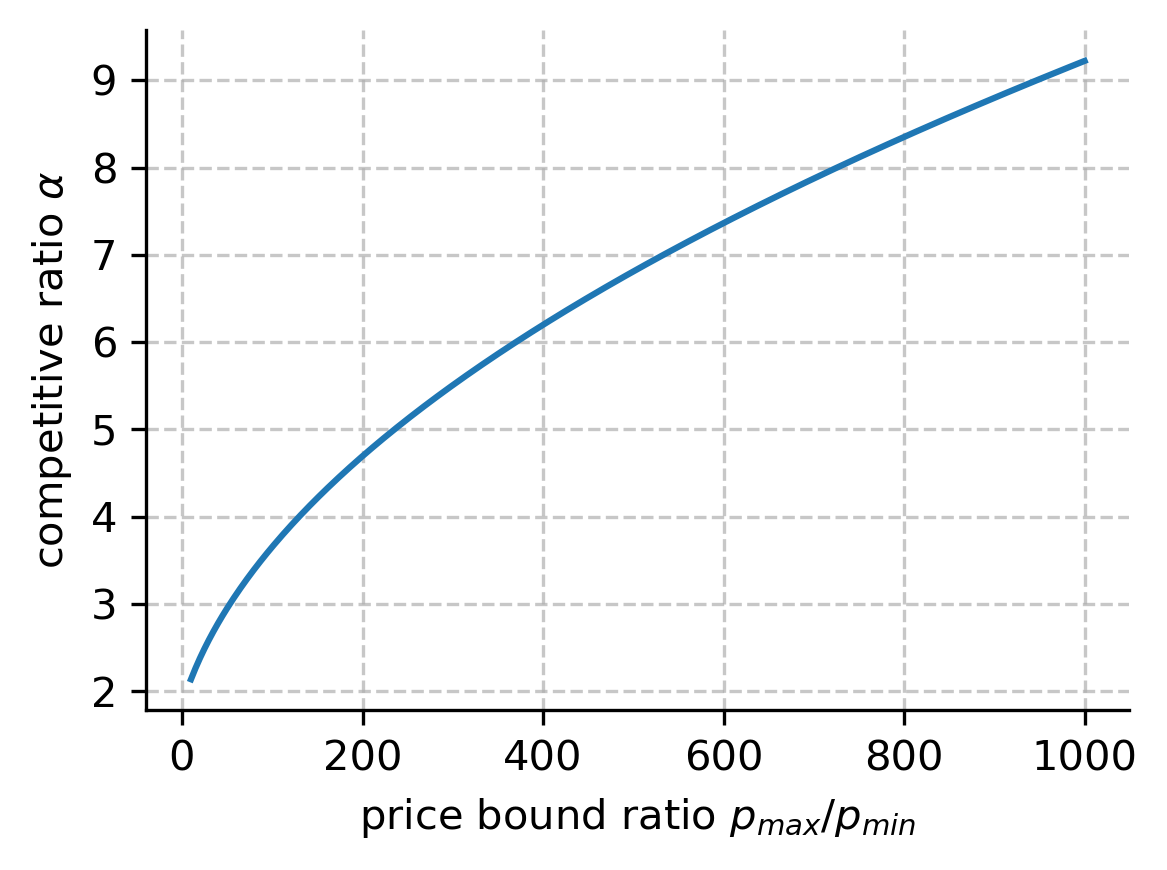}\vspace{-0.5em}
        \caption{Price bound ratio $\tfrac{p_{\min}}{p_{\max}}$}
    \end{subfigure}
    \hfill
    \begin{subfigure}{0.24\textwidth}
        \centering
        \includegraphics[width=\textwidth]{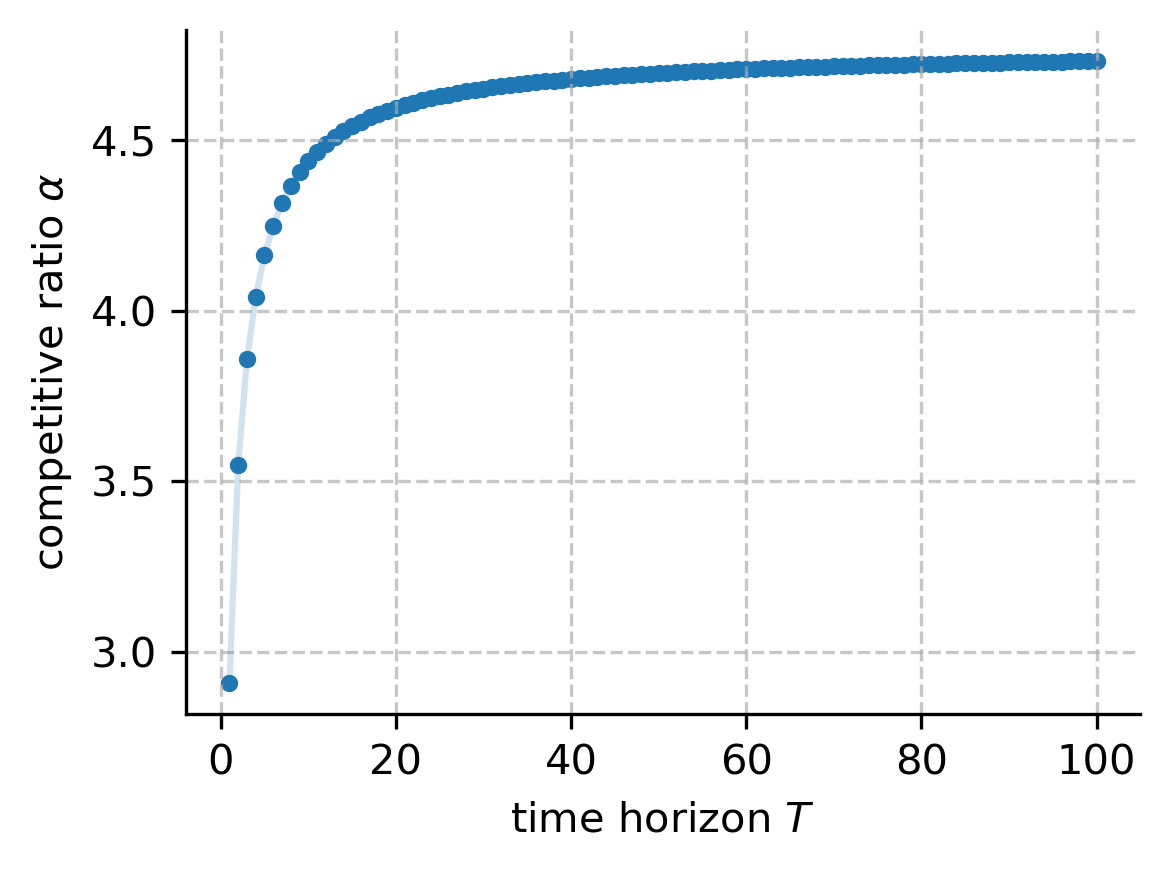}\vspace{-0.5em}
        \caption{Time horizon $T$}
    \end{subfigure}
    \hfill
    \begin{subfigure}{0.24\textwidth}
        \centering
        \includegraphics[width=\textwidth]{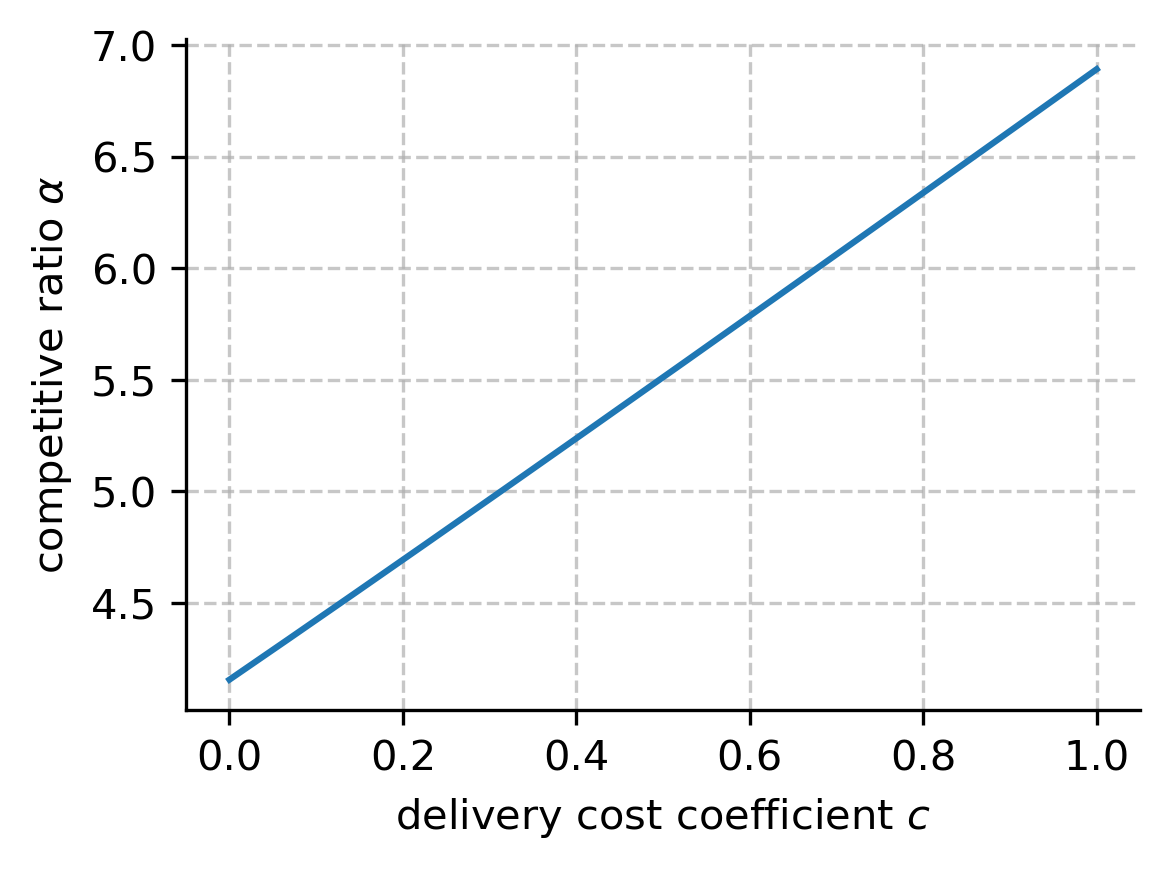}\vspace{-0.5em}
        \caption{Delivery coeff. $c$}
    \end{subfigure}
    \begin{subfigure}{.24\textwidth}
        \centering
        \includegraphics[width=\textwidth]{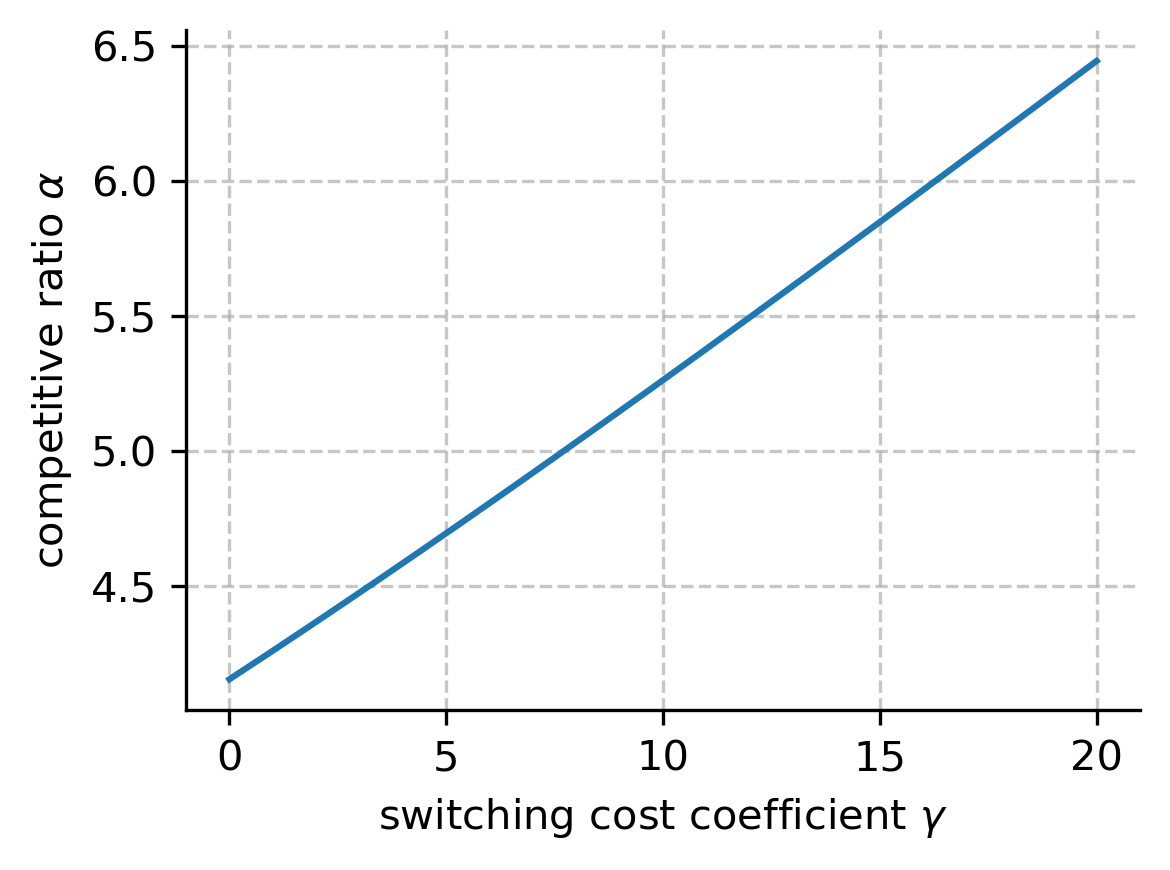}\vspace{-0.5em}
        \caption{Switching coeff. $\gamma$ }
    \end{subfigure}
    \vspace{-1em}
    \caption{Plotting the competitive ratio $\alpha$ (defined in \eqref{eq:alpha1}) as a function of the parameter dependencies described in each subfigure.  Unless a parameter is on an $x$-axis, we set $p_{\min} = 10$, $p_{\max} = 200$, $\gamma = 5$, $\delta = 1$, $c = 0.2$, $\varepsilon = 0.05$, and $T=48$.}
    \label{fig:alpha-plots}
    \vspace{-1em}
\end{figure*}

\subsection{Proof of \autoref{thm:osdm-upper-bound-1}}

In this section, we prove \autoref{thm:osdm-upper-bound-1}, which states that \PAAD achieves a competitive ratio of $\alpha$ (defined in \eqref{eq:alpha1}) for \OSDMS.  We start by proving feasibility of the solution produced by \PAAD, before proceeding to prove the competitive ratio.

\begin{proof}
Before the main competitive proof, we start our analysis by showing that \PAAD produces a feasible solution to \OSDM.  
\begin{lemma}
\label{lem:osdm-feasibility-1}
\PAAD produces a feasible solution to \OSDM, i.e., it satisfies all demand before their deadlines and never violates the storage capacity constraint.
\end{lemma}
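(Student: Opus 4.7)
The plan is to verify each constraint of \OSDM from \eqref{eq:osdm-objective} in turn, walking through \autoref{alg:paad} and \autoref{alg:pcm-subroutine} to identify exactly which line enforces each one. I would organize the argument around four constraint groups: (i) storage bounds $0 \le s_t \le S$; (ii) purchase--delivery coupling $z_t - s_{t-1} \le x_t \le z_t + (S - s_{t-1})$; (iii) per-step delivery bounds $b_t \le z_t \le b_t + \sum_{\tau \le t} f_\tau - \sum_{\tau:\Delta_\tau < t} f_\tau$; and (iv) cumulative delivery $\sum_{\tau \le t} z_\tau \ge \sum_{\tau \le t} b_\tau + \sum_{\tau:\Delta_\tau \le t} f_\tau$. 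Throughout, I would proceed by induction on $t$, maintaining the invariant that at the start of step $t$, we have $0 \le s_{t-1} \le S$, and that for every active driver $i$ the state variable $w^{(i)}_{t-1} \in [0, d^{(i)}]$ and (for flexible drivers) $v^{(i)}_{t-1} \in [0, w^{(i)}_{t-1}]$.

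For the purchase--delivery coupling and storage bounds, the key observation is that \PAAD computes $\texttt{cap}_t \gets z_t + (S - s_{t-1})$ before any purchasing decisions are made, and then decrements this cap and clips every driver's purchase decision (lines 19--20 in \autoref{alg:paad}). Summing gives $\sum_i x^{(i)}_t \le z_t + (S - s_{t-1})$, and line 21 then enforces the matching lower bound via $x_t \gets \max\{x_t, z_t - s_{t-1}\}$. Plugging these two inequalities into the state update $s_t = s_{t-1} + x_t - z_t$ (line 22) immediately yields $0 \le s_t \le S$, closing the induction for (i) and (ii). For (iii), the aggregation $z_t \gets b_t + \sum_{i \in \mathcal{F}} z^{(i)}_t$ (line 16) makes the lower bound $b_t \le z_t$ trivial, and the upper bound follows because every flexible driver's delivery state $v^{(i)}_t$ is constrained in \autoref{alg:pcm-subroutine} to remain in $[0, d^{(i)} - v^{(i)}_{t-1}]$, and flexible drivers are only instantiated when $f_t > 0$ arrives (line 9), so the sum of active $d^{(i)}$'s over $i \in \mathcal{F}$ at time $t$ equals exactly the flexible demand that has arrived but not yet passed its deadline.

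The main obstacle, and the step I would treat most carefully, is (iv): the cumulative delivery constraint, which is the only place where \PAAD could in principle fail. The binding scenario is when a flexible driver $l \in \mathcal{F}$ reaches its deadline $\Delta^{(l)} = t$ with $v^{(l)}_{t-1} < d^{(l)}$. Here the hard-coded branch in line 4 of \autoref{alg:pcm-subroutine} forces $z^{(l)}_t = d^{(l)} - v^{(l)}_{t-1}$, and the corresponding purchasing branch forces $x^{(l)}_t = d^{(l)} - w^{(l)}_{t-1}$. I would then argue that the cap $\texttt{cap}_t$ cannot constrain these forced decisions: by the feasibility assumption in \autoref{sec:assumptions}, it is always possible to set $x_t = b_t + \sum_{\tau:\Delta_\tau = t} f_\tau$ and $z_t$ to the same value, which together with the storage bound invariants from the previous steps guarantees $z_t + (S - s_{t-1}) \ge z_t - s_{t-1} \ge 0$ always accommodates the forced deadline draws. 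Combined with the fact that the aggregated delivery includes $b_t$ at every step, induction over $t$ then yields the cumulative inequality in (iv), completing the proof of feasibility. A minor technical point to check is that the driver-refresh logic (line 23) does not discard a driver before its demand is satisfied, which follows immediately from the two refresh conditions $w^{(i)}_t = d^{(i)}$ or $\Delta^{(i)} = t$ (the latter only after the forced-completion branch has been executed in the current step).
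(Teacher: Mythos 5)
Your proposal is correct in its conclusions and, for the demand-coverage part, follows essentially the same route as the paper (base demand is covered by construction of the aggregated $z_t$; flexible demand is covered by the forced-completion branch at the deadline together with the $\max\{x_t, z_t - s_{t-1}\}$ projection). Where you genuinely diverge is the storage bound $0 \le s_t \le S$. You derive it mechanically from the algorithm's explicit safeguards: summing the clipped driver purchases against the initial $\texttt{cap}_t = z_t + (S - s_{t-1})$ gives the upper bound, and the $\max$ projection gives the lower bound, so the state update lands in $[0,S]$ by arithmetic. The paper instead proves the upper bound by \emph{driver accounting}: it partitions the horizon into active and inactive periods (\sref{Def.}{dfn:active-inactive-periods}), bounds the excess of base-driver purchasing over arrived base demand by the storage-manager driver's size $B_b^{(i,0)} = S$, and argues separately that flexible drivers have net-zero effect on storage (at a binding deadline both $x_t$ and $z_t$ are incremented equally, and otherwise the threshold structure keeps each flexible driver's cumulative delivery at least its cumulative purchasing). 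Your argument is shorter and more robust to the algorithm's internals, since it only uses the explicit cap and projection; the paper's argument is doing extra work because the active/inactive decomposition and the per-driver accounting invariants are exactly the structures reused later in \sref{Lemma}{lem:osdm-optimal-lower-bound-1} and \sref{Lemma}{lem:osdm-paad-upper-bound-1}, so establishing them here is not wasted effort. Two small cautions on your write-up: your claim that the sum of active flexible driver sizes ``equals exactly'' the unexpired flexible demand should be an inequality (drivers may also be retired early), though only the $\le$ direction is needed for constraint (iii); and your closing remark that the refresh logic is immediately safe should be tied to the invariant relating cumulative delivery and cumulative purchasing for flexible drivers (the paper's $v_t^{(i)} \ge w_t^{(i)}$ claim), since your stated invariant $v^{(i)}_{t-1} \le w^{(i)}_{t-1}$ alone does not rule out retiring a fully-purchased but under-delivered driver under the other reading of the refresh condition.
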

\begin{proof}

We prove feasibility directly by the definition of the algorithm---there are two primary constraints to check.

First, we must ensure that all demand is covered.  In the case of base demand, this is immediate by line 18 of \autoref{alg:paad}, where the delivery decision $z_t$ always includes the base demand $b_t$ that arrives at time $t$.  In the case of flexible demand, this is ensured by line 19 and line 23 of \autoref{alg:paad}, which checks for binding constraints and projects the purchase and delivery decisions (if necessary) to ensure that all flexible demands are satisfied before their deadlines.

Next, we show that the storage capacity constraint is never violated.  
We first consider the base demand, as the base demand drivers have the ``role'' of replenishing the storage.  Note that by \sref{Definition}{dfn:active-inactive-periods}, the storage is empty $(s_t = 0)$ during any inactive period, which trivially satisfies the storage capacity constraint $0 \leq s_t \leq S$.  We thus then consider an arbitrary time step $t$ during an active period.  Again by \sref{Definition}{dfn:active-inactive-periods}, the total amount of demand purchased by base drivers in this $i^\text{th}$ active period is less than or equal to $\sum_{j=0}^{\nu_i} \hat{w}_b^{(i,j)}$---this itself is less than or equal to $\sum_{j=0}^{\nu_i} B_b^{(i,j)}$ by the definition of the base demand driver.  The cumulative base demand that arrives in the $i^\text{th}$ active period is equal to $\sum_{j=1}^{\nu_i} B_b^{(i,j)}$ (note the changed index to account for the initial storage manager driver $j=0$).  Thus, the amount stored is at most $\sum_{j=0}^{\nu_i} \hat{w}_b^{(i,j)} - \sum_{j=1}^{\nu_i} B_b^{(i,j)}$, which is at most $B_b^{(i,0)} = S$ by the definition of the storage manager driver. Thus, we have $s_t \leq S$ for all $t$ during active periods.

We now consider the flexible demand.  At a high level, the structure of \PAAD is such that flexible demand drivers do not rely on the storage to satisfy demand---rather, they rely on the temporal flexibility of flexible demand to reduce cost subject to constraints.  To see this, first note that when a flexible demand's deadline is binding, it is assumed in \autoref{sec:assumptions} to always be feasible to increment \textit{both} $z_t$ and $x_t$ by an amount $f_t$ to ensure that the demand is satisfied by its deadline.  In such a case, the storage dynamics leave the storage unchanged, i.e., $s_t = s_{t-1} + f_t - f_t = s_{t-1}$.  Thus, the storage state satisfies $s_t \leq S$ for all $t$ during flexible demand deadlines.  In other time steps, the effect of the flexible demand drivers on the storage is governed by their per-time-step delivery decisions---i.e., if purchasing exceeds delivery, the storage increases, and if delivery increases purchasing, the storage is depleted.  However, note that the delivery decisions of flexible demand drivers are always upper bounded by the amount of demand that has been purchased so far, i.e., $z_t^{(i)} \leq w_t^{(i)}$ for all $i \in \mathcal{F}$ and $t \in [T]$.  Furthermore, due to the analytical structure of the purchasing and delivery threshold functions in the context of a monotone price-dependent delivery cost,
the (cumulative) delivery decisions of a flexible demand driver at a given time step are always at least as large as its (cumulative) purchasing decisions, i.e., $v_t^{(i)} \geq w_t^{(i)}$ for all $i \in \mathcal{F}$ and $t \in [T]$.  Thus, the net effect of flexible demand drivers on the storage is 0, i.e., $\sum_{i \in \mathcal{F}} x_t^{(i)} = \sum_{i \in \mathcal{F}} z_t^{(i)}$ for all $t \in [T]$, and the storage state satisfies $s_t \leq S$ for all $t$ during non-binding flexible demand periods as well.

\end{proof}

We now proceed to prove the competitive ratio of \PAAD.  To do so, we introduce some notation and a definition to facilitate our analysis.  For notational brevity, the following considers \textit{any arbitrary} $\mathcal{I} \in \Omega$ (i.e., any arbitrary instance of \OSDM).

\begin{definition}[Active and inactive periods] \label{dfn:active-inactive-periods}
In our proof, we partition the time horizon $T$ into two types of \textbf{periods}.  An \textbf{active period} contains a contiguous interval where the storage is non-empty (i.e., $s_t > 0$).  An \textbf{inactive period} corresponds to any interval that lies in-between two adjacent active periods.
\end{definition}

We use the following additional notations.  Suppose that for an arbitrary instance, \PAAD's actions result in $n$ active periods.  During the $i^\text{th}$ active period, $i \in [n]$, we denote $\nu_i$ as the number of base demand drivers that are created.  Further, we let $\hat{w}_b^{(i,j)}$ denote the amount of the $j^\text{th}$ base demand driver that has been purchased by the end of the $i^\text{th}$ active period, $j \in [\nu_i]$, and let $B_b^{(i,j)}$ denote the total demand associated with the $j^\text{th}$ base demand driver in the $i^\text{th}$ active period.

We index flexible demand drivers according to the time at which they were created, $\tau \in [T]$.  In particular, we let $\hat{w}_f^{(\tau)}$ and $\hat{v}_f^{(\tau)}$ denote the purchasing and delivery amounts of the $\tau^\text{th}$ flexible demand driver before the deadline.  Note that $f_\tau \geq 0$ is the total demand associated with the $\tau^\text{th}$ flexible demand driver, and if $f_\tau = 0$, then $\hat{w}_f^{(\tau)} = \hat{v}_f^{(\tau)} = 0$.  

We let $D_b = \sum_{i=1}^n \sum_{j=1}^{\nu_i} B_b^{(i,j)}$ and $D_f = \sum_{\tau=1}^T f_\tau$ denote the total base and flexible demand, respectively, that arrive over the entire time horizon.  We let $D = D_b + D_f$ denote the total demand.  We let $\hat{s}$ denote the final status of the storage at the end of the time horizon, i.e., $\hat{s} = s_T$.  

Finally, we introduce the following notation to characterize the optimal solution: let $G_i(\beta)$ denote the minimum cost of purchasing $\beta$ units of asset during the $i^\text{th}$ active period, and let $H_\tau(f)$ denote the minimum cost of purchasing $f$ units of asset during the period $[\tau, \tau + \Delta_\tau]$ (i.e., during the lifetime of the $\tau^\text{th}$ flexible demand).  Finally, let $\tilde{p}$ denote the minimum price during idle periods, and let $\bar{p}$ denote the weighted average price during periods with non-zero base demand, i.e., $\bar{p} = \frac{\sum_{\tau \in [T]} p_t \cdot b_t}{D_b}$.

With these preliminaries, we are ready to proceed to the main competitive proof.  We begin by proving a lower bound on the cost of the offline optimal solution $\OPT$.
\begin{lemma}
\label{lem:osdm-optimal-lower-bound-1}
Given that \PAAD produces $n$ active periods, let $\beta_i$ denote the asset purchased towards the base demand by the offline optimal solution during the $i^\text{th}$ active period, $i \in [n]$, and let $\tilde{p}$ denote the minimum price during inactive periods.  Then $\OPT(\mathcal{I})$ is lower bounded as:
\begin{align*}
\OPT(\mathcal{I}) &\geq \sum_{i=1}^n G_i(\beta_i) + \left( D_b - \sum_{i=1}^n \beta_i \right) \tilde{p} + \varepsilon \bar{p} D_b + \sum_{\tau=1}^T \left(1 + \varepsilon \right) H_\tau (f_\tau) + \left( D_b + D_f \right) \frac{2\kappa}{T}
\end{align*}
\end{lemma}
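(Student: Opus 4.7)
My plan is to decompose $\OPT$'s total cost into three pieces matching the \OSDM objective in~\eqref{eq:osdm-objective}---the purchasing cost $\sum_t p_t x_t$, the delivery cost $\sum_t \mathcal{D}(z_t, s_{t-1}, p_t)$, and the smoothing penalty---and lower bound each separately. The key preliminary step is an attribution argument: since all demand must be served by its deadline and energy units are fungible, I will attribute every unit $\OPT$ purchases to either a specific base-demand unit or a specific flexible-demand unit $f_\tau$, in such a way that each attributed unit is purchased no later than the deadline of the demand it is assigned to. Such a feasible attribution exists for any valid $\OPT$ schedule and cleanly splits $\OPT$'s purchasing cost into a ``base portion'' and a ``flexible portion.''

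For the purchasing cost, I will use \PAAD's active/inactive partition (\sref{Def.}{dfn:active-inactive-periods}): by construction $\beta_i$ is exactly the amount of base-attributed purchasing that $\OPT$ does during the $i$-th active period, so the cost of that portion is at least $G_i(\beta_i)$ by definition of $G_i$. The remaining $D_b - \sum_i \beta_i$ base-attributed units lie in inactive periods, where every price is at least $\tilde{p}$ by definition, contributing $(D_b - \sum_i \beta_i)\tilde{p}$. Each flexible demand $f_\tau$ is purchased entirely within $[\tau, \tau+\Delta_\tau]$ and thus contributes at least $H_\tau(f_\tau)$. Summing these three terms gives the purchasing-cost part of the bound.

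For the delivery cost, I will invoke \sref{Def.}{def:osdm-delivery-pd-monotone}(b), which implies $\mathcal{D}(z_t, s_{t-1}, p_t) \geq \varepsilon p_t z_t$ for every feasible $s_{t-1}$ (the $c(1-s_{t-1}/S)$ term is non-negative). Since $z_t \geq b_t$ at every step, the base portion of the delivery cost is at least $\varepsilon \sum_t p_t b_t = \varepsilon \bar{p} D_b$. Each flexible $f_\tau$ is delivered entirely within $[\tau, \tau+\Delta_\tau]$, and because the model imposes no rate constraints, $H_\tau(f_\tau) = f_\tau \min_{t \in [\tau,\tau+\Delta_\tau]} p_t$; hence $\OPT$'s delivery cost for $f_\tau$ is at least $\varepsilon f_\tau \min_{t \in [\tau,\tau+\Delta_\tau]} p_t = \varepsilon H_\tau(f_\tau)$. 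For the smoothing penalty, I'll apply the standard total-variation bound: because $x_0 = x_{T+1} = 0$, $\sum_{t=1}^{T+1} |x_t - x_{t-1}| \geq 2 \max_t x_t \geq \tfrac{2}{T}\sum_t x_t$, and feasibility forces $\sum_t x_t \geq \sum_t z_t \geq D_b + D_f$; repeating this argument for $z_t$ with coefficient $\delta$ gives a total smoothing cost of at least $2(\gamma+\delta)(D_b+D_f)/T = 2\kappa D/T$.

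Adding the three bounds reproduces the claimed inequality exactly. The main obstacle I expect to navigate carefully is the attribution step---specifically, showing that the base/flexible split of $\OPT$'s purchases is simultaneously feasible (each attributed unit's purchase time respects its demand's deadline), consistent with the active/inactive partition (so that $\beta_i$ is well-defined as the base-attributed purchasing in each active period), and exhaustive (so that $D_b - \sum_i \beta_i$ is genuinely the base purchasing done in inactive periods). Once this bookkeeping is pinned down, each of the three subordinate bounds is immediate from the relevant definitions and a minimum-price argument.
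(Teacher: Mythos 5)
Your proposal is correct and follows essentially the same route as the paper's proof: the same five-way decomposition into base/flexible purchasing, base/flexible delivery, and switching, with the identical bounds $\sum_i G_i(\beta_i) + (D_b - \sum_i\beta_i)\tilde p$, $\varepsilon\bar p D_b$, $(1+\varepsilon)\sum_\tau H_\tau(f_\tau)$, and $2\kappa D/T$. If anything, you supply more detail than the paper does on the attribution bookkeeping and on the total-variation argument for the smoothing term (the paper simply asserts the switching cost is minimized by spreading decisions evenly), so no gaps.
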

\begin{proof}
The cost of an optimal offline solution, denoted as $\OPT(\mathcal{I})$, can be split into five components.  The first two components are the purchasing cost and delivery cost associated with the base demand, respectively.  The third and fourth components are the purchasing and delivery costs associated with the flexible demand, respectively.  The last component is the switching cost incurred by the optimal solution. 

Recall that we let $G_i(\beta)$ denote the minimum cost of purchasing $\beta$ units of asset during the $i^\text{th}$ active period, and let $\beta_i$ denote the amount of asset purchased towards the base demand during the $i^\text{th}$ active period by the optimal solution.  Then the total purchasing cost associated with the base demand is lower bounded by $\sum_{i=1}^n G_i(\beta_i) + \left( D_b - \sum_{i=1}^n \beta_i \right) \tilde{p}$, where $\tilde{p}$ is the \textit{minimum} price during inactive periods.  The delivery cost associated with the base demand is lower bounded by $\varepsilon \bar{p} D_b$, where $\bar{p}$ is the weighted average price during periods with non-zero base demand---note that we use $\bar{p}$ in this case because the base demand must be delivered at the time of its arrival.

Next, recall that we let $H_\tau(f)$ denote the minimum cost of purchasing $f$ units of asset during the period $[\tau, \tau + \Delta_\tau]$ (i.e., during the lifetime of the $\tau^\text{th}$ flexible demand).  Then the total purchasing cost associated with the flexible demand is lower bounded by $\sum_{\tau=1}^T H_\tau (f_\tau)$.  The delivery cost associated with the flexible demand is lower bounded by $\varepsilon \sum_{\tau=1}^T H_\tau (f_\tau)$, since the flexible demand can be delivered at any time before its deadline, and the smallest cost to do so is at the best (i.e., smallest) price during its lifetime.

Finally, note that the switching cost incurred by the optimal solution is at least $\left( D_b + D_f \right) \frac{2\kappa}{T}$, where $\kappa = \gamma + \delta$.  This follows because the total demand must be purchased and delivered over the entire time horizon, and the switching cost is minimized when the switching cost is evenly distributed over the entire time horizon.  This completes the proof.
\end{proof}

After \sref{Lemma}{lem:osdm-optimal-lower-bound-1}, we now proceed to prove an upper bound on the cost incurred by \PAAD.

\begin{lemma}
\label{lem:osdm-paad-upper-bound-1}
Given that \PAAD produces $n$ active periods, let $\hat{w}_b^{(i,j)}$ denote the amount of the $j^\text{th}$ base demand driver that has been purchased by the end of the $i^\text{th}$ active period, $j \in [\nu_i]$, and let $B_b^{(i,j)}$ denote the total demand associated with the $j^\text{th}$ base demand driver in the $i^\text{th}$ active period.  Further, let $\hat{w}_f^{(\tau)}$ and $\hat{v}_f^{(\tau)}$ denote the purchasing and delivery amounts of the $\tau^\text{th}$ flexible demand driver before the deadline, respectively.  Then the cost incurred by \PAAD is upper bounded as:
\begin{align}
\PAAD(\mathcal{I}) &\leq \sum_{i=1}^n \sum_{j=1}^{\nu_i} \int_0^{\hat{w}_b^{(i,j)}} \kern-1em \phi_b^{(i,j)}(u) du + \sum_{\tau=1}^T \left( \int_0^{\hat{w}_f^{(\tau)}} \kern-1em \phi_f^{(\tau)}(u) du + \int_0^{\hat{v}_f^{(\tau)}} \kern-1em \psi^{(\tau)}(z) dz \right) \label{eq:paad-upper-bound-1}\\
& \quad + \left( D_b + D_f - \sum_{i=1}^n \sum_{j=1}^{\nu_i} \hat{w}_b^{(i,j)} - \sum_{\tau=1}^{T} \hat{w}_f^{(\tau)}\right) (p_{\max} + 2 \gamma) \label{eq:paad-upper-bound-2}\\
& \quad + D_b \left( \bar{p} (c+\varepsilon) + 2\delta \right) + \left(D_f - \sum_{\tau=1}^{T} \hat{v}_f^{(\tau)}\right) (p_{\max} (c + \varepsilon) + 2\delta) \label{eq:paad-upper-bound-3}\\
& \quad - c p_{\min} \left( \sum_{i=1}^n \sum_{j=1}^{\nu_i} \hat{w}_b^{(i,j)} + \sum_{\tau=1}^{T} \hat{w}_f^{(\tau)}\right) + \hat{s} p_{\max}. \label{eq:paad-upper-bound-4}
\end{align}
\end{lemma}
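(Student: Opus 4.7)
The plan is to start from the global cost decomposition
\[
\PAAD(\mathcal{I}) = \sum_{t=1}^{T}\bigl[p_t x_t + \mathcal{D}(z_t,s_{t-1},p_t)\bigr] + \sum_{t=1}^{T+1}\bigl[\gamma\lvert x_t - x_{t-1}\rvert + \delta\lvert z_t - z_{t-1}\rvert\bigr],
\]
substitute the aggregation identities $x_t = \sum_{i\in\mathcal{B}\cup\mathcal{F}} x_t^{(i)} + q(x_t)$ and $z_t = b_t + \sum_{i\in\mathcal{F}} z_t^{(i)} + q(z_t)$ from Lines~12 and~22 of Algorithm~\ref{alg:paad}, and then account for each driver's contribution separately. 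The four lines of the claimed bound are then produced by bundling (i) the ``threshold-charged'' contributions of drivers that complete their per-unit work under Algorithm~\ref{alg:pcm-subroutine}, (ii) a worst-case markup at $p_{\max}+2\gamma$ for every unit that a driver fails to resolve before a binding constraint forces it, (iii) the analogous worst-case delivery markup $p_{\max}(c+\varepsilon)+2\delta$ for any delivery that cannot be scheduled freely, and (iv) the savings term $-c p_{\min}\cdot(\text{threshold-charged purchases})$ that reflects the gap between the worst-case delivery cost built into $\phi_b,\phi_f$ (which uses the increment $p_{\min}c$) and the true price-dependent delivery, followed by the residual $\hat{s}\,p_{\max}$ attributable to asset still in storage at horizon end.

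The core step is to exploit optimality in the pseudo-cost subroutine. For any driver $i$, comparing the minimizer to the feasible candidate $x=0$ yields
\[
p_t x_t^{(i)} + \gamma\bigl\lvert x_t^{(i)} - \hat x_{t-1}^{(i)}\bigr\rvert + \gamma\lvert x_t^{(i)}\rvert \le \Phi^{(i)}\bigl(w_{t-1}^{(i)},\,w_{t-1}^{(i)}+x_t^{(i)}\bigr),
\]
and summing telescopes the right-hand side into $\Phi^{(i)}(0,\hat w^{(i)})=\int_0^{\hat w^{(i)}}\phi^{(i)}(u)\,du$; the analogous bound in $\psi^{(i)}$ holds for flexible drivers' delivery subproblems. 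The two extra $\gamma\lvert\cdot\rvert$ terms inside the subroutine are exactly what is needed to reconstruct the global purchase switching cost $\gamma\lvert x_t - x_{t-1}\rvert$ once decisions are summed across active drivers: the pseudo-decision $\hat x_{t-1}^{(i)} = x_{t-1}^{(i)} + q(x_{t-1})\,d^{(i)}/\!\sum d^{(i)}$ redistributes the ``orphaned'' switch cost $q(x_{t-1})$ from deactivated/absent drivers, while the $\gamma\lvert x_t^{(i)}\rvert$ pays the cost of a driver first turning on; summing these over $i$ (and using the triangle inequality on $|a-b|$) dominates $\gamma\lvert x_t-x_{t-1}\rvert$, and likewise for $\delta\lvert z_t-z_{t-1}\rvert$.

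The remaining additive terms come from enforcing feasibility. For any flexible driver $\tau$ with $\Delta_\tau=t$, Line~4 of Algorithm~\ref{alg:pcm-subroutine} and the projection in Line~23 of Algorithm~\ref{alg:paad} purchase the missing $f_\tau - \hat w_f^{(\tau)}$ at the current price and pay at most $2\gamma$ per unit in switching cost; bounding price by $p_{\max}$ gives the contribution $(f_\tau-\hat w_f^{(\tau)})(p_{\max}+2\gamma)$, summed into line~\eqref{eq:paad-upper-bound-2}. Base drivers contribute the analogous $(B_b^{(i,j)}-\hat w_b^{(i,j)})(p_{\max}+2\gamma)$ when the storage-manager driver is refreshed. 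For delivery, base demand must be served at arrival, so its delivery cost is at most $(c+\varepsilon)p_t b_t$ and, summed, equals $\bar p(c+\varepsilon)D_b + 2\delta D_b$ (line~\eqref{eq:paad-upper-bound-3}); any flexible unit not delivered through $\psi^{(\tau)}$ is charged the worst-case $(p_{\max}(c+\varepsilon)+2\delta)$. Finally, because $\phi_b$ and $\phi_f$ incorporate an upfront $p_{\min}c$ offset reflecting the delivery-cost built-in to the pseudo-cost, the monotone decreasing delivery cost $(c(1-s/S)+\varepsilon)p_t$ actually paid is smaller than what was charged through the integrals, producing the negative correction $-c p_{\min}\sum(\hat w_b + \hat w_f)$ in line~\eqref{eq:paad-upper-bound-4}; the $\hat s\,p_{\max}$ absorbs the cost of residual charge in storage at time $T$ that appears on the purchasing side but whose delivery is deferred beyond the horizon.

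The main obstacle I anticipate is the switching-cost amortization: one must verify that the driver-local switching terms $\gamma\lvert x_t^{(i)}-\hat x_{t-1}^{(i)}\rvert+\gamma\lvert x_t^{(i)}\rvert$, once summed across $i\in\mathcal{B}\cup\mathcal{F}$, genuinely upper-bound the global $\gamma\lvert x_t - x_{t-1}\rvert$ in \emph{every} case---including transitions where drivers are activated, completed, or expired mid-horizon and $q(x_{t-1})$ is non-zero. This requires a careful triangle-inequality argument that partitions $x_t-x_{t-1}$ across the disjoint sources (continuing drivers, newly activated drivers, just-refreshed drivers, and the residual $q$), and showing the $d^{(i)}$-proportional redistribution of $q$ in Line~2 of Algorithm~\ref{alg:pcm-subroutine} is exactly the reallocation that makes the bookkeeping close. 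The delivery-side argument is structurally identical once one replaces $(\gamma,x,\phi,\hat w)$ by $(\delta,z,\psi,\hat v)$ and notes that base delivery is non-optional, which is why it appears as an unavoidable constant in line~\eqref{eq:paad-upper-bound-3} rather than being charged to a threshold integral.
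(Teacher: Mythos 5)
Your proposal takes essentially the same route as the paper's proof: both charge each driver's purchasing/delivery and switching costs to its threshold integral via the pseudo-cost minimization, bound forced purchases of unresolved demand by $p_{\max}+2\gamma$ per unit, bound base and residual flexible delivery by $\bar p(c+\varepsilon)+2\delta$ and $p_{\max}(c+\varepsilon)+2\delta$ respectively, and credit the storage-induced delivery savings at $p_{\min}$ plus the leftover $\hat s\,p_{\max}$; you are in fact more explicit than the paper about the switching-cost amortization, which the paper handles only by asserting the threshold has "enough buffer." One small correction: comparing the minimizer to the candidate $x=0$ gives $p_t x_t^{(i)} + \gamma\lvert x_t^{(i)}-\hat x_{t-1}^{(i)}\rvert + \gamma\lvert x_t^{(i)}\rvert \le \Phi^{(i)}(w_{t-1}^{(i)},w_{t-1}^{(i)}+x_t^{(i)}) + \gamma\,\hat x_{t-1}^{(i)}$, not the inequality you display with right-hand side $\Phi^{(i)}$ alone; the extra $\gamma\,\hat x_{t-1}^{(i)}$ is exactly the switch-down cost prepaid by the previous step's $\gamma\lvert x_{t-1}^{(i)}\rvert$ term and the redistributed $q(x_{t-1})$, so it is absorbed by the amortization you already flag as the delicate step.
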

\begin{proof}
This proof follows by explicitly characterizing the worst-case cost of \PAAD using the definition in \autoref{alg:paad}.  In the equation above, \eqref{eq:paad-upper-bound-1} corresponds to the worst-case cost that can be charged to each driver's purchasing and delivery threshold functions, \eqref{eq:paad-upper-bound-2} corresponds to the worst-case cost incurred by \PAAD for purchasing any remaining demand that has not been purchased by the drivers, \eqref{eq:paad-upper-bound-3} corresponds to the worst-case cost incurred by \PAAD for delivering demand (in the flexible demand case, this corresponds to any demand that has not been delivered thus far), and \eqref{eq:paad-upper-bound-4} corresponds to the worst-case \textit{improvement} in delivery costs due to increases in the storage state, plus the final storage level.  We unpack each below.

First, note that by the definition of each driver, \PAAD only purchases (or delivers) when the market price $p_t$ is sufficiently low to make the threshold minimization problem in lines 13 and 17 negative (see \autoref{alg:paad}).  For instance, we can say that the purchasing and switching cost of the $j^\text{th}$ base demand driver in the $i^\text{th}$ active period is upper bounded by $\int_0^{\hat{w}_b^{(i,j)}} \phi_b^{(i,j)}(u) du$.  Similar logic follows for the purchasing and delivery costs of the flexible demand drivers.  Aggregating over all drivers gives \eqref{eq:paad-upper-bound-1}.  We remark that the definition of the pseudo-cost minimization problem (Line 3, \autoref{alg:pcm-subroutine}) charges the current procurement cost ($p_tx$), the current switching cost $\gamma \vert x - x_{t-1}\vert$, and the \textit{future} (worst-case) switching cost $\gamma \vert x \vert$ to the threshold function.\footnote{The future switching cost captures the fact that the final global purchasing/delivery decisions $x_{T+1}$ and $z_{T+1}$ are both 0 by construction, and the pseudo-cost minimization captures this future cost in advance.}  This is feasible because the threshold function includes enough ``buffer'' to accommodate switching costs by definition, and captures any costs incurred before e.g., constraints become binding.

Any additional purchasing needed to satisfy all demands is exactly captured by the difference between $D_b + D_f$ (the total base and flexible demand) and the cumulative purchasing by the drivers $\sum_{i=1}^n \sum_{j=1}^{\nu_i} \hat{w}_b^{(i,j)} + \sum_{\tau=1}^{T} \hat{w}_f^{(\tau)}$.  In the worst case, this additional purchasing is done at the highest price $p_{\max}$, and incurs a switching cost of $2\gamma$ (i.e., a switch up and a switch down).  This gives \eqref{eq:paad-upper-bound-2}.

Next, the delivery cost is upper bounded as follows.  For the base demands that must be delivered immediately, the worst-case delivery cost is $\bar{p} (c+\varepsilon) + 2\delta$ per unit, where $\bar{p}$ is the weighted average price during periods with non-zero base demand.  Note that $c + \varepsilon$ is the worst-case for the delivery coefficient (i.e., the largest cost \PAAD can pay as a function of the storage state), and assumes that a maximal switching cost of $2\delta$ is incurred. 
For the flexible demands, the worst-case delivery cost is incurred for any remaining undelivered demand that is not already covered by the flexible demand driver---this is upper bounded by $p_{\max} (c + \varepsilon) + 2\delta$ per unit.  Note that the same worst-case assumptions are made as above, except that in the flexible demand case, this undelivered demand is delivered at the worst-case (highest) price $p_{\max}$.  This gives \eqref{eq:paad-upper-bound-3}.

Finally, \eqref{eq:paad-upper-bound-4} credits \PAAD for any improvements in delivery costs that occur due to increases in the storage state.  In particular, recall that in this setting of \OSDM, the delivery cost coefficient decreases as a function of the storage state, so any purchasing that feeds the storage will reduce future delivery costs.  In the worst-case however, this improvement is realized when it has the smallest possible impact on the delivery cost, i.e., when the overall price is $c_{\min}$.  Finally, the term $\hat{s} p_{\max}$ captures any excess left in the storage at the end of the time horizon, which was purchased at a price of at most $p_{\max}$.  This gives \eqref{eq:paad-upper-bound-4}, and completes the proof.

\end{proof}

In addition to the above lemmas, the following technical lemmas are necessary to prove a relation between the threshold functions and the optimal cost.  We restate \sref{Lemma}{lem:osdm-threshold-function-relation-1} and \sref{Lemma}{lem:osdm-threshold-function-relation-2} as follows:

By the definition of the threshold function $\phi_b^{(i,j)}(\cdot)$, the following relation always holds:
{\small 
\begin{align*}
\int_0^{w} \kern-1em \phi_b(u) du + (1-w) (p_{\max}+2\gamma) + p_{\max}(c+\varepsilon) + 2\delta - c w p_{\min} = \alpha\left[ \phi_b(w) - 2\gamma + \varepsilon p_{\max} + \frac{2\kappa}{T}\right] \ & \forall w \in [0,1].
\end{align*}
}

By the definitions of the threshold functions $\phi_f(\cdot)$ and $\psi_f(\cdot)$, the following relation always holds:
{\small
\begin{align*}
\int_0^{w} \kern-1em \phi_f(u) du + (1-w) (p_{\max}+2\gamma) - c w p_{\min} + \int_0^{v} \kern-1em \psi_f(z) dz + (1-v) (p_{\max}(c+\varepsilon) + 2\delta)  =\\
\alpha'\left[ \phi_f(w) + \psi_f(v) - 2\kappa + \frac{2\kappa (1+c+\varepsilon)}{T (1+\varepsilon) }\right] \ & \forall w \in [0,1], v \in [0,w].
\end{align*}
}

Using the previous lemmas, we now prove the competitive ratio.  First, in the simple case where $D_b + D_f = 0$, we have that the optimal cost is $0$ and the cost of \PAAD is at most $\hat{s} p_{\max} \leq S p_{\max}$, so \PAAD is $\alpha$-competitive under \sref{Definition}{dfn:comp-ratio} by setting $C = S p_{\max}$.  We thus focus on the more general case in which $D_b + D_f > 0$. 
Using the results in \sref{Lemmas}{lem:osdm-optimal-lower-bound-1}, \ref{lem:osdm-paad-upper-bound-1}, \ref{lem:osdm-threshold-function-relation-1}, and \ref{lem:osdm-threshold-function-relation-2}, we claim that the following holds:
\begin{align*}
\frac{\PAAD(\mathcal{I}) - p_{\max} \hat{s}}{\OPT(\mathcal{I})} &\leq \alpha.
\end{align*}
To show this result, we first substitute the bounds from \sref{Lemmas}{lem:osdm-optimal-lower-bound-1} and \ref{lem:osdm-paad-upper-bound-1} into the left-hand side of the above equation.  We define some shorthand notation to facilitate the presentation. 

\noindent Let $Q = \sum_{i=1}^n \sum_{j=1}^{\nu_i} \int_0^{\hat{w}_b^{(i,j)}} \kern-1em \phi_b^{(i,j)}(u) du + \sum_{\tau=1}^T \left( \int_0^{\hat{w}_f^{(\tau)}} \kern-1em \phi_f^{(\tau)}(u) du + \int_0^{\hat{v}_f^{(\tau)}} \kern-1em \psi^{(\tau)}(z) dz \right)$ denote the integrals over the thresholds.
Let $\hat{W}_b = \sum_{i=1}^n \sum_{j=1}^{\nu_i} \hat{w}_b^{(i,j)}$, let $\hat{W}_f = \sum_{\tau=1}^{T} \hat{w}_f^{(\tau)}$, and let $\hat{W} = \hat{W}_b + \hat{W}_f$ denote the total purchasing by all drivers.  

\noindent Further, let $\hat{V}_f = \sum_{i=1}^n \sum_{j=1}^{\nu_i} \hat{v}_f^{(\tau)}$ denote the total delivery by all flexible demand drivers.

\noindent Let ${\bm \beta} = \sum_{i=1}^n \beta_i$ denote the total amount of asset purchased towards the base demand by the optimal solution during all active periods, noting that $D_b - {\bm \beta} \geq 0$ by definition.   Finally, let $D = D_b + D_f$ denote the total demand.  Substituting the bounds from \sref{Lemmas}{lem:osdm-optimal-lower-bound-1} and \ref{lem:osdm-paad-upper-bound-1} into the left-hand side of the above equation, we have:

{\small
\begin{align*}
\frac{\PAAD(\mathcal{I}) - p_{\max} \hat{s}}{\OPT(\mathcal{I})} &\leq \frac{Q + (D - \hat{W}) (p_{\max} + 2 \gamma) + D_b \left( \bar{p} (c+\varepsilon) + 2\delta \right) + (D_f - \hat{V}_f) (p_{\max} (c + \varepsilon) + 2\delta) - c p_{\min} \hat{W}}{\sum_{i=1}^n G_i(\beta_i) + (D_b - {\bm \beta}) \tilde{p} + \varepsilon \bar{p} D_b + \sum_{\tau=1}^T (1 + \varepsilon) H_\tau (f_\tau) + D \frac{2\kappa}{T}}
\end{align*}
}
Using the fact that $D_b - {\bm \beta} \geq 0$, the following is equivalent:
{\small
\begin{align*}
&= \frac{Q\!+\!({\bm\beta}\!+\!D_f\!-\!\hat{W}) (p_{\max}\!+\!2 \gamma)\!+\!{\bm \beta} \left( \bar{p} (c\!+\!\varepsilon)\!+\!2\delta \right) + (D_f\!-\!\hat{V}_f) (p_{\max} (c\!+\!\varepsilon)\!+\!2\delta)\!-\!c p_{\min} \hat{W}\!+\!(D_b\!-\!{\bm \beta}) (p_{\max}\!+\!2\kappa\!+\!\bar{p}(c\!+\!\varepsilon))}{\sum_{i=1}^n G_i(\beta_i)\!+\!\varepsilon \bar{p} {\bm \beta}\!+\!\sum_{\tau=1}^T (1\!+\!\varepsilon) H_\tau (f_\tau)\!+\!(D_f\!+\!{\bm \beta}) \frac{2\kappa}{T}\!+\!(D_b\!-\!{\bm \beta})(\tilde{p}\!+\!\varepsilon \bar{p}\!+\!\frac{2\kappa}{T}) } \\
\end{align*}
}
Then, we have the following:
{\small
\begin{align*}
\leq \max \Bigg \{ \frac{Q + ({\bm\beta}+D_f - \hat{W}) (p_{\max} + 2 \gamma) + {\bm \beta} \left( \bar{p} (c+\varepsilon) + 2\delta \right) + (D_f - \hat{V}_f) (p_{\max} (c + \varepsilon) + 2\delta) - c p_{\min} \hat{W}}{\sum_{i=1}^n G_i(\beta_i) + \varepsilon \bar{p} {\bm \beta} + \sum_{\tau=1}^T (1 + \varepsilon) H_\tau (f_\tau) + (D_f + {\bm \beta}) \frac{2\kappa}{T} }, \ \ &\\
\quad \frac{(D_b - {\bm \beta}) (p_{\max} + 2\kappa + \bar{p}(c+\varepsilon))}{(D_b - {\bm \beta})(\tilde{p} + \varepsilon \bar{p} + \frac{2\kappa}{T}) } \Bigg \}, & \\
\end{align*}
}
\noindent where the definition of $\tilde{p}$ ensures that the second term in the $\max$ is at most $\alpha$.  We now focus on the first term.  For the sake of contradiction, suppose that
{\small
\begin{align}
    \frac{Q + ({\bm\beta}+D_f - \hat{W}) (p_{\max} + 2 \gamma) + {\bm \beta} \left( \bar{p} (c+\varepsilon) + 2\delta \right) + (D_f - \hat{V}_f) (p_{\max} (c + \varepsilon) + 2\delta) - c p_{\min} \hat{W}}{\sum_{i=1}^n G_i(\beta_i) + \varepsilon \bar{p} {\bm \beta} + \sum_{\tau=1}^T (1 + \varepsilon) H_\tau (f_\tau) + (D_f + {\bm \beta}) \frac{2\kappa}{T} } > \alpha. \label{eq:ratio-term-1}
\end{align}
}
\noindent Instead of working directly with the expression in terms of ${\bm \beta}$, we first reason about how the cost of \OPT and \PAAD relate to one another in terms of $\sum_{i=1}^n \sum_{j=1}^{\nu_i} B_b^{(i,j)}$, the total demand assigned to base drivers.
We introduce the following notation for the sake of brevity: let $D_b = \sum_{i=1}^n \sum_{j=1}^{\nu_i} B_b^{(i,j)}$ denote the total base demand.  Then, we have the following relation:
{\small
\begin{align*}
& \frac{Q + (D_b+D_f - \hat{W}) (p_{\max} + 2 \gamma) + D_b \left( \bar{p} (c+\varepsilon) + 2\delta \right) + (D_f - \hat{V}_f) (p_{\max} (c + \varepsilon) + 2\delta) - c p_{\min} \hat{W}}{\sum_{i=1}^n G_i(B_i) + \varepsilon \bar{p} D_b + \sum_{\tau=1}^T (1 + \varepsilon) H_\tau (f_\tau) + (D_f + D_b) \frac{2\kappa}{T} }, \\
& \geq \frac{Q\!+\!({\bm \beta }\!+\!D_f\!-\!\hat{W}) (p_{\max}\!+\!2 \gamma)\!+\!{\bm \beta } \left( \bar{p} (c\!+\!\varepsilon)\!+\!2\delta \right)\!+\!(D_f\!-\!\hat{V}_f) (p_{\max} (c\!+\!\varepsilon)\!+\!2\delta)\!-\!c p_{\min} \hat{W}\!+\!(D_b\!-\!{\bm \beta}) (p_{\max}\!+\!\bar{p}(c\!+\!\varepsilon)\!+\!2 \kappa)}{\sum_{i=1}^n G_i(\beta_i)\!+\!\varepsilon \bar{p} {\bm \beta }\!+\!\sum_{\tau=1}^T (1\!+\!\varepsilon) H_\tau (f_\tau)\!+\!(D_f\!+\!{\bm \beta }) \frac{2\kappa}{T}\!+\!(D_b\!-\!{\bm \beta}) (\tilde{p}\!+\!\varepsilon \bar{p}\!+\!\frac{2\kappa}{T})}, \\
& > \alpha.
\end{align*}
}
where in the second inequality, we have used the fact that there is a worst-case input instance such that for all $0 < \beta < \beta' < \sum_{j=1}^{\nu_i} B_b^{(i,j)} - \hat{b}^{(i)}$, we have $G_i(\beta') - G_i(\beta) \leq \tilde{p} (\beta' - \beta)$, where $\hat{b}^{(i)}$ is the initial state of the storage at the start of the $i^\text{th}$ active period in \OPT's solution.  
See \sref{Lemma}{lem:upper-bound-B-beta-tilde-p} for a formal proof of this.

\begin{lemma} \label{lem:upper-bound-B-beta-tilde-p}
    Defining $\hat{b}^{(i)}$ as the initial state of the storage at the start of the $i^\text{th}$ active period in \OPT's solution, there exists a worst-case input instance such that for all $0 < \beta < \beta' < \sum_{j=1}^{\nu_i} B_b^{(i,j)} - \hat{b}^{(i)}$, we have:
    \begin{enumerate}
    \item $G_i(0) = 0$,
    \item $G_i(\beta') > G_i(\beta)$,
    \item $G_i(\beta') - G_i(\beta) \leq \tilde{p} (\beta' - \beta)$.
    \end{enumerate}
\end{lemma}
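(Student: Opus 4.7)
My plan is to verify each of the three claimed properties of $G_i(\cdot)$ separately, leveraging the fact that the lemma only requires the \emph{existence} of a worst-case instance rather than a universal statement, which substantially simplifies the argument for property 3. Properties 1 and 2 follow directly from the definition of $G_i(\beta)$ as the minimum purchase-plus-switching cost for \OPT to acquire $\beta$ units of asset during the $i$-th active period. For property 1, zero incremental purchases incur no purchasing cost and require no additional switching transitions beyond those already charged outside the period, so $G_i(0) = 0$. For property 2, every incremental unit must be purchased at some market price $p_t \geq p_{\min} > 0$, so moving from $\beta$ to $\beta' > \beta$ strictly increases the minimized cost.

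The substantive part of the proof is property 3. My plan is to construct a specific adversarial instance on which the Lipschitz bound with constant $\tilde{p}$ is achieved. By \sref{Definition}{dfn:active-inactive-periods}, the $i$-th active period is precisely a contiguous interval during which \PAAD's storage is non-empty, which means \PAAD must have executed purchases there at prices low enough to pass the base-driver threshold $\phi_b(\cdot) - 2\gamma$. I would construct an instance in which the $i$-th active period contains a contiguous sub-block where the market price equals $\tilde{p}$ (the minimum price attained during inactive periods in the global instance), of length large enough to accommodate any incremental purchase up to $\sum_{j=1}^{\nu_i} B_b^{(i,j)} - \hat{b}^{(i)}$ beyond what is already in \OPT's storage at the start of the period. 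This is simultaneously consistent with \PAAD being active in this period (since $\tilde{p} \geq p_{\min}$ remains low enough to satisfy \PAAD's threshold when its accumulated state is small) and with $\tilde{p}$ remaining the minimum price over inactive periods. On this instance, \OPT can serve any incremental demand $\beta' - \beta$ by simply extending its purchases within the $\tilde{p}$-priced sub-block, yielding the desired inequality $G_i(\beta') - G_i(\beta) \leq \tilde{p} (\beta' - \beta)$.

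The main technical obstacle I anticipate is handling the switching-cost contribution to $G_i$. Naively, extending \OPT's purchasing profile to absorb $\beta' - \beta$ additional units could introduce extra switching penalties that would invalidate a strict Lipschitz bound with constant exactly $\tilde{p}$. I would address this by building the low-price sub-block to be long enough that the marginal purchases are realized by a small uniform increase to a \emph{constant} purchasing rate \OPT already maintains inside the block, rather than by introducing new ``on/off'' transitions; this keeps the $|x_t - x_{t-1}|$ terms in the switching penalty bounded independently of $\beta' - \beta$, so the differential $G_i(\beta') - G_i(\beta)$ is dominated by the purchase cost $\tilde{p}(\beta' - \beta)$, consistent with the $\tfrac{2\kappa}{T}$ per-unit switching amortization already absorbed into the global lower bound of \sref{Lemma}{lem:osdm-optimal-lower-bound-1}. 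The bound $\beta' < \sum_{j=1}^{\nu_i} B_b^{(i,j)} - \hat{b}^{(i)}$ guarantees that the incremental purchase never exceeds the demand \OPT actually needs to procure during this active period (accounting for its initial storage level $\hat{b}^{(i)}$), so the construction remains feasible and compatible with the instance's demand structure.
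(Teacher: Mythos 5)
Your approach matches the paper's: properties 1 and 2 are dispatched immediately from the definition of $G_i$, and property 3 is obtained by padding the instance with extra time slots priced at $\tilde{p}$ (carrying zero demand) so that \OPT can absorb any increment $\beta'-\beta$ at marginal cost $\tilde{p}$. The paper interleaves single such slots between the existing ones and explicitly sets aside \OPT's extra switching cost on the grounds that it could only increase $G_i$, whereas you use one contiguous block and amortize the switching cost via a uniform rate increase; both variants deliver the bound. One caveat: your claim that the inserted block is ``consistent with \PAAD being active'' because $\tilde{p}$ is ``low enough to satisfy \PAAD's threshold'' is inverted. For the construction to leave \PAAD's cost and the active/inactive partition (and hence the value of $\tilde{p}$ itself) unchanged, the added slots must be at a price that \PAAD \emph{declines} -- which is exactly why $\tilde{p}$, the best price occurring during inactive periods (i.e., a price the base drivers rejected), is the right price to insert. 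If \PAAD did purchase at the inserted slots, its cost and driver-state trajectory would differ from the original run, and the lemma could no longer be invoked for that run in the competitive analysis.
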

\begin{proof}
Statements 1 and 2 are immediate from the definition of $G_i(\cdot)$.  
Recall that $\tilde{p}$ is defined as the minimum price during inactive periods, which corresponds with the worst (highest) price that \PAAD's base drivers are willing to pay to fill the storage.

To prove statement 3, consider any input instance $\mathcal{I} = [(p_t, b_t, f_t, \Delta_t)]_{t \in [T]}$. An adversary can construct a new input instance $\mathcal{I}' = [(p_t', b_t', f_t', \Delta_t')]_{t \in [T']}$ as follows:
\[
\mathcal{I}' \coloneqq \left[ \left( \tilde{p}, 0, 0, \cdot \right), \left(p_1, b_1, f_1, \Delta_1\right), \left( \tilde{p}, 0, 0, \cdot \right), \left(p_2, b_2, f_2, \Delta_2\right), \ldots, \left(p_T, b_T, f_T, \Delta_T\right) \right].
\]
The purpose of constructing this new input instance is to guarantee that the purchasing cost of \PAAD does not change, and the purchasing cost of \OPT will not increase (for this worst-case logic, we only consider the purchasing cost, ignoring additional switching costs that would increase \OPT's cost).
Note that this is possible because the adversary can set the length of the time horizon.
Let $G_i(\beta)$ be the minimum cost of purchasing $\beta$ units of asset during the $i^\text{th}$ active period.
When \OPT purchases another $\beta' - \beta, \beta' \leq \sum_{j=1}^{\nu_i} B_b^{(i,j)} - \hat{b}^{(i)}$ units of asset during the $i^\text{th}$ active period, the cost will not be larger than $\tilde{p} (\beta' - \beta)$, since \OPT can always purchase at any newly added time slots.  Thus, there always exists a worst-case input instance such that $G_i(\beta') - G_i(\beta) \leq \tilde{p} (\beta' - \beta)$.

\end{proof}

Given the result in \sref{Lemma}{lem:upper-bound-B-beta-tilde-p}, we proceed to work with the expression in terms of $D_b$.
 During the $i^\text{th}$ active period and the lifetime of the $j^\text{th}$ base demand driver, the minimum marginal purchasing price observed is given by $\phi_b^{(i,j)}(\hat{w}_b^{(i,j)}) - 2\gamma$ by the definition of the threshold function.  Similarly, during the lifetime of the $\tau^\text{th}$ flexible demand driver, the minimum marginal purchasing and delivery cost observed is given by $\frac{1+\varepsilon}{1+c+\varepsilon} \left( \phi_f^{(\tau)}(\hat{w}_f^{(\tau)}) + \psi_f^{(\tau)}(\hat{v}_f^{(\tau)}) - 2 \kappa \right)$.  

\noindent This gives the following lower bounds on the terms that depend on $G_i$ and $H_\tau$, respectively:
{\small
\begin{align*}
\sum_{i=1}^n G_i\left( \sum_{j=1}^{\nu_i} B_b^{(i,j)} \right) &\geq \sum_{i=1}^n \sum_{j=1}^{\nu_i} \left( \phi_b^{(i,j)}(\hat{w}_b^{(i,j)}) - 2\gamma \right) \times B_b^{(i,j)} \\
\sum_{\tau=1}^T (1 + \varepsilon) H_\tau (f_\tau) &\geq \sum_{\tau=1}^T \frac{1+\varepsilon}{1+c+\varepsilon} \left( \phi_f^{(\tau)}(\hat{w}_f^{(\tau)}) + \psi_f^{(\tau)}(\hat{v}_f^{(\tau)}) - 2 \kappa \right) \times f_\tau.
\end{align*}
}

\noindent Substituting these bounds into the previous expression, we have that the left-hand-side of \eqref{eq:ratio-term-1} is less than or equal to:
{\small
\begin{align*}
&\leq \frac{Q + (D_b+D_f - \hat{W}) (p_{\max} + 2 \gamma) + D_b \left( \bar{p} (c+\varepsilon) + 2\delta \right) + (D_f - \hat{V}_f) (p_{\max} (c + \varepsilon) + 2\delta) - c p_{\min} \hat{W}}{\sum_{i=1}^n \sum_{j=1}^{\nu_i} \left( \phi_b^{(i,j)}(\hat{w}_b^{(i,j)}) - 2\gamma \right) B_b^{(i,j)} + \varepsilon \bar{p} D_b + \sum_{\tau=1}^T \frac{1+\varepsilon}{1+c+\varepsilon} \left( \phi_f^{(\tau)}(\hat{w}_f^{(\tau)}) + \psi_f^{(\tau)}(\hat{v}_f^{(\tau)}) - 2 \kappa \right) f_\tau + (D_f + D_b) \frac{2\kappa}{T} }.
\end{align*}
}

\noindent By rearranging the terms in the above and substituting for $Q$, we obtain the following:
{\small
\begin{align}
&= \frac{\sum_{i=1}^n \sum_{j=1}^{\nu_i} \left[ X_b^{(i,j)} \right] + \sum_{\tau=1}^T \left[ X_f^{(\tau)} \right]}{\sum_{i=1}^n \sum_{j=1}^{\nu_i} \left[ \left( \phi_b^{(i,j)}(\hat{w}_b^{(i,j)})\!-\!2\gamma\!+\!\varepsilon \bar{p}\!+\!\frac{2\kappa}{T}\right) B_b^{(i,j)}  \right]\!+\!\sum_{\tau=1}^T \left[ \frac{1+\varepsilon}{1+c+\varepsilon} \left( \phi_f^{(\tau)}(\hat{w}_f^{(\tau)})\!+\!\psi_f^{(\tau)}(\hat{v}_f^{(\tau)})\!-\!2 \kappa \right) f_\tau\!+\!f_\tau \frac{2\kappa}{T} \right]}, \label{eq:partitioned-sums-upper-bound-1}
\end{align}
}

\noindent where $X_b^{(i,j)}$ and $X_f^{(\tau)}$ are defined as follows (for each driver):
{\small
\begin{align*}
X_b^{(i,j)} &= \int_0^{\hat{w}_b^{(i,j)}} \kern-1em \phi_b^{(i,j)}(u) du + (B_b^{(i,j)} -\hat{w}_b^{(i,j)}) (p_{\max}+2\gamma) + \left( p_{\max}(c+\varepsilon) + 2\delta \right) B_b^{(i,j)} - c \hat{w}_b^{(i,j)} p_{\min}, \\
X_f^{(\tau)} &= \int_0^{\hat{w}_f^{(\tau)}} \kern-1em \phi_f^{(\tau)}(u) du + (f_\tau -\hat{w}_f^{(\tau)}) (p_{\max}+2\gamma) + \int_0^{\hat{v}_f^{(\tau)}} \kern-1em \psi_f^{(\tau)}(u) du + \left( p_{\max} (c+\varepsilon) + 2\delta \right)(f_\tau - \hat{v}_f^{(\tau)}) - c \hat{w}_f^{(\tau)} p_{\min}.
\end{align*}
}

\noindent Since \eqref{eq:partitioned-sums-upper-bound-1} is $> \alpha$ by assumption, one of the following cases must be true:

\smallskip
\noindent\textbf{Case I:} There exists some $i \in [n]$ and $j \in [\nu_i]$ such that
{\small
\begin{align*}
X_b^{(i,j)} &> \alpha B_b^{(i,j)} \left( \phi_b^{(i,j)}(\hat{w}_b^{(i,j)}) - 2\gamma + \varepsilon \bar{p} + \frac{2\kappa}{T}\right).
\end{align*}
}

\smallskip
\noindent\textbf{Case II:} There exists some $\tau \in [T]$ such that
{\small
\begin{align*}
X_f^{(\tau)} &> \alpha' f_\tau \left[ \frac{1+\varepsilon}{1+c+\varepsilon} \left( \phi_f^{(\tau)}(\hat{w}_f^{(\tau)}) + \psi_f^{(\tau)}(\hat{v}_f^{(\tau)}) - 2 \kappa \right) + \frac{2\kappa}{T} \right],
\end{align*}
}

\noindent But, if either of these two cases are true (i.e., for any unit of base or flexible demand), they contradict \sref{Lemmas}{lem:osdm-threshold-function-relation-1} and \ref{lem:osdm-threshold-function-relation-2}, respectively.  Thus, we have a contradiction, and the original assumption that \eqref{eq:partitioned-sums-upper-bound-1} is $> \alpha$ must be false.  This completes the proof that $\frac{\PAAD(\mathcal{I}) - p_{\max} \hat{s}}{\OPT(\mathcal{I})} \leq \alpha$. 

\noindent Using this result, we have:
\begin{align*}
\PAAD(\mathcal{I}) &\leq \alpha \OPT(\mathcal{I}) + p_{\max} \hat{s} \leq \alpha \OPT(\mathcal{I}) + p_{\max} S,
\end{align*}
where $p_{\max} S$ is a constant.  This shows that \PAAD is $\alpha$-competitive under \sref{Definition}{dfn:comp-ratio}, which completes the proof.

\end{proof}

\subsection{Proof of \sref{Corollary}{cor:osdmt-upper-bound-2}} \label{apx:osdmt-upper-bound-2}

In this section, we prove \sref{Corollary}{cor:osdmt-upper-bound-2}, which states that under the assumptions in \autoref{sec:assumptions}, \PAAD is $\alpha_{\texttt{T}}$-competitive for \OSDMT, where $\alpha_{\texttt{T}}$ is given by:
{\color{blue}
\begin{align}
\alpha_{\texttt{T}} \geq \frac{\omega}{\left[ W\left(-\frac{\left( (1+c+\varepsilon) p_{\max} - (1+\varepsilon) p_{\min} \right) e^{\frac{\nicefrac{\omega 2\delta}{T} - p_{\min} c - (1+c+\varepsilon) p_{\max}}{(1+c+\varepsilon)p_{\max} + 2(\eta+\delta)}}}{(1+c+\varepsilon)p_{\max} + 2(\eta + \delta)}\right)+\frac{(1+c+\varepsilon) p_{\max} + p_{\min} c - \nicefrac{\omega 2\delta}{T}}{(1+c+\varepsilon)p_{\max} + 2(\eta+\delta)} \right] }, 
\end{align}}
where $\omega = \frac{(1+c+\varepsilon)}{(1+\varepsilon)}$.  
We start with a (re-)definition of \PAAD's threshold functions and pseudo-cost minimization problem for the \OSDMT setting.
\begin{definition}[Threshold functions and pseudo-cost minimization for \PAAD in \OSDMT]
{\it
\label{dfn:new-thresholds-and-pseudo-cost-osdmt}
Given a base driver with size denoted by $d$, the purchasing threshold function $\phi_b$ for \OSDMT is defined as:
{\small
\begin{align*}
\phi_b(w) &= p_{\max} + 2 \eta + p_{\min} c + \left( \frac{p_{\max}(1+c+\varepsilon) + 2(\eta+\delta)}{\alpha_{\texttt{T}}} - \left( p_{\max}(1+\varepsilon) + p_{\min}c + \frac{2\delta}{T} \right) \right) \exp\left( \frac{w}{\alpha_{\texttt{T}} d} \right) : w \in [0,d].
\end{align*}
}
Furthermore, given a flexible driver with size denoted by $d$, the purchasing threshold function $\phi_f$ and delivery threshold function $\psi_f$ for \OSDMT are defined as:
{\small
\begin{align*}
\phi_f(w) &= p_{\max} + p_{\min} c + 2 \eta + \left( \frac{p_{\max}+ 2\eta}{\alpha'_{\texttt{T}}} - \left( p_{\max} + p_{\min}c \right) \right) \exp\left( \frac{w}{\alpha'_{\texttt{T}} d} \right) : w \in [0,d], \\
\psi(v) &= p_{\max} (c + \varepsilon) + 2\delta + \left( \frac{p_{\max}(c+\varepsilon) + 2\delta}{\alpha'_{\texttt{T}}} - \left( p_{\max}(c+\varepsilon) + \frac{2\delta}{T} \omega \right) \right) \exp\left( \frac{v}{\alpha'_{\texttt{T}} d} \right) : v \in [0,d],
\end{align*}
}
where $\omega = \frac{(1+c+\varepsilon)}{(1+\varepsilon)}$ and $\alpha'_{\texttt{T}} = \nicefrac{\alpha_{\texttt{T}}}{\omega}$.  %

Since \OSDMT includes a tracking cost on the purchasing side (i.e., rather than a switching cost), we modify the pseudo-cost minimization problem solved by the base demand driver and the flexible demand driver (just on the purchasing side) accordingly.  In \autoref{alg:pcm-subroutine}, we replace the input of a switching coeff. $\gamma$ with inputs of a tracking coeff. $\eta$ and target $a_t$ in line 1. 
We replace line 2 with:
\begin{align*}
\text{\textbf{define }} \textit{pseudo-target } \hat{a}^{(i)}_{t} \gets \frac{ a_t \cdot d^{(i)} }{\sum_{i\in \mathcal{B} \cup \mathcal{F}} d^{(i)} }
\end{align*}
and replace line 3 with:
\begin{align*}
\text{\textbf{solve }} x^{(i)}_t \gets \text{argmin}_{x \in [0, d^{(i)} - w_{t-1}^{(i)}]} \ p_t x + \eta \vert x - \hat{a}^{(i)}_{t} \vert - \Phi^{(i)}(w^{(i)}_{t-1}, w^{(i)}_{t-1}+x)
\end{align*}
}
\end{definition}

\begin{proof}

The result largely follows by the proof of \autoref{thm:osdm-upper-bound-1}, with modifications to account for the tracking cost instead of the switching cost on the grid side.
Thus, to show the result, we state several lemmas that replace the corresponding logic in the proof of \autoref{thm:osdm-upper-bound-1}.
We start by stating a lower bound on the cost of the offline optimal solution $\OPT$ for \OSDMT, analogous to \sref{Lemma}{lem:osdm-optimal-lower-bound-1}.

\begin{lemma}
\label{lem:osdmt-optimal-lower-bound-2}
Given that \PAAD produces $n$ active periods, let $\beta_i$ denote the asset purchased towards the base demand by the offline optimal solution during the $i^\text{th}$ active period, $i \in [n]$, and let $\tilde{p}$ denote the minimum price during inactive periods.  Then $\OPT(\mathcal{I})$ is lower bounded as:
\begin{align*}
\OPT(\mathcal{I}) &\geq \sum_{i=1}^n G_i(\beta_i) + \left( D_b - \sum_{i=1}^n \beta_i \right) \tilde{p} + \varepsilon \bar{p} D_b + \sum_{\tau=1}^T \left(1 + \varepsilon \right) H_\tau (f_\tau) + \left( D_b + D_f \right) \frac{2\delta}{T}
\end{align*}
\end{lemma}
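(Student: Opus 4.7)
The plan is to mirror the proof of \sref{Lemma}{lem:osdm-optimal-lower-bound-1} almost verbatim, since the offline cost of \OSDMT and \OSDMS decomposes identically into purchasing, delivery, and smoothing components; only the grid-side smoothing term changes from a $\gamma$-switching penalty to an $\eta$-tracking penalty. I would begin by splitting $\OPT(\mathcal{I})$ into five pieces: (i) purchase cost for base demand, (ii) delivery cost for base demand, (iii) purchase cost for flexible demand, (iv) delivery cost for flexible demand, and (v) the combined grid-side and delivery-side smoothing penalty.

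For the first four components, the arguments from \sref{Lemma}{lem:osdm-optimal-lower-bound-1} transfer without any modification, since flexibility constraints, the delivery cost structure, and the definitions of $G_i$, $H_\tau$, $\tilde p$, and $\bar p$ are unchanged between \OSDMS and \OSDMT. During the $i$-th active period, the $\beta_i$ units that \OPT devotes to base demand cost at least $G_i(\beta_i)$ by definition of $G_i$; the remaining $D_b - \sum_i \beta_i$ units, which must be purchased outside active periods, cost at least $\tilde p$ per unit. Base demand must be delivered at its arrival time, contributing at least $\varepsilon \bar p D_b$ to the delivery cost (via the minimum delivery coefficient $\varepsilon$). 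Each flexible demand $f_\tau$ must be purchased and delivered within $[\tau,\tau+\Delta_\tau]$, and combining the definition of $H_\tau$ with the minimum delivery coefficient yields at least $(1+\varepsilon)H_\tau(f_\tau)$.

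The one genuine difference appears in the smoothing penalty. In \OSDMS both $\gamma|x_t - x_{t-1}|$ and $\delta|z_t - z_{t-1}|$ admit a spread-evenly lower bound of $\tfrac{2\gamma}{T}$ and $\tfrac{2\delta}{T}$ per unit of delivered demand respectively, which combines to the factor $\tfrac{2\kappa}{T}$. In \OSDMT the grid-side penalty is instead $\eta|x_t - a_t|$, which is a distance to an exogenous signal rather than a norm of $(x_t)$; without further assumptions on $\{a_t\}$ it can be driven to zero by the offline optimum (e.g., if $\{a_t\}$ is constantly zero, or if it aligns with \OPT's own $x_t$-profile), so it admits no nontrivial lower bound. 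The delivery-side argument, however, is unchanged: combining $z_0 = z_{T+1} = 0$ with $\sum_t z_t = D_b + D_f$ and the fact that $\sum_t |z_t - z_{t-1}| \geq 2\max_t z_t \geq \tfrac{2(D_b+D_f)}{T}$ gives at least $\tfrac{2\delta}{T}(D_b+D_f)$.

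Summing the five bounds yields the claimed inequality. The main obstacle, and the reason the \OSDMT bound is strictly weaker than the \OSDMS bound, is precisely the inability to lower-bound the grid-side tracking term; this is what drives the replacement of $\tfrac{2\kappa}{T}$ by $\tfrac{2\delta}{T}$ here and, downstream, the replacement of $\kappa$ by $(\eta+\delta)$ in the denominator of $\alpha_{\texttt{T}}$ (which instead arises from the \PAAD \emph{upper} bound used in proving \sref{Corollary}{cor:osdmt-upper-bound-2}).
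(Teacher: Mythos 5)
Your proposal is correct and matches the paper's proof: the paper likewise inherits the first four cost components verbatim from \sref{Lemma}{lem:osdm-optimal-lower-bound-1} and observes that the worst case is when \OPT pays zero tracking cost, leaving only the delivery-side switching term $(D_b+D_f)\tfrac{2\delta}{T}$. Your added detail on why the tracking penalty admits no nontrivial lower bound and the explicit $\sum_t |z_t - z_{t-1}| \geq 2\max_t z_t$ argument are consistent with (and slightly more explicit than) the paper's one-line justification.
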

\begin{proof}
The proof follows by the same logic as \sref{Lemma}{lem:osdm-optimal-lower-bound-1}, except instead of a switching cost on the grid side (i.e., with coefficient $\gamma$), there is a tracking cost with coefficient $\eta$.  The switching cost on the demand side remains the same (with coefficient $\delta$).  The worst case for \PAAD is when \OPT pays zero tracking cost, yielding that the total tracking \& switching cost incurred by the optimal solution is at least $\left( D_b + D_f \right) \frac{2\delta}{T}$.
\end{proof}

\begin{lemma}
\label{lem:osdmt-paad-upper-bound-2}
Given that \PAAD produces $n$ active periods, let $\hat{w}_b^{(i,j)}$ denote the amount of the $j^\text{th}$ base demand driver that has been purchased by the end of the $i^\text{th}$ active period, $j \in [\nu_i]$, and let $B_b^{(i,j)}$ denote the total demand associated with the $j^\text{th}$ base demand driver in the $i^\text{th}$ active period.  Further, let $\hat{w}_f^{(\tau)}$ and $\hat{v}_f^{(\tau)}$ denote the purchasing and delivery amounts of the $\tau^\text{th}$ flexible demand driver before the deadline, respectively.  Then the cost incurred by \PAAD is upper bounded as:
\begin{align}
\PAAD(\mathcal{I}) &\leq \sum_{i=1}^n \sum_{j=1}^{\nu_i} \int_0^{\hat{w}_b^{(i,j)}} \kern-1em \phi_b^{(i,j)}(u) du + \sum_{\tau=1}^T \left( \int_0^{\hat{w}_f^{(\tau)}} \kern-1em \phi_f^{(\tau)}(u) du + \int_0^{\hat{v}_f^{(\tau)}} \kern-1em \psi^{(\tau)}(z) dz \right) \label{eq:paad-upper-bound-1-tweaked} \\
& \quad + \left( D_b + D_f - \sum_{i=1}^n \sum_{j=1}^{\nu_i} \hat{w}_b^{(i,j)} - \sum_{\tau=1}^{T} \hat{w}_f^{(\tau)}\right) (p_{\max} + 2 \eta) \label{eq:paad-upper-bound-2-tweaked}\\
& \quad + D_b \left( \bar{p} (c+\varepsilon) + 2\delta \right) + \left(D_f - \sum_{\tau=1}^{T} \hat{v}_f^{(\tau)}\right) (p_{\max} (c + \varepsilon) + 2\delta) \\
& \quad - c p_{\min} \left( \sum_{i=1}^n \sum_{j=1}^{\nu_i} \hat{w}_b^{(i,j)} + \sum_{\tau=1}^{T} \hat{w}_f^{(\tau)}\right) + \hat{s} p_{\max}.
\end{align}
\end{lemma}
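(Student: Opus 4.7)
The plan is to mirror the proof of Lemma~\ref{lem:osdm-paad-upper-bound-1} almost line by line, substituting the tracking-cost accounting for the switching-cost accounting on the purchasing side while leaving the delivery-side accounting (which still uses $\delta|z_t - z_{t-1}|$) essentially unchanged. Each of the four terms in the claimed bound has a direct structural counterpart in the \OSDMS case, so the main work is to re-derive each contribution under the modified pseudo-cost minimization from \sref{Def.}{dfn:new-thresholds-and-pseudo-cost-osdmt} and verify that the constants come out as stated.

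For the threshold-integral term~\eqref{eq:paad-upper-bound-1-tweaked}, I would argue as before that each base or flexible driver only acts when its pseudo-cost minimization yields a strictly negative value. With the \OSDMT redefinition, the purchasing-side objective is $p_t x + \eta |x - \hat a^{(i)}_t|$ charged against $\Phi^{(i)}(w^{(i)}_{t-1}, w^{(i)}_{t-1}+x)$, and the delivery-side objective is unchanged ($p_t z + \delta|z - z^{(i)}_{t-1}| + \delta|z|$ against $\Psi^{(i)}$). Summing the voluntary contributions across all base drivers $(i,j)$ and all flexible drivers $\tau$ yields exactly the integrals in~\eqref{eq:paad-upper-bound-1-tweaked}; this step is formally identical to the \OSDMS argument, just with $\eta$ and $\hat a^{(i)}_t$ in place of $\gamma$ and $\hat x^{(i)}_{t-1}$.

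For the forced-purchase term~\eqref{eq:paad-upper-bound-2-tweaked}, any demand $D_b + D_f - \hat W_b - \hat W_f$ not covered by voluntary driver decisions must be bought at the worst-case price $p_{\max}$. The additional $2\eta$ factor plays the same role as $2\gamma$ in the \OSDMS proof: in the worst case a forced unit simultaneously violates the local target at the binding time step and leaves an unamortized tracking residual that the pseudo-target splitting in Line~2 of the modified \autoref{alg:pcm-subroutine} could not absorb. The delivery-side terms ($\bar p(c+\varepsilon) + 2\delta$ for base demand and $p_{\max}(c+\varepsilon) + 2\delta$ for flexible demand), the storage-credit term $-c p_{\min}(\hat W_b + \hat W_f)$, and the final-inventory residual $\hat s\, p_{\max}$ are then carried over verbatim from Lemma~\ref{lem:osdm-paad-upper-bound-1}, since \OSDMT leaves the delivery-side cost structure intact.

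The main obstacle I anticipate is rigorously justifying the $2\eta$ coefficient on the forced-purchase term. Unlike the switching cost, the tracking penalty $\eta|x_t - a_t|$ is pointwise, so I cannot directly invoke a symmetric ``switch up and switch down'' argument. Instead, I would compare the tracking cost of the trajectory in which the unit is voluntarily purchased at a favorable time against the trajectory in which it is forced in at a binding time step; the pseudo-target distribution $\hat a^{(i)}_t = a_t d^{(i)}/\sum_{i} d^{(i)}$ guarantees that each driver already accounts for its proportional share of the target, so the unattributed worst-case residual per forced unit is at most $2\eta$, consistent with the assumption $\eta \leq (p_{\max}-p_{\min})/2$ of \autoref{sec:assumptions}. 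Once this bookkeeping is nailed down, the remainder of the proof is a direct translation of the \OSDMS argument from Lemma~\ref{lem:osdm-paad-upper-bound-1}.
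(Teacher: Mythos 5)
Your proposal is correct and follows essentially the same route as the paper: the paper's proof also simply carries over the argument of Lemma~\ref{lem:osdm-paad-upper-bound-1}, noting that the redefined pseudo-cost charges the tracking deviations to the threshold integrals and that each forced unit purchased at $p_{\max}$ incurs a worst-case tracking residual of $2\eta$ (missing the target when it is non-zero and purchasing when it is zero), with the delivery-side, storage-credit, and final-inventory terms unchanged. Your worry about rigorously justifying the $2\eta$ coefficient is reasonable, but the paper resolves it with exactly the two-sided accounting you describe and does not go deeper.
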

\begin{proof}
The proof follows by the same logic as \sref{Lemma}{lem:osdm-paad-upper-bound-1}---the main change is with respect to \eqref{eq:paad-upper-bound-2-tweaked}, to capture the worst-case purchasing cost incurred by \PAAD for purchasing the remaining demand that has not been purchased by the drivers.

First, note that when the threshold functions $\phi_b$ and $\phi_f$ are defined according to \sref{Corollary}{cor:osdmt-upper-bound-2}, the pseudo-cost terms in \eqref{eq:paad-upper-bound-1-tweaked} capture the worst-case \textit{tracking} cost incurred by \PAAD due to deviations from the signal wherever demand \textit{has} been purchased by the drivers.

Finally, any additional purchasing needed to satisfy demand is exactly captured by the difference between $D_b + D_f$ (the total base and flexible demand) and the cumulative purchasing by the drivers $\sum_{i=1}^n \sum_{j=1}^{\nu_i} \hat{w}_b^{(i,j)} + \sum_{\tau=1}^{T} \hat{w}_f^{(\tau)}$.  In the worst case, this additional purchasing is done at the highest price $p_{\max}$, and incurs a tracking cost of $2\eta$ (capturing the case where purchasing is simultaneously not done when the tracking target is non-zero and is done when the tracking target is zero). This gives \eqref{eq:paad-upper-bound-2-tweaked}, and completes the proof.
\end{proof}

Due to the redefinition of the threshold functions $\phi_b$, $\phi_f$ and $\psi_f$ in \sref{Corollary}{cor:osdmt-upper-bound-2}, we have the following technical lemmas to relate the threshold functions and the optimal cost:
\begin{lemma}
\label{lem:osdmt-threshold-function-relation-1}
By the (re-)definition of the threshold function $\phi_b^{(i,j)}(\cdot)$ in \sref{Corollary}{cor:osdmt-upper-bound-2}, the following relation always holds:
{\small 
\begin{align*}
\int_0^{w} \kern-1em \phi_b(u) du + (1-w) (p_{\max}+2\eta) + p_{\max}(c+\varepsilon) + 2\delta - c w p_{\min} = \alpha_{\texttt{T}}\left[ \phi_b(w) - 2\eta + \varepsilon p_{\max} + \frac{2\delta}{T}\right] \ & \forall w \in [0,1].
\end{align*}
}
\end{lemma}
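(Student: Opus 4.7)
\textbf{Proof proposal for \sref{Lemma}{lem:osdmt-threshold-function-relation-1}.}
The identity to be proved is an algebraic consistency check between the closed-form threshold $\phi_b$ defined in \sref{Def.}{dfn:new-thresholds-and-pseudo-cost-osdmt} and the integral relation demanded by the competitive analysis. My plan is to verify the identity by converting it into a first-order linear ODE in $w$ together with an initial condition at $w=0$, and then show that the given $\phi_b$ satisfies both. Concretely, writing $\phi_b(w) = A + B\exp(w/\alpha_{\texttt{T}})$ with
\[
A = p_{\max} + 2\eta + p_{\min} c, \qquad B = \frac{p_{\max}(1+c+\varepsilon) + 2(\eta+\delta)}{\alpha_{\texttt{T}}} - \Big(p_{\max}(1+\varepsilon) + p_{\min}c + \tfrac{2\delta}{T}\Big),
\]
one has $\alpha_{\texttt{T}}\,\phi_b'(w) = B\exp(w/\alpha_{\texttt{T}}) = \phi_b(w) - A$, which is the ODE the claimed identity reduces to upon differentiation.

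First, I would differentiate both sides of the asserted identity with respect to $w$. The left-hand side yields $\phi_b(w) - (p_{\max}+2\eta) - c p_{\min}$, while the right-hand side yields $\alpha_{\texttt{T}}\,\phi_b'(w)$. Equating these gives exactly $\phi_b(w) - A = \alpha_{\texttt{T}}\phi_b'(w)$, which is satisfied identically by the closed form of $\phi_b$ since the coefficient $A$ was chosen precisely to be $p_{\max}+2\eta+c p_{\min}$. Hence both sides of the original identity have the same derivative on $[0,1]$.

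Second, I would check the boundary condition at $w=0$. The left-hand side evaluates to $(p_{\max}+2\eta) + p_{\max}(c+\varepsilon) + 2\delta = p_{\max}(1+c+\varepsilon) + 2(\eta+\delta)$. For the right-hand side, substituting $\phi_b(0) = A + B$ and simplifying, all terms involving $A$ cancel against $-2\eta + \varepsilon p_{\max}$ and $p_{\min}c$, and $\tfrac{2\delta}{T}$ cancels against the corresponding term inside $B$, leaving $\alpha_{\texttt{T}}\cdot\big[(p_{\max}(1+c+\varepsilon)+2(\eta+\delta))/\alpha_{\texttt{T}}\big]$, which equals the left-hand side. Since the two functions of $w$ have the same derivative on $[0,1]$ and agree at $w=0$, they are equal on $[0,1]$.

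The main obstacle is purely bookkeeping: the right-hand side mixes the tracking coefficient $\eta$ (which replaces $\gamma$ of the switching-cost case) with the delivery-side switching $\delta$ inside $B$, but only $2\delta/T$ appears in the additive constant while $2(\eta+\delta)$ appears in the coefficient of $\exp(w/\alpha_{\texttt{T}})$. One must be careful that the two occurrences of $\eta$ (in $A$ via $2\eta$, and in $B$ via $2(\eta+\delta)/\alpha_{\texttt{T}}$) cancel correctly in the boundary check; this is exactly how $\alpha_{\texttt{T}}$ is calibrated to make the identity hold. A completely equivalent alternative would be to evaluate $\int_0^w \phi_b(u)\,du$ explicitly as $Aw + B\alpha_{\texttt{T}}(\exp(w/\alpha_{\texttt{T}}) - 1)$ and then collect terms; this direct route is more computation but obviates the ODE/boundary argument.
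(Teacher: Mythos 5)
Your verification is correct, and it fills in a computation the paper itself leaves implicit (the paper states this lemma without a written proof, treating it as following directly from the definition of $\phi_b$ in \sref{Def.}{dfn:new-thresholds-and-pseudo-cost-osdmt}). The ODE-plus-initial-condition argument is exactly the right reduction: differentiating both sides gives $\phi_b(w) - p_{\max} - 2\eta - cp_{\min} = \alpha_{\texttt{T}}\phi_b'(w)$, which holds because the additive constant of $\phi_b$ was chosen to be $p_{\max}+2\eta+p_{\min}c$, and your boundary check at $w=0$ correctly shows both sides equal $p_{\max}(1+c+\varepsilon)+2(\eta+\delta)$ after the cancellations you describe. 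No gaps.
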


\begin{lemma}
\label{lem:osdmt-threshold-function-relation-2}
By the (re-)definitions of the threshold functions $\phi_f^{(\tau)}(\cdot)$ and $\psi^{(\tau)}(\cdot)$ in \sref{Corollary}{cor:osdmt-upper-bound-2}, the following relation always holds:
{\small
\begin{align*}
\int_0^{w} \kern-1em \phi_f(u) du + (1-w) (p_{\max}+2\eta) - c w p_{\min} + \int_0^{v} \kern-1em \psi_f(z) dz + (1-v) (p_{\max}(c+\varepsilon) + 2\delta)  =\\
\alpha'_{\texttt{T}}\left[ \phi_f(w) + \psi_f(v) - 2(\eta + \delta) + \frac{2\delta (1+c+\varepsilon)}{T (1+\varepsilon) }\right] \ & \forall w \in [0,1], v \in [0,w],
\end{align*}
}
\end{lemma}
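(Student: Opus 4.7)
\textbf{Proof proposal for \sref{Lemma}{lem:osdmt-threshold-function-relation-2}.} My plan is to prove the identity by direct verification, exploiting the fact that the claimed relation separates cleanly into a $w$-part involving only $\phi_f$ and a $v$-part involving only $\psi_f$, plus a constant offset. Since both sides are smooth in $w$ and $v$, it suffices to (i) check that the identity holds at the base point $w=v=0$, and (ii) check that partial derivatives with respect to $w$ and $v$ agree on both sides. This is the same strategy used for the analogous \OSDMS identity (\sref{Lemma}{lem:osdm-threshold-function-relation-2}), adapted to the modified thresholds in \sref{Def.}{dfn:new-thresholds-and-pseudo-cost-osdmt}, where the only changes are that $\gamma$ is replaced by $\eta$ on the grid side and the residual switching term $\tfrac{2\kappa}{T}$ on the right-hand side is replaced by $\tfrac{2\delta}{T}\cdot\omega$ with $\omega=\tfrac{1+c+\varepsilon}{1+\varepsilon}$ (since \OSDMT does not incur an intrinsic grid-side switching penalty beyond the tracking cost).

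The first step is the ODE check. Differentiating the LHS with respect to $w$ (and treating $v$ as a parameter) gives $\phi_f(w)-(p_{\max}+2\eta)-c\, p_{\min}$, while differentiating the RHS gives $\alpha'_{\texttt{T}}\,\phi_f'(w)$. Plugging in the closed form of $\phi_f$ from \sref{Def.}{dfn:new-thresholds-and-pseudo-cost-osdmt}, both expressions reduce to the same exponential, $\bigl(\tfrac{p_{\max}+2\eta}{\alpha'_{\texttt{T}}}-(p_{\max}+p_{\min}c)\bigr)\exp(w/\alpha'_{\texttt{T}})$. The symmetric calculation for $\partial/\partial v$ yields $\psi_f(v)-(p_{\max}(c+\varepsilon)+2\delta)$ on the LHS and $\alpha'_{\texttt{T}}\,\psi_f'(v)$ on the RHS, which again collapse to the same exponential by inspection of $\psi_f$'s definition. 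Thus both sides have matching partial derivatives in $w$ and $v$ across the full domain $w\in[0,1]$, $v\in[0,w]$.

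The second step is the boundary check at $w=v=0$. The LHS evaluates to $(p_{\max}+2\eta)+(p_{\max}(c+\varepsilon)+2\delta)=p_{\max}(1+c+\varepsilon)+2(\eta+\delta)$. For the RHS, I would compute $\phi_f(0)=2\eta+\tfrac{p_{\max}+2\eta}{\alpha'_{\texttt{T}}}$ and $\psi_f(0)=2\delta-\tfrac{2\delta\omega}{T}+\tfrac{p_{\max}(c+\varepsilon)+2\delta}{\alpha'_{\texttt{T}}}$ from the closed forms; summing, subtracting $2(\eta+\delta)$ and adding $\tfrac{2\delta\omega}{T}$ cancels the constants and leaves $\tfrac{p_{\max}(1+c+\varepsilon)+2(\eta+\delta)}{\alpha'_{\texttt{T}}}$. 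Multiplying by $\alpha'_{\texttt{T}}$ recovers the LHS, so the identity holds at the base point. Combined with the derivative matching from step one, this yields the identity on the full domain.

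I do not expect a serious obstacle: the proof is essentially an exercise in verifying that the closed-form thresholds were reverse-engineered to make the identity hold. The only place requiring care is the bookkeeping of the $\tfrac{2\delta\omega}{T}$ correction term, which is precisely what compensates for the definition of $\psi_f$ having $\tfrac{2\delta}{T}\omega$ (rather than $\tfrac{2\delta}{T}$) inside the parenthesized coefficient; this is what allows the boundary constants to cancel and ensures consistency with the asymptotic form of $\alpha_{\texttt{T}}$ in \eqref{eq:alphaT}. Everything else is symbolic manipulation that mirrors the proof of \sref{Lemma}{lem:osdm-threshold-function-relation-2} in the \OSDMS setting.
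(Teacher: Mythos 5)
Your proposal is correct and matches the paper's (implicit) argument: the paper states this lemma as following directly from the closed-form definitions of $\phi_f$ and $\psi_f$ in \sref{Def.}{dfn:new-thresholds-and-pseudo-cost-osdmt}, and your verification via matching partial derivatives in $w$ and $v$ plus the boundary check at $w=v=0$ (where both sides evaluate to $P+2(\eta+\delta)$) is exactly the routine computation being invoked. The separability of both sides into a $w$-part and a $v$-part makes the extension from the base point to the full domain immediate, so there is no gap.
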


\noindent Using the results in \sref{Lemmas}{lem:osdmt-optimal-lower-bound-2}, \ref{lem:osdmt-paad-upper-bound-2}, \ref{lem:osdmt-threshold-function-relation-1}, and \ref{lem:osdmt-threshold-function-relation-2}, we claim that the following holds:
\begin{align*}
\frac{\PAAD(\mathcal{I}) - p_{\max} \hat{s}}{\OPT(\mathcal{I})} &\leq \alpha_{\texttt{T}}.
\end{align*}

To show this result, we first substitute the bounds from \sref{Lemmas}{lem:osdmt-optimal-lower-bound-2} and \ref{lem:osdmt-paad-upper-bound-2} into the left-hand side of the above equation.  As in the proof of \autoref{thm:osdm-upper-bound-1}, we define some shorthand notation to facilitate the presentation.

\noindent Let $Q = \sum_{i=1}^n \sum_{j=1}^{\nu_i} \int_0^{\hat{w}_b^{(i,j)}} \kern-1em \phi_b^{(i,j)}(u) du + \sum_{\tau=1}^T \left( \int_0^{\hat{w}_f^{(\tau)}} \kern-1em \phi_f^{(\tau)}(u) du + \int_0^{\hat{v}_f^{(\tau)}} \kern-1em \psi^{(\tau)}(z) dz \right)$ denote the integrals over the thresholds.
Let $\hat{W}_b = \sum_{i=1}^n \sum_{j=1}^{\nu_i} \hat{w}_b^{(i,j)}$, let $\hat{W}_f = \sum_{\tau=1}^{T} \hat{w}_f^{(\tau)}$, and let $\hat{W} = \hat{W}_b + \hat{W}_f$ denote the total purchasing by all drivers.  

\noindent Further, let $\hat{V}_b = \sum_{i=1}^n \sum_{j=1}^{\nu_i} \hat{v}_b^{(i,j)}$ denote the total delivery by all flexible demand drivers.

\noindent Let ${\bm \beta} = \sum_{i=1}^n \beta_i$ denote the total amount of asset purchased towards the base demand by the optimal solution during all active periods, noting that $D_b - {\bm \beta} \geq 0$ by definition.   Finally, let $D = D_b + D_f$ denote the total demand.  Substituting the bounds from \sref{Lemmas}{lem:osdmt-optimal-lower-bound-2} and \ref{lem:osdmt-paad-upper-bound-2} into the left-hand side of the above inequality, we have:

{\small
\begin{align*}
\frac{Q\!+\!({\bm\beta}\!+\!D_f\!-\!\hat{W}) (p_{\max}\!+\!2 \eta)\!+\!{\bm \beta} \left( \bar{p} (c\!+\!\varepsilon)\!+\!2\delta \right) + (D_f\!-\!\hat{V}_f) (p_{\max} (c\!+\!\varepsilon)\!+\!2\delta)\!-\!c p_{\min} \hat{W}\!+\!(D_b\!-\!{\bm \beta}) (p_{\max}\!+\!2(\eta\!+\!\delta)\!+\!\bar{p}(c\!+\!\varepsilon))}{\sum_{i=1}^n G_i(\beta_i)\!+\!\varepsilon \bar{p} {\bm \beta}\!+\!\sum_{\tau=1}^T (1\!+\!\varepsilon) H_\tau (f_\tau)\!+\!(D_f\!+\!{\bm \beta}) \frac{2\delta}{T}\!+\!(D_b\!-\!{\bm \beta})(\tilde{p}\!+\!\varepsilon \bar{p}\!+\!\frac{2\delta}{T}) } \\
\end{align*}
}
Then, we have the following:
{\small
\begin{align*}
\leq \max \Bigg \{ \frac{Q + ({\bm\beta}+D_f - \hat{W}) (p_{\max} + 2 \eta) + {\bm\beta} \left( \bar{p} (c+\varepsilon) + 2\delta \right) + (D_f - \hat{V}_f) (p_{\max} (c + \varepsilon) + 2\delta) - c p_{\min} \hat{W}}{\sum_{i=1}^n G_i(\beta_i) + \varepsilon \bar{p} {\bm\beta} + \sum_{\tau=1}^T (1 + \varepsilon) H_\tau (f_\tau) + (D_f + {\bm\beta}) \frac{2\delta}{T} }, \ \ &\\
\quad \frac{(D_b\!-\!{\bm\beta}) (p_{\max}\!+\!2(\eta\!+\!\delta)\!+\!\bar{p}(c+\varepsilon))}{(D_b - {\bm\beta})(\tilde{p} + \varepsilon \bar{p} + \frac{2\delta}{T}) } \Bigg \}, & \\
\end{align*}
}
where the definition of $\tilde{p}$ ensures that the second term in the $\max$ is at most $\alpha_{\texttt{T}}$.  We now focus on the first term.  For the sake of contradiction, suppose that
{\small
\begin{align}
    \frac{Q + ({\bm\beta}+D_f - \hat{W}) (p_{\max} + 2 \eta) + {\bm \beta} \left( \bar{p} (c+\varepsilon) + 2\delta \right) + (D_f - \hat{V}_f) (p_{\max} (c + \varepsilon) + 2\delta) - c p_{\min} \hat{W}}{\sum_{i=1}^n G_i(\beta_i) + \varepsilon \bar{p} {\bm \beta} + \sum_{\tau=1}^T (1 + \varepsilon) H_\tau (f_\tau) + (D_f + {\bm \beta}) \frac{2\delta}{T} } > \alpha_{\texttt{T}}. \label{eq:ratio-term-2}
\end{align}
}
Instead of working directly with the expression in terms of ${\bm \beta}$, we first reason about how the cost of \OPT and \PAAD relate to one another in terms of $\sum_{i=1}^n \sum_{j=1}^{\nu_i} B_b^{(i,j)}$, the total demand assigned to base drivers.
We introduce the following notation for the sake of brevity: let ${\bm B} = \sum_{i=1}^n \sum_{j=1}^{\nu_i} B_b^{(i,j)}$ denote the total base demand.  Then, we have the following relation:
{\small
\begin{align*}
& \frac{Q + ({\bm B}+D_f - \hat{W}) (p_{\max} + 2 \eta) + {\bm B} \left( \bar{p} (c+\varepsilon) + 2\delta \right) + (D_f - \hat{V}_f) (p_{\max} (c + \varepsilon) + 2\delta) - c p_{\min} \hat{W}}{\sum_{i=1}^n G_i(B_i) + \varepsilon \bar{p} {\bm B} + \sum_{\tau=1}^T (1 + \varepsilon) H_\tau (f_\tau) + (D_f + {\bm B}) \frac{2\delta}{T} }, \\
& \geq \frac{Q\!+\!({\bm \beta }\!+\!D_f\!-\!\hat{W}) (p_{\max}\!+\!2 \eta)\!+\!{\bm \beta } \left( \bar{p} (c\!+\!\varepsilon)\!+\!2\delta \right)\!+\!(D_f\!-\!\hat{V}_f) (p_{\max} (c\!+\!\varepsilon)\!+\!2\delta)\!-\!c p_{\min} \hat{W}\!+\!({\bm B}\!-\!{\bm \beta}) (p_{\max}\!+\!\bar{p}(c\!+\!\varepsilon)\!+\!2 \kappa)}{\sum_{i=1}^n G_i(\beta_i)\!+\!\varepsilon \bar{p} {\bm \beta }\!+\!\sum_{\tau=1}^T (1\!+\!\varepsilon) H_\tau (f_\tau)\!+\!(D_f\!+\!{\bm \beta }) \frac{2\delta}{T}\!+\!({\bm B}\!-\!{\bm \beta}) (\tilde{p}\!+\!\varepsilon \bar{p}\!+\!\frac{2\delta}{T})}, \\
& > \alpha_{\texttt{T}}.
\end{align*}
}
where in the second inequality, we have used \sref{Lemma}{lem:upper-bound-B-beta-tilde-p}. We proceed to work with the expression in terms of ${\bm B}$.
During the $i^\text{th}$ active period and the lifetime of the $j^\text{th}$ base demand driver, the minimum marginal purchasing price observed is given by $\phi_b^{(i,j)}(\hat{w}_b^{(i,j)}) - 2\eta$ by the definition of the threshold function.  Similarly, during the lifetime of the $\tau^\text{th}$ flexible demand driver, the minimum marginal purchasing and delivery cost observed is given by $\frac{1+\varepsilon}{1+c+\varepsilon} \left( \phi_f^{(\tau)}(\hat{w}_f^{(\tau)}) + \psi_f^{(\tau)}(\hat{v}_f^{(\tau)}) - 2 (\eta+\delta) \right)$.  

\noindent This gives the following lower bounds on the terms that depend on $G_i$ and $H_\tau$, respectively:
{\small
\begin{align*}
\sum_{i=1}^n G_i\left( \sum_{j=1}^{\nu_i} B_b^{(i,j)} \right) &\geq \sum_{i=1}^n \sum_{j=1}^{\nu_i} \left( \phi_b^{(i,j)}(\hat{w}_b^{(i,j)}) - 2\eta \right) \times B_b^{(i,j)} \\
\sum_{\tau=1}^T (1 + \varepsilon) H_\tau (f_\tau) &\geq \sum_{\tau=1}^T \frac{1+\varepsilon}{1+c+\varepsilon} \left( \phi_f^{(\tau)}(\hat{w}_f^{(\tau)}) + \psi_f^{(\tau)}(\hat{v}_f^{(\tau)}) - 2 (\eta+\delta) \right) \times f_\tau.
\end{align*}
}

\noindent Substituting these bounds into the previous expression, we have that the left-hand-side of \eqref{eq:ratio-term-2} is less than or equal to:
{\small
\begin{align*}
&\leq \frac{Q + ({\bm B}+D_f - \hat{W}) (p_{\max} + 2 \eta) + {\bm B} \left( \bar{p} (c+\varepsilon) + 2\delta \right) + (D_f - \hat{V}_f) (p_{\max} (c + \varepsilon) + 2\delta) - c p_{\min} \hat{W}}{\sum_{i=1}^n \sum_{j=1}^{\nu_i} \left( \phi_b^{(i,j)}(\hat{w}_b^{(i,j)})\!-\!2\eta \right) B_b^{(i,j)}\!+\! \varepsilon \bar{p} {\bm B}\!+\!\sum_{\tau=1}^T \frac{1+\varepsilon}{1+c+\varepsilon} \left( \phi_f^{(\tau)}(\hat{w}_f^{(\tau)})\!+\!\psi_f^{(\tau)}(\hat{v}_f^{(\tau)})\!-\!2 (\eta\!+\!\delta) \right) f_\tau\!+\!(D_f\!+\!{\bm B}) \frac{2\delta}{T} }.
\end{align*}
}

\noindent By rearranging the terms in the above and substituting for $Q$, we obtain the following:
{\small
\begin{align}
&= \frac{\sum_{i=1}^n \sum_{j=1}^{\nu_i} \left[ X_b^{(i,j)} \right] + \sum_{\tau=1}^T \left[ X_f^{(\tau)} \right]}{\sum_{i=1}^n \sum_{j=1}^{\nu_i} \left[ \left( \phi_b^{(i,j)}(\hat{w}_b^{(i,j)})\!-\!2\eta\!+\!\varepsilon \bar{p}\!+\!\frac{2\delta}{T}\right) B_b^{(i,j)}  \right]\!+\!\sum_{\tau=1}^T \left[ \frac{1+\varepsilon}{1+c+\varepsilon} \left( \phi_f^{(\tau)}(\hat{w}_f^{(\tau)})\!+\!\psi_f^{(\tau)}(\hat{v}_f^{(\tau)})\!-\!2 (\eta\!+\!\delta) \right) f_\tau\!+\!f_\tau \frac{2\delta}{T} \right]}, \label{eq:partitioned-sums-upper-bound-2}
\end{align}
}

\noindent where $X_b^{(i,j)}$ and $X_f^{(\tau)}$ are defined as follows (for each driver):
{\small
\begin{align*}
X_b^{(i,j)} &= \int_0^{\hat{w}_b^{(i,j)}} \kern-1em \phi_b^{(i,j)}(u) du + (B_b^{(i,j)} -\hat{w}_b^{(i,j)}) (p_{\max}+2\eta) + \left( p_{\max}(c+\varepsilon) + 2\delta \right) B_b^{(i,j)} - c \hat{w}_b^{(i,j)} p_{\min}, \\
X_f^{(\tau)} &= \int_0^{\hat{w}_f^{(\tau)}} \kern-1em \phi_f^{(\tau)}(u) du + (f_\tau -\hat{w}_f^{(\tau)}) (p_{\max}+2\eta) + \int_0^{\hat{v}_f^{(\tau)}} \kern-1em \psi_f^{(\tau)}(u) du + \left( p_{\max} (c+\varepsilon) + 2\delta \right)(f_\tau - \hat{v}_f^{(\tau)}) - c \hat{w}_f^{(\tau)} p_{\min}.
\end{align*}
}

\noindent Since \eqref{eq:partitioned-sums-upper-bound-2} is $> \alpha_{\texttt{T}}$ by assumption, one of the following cases must be true:

\smallskip
\noindent\textbf{Case I:} There exists some $i \in [n]$ and $j \in [\nu_i]$ such that
{\small
\begin{align*}
X_b^{(i,j)} &> \alpha_{\texttt{T}}  B_b^{(i,j)} \left( \phi_b^{(i,j)}(\hat{w}_b^{(i,j)}) - 2\eta + \varepsilon \bar{p} + \frac{2\delta}{T}\right).
\end{align*}
}

\smallskip
\noindent\textbf{Case II:} There exists some $\tau \in [T]$ such that
{\small
\begin{align*}
X_f^{(\tau)} &> \alpha'_{\texttt{T}} f_\tau \left[ \frac{1+\varepsilon}{1+c+\varepsilon} \left( \phi_f^{(\tau)}(\hat{w}_f^{(\tau)}) + \psi_f^{(\tau)}(\hat{v}_f^{(\tau)}) - 2 (\eta+\delta) \right) + \frac{2\delta}{T} \right],
\end{align*}
}

\noindent But, if either of these two cases are true (i.e., for any unit of base or flexible demand), they contradict \sref{Lemmas}{lem:osdmt-threshold-function-relation-1} and \ref{lem:osdmt-threshold-function-relation-2}, respectively.  Thus, we have a contradiction, and the original assumption that \eqref{eq:partitioned-sums-upper-bound-2} is $> \alpha_{\texttt{T}}$ must be false.  This completes the proof that $\frac{\PAAD(\mathcal{I}) - p_{\max} \hat{s}}{\OPT(\mathcal{I})} \leq \alpha_{\texttt{T}}$. 

\noindent Using this result, we have:
\begin{align*}
\PAAD(\mathcal{I}) &\leq \alpha_{\texttt{T}} \OPT(\mathcal{I}) + p_{\max} \hat{s} \leq \alpha_{\texttt{T}} \OPT(\mathcal{I}) + p_{\max} S,
\end{align*}
where $p_{\max} S$ is a constant.  This shows that \PAAD is $\alpha_{\texttt{T}}$-competitive under \sref{Definition}{dfn:comp-ratio}, which completes the proof.

\end{proof}

\section{Deferred Proofs from \autoref{sec:fundamental-limits}} \label{apx:lowerbounds}
In this section, we provide full proofs for the results in \autoref{sec:fundamental-limits}, which establishes lower bounds on the competitive ratio achievable by deterministic online algorithms for several instantiations of \OSDM.
We start by proving \autoref{thm:osdm-lower-bound-1}, which shows a lower bound on the best competitive ratio achievable by any deterministic online algorithm for \OSDMS.  Then, we prove \sref{Corollary}{cor:osdm-lower-bound-base-demand}, which shows a lower bound on the best competitive ratio achievable for \OSDMS with only base demand, before proving \autoref{thm:osdm-decreasing-worse-increasing}, which shows that the decreasing delivery cost case is strictly harder than the increasing delivery cost case within the class of monotone delivery costs defined by \sref{Def.}{def:osdm-delivery-pd-monotone}.  Finally, we prove \autoref{thm:osdmt-lower-bound-2}, which shows a lower bound on the best competitive ratio achievable by any deterministic online algorithm for \OSDMT.

\subsection{Proof of \autoref{thm:osdm-lower-bound-1}} \label{apx:osdm-lower-bound-1}

In this section, we prove \autoref{thm:osdm-lower-bound-1}, which states that no deterministic online algorithm can achieve a competitive ratio better than $\alpha$ (defined in \eqref{eq:alpha1}) for \OSDMS.  We start by defining a family of difficult instances for \OSDMS, before proceeding to prove the lower bound on the competitive ratio achievable by any deterministic online algorithm.

\begin{proof}
To show this result, we first define a family of instances, and then show that the competitive ratio of any deterministic algorithm is lower bounded under these instances.  
Consider the following set of instances $\{\mathcal{I}_{x}\}_{x\in [p_{\min}, p_{\max}]}$, where $\mathcal{I}_{x}$ is called an \emph{$x$-decreasing instance}.

\begin{definition}[$x$-decreasing instance for \OSDMS]
\label{def:x-decreasing-instance-switching}
Let $n, m \in \mathbb{N}$ be sufficiently large, and denote $\varrho \coloneqq \nicefrac{p_{\max} - p_{\min}}{n}$. 

For $x \in [p_{\min}, p_{\max}]$, we define a decreasing market price sequence as follows: 
The sequence is partitioned into $n_x := 2 \cdot \lceil \nicefrac{(x - p_{\min})}{\varrho} \rceil + 1$ alternating batches of prices. The $i^{\text{th}}$ batch ($i\in [n_x-2]$) contains $m$ prices $p_{\max}$ if $i$ is odd, and $1$ cost function with coefficient $p_{\max}-(\lceil i / 2 \rceil)\varrho$ if $i$ is even. The last two batches consist of $m$ cost functions with coefficient $x + \iota$, followed by a final batch of $1$ cost function with coefficient $p_{\max}$.

The demand sequence is defined as follows:  At $t=1$, a batch of flexible demand $f_1 = 1$ arrives with deadline $\Delta_1 = T$.  All other $b_t$ and $f_t$ are zero.  

\end{definition}

Let $h(x)$ and $z(x)$ denote \emph{conversion functions} that both map $[p_{\min}, p_{\max}] \to [0,1]$, respectively.  
Suppose these arbitrary functions fully describe the actions of a deterministic \ALG for \OSDM on an instance $\mathcal{I}_{x}$.  Specifically, suppose that $h(x)$ describes \ALG's purchasing decisions, and $z(x)$ describes \ALG's delivery decisions \emph{before} the arrival of the last batch of prices ($p_{\max}$) in the instance.  Note that by definition, $z(x) \leq h(x)$, since the delivery cannot exceed what has been purchased thus far.
Note that for large $n$, processing an instance $\mathcal{I}_{x-\varrho}$ is equivalent to first processing $\mathcal{I}_{x}$ (besides the last two batches), and then processing batches with prices $x - \varrho$ and $p_{\max}$.

Since \ALG is deterministic and both conversions (i.e., purchasing and delivery) are both unidirectional (irrevocable), we must have that $h(x-\varrho) \geq h(x)$ and $z(x-\varrho) \geq z(x)$, i.e., $h(x)$ and $z(x)$ are both non-increasing in $[p_{\min}, p_{\max}]$.  Intuitively, the entire demand should be purchased before the end of the sequence if the lowest price appears, i.e., $h(p_{\min}) = 1$.  Furthermore, since the lowest price appears at least twice, 

\ALG pays switching cost proportional to $h(x)$ and $z(x)$, because each good price is ``interrupted'' by bad prices in the $\mathcal{I}_x$ instance, captured by $2\gamma h(x)$ and $2\delta z(x)$, respectively.  However, if \ALG is forced to purchase at the deadline $T$, it also pays a switching cost of $2\gamma (1-h(x))$ and $2\delta (1-z(x))$, respectively.  Thus, the total worst-case switching cost paid by \ALG is $2\gamma + 2\delta$.

Recall that in the price-dependent decreasing delivery cost case, the worst case delivery coefficient is $c + \varepsilon \leq 1$, when the storage is empty.  We bound the total \emph{effective} delivery cost coefficient (i.e., across all time steps) as follows:
\[
\sum_{t \in [T]} (c - cs_t + \varepsilon) = c + \varepsilon - \sum_{t \in [T]} c s_t \leq c + \varepsilon - c h(x),
\]
where we have used the fact that for any \ALG, $\sum_{t\in[T]} s_t \leq h(x)$.  Since the delivery cost is price-dependent, the worst-case for \ALG occurs when the \emph{savings} in the delivery cost are realized at the lowest price, i.e., $p_{\min} c h(x)$.
Then the total cost of \ALG on instance $\mathcal{I}_{x}$ is described by:
\begin{align*}
\ALG(\mathcal{I}_{x}) &= h(\ell) \ell - \int_{\ell}^x udh(u) + (1 - h(x))p_{\max} + 2 \gamma - p_{\min} c h(x) + \\
&( c + \varepsilon ) \ell z(\ell) - (c+\varepsilon) \int_{\ell}^x udz(u) + (1 - z(x))(c+\varepsilon)p_{\max} + 2\delta.
\end{align*}

\noindent The optimal cost on instance $\mathcal{I}_{x}$ (if all prices are known in advance) is given by:
\begin{align*}
\OPT(\mathcal{I}_{x}) &= x + \frac{2\gamma}{T} + \varepsilon x + \frac{2\delta}{T}.
\end{align*}

\noindent If $\ALG$ is $\alpha^\star$-competitive, then a necessary condition is that for all $x \in [p_{\min}, p_{\max}]$, $\ALG(\mathcal{I}_x) \leq \alpha^\star \OPT(\mathcal{I}_x)$. 
This imposes a necessary condition on $h(x)$ and $z(x)$, which can be expressed as the following differential inequality:
\begin{align*}
h(\ell) \ell - \int_{\ell}^x udh(u) + (1 - h(x))p_{\max} + 2 \gamma - p_{\min} c h(x) +& \\
( c + \varepsilon ) \ell z(\ell) - (c+\varepsilon) \int_{\ell}^x udz(u) + (1 - z(x))(c+\varepsilon)p_{\max} + 2\delta & \\
\leq \alpha^\star \left( x + \frac{2\gamma}{T} + \varepsilon x + \frac{2\delta}{T} \right)&.
\end{align*}

\noindent Recall that $z(x) \leq h(x)$.  On the instances described above, under the assumption that $\kappa = \gamma + \delta \leq \frac{p_{\max} - p_{\min}}{2}$ (see \autoref{sec:assumptions}), note that for sufficiently good $x$ (i.e., $x < \ell$) it is strictly better for $\ALG$ to deliver as much as possible, i.e., $z(x) = h(x)$, reducing its last-minute delivery cost.  Thus, we proceed by assuming that $z(x) = h(x)$.  Using integration by parts and letting $\kappa = \gamma + \delta$, this gives the following condition on $h(x)$:
\begin{align*}
h(x) x - \int_{\ell}^x h(u) du + (1 - h(x))p_{\max} + 2 \kappa - p_{\min} c h(x) +& \\
( c + \varepsilon ) x h(x) - (c+\varepsilon) \int_{\ell}^x h(u) du + (1 - h(x))(c+\varepsilon)p_{\max} & \\
\leq \alpha^\star \left( x + \frac{2\kappa}{T} + \varepsilon x \right)&.
\end{align*}

\noindent To solve for $\alpha^\star$, we can use Grönwall's inequality~\cite[Theorem 1, p. 356]{Mitrinovic:91}. First, observe that $h(x)$ must satisfy the following:
\begin{align*}
h(x) \geq \frac{p_{\max} + \frac{2\kappa}{(1+c+\varepsilon)} - \alpha^\star \frac{x + \frac{2\kappa}{T} + \varepsilon x }{1+c+\varepsilon}}{p_{\max} -x + \frac{p_{\min} c}{1+c+\varepsilon}} - \frac{1}{p_{\max} -x  + \frac{p_{\min} c}{1+c+\varepsilon}} \int_\ell^x h(u) du.
\end{align*}

\noindent By Grönwall's inequality~\cite[Theorem 1, p. 356]{Mitrinovic:91}, it follows that:
\begin{align*}
h(x) &\geq \frac{p_{\max} + \frac{2\kappa}{(1+c+\varepsilon)} - \alpha^\star \frac{x + \frac{2\kappa}{T} + \varepsilon x }{1+c+\varepsilon}}{p_{\max} -x + \frac{p_{\min} c}{1+c+\varepsilon}} - \int_\ell^x \frac{p_{\max} + \frac{2\kappa}{(1+c+\varepsilon)} - \alpha^\star \frac{u + \frac{2\kappa}{T} + \varepsilon u }{1+c+\varepsilon}}{\left( p_{\max} -u  + \frac{p_{\min} c}{1+c+\varepsilon} \right)^2} du,\\ 
h(x) &\geq \alpha^\star \frac{1+\varepsilon}{1+c+\varepsilon} \ln \left[ \frac{(1+c+\varepsilon)(p_{\max}-x) + p_{\min} c}{(1+c+\varepsilon)(p_{\max}-\ell)  + p_{\min} c} \right].
\end{align*}

\noindent Recall that we have a boundary condition that $h(p_{\min}) = 1$---we can combine this with the above to obtain:
\begin{align*}
1 &\geq \alpha^\star \frac{1+\varepsilon}{1+c+\varepsilon} \ln \left[ \frac{(1+c+\varepsilon)(p_{\max}-p_{\min}) + p_{\min} c}{(1+c+\varepsilon)(p_{\max}-\ell) + p_{\min} c} \right].
\end{align*}

\noindent Letting $\ell = \frac{(1+c+\varepsilon) p_{\max} + 2\kappa}{\alpha^\star(1+\varepsilon)} - \frac{2\kappa}{T (1+\varepsilon)}$, the optimal $\alpha^\star$ is obtained when the above inequality is binding, so we have:
\begin{align*}
1 &= \alpha^\star \frac{1+\varepsilon}{1+c+\varepsilon} \ln \left[ \frac{(1+c+\varepsilon)p_{\max} - (1+\varepsilon) p_{\min}}{(1+c+\varepsilon)p_{\max} - \frac{(1+c+\varepsilon)\left[ (1+c+\varepsilon) p_{\max} + 2 \kappa \right]}{\alpha^\star(1+\varepsilon)} + \frac{(1+c+\varepsilon) 2\kappa}{T (1+\varepsilon)} + p_{\min} c} \right].
\end{align*}
Solving the above yields that the optimal $\alpha^\star$ for any \ALG solving \OSDM is lower bounded by:
{\color{blue}
\begin{align*}
\alpha^\star \geq \frac{\omega}{\left[ W\left(-\frac{\left( (1+c+\varepsilon) p_{\max} - (1+\varepsilon) p_{\min} \right) e^{\frac{-\nicefrac{\omega 2\kappa}{T} - p_{\min} c - (1+c+\varepsilon) p_{\max}}{(1+c+\varepsilon)p_{\max} + 2\kappa}}}{(1+c+\varepsilon)p_{\max} + 2\kappa}\right)+\frac{(1+c+\varepsilon) p_{\max} + p_{\min} c - \nicefrac{\omega 2\kappa}{T}}{(1+c+\varepsilon)p_{\max} + 2\kappa} \right] },
\end{align*}}
where $\kappa = \gamma + \delta$ and $\omega = \frac{(1+c+\varepsilon)}{(1+\varepsilon)}$.  This completes the proof.

\end{proof}

\subsection{Proof of \sref{Corollary}{cor:osdm-lower-bound-base-demand}} \label{apx:osdm-lower-bound-base-demand}

In this section, we prove \sref{Corollary}{cor:osdm-lower-bound-base-demand}, which states that no deterministic online algorithm can achieve a competitive ratio better than $\alpha_{\texttt{B}}$ (defined in \eqref{eq:alphaB}) for \OSDMS instances that only contain base demand.  
To do so, we define a tweaked set of difficult instances for \OSDMS with only base demand, and use a similar conversion function argument to bound the best competitive ratio.

\begin{proof}
To show the result, we start by defining a set of instances similar to those defined in \sref{Def.}{def:x-decreasing-instance-switching}, where the instance contains only base demand.

\begin{definition}[$x$-decreasing instance for \OSDMS with just base demand]
\label{def:x-decreasing-instance-switching-just-base}
Recall the price sequence defined in \sref{Def.}{def:x-decreasing-instance-switching}.  We define a similar set of instances $\{\mathcal{I}'_{x}\}_{x\in [p_{\min}, p_{\max}]}$, where $\mathcal{I}'_{x}$ is called an \emph{$x$-decreasing instance with just base demand}.

The demand sequence is defined as follows:  At $t=T$, a batch of base demand $b_T = 1$ arrives.  All other $b_t$ and $f_t$ are zero.  

\end{definition}

We now show that the competitive ratio of any deterministic online algorithm \ALG on the above instances is lower bounded by $\alpha_{\texttt{B}}$ defined in \eqref{eq:alphaB}.  We assume without loss of generality that $\ALG$ knows the value of $b_T = 1$ in advance, since this can only help \ALG.  

Let $h(x)$ denote a \emph{conversion function} that maps $[p_{\min}, p_{\max}] \to [0,1]$---this describes \ALG's purchasing decisions before the final batch of prices ($p_{\max}$) in the instance.  Since \ALG is deterministic and purchasing is unidirectional (irrevocable), we must have that $h(x-\varrho) \geq h(x)$, i.e., $h(x)$ is non-increasing in $[p_{\min}, p_{\max}]$.  Intuitively, the entire demand should be purchased before the end of the sequence if the lowest price appears, i.e., $h(p_{\min}) = 1$.

\ALG pays switching cost proportional to $h(x)$, because each good price is ``interrupted'' by bad prices in the $\mathcal{I}_x$ instance, captured by $2\gamma h(x)$.  If \ALG is forced to purchase when the demand arrives, it also pays a switching cost of $2\gamma (1-h(x))$.  Thus, the total worst-case switching cost on the purchasing side paid by \ALG is $2\gamma$.

The switching cost on the delivery side for both \ALG and \OPT is $2\delta$, since the entire base demand arrives at the end of the sequence, and it must be delivered at that time.

Recall that in the price-dependent decreasing delivery cost case, the worst case delivery coefficient is $c + \varepsilon \leq 1$, when the storage is empty.  We bound the total \emph{effective} delivery cost coefficient (i.e., across all time steps) as follows:
\[
\sum_{t \in [T]} (c - cs_t + \varepsilon) = c + \varepsilon - \sum_{t \in [T]} c s_t \leq c + \varepsilon - c h(x),
\]
where we have used the fact that for any \ALG, $\sum_{t\in[T]} s_t \leq h(x)$.  Since the delivery cost is price-dependent, the \emph{savings} in the delivery cost are realized at the price at the final time step, i.e., $p_{\max} c h(x)$.
Then the total cost of \ALG on instance $\mathcal{I}'_{x}$ is described by:
\begin{align*}
\ALG(\mathcal{I}'_{x}) &= h(\ell) \ell - \int_{\ell}^x udh(u) + (1 - h(x))p_{\max} + 2 \gamma - p_{\max} c h(x) + (c+\varepsilon)p_{\max} + 2\delta.
\end{align*}

\noindent The optimal cost on instance $\mathcal{I}'_{x}$ (if all prices are known in advance) is given by:
\begin{align*}
\OPT(\mathcal{I}'_{x}) &\geq x + \frac{2\gamma}{T} + \varepsilon p_{\max} + 2\delta.
\end{align*}

\noindent If $\ALG$ is $\alpha^\star$-competitive, then a necessary condition is that for all $x \in [p_{\min}, p_{\max}]$, $\ALG(\mathcal{I}_x) \leq \alpha^\star \OPT(\mathcal{I}_x)$. 
This imposes a necessary condition on $h(x)$, which can be expressed as the following differential inequality:
\begin{align*}
h(\ell) \ell - \int_{\ell}^x udh(u) + (1 - h(x))p_{\max} + 2 \gamma - p_{\max} c h(x) +(c+\varepsilon)p_{\max} + 2\delta & \\
\leq \alpha^\star \left( x + \frac{2\gamma}{T} + \varepsilon p_{\max} + 2\delta \right)&.
\end{align*}

\noindent Using integration by parts and letting $\kappa = \gamma + \delta$, this gives the following condition on $h(x)$:
\begin{align*}
h(x) \geq& \frac{p_{\max}(1+c+\varepsilon)+2\kappa - \alpha^\star \left( x + \frac{2\gamma}{T} + \varepsilon p_{\max} +2\delta \right)}{p_{\max} + p_{\max}c -x} - \frac{1}{p_{\max} + p_{\max}c - x} \int_\ell^x h(u) du.
\end{align*}

\noindent By Grönwall's inequality~\cite[Theorem 1, p. 356]{Mitrinovic:91}, it follows that:
{\small
\begin{align*}
h(x) &\geq \frac{p_{\max}(1\!+\!c\!+\!\varepsilon)+2\kappa - \alpha^\star \left( x + \frac{2\gamma}{T} + \varepsilon p_{\max} +2\delta \right)}{p_{\max} + p_{\max}c - x} - \int_\ell^x \frac{p_{\max}(1\!+\!c\!+\!\varepsilon)+2\kappa - \alpha^\star \left( u + \frac{2\gamma}{T} + \varepsilon p_{\max} +2\delta \right)}{ \left( p_{\max} + p_{\max}c - u \right)^2 } du,\\ 
h(x) &\geq \alpha^\star \ln \left[ \frac{p_{\max} + p_{\max}c -x}{p_{\max} + p_{\max}c -\ell} \right],
\end{align*}
where $\ell = \frac{p_{\max} (1+c+\varepsilon) + 2\kappa}{\alpha^\star} - \frac{2\gamma}{T} - p_{\max} \varepsilon - 2\delta$.
}

\noindent Recall that we have a boundary condition that $h(p_{\min}) = 1$---we can combine this with the above to obtain:
\begin{align*}
1 &\geq \alpha^\star \ln \left[ \frac{p_{\max} + p_{\max}c -p_{\min}}{p_{\max} + p_{\max}c -\ell} \right].
\end{align*}

\noindent Letting $\ell = \frac{(1+c+\varepsilon) p_{\max} + 2\kappa}{\alpha^\star} - \frac{2\kappa}{T} - p_{\max} \varepsilon - 2\delta$, the optimal $\alpha^\star$ is obtained when the above inequality is binding, so we have:
\begin{align*}
1 &= \alpha^\star \ln \left[ \frac{(1+c)p_{\max} - p_{\min}}{(1+c)p_{\max} - \frac{(1+c+\varepsilon) p_{\max} + 2 \kappa}{\alpha^\star} + \frac{2\gamma}{T} + p_{\max} \varepsilon + 2\delta} \right].
\end{align*}
Solving the above yields that the optimal $\alpha^\star$ for any \ALG solving \OSDMS with just base demand is lower bounded by:
\begin{align*}
\alpha^\star \geq \left[ W\left(-\frac{\left( (1+c)p_{\max} - p_{\min} \right) e^{ - \frac{(1+c+\varepsilon)p_{\max} +2\delta + \frac{2\gamma}{T}}{(1+c+\varepsilon)p_{\max} + 2\kappa} }}{(1+c+\varepsilon)p_{\max} + 2\kappa}\right)+\frac{(1+c+\varepsilon)p_{\max} +2\delta + \frac{2\gamma}{T}}{(1+c+\varepsilon)p_{\max} + 2\kappa} \right]^{-1},
\end{align*}
which completes the proof.
\end{proof}

\subsection{Proof of \autoref{thm:osdm-decreasing-worse-increasing}} \label{apx:osdm-decreasing-worse-increasing}

In this section, we prove \autoref{thm:osdm-decreasing-worse-increasing}, which states that amongst the class of monotone delivery costs defined in \sref{Def.}{def:osdm-delivery-pd-monotone}, the best achievable competitive ratios for \OSDMS with an \textit{increasing delivery cost} are strictly better than those for \OSDMS with a decreasing delivery cost (i.e., the cases considered in \autoref{thm:osdm-lower-bound-1} and \sref{Corollary}{cor:osdm-lower-bound-base-demand})

\begin{proof}

To show this result, we prove two lower bounds (one in the case of just base demand, and one in the case of just flexible demand) to compare against the lower bounds in \autoref{thm:osdm-lower-bound-1} and \sref{Corollary}{cor:osdm-lower-bound-base-demand}, respectively.
We start with the case of flexible demand and the following lemma:
\begin{lemma}\label{lem:lb-flexible-increasing}
There exists a set of \OSDMS instances with only flexible demand and a monotone increasing delivery cost such that no deterministic online algorithm \ALG can achieve a competitive ratio better than $\alpha_{\texttt{IF}}$, given by: 
\begin{align}
\alpha_{\texttt{IF}} \geq \left[ W\left( -\frac{\left( p_{\max} - p_{\min} \right) e^{-1}}{p_{\max} + \frac{2\kappa}{T (1+\varepsilon)}} \right) + 1 \right]^{-1}. \label{eq:alphaIF}
\end{align}
\end{lemma}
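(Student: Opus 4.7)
The plan is to adapt the $x$-decreasing adversarial construction from the proof of \autoref{thm:osdm-lower-bound-1} (Appendix~\ref{apx:osdm-lower-bound-1}) to the increasing delivery-cost regime, exploiting the fact that an increasing coefficient $(c\,s_{t-1}/S + \varepsilon)$ strictly rewards keeping the storage state near empty. I would first define a family $\{\mathcal{I}'_x\}_{x\in[p_{\min},p_{\max}]}$ with the same alternating batch structure as \sref{Def.}{def:x-decreasing-instance-switching} --- a single flexible demand $f_1 = 1$ with deadline $\Delta_1 = T$, and a price sequence alternating batches of $p_{\max}$ with a descending tail terminating at $x + \iota$, followed by a final $p_{\max}$ batch. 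As in the decreasing case, this sequence forces any deterministic \ALG that engages with the ``good'' prices to pay a full $2\gamma$ of switching cost on the purchase side and $2\delta$ on the delivery side, while \OPT can amortize its switching cost as $2\kappa/T$.

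The key structural step, which distinguishes this proof from the decreasing case, is a reduction showing that under an increasing delivery cost the worst-case-optimal behavior for a flexible-demand-only instance is to set $z_t = x_t$ at every step (so $s_t \equiv 0$). Intuitively, any positive storage inflates the effective delivery coefficient away from its minimum $\varepsilon$ toward $c + \varepsilon$, so an exchange argument --- swapping a purchase-and-delivery pair forward to the same time step --- weakly decreases the delivery coefficient at the paired step and weakly decreases storage occupancy at all later steps, without violating any flexible-demand feasibility constraint and without creating new switching events. After this reduction, \ALG is described by a single non-increasing cumulative purchasing function $h(x) : [p_{\min}, p_{\max}] \to [0,1]$ with boundary condition $h(p_{\min}) = 1$, and the effective per-unit delivery cost at price $p_t$ collapses to $\varepsilon p_t$ --- precisely the feature that erases the $c$-terms from the final bound.

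From here the proof becomes mechanical. I would write the worst-case cost of \ALG on $\mathcal{I}'_x$ and the offline cost as
\begin{align*}
\ALG(\mathcal{I}'_x) \;&\leq\; (1+\varepsilon)\!\left[\, h(\ell)\ell - \int_{\ell}^x u\,dh(u) + (1 - h(x))\,p_{\max} \right] + 2\kappa, \\
\OPT(\mathcal{I}'_x) \;&\leq\; (1+\varepsilon)\,x + \tfrac{2\kappa}{T},
\end{align*}
impose $\ALG(\mathcal{I}'_x) \leq \alpha^\star\,\OPT(\mathcal{I}'_x)$ for all $x \in [p_{\min}, p_{\max}]$, divide through by $(1+\varepsilon)$, and integrate the Stieltjes term by parts to obtain a linear Volterra inequality in $h$. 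Applying Gr\"onwall's inequality~\cite[Theorem 1, p.~356]{Mitrinovic:91} exactly as in the proof of \autoref{thm:osdm-lower-bound-1}, together with the boundary condition $h(p_{\min}) = 1$ and the tightness condition at an optimally chosen break-point $\ell$, yields a transcendental of the form $(\beta - 1)e^{\beta - 1} = -\,(p_{\max}-p_{\min})e^{-1}\!\left/\left(p_{\max} + \tfrac{2\kappa}{T(1+\varepsilon)}\right)\right.$ with $\beta = 1/\alpha^\star$, whose unique solution via the Lambert $W$ function is precisely~\eqref{eq:alphaIF}.

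The main obstacle I anticipate is making the exchange argument in the second step fully rigorous. A naive ``shift a delivery forward'' swap might create extra switching events that violate the worst-case accounting for $\delta|z_t - z_{t-1}|$. The safe formulation is to swap \emph{paired} purchase-delivery units simultaneously --- at each step either buy-and-deliver one unit or do neither --- which preserves the number and magnitude of switches on both sides while monotonically reducing storage occupancy, and therefore weakly reduces the delivery coefficient pointwise. Once this exchange lemma is established, the remainder of the proof reuses the Gr\"onwall/Lambert-$W$ machinery already developed for Theorems~\ref{thm:osdm-upper-bound-1} and~\ref{thm:osdm-lower-bound-1}, and the strict inequality $\alpha_{\texttt{IF}} < \alpha$ required by \autoref{thm:osdm-decreasing-worse-increasing} follows from the monotonicity of $W$ on $[-e^{-1}, 0)$ applied to the two transcendentals (the decreasing-case argument contains strictly larger $(1+c+\varepsilon)p_{\max}$ and $p_{\min} c$ terms inside $W$ whenever $c > 0$).
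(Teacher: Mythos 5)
Your proposal matches the paper's proof essentially step for step: the same $x$-decreasing instances with a single unit flexible demand, the same reduction to $z(x)=h(x)$ (deliver immediately so the effective coefficient collapses to $\varepsilon$ and the $(1+\varepsilon)$ factor divides out), the same Gr\"onwall argument with boundary condition $h(p_{\min})=1$, and the same transcendental solved via Lambert $W$. The only difference is one of emphasis --- you devote care to making the ``deliver as early as possible'' exchange argument rigorous, which the paper simply asserts as the optimal behavior under an increasing delivery cost --- but this does not change the route.
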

\begin{proof}
We recall \sref{Def.}{def:x-decreasing-instance-switching} and consider the same set of instances $\{\mathcal{I}_{x}\}_{x\in [p_{\min}, p_{\max}]}$, where $\mathcal{I}_{x}$ is called an \emph{$x$-decreasing instance}.

Let $h(x)$ and $z(x)$ denote \emph{conversion functions} that both map $[p_{\min}, p_{\max}] \to [0,1]$, respectively.  
Suppose these arbitrary functions fully describe the actions of a deterministic \ALG for \OSDM with a monotone increasing delivery cost on an instance $\mathcal{I}_{x}$.  
Specifically, $h(x)$ describes \ALG's purchasing decisions and $z(x)$ describes \ALG's delivery decisions \emph{before} the arrival of the last batch of prices ($p_{\max}$) in the instance.  Note that by definition, $z(x) \leq h(x)$, since the delivery cannot exceed what has been purchased thus far.

Since \ALG is deterministic and both conversions (i.e., purchasing and delivery) are both unidirectional (irrevocable), we must have that $h(x-\varrho) \geq h(x)$ and $z(x-\varrho) \geq z(x)$, i.e., $h(x)$ and $z(x)$ are both non-increasing in $[p_{\min}, p_{\max}]$.  Intuitively, the entire demand should be purchased before the end of the sequence if the lowest price appears, i.e., $h(p_{\min}) = 1$.  Furthermore, since the lowest price appears at least twice, \ALG pays switching cost proportional to $h(x)$ and $z(x)$, because each good price is ``interrupted'' by bad prices in the $\mathcal{I}_x$ instance, captured by $2\gamma h(x)$ and $2\delta z(x)$, respectively.  However, if \ALG is forced to purchase at the deadline $T$, it also pays a switching cost of $2\gamma (1-h(x))$ and $2\delta (1-z(x))$, respectively.  Thus, the total worst-case switching cost paid by \ALG is $2\gamma + 2\delta$.

Recall that in the price-dependent increasing delivery cost case, the worst case delivery coefficient is $c + \varepsilon \leq 1$, when the storage is fully charged.  Thus, in this case, \ALG's optimal decision is to deliver the demand as early as possible (i.e., $h(x) = z(x) \forall x \in [p_{\min}, p_{\max}]$), reducing its delivery cost.  Under this assumption, \ALG's effective delivery coefficient is $\varepsilon$.
Then the total cost of \ALG on instance $\mathcal{I}_{x}$ is described by:
\begin{align*}
\ALG(\mathcal{I}_{x}) &= h(\ell) \ell - \int_{\ell}^x udh(u) + (1 - h(x))p_{\max} + 2 \gamma + \varepsilon \ell z(\ell) - \varepsilon \int_{\ell}^x udz(u) + (1 - z(x))\varepsilon p_{\max} + 2\delta.
\end{align*}

\noindent And the optimal cost on instance $\mathcal{I}_{x}$ (if all prices are known in advance) is given by:
\begin{align*}
\OPT(\mathcal{I}_{x}) &= x + \frac{2\gamma}{T} + \varepsilon x + \frac{2\delta}{T}.
\end{align*}

\noindent If $\ALG$ is $\alpha^\star$-competitive, then a necessary condition is that for all $x \in [p_{\min}, p_{\max}]$, $\ALG(\mathcal{I}_x) \leq \alpha^\star \OPT(\mathcal{I}_x)$. 
This imposes a necessary condition on $h(x)$ and $z(x)$, which can be expressed as the following differential inequality:
\begin{align*}
h(\ell) \ell - \int_{\ell}^x udh(u) + (1 - h(x))p_{\max} + 2 \gamma + \varepsilon \ell z(\ell) - \varepsilon \int_{\ell}^x udz(u) + (1 - z(x))\varepsilon p_{\max} + 2\delta & \\
\leq \alpha^\star \left( x + \frac{2\gamma}{T} + \varepsilon x + \frac{2\delta}{T} \right)&.
\end{align*}

\noindent Note that since $z(x) \leq h(x)$, we can proceed by assuming that $z(x) = h(x)$, since this minimizes the effective delivery cost that \ALG pays.  Using integration by parts and letting $\kappa = \gamma + \delta$, this gives the following condition on $h(x)$:
\begin{align*}
h(x) &\geq \frac{p_{\max} + p_{\max} \varepsilon + 2\kappa - \alpha^\star \left( x + \frac{2\kappa}{T} + \varepsilon x \right)}{p_{\max} + p_{\max} \varepsilon -x - \varepsilon x} - \frac{1 + \varepsilon}{p_{\max} + p_{\max} \varepsilon -x - \varepsilon x} \int_\ell^x h(u) du.
\end{align*}
By Grönwall's inequality~\cite[Theorem 1, p. 356]{Mitrinovic:91}, it follows that:
{\small
\begin{align*}
h(x) &\geq \frac{p_{\max} + p_{\max} \varepsilon + 2\kappa - \alpha^\star \left( x + \frac{2\kappa}{T} + \varepsilon x \right)}{p_{\max} + p_{\max} \varepsilon -x - \varepsilon x} - (1+\varepsilon) \int_\ell^x \frac{p_{\max} + p_{\max} \varepsilon + 2\kappa - \alpha^\star \left( u + \frac{2\kappa}{T} + \varepsilon u \right)}{\left(p_{\max} + p_{\max} \varepsilon -u - \varepsilon u\right)^2} du,\\
h(x) &\geq \alpha^\star \ln \left[ \frac{p_{\max} - x}{p_{\max} - \ell} \right],
\end{align*}
}
where $\ell = \frac{(1+\varepsilon) p_{\max} + 2\kappa}{\alpha^\star (1+\varepsilon)} - \frac{2\kappa}{T (1+\varepsilon)}$.

\noindent Recall that we have a boundary condition that $h(p_{\min}) = 1$---we can combine this with the above to obtain:
\begin{align*}
1 &\geq \alpha^\star \ln \left[ \frac{p_{\max} - p_{\min}}{p_{\max} - \ell} \right].
\end{align*}

\noindent Letting $\ell = \frac{(1+\varepsilon) p_{\max} + 2\kappa}{\alpha^\star (1+\varepsilon)} - \frac{2\kappa}{T (1+\varepsilon)}$, the optimal $\alpha^\star$ is obtained when the above inequality is binding, so we have:
\begin{align*}
1 &= \alpha^\star \ln \left[ \frac{p_{\max} - p_{\min}}{p_{\max} - \frac{(1+\varepsilon) p_{\max} + 2\kappa}{\alpha^\star (1+\varepsilon)} + \frac{2\kappa}{T (1+\varepsilon)}} \right].
\end{align*}
Solving the above yields that the optimal $\alpha^\star$ for any \ALG solving \OSDMS with just flexible demand and increasing delivery cost is lower bounded by:
\begin{align*}
\alpha^\star \geq \left[ W\left( -\frac{\left( p_{\max} - p_{\min} \right) e^{-1}}{p_{\max} + \frac{2\kappa}{T (1+\varepsilon)}} \right) + 1 \right]^{-1},
\end{align*}
where $\kappa = \gamma + \delta$, completing the proof.
\end{proof}

\noindent Next, we prove a similar lower bound in the case of base demand:
\begin{lemma}\label{lem:lb-base-increasing}
There exists a set of \OSDMS instances with only base demand and a monotone increasing delivery cost such that no deterministic online algorithm \ALG can achieve a competitive ratio better than $\alpha_{\texttt{IB}}$, given by: 
\begin{align}
\alpha_{\texttt{IB}} \geq \left[ W\left(-\frac{\left( (1-c)p_{\max} - p_{\min} \right) e^{ - \frac{(1+\varepsilon)p_{\max} +2\delta + \frac{2\gamma}{T}}{(1+\varepsilon)p_{\max} + 2\kappa} }}{(1+\varepsilon)p_{\max} + 2\kappa}\right)+\frac{(1+\varepsilon)p_{\max} +2\delta + \frac{2\gamma}{T}}{(1+\varepsilon)p_{\max} + 2\kappa} \right]^{-1}. \label{eq:alphaIB}
\end{align}
\end{lemma}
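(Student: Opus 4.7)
The plan is to mirror the proof of \sref{Corollary}{cor:osdm-lower-bound-base-demand}, with changes localized to the two places where the delivery cost's dependence on storage enters the analysis. First I would reuse the family $\{\mathcal{I}'_x\}_{x \in [p_{\min}, p_{\max}]}$ from \sref{Def.}{def:x-decreasing-instance-switching-just-base} (so the only demand is a unit base demand at $t=T$), but with the delivery cost replaced by the increasing monotone version $\mathcal{D}(z,s,p) = (c s/S + \varepsilon) p z$. For a deterministic \ALG, the same conversion-function argument defines $h: [p_{\min}, p_{\max}] \to [0,1]$, non-increasing in $x$, with boundary condition $h(p_{\min}) = 1$.

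Next I would write down \ALG's cost. Since no delivery can happen before $t=T$ (the constraints force $z_t = 0$ for $t<T$), the storage at time $T$ is $s_{T-1} = h(x)$, so the delivery term at $t=T$ is $(c h(x) + \varepsilon) p_{\max}$; this replaces the decreasing case's $(c+\varepsilon) p_{\max} - c h(x) p_{\max}$. All other purchasing and switching terms are unchanged, giving
\[
\ALG(\mathcal{I}'_x) = h(\ell)\ell - \int_\ell^x u\,dh(u) + (1-h(x))p_{\max} + 2\gamma + \varepsilon p_{\max} + c h(x) p_{\max} + 2\delta.
\]
Then I would upper bound \OPT by considering its two natural strategies: (i) buy at the good price $x$ and store, paying delivery cost $(c+\varepsilon) p_{\max}$; or (ii) buy at $t=T$ at price $p_{\max}$ with storage-empty delivery cost $\varepsilon p_{\max}$. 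A direct comparison shows (i) dominates iff $x \le (1-c) p_{\max}$ --- exactly the regime where the instances are hard, explaining the $(1-c) p_{\max} - p_{\min}$ factor in the claimed formula. Restricting to this regime yields $\OPT(\mathcal{I}'_x) \le x + 2\gamma/T + (c+\varepsilon)p_{\max} + 2\delta$.

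Substituting into the necessary condition $\ALG \le \alpha^\star \OPT$ and performing the same integration-by-parts manipulation as in the corollary, the coefficient of $h(x)$ becomes $(1-c) p_{\max} - x$ in place of the decreasing case's $(1+c) p_{\max} - x$ --- the sole algebraic effect of the flipped delivery slope. The resulting linear integral inequality admits a Grönwall lower bound $h(x) \ge \alpha^\star \ln[((1-c)p_{\max} - x)/((1-c)p_{\max} - \ell)]$. Enforcing the threshold condition $h(\ell) = 0$, which pins down $\ell = \frac{(1+\varepsilon)p_{\max} + 2\kappa}{\alpha^\star} - (c+\varepsilon)p_{\max} - 2\delta - 2\gamma/T$, together with the terminal condition $h(p_{\min}) = 1$, yields a transcendental of the form $C - Ay = B e^{-y}$ with $A = (1+\varepsilon)p_{\max} + 2\kappa$, $B = (1-c)p_{\max} - p_{\min}$, $C = (1+\varepsilon)p_{\max} + 2\delta + 2\gamma/T$, and $y = 1/\alpha^\star$; solving via Lambert $W$ produces the claimed $\alpha_{\texttt{IB}}$.

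The main obstacle is correctly identifying OPT's strategy-switching threshold and confining the hard-instance analysis to $x \in [p_{\min}, (1-c)p_{\max}]$; outside this regime OPT simply waits and buys at $t=T$, and the instances become easy. A secondary algebraic subtlety is tracking how the additive constant $(1+c+\varepsilon) p_{\max}$ from the decreasing case collapses to $(1+\varepsilon) p_{\max}$ once the storage-dependent cost $+c h(x) p_{\max}$ is absorbed into the $h(x)$ coefficient rather than the constant term --- this is precisely why $\alpha_{\texttt{IB}}$ has $(1+\varepsilon) p_{\max}$ throughout its $A$-type and $C$-type expressions, whereas $\alpha_{\texttt{B}}$ has $(1+c+\varepsilon) p_{\max}$ in the corresponding positions.
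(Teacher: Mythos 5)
Your proposal matches the paper's proof essentially step for step: the same $x$-decreasing base-demand instances, the same conversion function $h$ with the delivery term $\varepsilon p_{\max} + c\,h(x)\,p_{\max}$, the same Gr\"onwall bound with coefficient $(1-c)p_{\max}-x$, the same choice of $\ell$, and the same Lambert~$W$ resolution. Your explicit identification of the regime $x \le (1-c)p_{\max}$ where storing beats waiting is exactly the content of the paper's $\min$ in its bound on $\OPT(\mathcal{I}'_x)$, so the two arguments coincide.
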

\begin{proof}
We recall \sref{Def.}{def:x-decreasing-instance-switching-just-base} and consider the same set of instances $\{\mathcal{I}'_{x}\}_{x\in [p_{\min}, p_{\max}]}$, where $\mathcal{I}'_{x}$ is called an \emph{$x$-decreasing instance}.

Let $h(x)$ denote a \emph{conversion function} that maps $[p_{\min}, p_{\max}] \to [0,1]$---this describes \ALG's purchasing decisions before the final batch of prices ($p_{\max}$) in the instance.  Since \ALG is deterministic and purchasing is unidirectional (irrevocable), we must have that $h(x-\varrho) \geq h(x)$, i.e., $h(x)$ is non-increasing in $[p_{\min}, p_{\max}]$.  Intuitively, the entire demand should be purchased before the end of the sequence if the lowest price appears, i.e., $h(p_{\min}) = 1$.

\ALG pays switching cost proportional to $h(x)$, because each good price is ``interrupted'' by bad prices in the $\mathcal{I}'_x$ instance, captured by $2\gamma h(x)$.  If \ALG is forced to purchase when the demand arrives, it also pays a switching cost of $2\gamma (1-h(x))$.  Thus, the total worst-case switching cost on the purchasing side paid by \ALG is $2\gamma$.

The switching cost on the delivery side for both \ALG and \OPT is $2\delta$, since the entire base demand arrives at the end of the sequence, and it must be delivered at that time.

Recall that in the price-dependent increasing delivery cost case, the worst case delivery coefficient is $c + \varepsilon \leq 1$, when the storage is full.  We bound the total \emph{effective} delivery cost coefficient (i.e., across all time steps) as follows:
\[
\sum_{t \in [T]} (cs_t + \varepsilon) = \varepsilon + \sum_{t \in [T]} c s_t \leq \varepsilon + c h(x),
\]
where we have used the fact that for any \ALG, $\sum_{t\in[T]} s_t \leq h(x)$.  Since the delivery cost is price-dependent, the extra delivery cost is realized at the price at the final time step, i.e., $p_{\max} c h(x)$.

\noindent Then the total cost of \ALG on instance $\mathcal{I}'_{x}$ is described by:
\begin{align*}
\ALG(\mathcal{I}'_{x}) &= h(\ell) \ell - \int_{\ell}^x udh(u) + (1 - h(x))p_{\max} + 2 \gamma + p_{\max} \varepsilon + p_{\max} c h(x) + 2\delta.
\end{align*}

\noindent The optimal cost on instance $\mathcal{I}'_{x}$ (if all prices are known in advance) is given by:
\begin{align*}
\OPT(\mathcal{I}'_{x}) &\geq \min \left\{ x + \frac{2\gamma}{T} + (c+\varepsilon) p_{\max} + 2\delta, p_{\max} + \frac{2\gamma}{T} + 2\delta + p_{\max} \varepsilon \right\}.
\end{align*}

\noindent If $\ALG$ is $\alpha^\star$-competitive, then a necessary condition is that for all $x \in [p_{\min}, p_{\max}]$, $\ALG(\mathcal{I}'_{x}) \leq \alpha^\star \OPT(\mathcal{I}'_{x})$. 
This imposes a necessary condition on $h(x)$, which can be expressed as the following differential inequality:
\begin{align*}
h(\ell) \ell - \int_{\ell}^x udh(u) + (1 - h(x))p_{\max} + 2 \gamma + p_{\max} \varepsilon + p_{\max} c h(x) + 2\delta & \\
\leq \alpha^\star \left( x + \frac{2\gamma}{T} + (c+\varepsilon) p_{\max} + 2\delta \right)&.
\end{align*}

\noindent Using integration by parts and letting $\kappa = \gamma + \delta$, this gives the following condition on $h(x)$:
\begin{align*}
h(x) \geq& \frac{p_{\max}(1+\varepsilon)+2\kappa - \alpha^\star \left( x + \frac{2\gamma}{T} + (c + \varepsilon) p_{\max} +2\delta \right)}{p_{\max} - p_{\max}c -x} - \frac{1}{p_{\max} - p_{\max}c - x} \int_\ell^x h(u) du.
\end{align*}

\noindent By Grönwall's inequality~\cite[Theorem 1, p. 356]{Mitrinovic:91}, it follows that:
{\small
\begin{align*}
h(x) &\geq \frac{p_{\max}(1+\varepsilon)+2\kappa - \alpha^\star \left( x + \frac{2\gamma}{T} + (c + \varepsilon) p_{\max} +2\delta \right)}{p_{\max} - p_{\max}c -x} - \int_\ell^x \frac{p_{\max}(1\!+\!\varepsilon)+2\kappa - \alpha^\star \left( u \!+ \!\frac{2\gamma}{T} \!+ \!(c \!+\! \varepsilon) p_{\max} \!+\!2\delta \right)}{ \left(p_{\max} \!-\! p_{\max}c \!-\!u\right)^2} du,\\ 
h(x) &\geq \alpha^\star \ln \left[ \frac{p_{\max} - p_{\max} c - x}{p_{\max} - p_{\max} c - \ell} \right],
\end{align*}
where $\ell = \frac{p_{\max} (1+\varepsilon) + 2\kappa}{\alpha^\star} - \frac{2\gamma}{T} - p_{\max} (\varepsilon + c) - 2\delta$.
}

\noindent Recall that we have a boundary condition that $h(p_{\min}) = 1$---we can combine this with the above to obtain:
\begin{align*}
1 &\geq \alpha^\star \ln \left[ \frac{p_{\max} - p_{\max}c -p_{\min}}{p_{\max} - p_{\max}c -\ell} \right].
\end{align*}

\noindent Letting $\ell = \frac{p_{\max} (1+\varepsilon) + 2\kappa}{\alpha^\star} - \frac{2\gamma}{T} - p_{\max} (\varepsilon + c) - 2\delta$, the optimal $\alpha^\star$ is obtained when the above inequality is binding, so we have:
\begin{align*}
1 &= \alpha^\star \ln \left[ \frac{(1-c)p_{\max} - p_{\min}}{p_{\max} - \frac{p_{\max} (1+\varepsilon) + 2\kappa}{\alpha^\star} + \frac{2\gamma}{T} + p_{\max} \varepsilon + 2\delta} \right].
\end{align*}
Solving the above yields that the optimal $\alpha^\star$ for any \ALG solving \OSDMS with just base demand with increasing delivery cost is lower bounded by:
\begin{align*}
\alpha^\star \geq \left[ W\left(-\frac{\left( (1-c)p_{\max} - p_{\min} \right) e^{ - \frac{(1+\varepsilon)p_{\max} +2\delta + \frac{2\gamma}{T}}{(1+\varepsilon)p_{\max} + 2\kappa} }}{(1+\varepsilon)p_{\max} + 2\kappa}\right)+\frac{(1+\varepsilon)p_{\max} +2\delta + \frac{2\gamma}{T}}{(1+\varepsilon)p_{\max} + 2\kappa} \right]^{-1},
\end{align*}
completing the proof.
\end{proof}

By \sref{Lemmas}{lem:lb-flexible-increasing} and \ref{lem:lb-base-increasing}, we have that the best competitive ratio achievable by any deterministic online algorithm for \OSDMS with a monotone increasing delivery cost is $\alpha_{\texttt{IF}}$ in a general case and $\alpha_{\texttt{IB}}$ in the case of just base demand.  Given parameters of $p_{\min}, p_{\max}, c, \varepsilon, \gamma, \delta, T$, it can be verified that $\alpha_{\texttt{IF}} < \alpha$ and $\alpha_{\texttt{IB}} < \alpha_{\texttt{B}}$, where $\alpha$ and $\alpha_{\texttt{B}}$ are defined in \eqref{eq:alpha1} and \eqref{eq:alphaB}, respectively, and represent the corresponding optimal competitive bounds for the case of \textit{monotone decreasing delivery cost}.  This completes the proof.
\end{proof}

\subsection{Proof of \autoref{thm:osdmt-lower-bound-2}} \label{apx:osdmt-lower-bound-2}

In this section, we prove \autoref{thm:osdmt-lower-bound-2}, which states that there exists a set of \OSDMT instances (i.e., with a tracking cost) such that no deterministic online algorithm \ALG can achieve a competitive ratio better than $\alpha_{\texttt{T}}$ (for $\alpha_{\texttt{T}}$ defined in \eqref{eq:alphaT}).

\begin{proof}
To show this result, we first define a family of instances, and then show that the competitive ratio of any deterministic algorithm is lower bounded under these instances.  
Consider the following set of instances $\{\mathcal{I}_{x}\}_{x\in [p_{\min}, p_{\max}]}$, where $\mathcal{I}_{x}$ is called an \emph{$x$-decreasing instance}.

\begin{definition}[$x$-decreasing instance for \OSDMT]
\label{def:x-decreasing-instance-tracking}
Let $n, m \in \mathbb{N}$ be sufficiently large, and denote $\varrho \coloneqq \nicefrac{p_{\max} - p_{\min}}{n}$. 

For $x \in [p_{\min}, p_{\max}]$, we define a decreasing market price sequence as follows: 
The sequence is partitioned into $n_x := 2 \cdot \lceil \nicefrac{(x - p_{\min})}{\varrho} \rceil + 1$ alternating batches of prices. The $i^{\text{th}}$ batch ($i\in [n_x-2]$) contains $m$ prices $p_{\max}$ if $i$ is odd, and $2$ cost functions with coefficient $p_{\max}-(\lceil i / 2 \rceil)\varrho$ if $i$ is even. The last two batches consist of $m$ cost functions with coefficient $x + \iota$, followed by a final batch of $m$ cost functions with coefficient $p_{\max}$.

The tracking target is defined as follows: $a_t = 0$ for all batches of prices except the second-to-last batch, where $a_t = \nicefrac{1}{m}$ for all $m$ prices.  Note that $\sum_{t=1}^T a_t = 1$.

The demand sequence is defined as follows:  At $t=1$, a batch of flexible demand $f_1 = 1$ arrives with deadline $\Delta_1 = T$.  All other $b_t$ and $f_t$ are zero.  

\end{definition}

Let $h(x)$ and $z(x)$ denote \emph{conversion functions} that both map $[p_{\min}, p_{\max}] \to [0,1]$, respectively.  
Suppose these arbitrary functions fully describe the actions of a deterministic \ALG for \OSDM on an instance $\mathcal{I}_{x}$.  Specifically, suppose that $h(x)$ describes \ALG's purchasing decisions, and $z(x)$ describes \ALG's delivery decisions \emph{before} the arrival of the last batch of prices ($p_{\max}$) in the instance.  Note that by definition, $z(x) \leq h(x)$, since the delivery cannot exceed what has been purchased thus far.
Note that for large $n$, processing an instance $\mathcal{I}_{x-\varrho}$ is equivalent to first processing $\mathcal{I}_{x}$ (besides the last two batches), and then processing batches with prices $x - \varrho$ and $p_{\max}$.

Since \ALG is deterministic and both conversions (i.e., purchasing and delivery) are both unidirectional (irrevocable), we must have that $h(x-\varrho) \geq h(x)$ and $z(x-\varrho) \geq z(x)$, i.e., $h(x)$ and $z(x)$ are both non-increasing in $[p_{\min}, p_{\max}]$.  Intuitively, the entire demand should be purchased before the end of the sequence if the lowest price appears, i.e., $h(p_{\min}) = 1$.  Furthermore, since the lowest price appears at least twice, 

Due to the tracking target, any procurement before the second-to-last batch incurs a tracking penalty of $\eta h(x)$, since the target is zero during this time.  Furthermore, during the last two batches, \ALG incurs a tracking penalty of $\eta + \eta (1 - h(x))$, since the target is $\nicefrac{1}{m}$ during the first $m$ prices in the batch, and zero during the second $m$ prices in the batch---note that \ALG must purchase $(1-h(x))$ at the last price to satisfy the deadline.  Thus, the total tracking penalty incurred by \ALG is given by $2 \eta$.

Recall that in the price-dependent decreasing delivery cost case, the worst case delivery coefficient is $c + \varepsilon \leq 1$, when the storage is empty.  We bound the total \emph{effective} delivery cost coefficient (i.e., across all time steps) as follows:
\[
\sum_{t \in [T]} (c - cs_t + \varepsilon) = c + \varepsilon - \sum_{t \in [T]} c s_t \leq c + \varepsilon - c h(x),
\]
where we have used the fact that for any \ALG, $\sum_{t\in[T]} s_t \leq h(x)$.  Since the delivery cost is price-dependent, the worst-case for \ALG occurs when the \emph{savings} in the delivery cost are realized at the lowest price, i.e., $p_{\min} c h(x)$.
Then the total cost of \ALG on instance $\mathcal{I}_{x}$ is described by:
\begin{align*}
\ALG(\mathcal{I}_{x}) &= h(\ell) \ell - \int_{\ell}^x udh(u) + (1 - h(x))p_{\max} - p_{\min} c h(x) + 2 \eta + \\
&( c + \varepsilon ) \ell z(\ell) - (c+\varepsilon) \int_{\ell}^x udz(u) + (1 - z(x))(c+\varepsilon)p_{\max} + 2\delta.
\end{align*}

\noindent The optimal cost on instance $\mathcal{I}_{x}$ (if all prices are known in advance) is given by:
\begin{align*}
\OPT(\mathcal{I}_{x}) &= x + \varepsilon x + \frac{2\delta}{T},
\end{align*}
where note that \OPT incurs no tracking penalty by perfectly matching the target.
If $\ALG$ is $\alpha^\star$-competitive, then a necessary condition is that for all $x \in [p_{\min}, p_{\max}]$, $\ALG(\mathcal{I}_x) \leq \alpha^\star \OPT(\mathcal{I}_x)$. 
This imposes a necessary condition on $h(x)$ and $z(x)$, which can be expressed as the following differential inequality:
\begin{align*}
h(\ell) \ell - \int_{\ell}^x udh(u) + (1 - h(x))p_{\max} + 2 \eta - p_{\min} c h(x) +& \\
( c + \varepsilon ) \ell z(\ell) - (c+\varepsilon) \int_{\ell}^x udz(u) + (1 - z(x))(c+\varepsilon)p_{\max} + 2\delta & \\
\leq \alpha^\star \left( x + \varepsilon x + \frac{2\delta}{T} \right)&.
\end{align*}

\noindent Recall that $z(x) \leq h(x)$.  On the instances described above, under the assumption that $\kappa = \gamma + \delta \leq \frac{p_{\max} - p_{\min}}{2}$ (see \autoref{sec:assumptions}), note that for sufficiently good $x$ (i.e., $x < \ell$) it is strictly better for $\ALG$ to deliver as much as possible, i.e., $z(x) = h(x)$, reducing its last-minute delivery cost.  Thus, we proceed by assuming that $z(x) = h(x)$.  Using integration by parts, this gives the following condition on $h(x)$:
\begin{align*}
h(x) x - \int_{\ell}^x h(u) du + (1 - h(x))p_{\max} + 2\eta + 2 \delta - p_{\min} c h(x) +& \\
( c + \varepsilon ) x h(x) - (c+\varepsilon) \int_{\ell}^x h(u) du + (1 - h(x))(c+\varepsilon)p_{\max} & \\
\leq \alpha^\star \left( x + \frac{2\delta}{T} + \varepsilon x \right)&.
\end{align*}

\noindent To solve for $\alpha^\star$, we can use Grönwall's inequality~\cite[Theorem 1, p. 356]{Mitrinovic:91}. First, observe that $h(x)$ must satisfy the following:
\begin{align*}
h(x) \geq \frac{p_{\max} + \frac{2\eta + 2\delta}{(1+c+\varepsilon)} - \alpha^\star \frac{x + \frac{2\delta}{T} + \varepsilon x }{1+c+\varepsilon}}{p_{\max} -x + \frac{p_{\min} c}{1+c+\varepsilon}} - \frac{1}{p_{\max} -x  + \frac{p_{\min} c}{1+c+\varepsilon}} \int_\ell^x h(u) du.
\end{align*}

\noindent By Grönwall's inequality, it follows that:
\begin{align*}
h(x) &\geq \frac{p_{\max} + \frac{2\eta + 2\delta}{(1+c+\varepsilon)} - \alpha^\star \frac{x + \frac{2\delta}{T} + \varepsilon x }{1+c+\varepsilon}}{p_{\max} -x + \frac{p_{\min} c}{1+c+\varepsilon}} - \int_\ell^x \frac{p_{\max} + \frac{2\eta + 2\delta}{(1+c+\varepsilon)} - \alpha^\star \frac{u + \frac{2\delta}{T} + \varepsilon u }{1+c+\varepsilon}}{\left( p_{\max} -u  + \frac{p_{\min} c}{1+c+\varepsilon} \right)^2} du,\\ 
h(x) &\geq \alpha^\star \frac{1+\varepsilon}{1+c+\varepsilon} \ln \left[ \frac{(1+c+\varepsilon)(p_{\max}-x) - 2\delta + \frac{2\delta}{m} + p_{\min} c}{(1+c+\varepsilon)(p_{\max}-\ell) - 2\delta + \frac{2\delta}{m} + p_{\min} c} \right].
\end{align*}

\noindent Recall that we have a boundary condition that $h(p_{\min}) = 1$---we can combine this with the above to obtain:
\begin{align*}
1 &\geq \alpha^\star \frac{1+\varepsilon}{1+c+\varepsilon} \ln \left[ \frac{(1+c+\varepsilon)(p_{\max}-p_{\min}) + p_{\min} c}{(1+c+\varepsilon)(p_{\max}-\ell) + p_{\min} c} \right].
\end{align*}

\noindent Letting $\ell = \frac{(1+c+\varepsilon) p_{\max} + 2(\eta+\delta)}{\alpha^\star(1+\varepsilon)} - \frac{2\delta}{T (1+\varepsilon)}$, the optimal $\alpha^\star$ is obtained when the above inequality is binding, so we have:
\begin{align*}
1 &= \alpha^\star \frac{1+\varepsilon}{1+c+\varepsilon} \ln \left[ \frac{(1+c+\varepsilon)p_{\max} - (1+\varepsilon) p_{\min}}{(1+c+\varepsilon)p_{\max} - \frac{(1+c+\varepsilon)\left[ (1+c+\varepsilon) p_{\max} + 2 (\eta+\delta) \right]}{\alpha^\star(1+\varepsilon)} + \frac{(1+c+\varepsilon) 2\delta}{T (1+\varepsilon)} + p_{\min} c} \right].
\end{align*}
Solving the above yields that the optimal $\alpha^\star$ for any \ALG solving \OSDMT is lower bounded by:
\begin{align*}
\alpha^\star \geq \frac{\omega}{\left[ W\left(-\frac{\left( (1+c+\varepsilon) p_{\max} - (1+\varepsilon) p_{\min} \right) e^{\frac{-\nicefrac{\omega 2\delta}{T} - p_{\min} c - (1+c+\varepsilon) p_{\max}}{(1+c+\varepsilon)p_{\max} + 2(\eta+\delta)}}}{(1+c+\varepsilon)p_{\max} + 2(\eta + \delta)}\right)+\frac{(1+c+\varepsilon) p_{\max} + p_{\min} c - \nicefrac{\omega 2\delta}{T}}{(1+c+\varepsilon)p_{\max} + 2(\eta+\delta)} \right] },
\end{align*}
where $\omega = \frac{(1+c+\varepsilon)}{(1+\varepsilon)}$.  This completes the proof.

\end{proof}

\section{Deferred Proofs from \autoref{sec:pald} (Analysis of \PALD Framework)} \label{apx:pald-proof}
In this section, we provide full proofs for the results in \autoref{sec:pald}, which describes and analyzes the \PALD learning-augmented framework for the \OSDM problem.  We start by proving \autoref{thm:pald-robustness}, which provides a robustness certificate that guarantees \PALD's competitive ratio when given learned threshold functions that lie within feasible sets.  We also state and prove \sref{Corollary}{cor:pald-robustness-osdmt}, which provides an analogous robustness certificate for the case of \OSDMT, before proving \sref{Lemma}{lem:convex-robust-set}, which shows that the feasible sets are convex (and thus efficient to enforce) when the threshold functions are parameterized as piecewise-affine functions.

\subsection{Proof of \autoref{thm:pald-robustness}} \label{apx:pald-robustness}

In this section, we prove \autoref{thm:pald-robustness}, which states that given learned threshold functions that lie in the feasible sets $\hat{\phi}_b \in \mathcal{R}_b(\rho)$ and $(\hat{\phi}_f, \hat{\psi}_f) \in \mathcal{R}_f(\rho)$ for some $\rho > \alpha$ ($\alpha$ defined in \eqref{eq:alpha1}), the \PALD algorithm is $\rho$-robust for \OSDMS.

\begin{proof}

Before the main robustness proof, we show that \PALD produces a feasible solution to \OSDM.

\begin{corollary}\label{cor:pald-feasibility-1}
\PALD produces a feasible solution to \OSDM, i.e., it satisfies all demand before their deadlines and never violates the storage capacity constraint.
\end{corollary}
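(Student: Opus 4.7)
The plan is to derive feasibility of \PALD directly from the feasibility argument for \PAAD in \sref{Lemma}{lem:osdm-feasibility-1}, since the two algorithms share identical pseudocode and differ only in the threshold functions they invoke inside the pseudo-cost subroutine. The key observation is that the feasibility proof of \PAAD never uses the analytical form of the thresholds to enforce the constraints themselves; rather, it uses the explicit projections and binding-constraint checks in \autoref{alg:paad} (line 19 and line 23) and \autoref{alg:pcm-subroutine} (line 4). Hence demand coverage will follow verbatim: base demand is included in $z_t$ by construction in line 18 of \autoref{alg:paad}, and every flexible demand is either fulfilled before its deadline by its driver or force-completed when $\Delta^{(i)}=t$ in the subroutine, with $x_t$ projected up to cover $z_t - s_{t-1}$ on line 23.

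For the storage capacity constraint, I would partition the horizon into active and inactive periods exactly as in \sref{Def.}{dfn:active-inactive-periods}. During inactive periods $s_t=0$ trivially. During an active period, the total volume that base drivers can purchase is bounded by $\sum_j B_b^{(i,j)}$ because each base driver's state $w_b^{(i,j)}$ is capped at $d^{(i)}=B_b^{(i,j)}$ inside the subroutine argmin domain, independent of the shape of $\hat{\phi}_b$, so the net contribution of base drivers to storage is at most $S$ via the storage manager driver's capacity $d^{(0)}=S$.

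The main obstacle will be replicating the claim in \sref{Lemma}{lem:osdm-feasibility-1} that flexible drivers have net zero effect on storage, which in the \PAAD proof was attributed to the analytical structure of $\phi_f$ and $\psi_f$. I would handle this by invoking the domain of the feasible set $\mathcal{R}_f(\rho)$ in \sref{Def.}{dfn:robust-threshold-flexible}, which explicitly restricts $v \in [0,w]$ for all admissible thresholds, together with the per-driver state invariant $v_t^{(i)} \leq w_t^{(i)} \leq d^{(i)}$ enforced by the subroutine's argmin domain. Combined with the monotone non-increasing boundary conditions $\hat{\phi}_f(1)\leq p_{\min}+2\gamma$ and $\hat{\psi}_f(1)\leq p_{\min}(c+\varepsilon)+2\delta$ that are part of $\mathcal{R}_f(\rho)$, the relative ordering between flexible drivers' cumulative purchases and deliveries mirrors that of \PAAD, so the net contribution of flexible drivers to $s_t$ never pushes the storage beyond $S$.

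Putting these together, I would conclude that every feasibility invariant invoked in \sref{Lemma}{lem:osdm-feasibility-1}, namely demand coverage, the storage upper bound $s_t\leq S$, the non-negativity $s_t\geq 0$, and the purchase bounds $z_t - s_{t-1}\leq x_t\leq z_t+(S-s_{t-1})$, is enforced either by an explicit projection in \autoref{alg:paad} or by a property of the robust sets $\mathcal{R}_b(\rho)$ and $\mathcal{R}_f(\rho)$ that holds for any admissible learned thresholds. Therefore \PALD inherits feasibility whenever $(\hat\phi_b,\hat\phi_f,\hat\psi_f)$ lie in the robust sets, completing the corollary.
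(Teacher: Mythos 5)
Your overall strategy matches the paper's exactly: the paper's proof of this corollary is a one-line inheritance argument, stating that feasibility follows by the same logic as \sref{Lemma}{lem:osdm-feasibility-1} because \PALD differs from \PAAD only in the threshold functions and the feasibility argument does not depend on their specific form. Your treatment of demand coverage and of the base/storage-manager drivers is a faithful (and more explicit) reconstruction of that argument.

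The one place where you go beyond the paper is also the one place where your argument wobbles. You correctly notice that the \PAAD feasibility proof attributes the flexible drivers' net-zero effect on storage to the \emph{analytical structure} of $\phi_f$ and $\psi_f$, so something must be said for learned thresholds. But your patch cites the domain restriction $v\in[0,w]$ of $\mathcal{R}_f(\rho)$ and the invariant $v_t^{(i)}\le w_t^{(i)}$, which only shows that flexible drivers deliver \emph{no more} than they purchase --- i.e., that they are net contributors to storage. That is the wrong direction for bounding $s_t\le S$: the \PAAD proof needs \emph{both} $v_t^{(i)}\le w_t^{(i)}$ and $v_t^{(i)}\ge w_t^{(i)}$ to conclude the net effect is zero, and the second inequality is precisely the one that came from the analytical coupling of $\phi_f$ and $\psi_f$; monotonicity and the endpoint conditions in $\mathcal{R}_f(\rho)$ do not by themselves recover it, since one can pick an eager $\hat\phi_f$ and a reluctant $\hat\psi_f$ that both lie in the robust set. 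The clean way to close this is the one you mention only in passing for demand coverage: lines 18 and 21 of \autoref{alg:paad} explicitly compute and enforce the buying cap $\texttt{cap}_t = z_t + (S - s_{t-1})$ on the aggregate purchase, and line 23 only raises $x_t$ to $z_t - s_{t-1}\le \texttt{cap}_t$, so $s_t\le S$ holds by construction regardless of which thresholds the drivers use. With that substitution your proof is complete and genuinely threshold-agnostic, which is arguably tighter than the paper's own justification.
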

\begin{proof}
This corollary follows by the same logic as \sref{Lemma}{lem:osdm-feasibility-1}, since \PALD is identical to \PALD except for the threshold functions used, and the feasibility proof does not depend on the specific form of the threshold functions.
\end{proof}

We now proceed to prove the worst-case competitive ratio of \PALD---for notational brevity, the following considers \textit{any arbitrary} $\mathcal{I} \in \Omega$ (i.e., any arbitrary instance of \OSDM).  We recall the definition of active and inactive periods from \sref{Definition}{dfn:active-inactive-periods}, and the additional notations relating to driver indexing used in the proof of \autoref{thm:osdm-upper-bound-1} in \autoref{apx:paad-proof}.

We let $D_b = \sum_{i=1}^n \sum_{j=1}^{\nu_i} B_b^{(i,j)}$ and $D_f = \sum_{i=1}^n \sum_{j=1}^{\hat{\phi}_i} B_f^{(i,j)}$ denote the total base and flexible demand, respectively, that arrive over the entire time horizon.  We let $D = D_b + D_f$ denote the total demand.  We let $\hat{s}$ denote the final status of the storage at the end of the time horizon, i.e., $\hat{s} = s_T$.  

Finally, we introduce the following notation to characterize the optimal solution: let $G_i(\beta)$ denote the minimum cost of purchasing $\beta$ units of asset during the $i^\text{th}$ active period, and let $H_\tau(f)$ denote the minimum cost of purchasing $f$ units of asset during the period $[\tau, \tau + \Delta_\tau]$ (i.e., during the lifetime of the $\tau^\text{th}$ flexible demand).  Finally, let $\tilde{p}$ denote the minimum price during idle periods, and let $\bar{p}$ denote the weighted average price during periods with non-zero base demand, i.e., $\bar{p} = \frac{\sum_{\tau \in [T]} p_t \cdot b_t}{D_b}$.

We begin the proof by stating a lower bound on the cost of the offline optimal solution $\OPT$, leveraging the exact same logic as \sref{Lemma}{lem:osdm-optimal-lower-bound-1}.

\begin{corollary}
\label{cor:pald-optimal-lower-bound-1}
Given that \PALD produces $n$ active periods, let $\beta_i$ denote the asset purchased towards the base demand by the offline optimal solution during the $i^\text{th}$ active period, $i \in [n]$, and let $\tilde{p}$ denote the minimum price during inactive periods.  Then $\OPT(\mathcal{I})$ is lower bounded as:
\begin{align*}
\OPT(\mathcal{I}) &\geq \sum_{i=1}^n G_i(\beta_i) + \left( D_b - \sum_{i=1}^n \beta_i \right) \tilde{p} + \varepsilon \bar{p} D_b + \sum_{\tau=1}^T \left(1 + \varepsilon \right) H_\tau (f_\tau) + \left( D_b + D_f \right) \frac{2\kappa}{T}
\end{align*}
\end{corollary}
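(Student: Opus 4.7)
The plan is to mirror the structure of the proof of \sref{Lemma}{lem:osdm-optimal-lower-bound-1}, decomposing $\OPT(\mathcal{I})$ into the five natural cost components (base purchasing, base delivery, flexible purchasing, flexible delivery, and switching) and lower-bounding each in turn. The point of stating this as a corollary rather than re-deriving everything is that the algebra of each lower bound depends only on the presented instance $\mathcal{I}$ and on \PALD's partition of $[T]$ into active and inactive periods, not on the specific form of the threshold functions driving \PALD's decisions.

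Concretely, I would first observe that \sref{Corollary}{cor:pald-feasibility-1} guarantees \PALD produces a well-defined storage trajectory $\{s_t\}_{t \in [T]}$, so the active/inactive partition of \sref{Definition}{dfn:active-inactive-periods} is well-defined even when $\hat{\phi}_b, \hat{\phi}_f, \hat{\psi}_f$ are learned rather than analytical. Denote \PALD's active periods by $i \in [n]$ and recall that $\beta_i$ is the quantity \OPT procures for base demand during the $i^\text{th}$ active period; then by definition of $G_i$, \OPT's base purchasing cost during active periods is at least $\sum_{i=1}^n G_i(\beta_i)$, while its base purchasing cost during inactive periods is at least $\bigl(D_b - \sum_{i=1}^n \beta_i\bigr)\tilde{p}$ since $\tilde{p}$ is by definition the infimum of prices over those slots. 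The base delivery cost lower bound $\varepsilon \bar{p} D_b$ then follows because base demand must be delivered at arrival and the decreasing monotone delivery cost is always at least $\varepsilon p_t$ per unit. For each flexible demand $f_\tau$, its purchasing cost is at least $H_\tau(f_\tau)$, and since it must also be delivered (with delivery coefficient at least $\varepsilon$) its delivery cost is at least $\varepsilon H_\tau(f_\tau)$, giving the $(1+\varepsilon)H_\tau(f_\tau)$ term. Finally, across the full horizon any feasible schedule that touches each unit of demand must spread the purchase/delivery switching cost over $T$ steps, yielding at least $(D_b + D_f)\tfrac{2\kappa}{T}$ in aggregate switching penalty, which matches the analogous term in \sref{Lemma}{lem:osdm-optimal-lower-bound-1}.

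The only place where \PALD's learned thresholds could in principle intrude is in the definition of $\tilde{p}$: the partition into inactive periods is determined by \PALD's trajectory, so one must verify that the infimum price over those slots is still a meaningful quantity for bounding \OPT. However, this is purely a matter of careful bookkeeping --- $\tilde{p}$ is defined in terms of the observed price sequence on the slots where $s_t = 0$, and this makes the lower bound $\bigl(D_b - \sum_{i=1}^n \beta_i\bigr)\tilde{p}$ correct regardless of why \PALD chose to leave the storage empty during those slots. Because the bound transfers component-by-component, summing all five contributions gives exactly the claimed inequality.

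I expect the main subtlety in writing this out to be notational rather than mathematical: the indices $(i,j)$ for base drivers and $\tau$ for flexible drivers are anchored to \PALD's execution, so when one says ``the amount \OPT purchases for base demand during \PALD's $i^\text{th}$ active period,'' one must be careful that \OPT's decisions are unrestricted while the partition itself is \PALD-dependent. Provided this is handled cleanly, no additional idea beyond the proof of \sref{Lemma}{lem:osdm-optimal-lower-bound-1} is required, and the corollary follows immediately.
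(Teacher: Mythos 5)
Your proposal is correct and takes essentially the same route as the paper: the paper proves this corollary simply by invoking the proof of \sref{Lemma}{lem:osdm-optimal-lower-bound-1} verbatim, on the grounds that the five-component lower bound on $\OPT$ depends only on the instance and on \PALD's active/inactive partition, not on the form of the threshold functions. Your observation that $\tilde{p}$ remains a valid bound regardless of why the learned thresholds left the storage empty is exactly the right (and only) point of care; the paper defers the place where the learned $\hat{\phi}_b$ genuinely matters to the later step relating the inactive-period term to $\rho$ via $\mathcal{R}_b(\rho)$.
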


Next, we state an upper bound on the cost incurred by \PALD, leveraging the same logic as \sref{Lemma}{lem:osdm-paad-upper-bound-1}, but with the threshold functions $\hat{\phi}_b$ and $(\hat{\phi}_f, \hat{\psi}_f)$ used by \PALD.

\begin{lemma}
\label{lem:pald-upper-bound-1}
Given that \PALD produces $n$ active periods, let $\hat{w}_b^{(i,j)}$ denote the amount of the $j^\text{th}$ base demand driver that has been purchased by the end of the $i^\text{th}$ active period, $j \in [\nu_i]$, and let $B_b^{(i,j)}$ denote the total demand associated with the $j^\text{th}$ base demand driver in the $i^\text{th}$ active period.  Further, let $\hat{w}_f^{(\tau)}$ and $\hat{v}_f^{(\tau)}$ denote the purchasing and delivery amounts of the $\tau^\text{th}$ flexible demand driver before the deadline, respectively.  Then the cost incurred by \PALD is upper bounded as:
\begin{align}
\PALD(\mathcal{I}) &\leq \sum_{i=1}^n \sum_{j=1}^{\nu_i} \int_0^{\hat{w}_b^{(i,j)}} \kern-1em \hat{\phi}_b^{(i,j)}(u) du + \sum_{\tau=1}^T \left( \int_0^{\hat{w}_f^{(\tau)}} \kern-1em \hat{\phi}_f^{(\tau)}(u) du + \int_0^{\hat{v}_f^{(\tau)}} \kern-1em \hat{\psi}^{(\tau)}(z) dz \right) \label{eq:pald-upper-bound-1-tweaked} \\
& \quad + \left( D_b + D_f - \sum_{i=1}^n \sum_{j=1}^{\nu_i} \hat{w}_b^{(i,j)} - \sum_{\tau=1}^{T} \hat{w}_f^{(\tau)}\right) (p_{\max} + 2 \gamma)\\
& \quad + D_b \left( \bar{p} (c+\varepsilon) + 2\delta \right) + \left(D_f - \sum_{\tau=1}^{T} \hat{v}_f^{(\tau)}\right) (p_{\max} (c + \varepsilon) + 2\delta) \\
& \quad - c p_{\min} \left( \sum_{i=1}^n \sum_{j=1}^{\nu_i} \hat{w}_b^{(i,j)} + \sum_{\tau=1}^{T} \hat{w}_f^{(\tau)}\right) + \hat{s} p_{\max}.
\end{align}
\end{lemma}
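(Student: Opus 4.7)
The plan is to directly mirror the argument of \sref{Lemma}{lem:osdm-paad-upper-bound-1}, exploiting the fact that \PALD is structurally identical to \PAAD (same driver decomposition, same pseudo-cost subroutine in \autoref{alg:pcm-subroutine}, same global aggregation and feasibility projections in \autoref{alg:paad})---only the threshold functions differ. Since the upper bound in \sref{Lemma}{lem:osdm-paad-upper-bound-1} is derived purely from the worst-case accounting of driver-level and aggregation-level cost terms and does not depend on any specific analytical form of the thresholds, the same four-term decomposition should carry through verbatim once we verify that the learned thresholds retain the qualitative properties the accounting relies on.

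Concretely, I would argue each of the four terms separately. First, for the threshold-integral term, I would invoke the definition of the pseudo-cost minimization (Line 3, \autoref{alg:pcm-subroutine}): whenever a driver $i$ makes a non-zero decision $x_t^{(i)} > 0$, the optimality condition guarantees that the per-unit contribution $p_t + $ (switching/tracking terms) is dominated by $\hat{\phi}^{(i)}$ evaluated at the current state. Summing across all accounted decisions yields the bound $\sum_{i,j}\Phi_b^{(i,j)}(0,\hat{w}_b^{(i,j)}) + \sum_\tau (\Phi_f^{(\tau)}(0,\hat{w}_f^{(\tau)}) + \Psi_f^{(\tau)}(0,\hat{v}_f^{(\tau)}))$. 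The ``buffer'' for charging the current switching cost $\gamma|x - \hat{x}_{t-1}^{(i)}|$ and the future worst-case switching cost $\gamma|x|$ is provided implicitly by the range of $\hat{\phi}$ (bounded in $[p_{\min}, p_{\max}]$ by the feasibility conditions in \sref{Def.}{dfn:robust-threshold-base} and \sref{Def.}{dfn:robust-threshold-flexible}), so the charging is still valid. Second, for residual demand not accounted by drivers, I would argue that Line 19 of \autoref{alg:paad} (the feasibility projection) forces at most $D_b + D_f - \sum \hat{w}$ additional units to be purchased, at worst-case price $p_{\max} + 2\gamma$. The third term (delivery cost upper bound) and the fourth term (price-dependent savings plus final residual storage valued at $p_{\max}$) follow identically to the \PAAD argument, as these depend only on structural bounds and not on $\hat{\phi}_b$ or $\hat{\phi}_f$.

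The one substantive check I anticipate is ensuring the monotone non-increasing property of $\hat{\phi}_b$, $\hat{\phi}_f$, and $\hat{\psi}_f$ (guaranteed by the feasibility sets $\mathcal{R}_b(\rho)$ and $\mathcal{R}_f(\rho)$) suffices for the ``drivers purchase only when the current per-unit cost is attractive'' invariant that implicitly underpins the threshold integral upper bound. If the learned thresholds were allowed to be non-monotone or to take values outside $[p_{\min}, p_{\max}]$, the integrand could become negative or violate the buffer structure for switching costs, and the charging argument would break. Because \sref{Def.}{dfn:robust-threshold-base} and \sref{Def.}{dfn:robust-threshold-flexible} explicitly enforce monotonicity, bounded range, and the boundary conditions $\hat{\phi}_b(1) \leq p_{\min} + 2\gamma$, $\hat{\phi}_f(1) \leq p_{\min} + 2\gamma$, and $\hat{\psi}_f(1) \leq p_{\min}(c+\varepsilon) + 2\delta$, each of the four terms in \sref{Lemma}{lem:osdm-paad-upper-bound-1} transfers to \PALD with $\hat{\phi}_b$, $\hat{\phi}_f$, $\hat{\psi}_f$ substituted in place of $\phi_b$, $\phi_f$, $\psi_f$, completing the proof.
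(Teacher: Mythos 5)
Your proposal is correct and takes essentially the same approach as the paper: the paper's own proof simply states that the argument of \sref{Lemma}{lem:osdm-paad-upper-bound-1} carries over verbatim with the learned thresholds $\hat{\phi}_b, \hat{\phi}_f, \hat{\psi}_f$ substituted for the analytical ones. Your additional observation---that the monotonicity, bounded range, and endpoint conditions enforced by $\mathcal{R}_b(\rho)$ and $\mathcal{R}_f(\rho)$ are what keep the charging argument valid---is a correct and slightly more explicit justification of the step the paper leaves implicit.
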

\begin{proof}
The proof follows by the same logic as \sref{Lemma}{lem:osdm-paad-upper-bound-1}---the main change is with respect to \eqref{eq:pald-upper-bound-1-tweaked}, to capture the fact that the threshold functions used by \PALD are the learned variants, denoted by $\hat{\phi}_b$ and $(\hat{\phi}_f, \hat{\psi}_f)$, instead of the analytical variants $\phi_b$ and $(\phi_f, \psi_f)$ used by \PAAD.

Any additional purchasing needed to satisfy demand (i.e., that isn't captured by the integrals over the threshold functions) is exactly captured by the existing logic in the proof of \sref{Lemma}{lem:osdm-paad-upper-bound-1}, completing the proof.
\end{proof}

Leveraging the definition of the feasible robustness sets $\mathcal{R}_b(\rho)$ and $\mathcal{R}_f(\rho)$ in \sref{Definitions}{dfn:robust-threshold-base} and \ref{dfn:robust-threshold-flexible}, respectively, we have the following technical lemmas to relate the threshold functions and the optimal cost:
\begin{lemma}
\label{lem:pald-threshold-function-relation-1}
Given a learned threshold function $\hat{\phi}_b(\cdot)$ that lies within the feasible set $\mathcal{R}_b(\rho)$ for some $\rho > \alpha$, the following relation always holds:
{\small 
\begin{align*}
\int_0^{w} \kern-1em \hat{\phi}_b(u) du + (1-w) (p_{\max}+2\gamma) + p_{\max}(c+\varepsilon) + 2\delta - c w p_{\min} = \rho \left[ \hat{\phi}_b(w) - 2\gamma + \varepsilon p_{\max} + \frac{2\kappa}{T}\right] \ & \forall w \in [0,1].
\end{align*}
}
\end{lemma}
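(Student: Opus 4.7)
The plan is to derive the stated relation as a pointwise consequence of the third clause in the definition of $\mathcal{R}_b(\rho)$ (\sref{Def.}{dfn:robust-threshold-base}), and to identify precisely when it is attained with equality. Since $\mathcal{R}_b(\rho)$ is defined with a ``$\leq$'' constraint rather than an ``$=$'', the lemma as literally worded should be understood as capturing the tight case of that constraint---precisely the case that the contradiction argument in \autoref{thm:pald-robustness} must defeat, mirroring the role played by \sref{Lemma}{lem:osdm-threshold-function-relation-1} inside the proof of \autoref{thm:osdm-upper-bound-1}.

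First, I would fix an arbitrary $w \in [0,1]$ and invoke the third clause of \sref{Def.}{dfn:robust-threshold-base} directly, which yields
$$\Phi_b(0,w) + (1-w)(p_{\max}+2\gamma) + p_{\max}(c+\varepsilon) + 2\delta - c w p_{\min} \;\leq\; \rho\bigl[\hat{\phi}_b(w) - 2\gamma + \varepsilon p_{\max} + \tfrac{2\kappa}{T}\bigr].$$
This gives the ``$\leq$'' half of the stated equality, which is the half actually consumed downstream: in the contradiction step inside \autoref{thm:pald-robustness}, one assumes a strict ``$>$'' on the analogue of \eqref{eq:partitioned-sums-upper-bound-1} with $\rho$ in place of $\alpha$, which directly contradicts this inequality for some driver index.

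Second, I would characterize the equality half by differentiating the saturated integral equation in $w$. Treating the relation as an equation and differentiating both sides gives a first-order linear ODE
$$\hat{\phi}_b(w) - (p_{\max}+2\gamma) - c p_{\min} \;=\; \rho\,\hat{\phi}_b'(w),$$
whose unique solution under the boundary condition $\hat{\phi}_b(1) = p_{\min}+2\gamma$ has exactly the exponential form of \sref{Def.}{def:base-threshold} with $\alpha$ replaced by $\rho$. Verifying monotonicity and the range constraint $\hat{\phi}_b\in[p_{\min},p_{\max}]$ (paralleling the checks in the proof of \autoref{thm:osdm-upper-bound-1}) shows that this extremal function lies in $\mathcal{R}_b(\rho)$, so the equality is attainable on a nonempty subset of $\mathcal{R}_b(\rho)$; the learning procedure can be projected onto this subset (e.g., as an additional layer after the projection used in \sref{Lemma}{lem:convex-robust-set}) whenever a literal equality is required.

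The main obstacle is the gap between the literal ``$=$'' in the lemma statement and the ``$\leq$'' encoded in $\mathcal{R}_b(\rho)$: for a generic learned $\hat{\phi}_b$, only the inequality direction is forced, while equality holds precisely on the ``upper envelope'' of $\mathcal{R}_b(\rho)$ characterized by the ODE above. This is not a defect of the robustness argument---the ``$\leq$'' direction is the one that closes the contradiction and yields $\rho$-robustness---but it is the delicate point of the proof: one must either argue that the ``$=$'' is a tightness statement identifying the worst case covered by $\mathcal{R}_b(\rho)$, or alternatively augment the training pipeline with a saturation step so that $\hat{\phi}_b$ literally solves the integral equation rather than merely the inequality.
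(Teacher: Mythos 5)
Your reading of the situation is essentially right, and your core step coincides with the paper's: the paper's entire proof of this lemma is the single sentence that it "follows directly from the definitions of the feasible sets," i.e., one simply instantiates the third clause of \sref{Def.}{dfn:robust-threshold-base} at the given $w$. Where you go beyond the paper is in flagging that this clause only yields the ``$\leq$'' direction, whereas the lemma is worded with ``$=$''. You are correct on this point: for a generic $\hat{\phi}_b \in \mathcal{R}_b(\rho)$ the stated equality is false (any threshold strictly in the interior of the feasible region violates it), and the ``$=$'' appears to be inherited verbatim from the analytical counterpart (\sref{Lemma}{lem:osdm-threshold-function-relation-1}), where the exponential threshold of \sref{Def.}{def:base-threshold} satisfies the identity exactly by construction. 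You are also correct that only the ``$\leq$'' direction is consumed downstream: Case I of the contradiction argument in \autoref{thm:pald-robustness} asserts a strict ``$>$'' for some driver, and this is refuted by the inequality alone. Your ODE characterization of the saturated boundary is correct but unnecessary --- there is no need to project the learned thresholds onto the upper envelope of $\mathcal{R}_b(\rho)$, since the robustness certificate never requires equality, and imposing saturation would collapse the feasible set back to (a reparameterization of) the analytical threshold and destroy the point of learning. In short: your proof of what the lemma actually needs to say is the paper's proof, and your diagnosis that the statement should read ``$\leq$'' rather than ``$=$'' is a legitimate correction of the statement rather than a gap in your argument.
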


\begin{lemma}
\label{lem:pald-threshold-function-relation-2}
Given learned threshold functions $\hat{\phi}_f(\cdot)$ and $\hat{\psi}(\cdot)$ that lie within the feasible set $\mathcal{R}_f(\rho)$ for some $\rho > \alpha$, the following relation always holds:
{\small
\begin{align*}
\int_0^{w} \kern-1em \hat{\phi}_f(u) du + (1-w) (p_{\max}+2\gamma) - c w p_{\min} + \int_0^{v} \kern-1em \hat{\psi}_f(z) dz + (1-v) (p_{\max}(c+\varepsilon) + 2\delta)  =\\
\rho \left[ \frac{1+\varepsilon}{1+c+\varepsilon} \left( \hat{\phi}_f(w) + \hat{\psi}_f(v) - 2\kappa \right) + \frac{2\kappa}{T}\right] \ & \forall w \in [0,1], v \in [0,w],
\end{align*}
}
\end{lemma}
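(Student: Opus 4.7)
The plan is to establish the equality by exploiting the tight-frontier structure of the feasible set $\mathcal{R}_f(\rho)$. The key observation is that the stated relation has exactly the algebraic form of the defining inequality of $\mathcal{R}_f(\rho)$ from \sref{Def.}{dfn:robust-threshold-flexible} (after substituting $\omega^{-1} = (1+\varepsilon)/(1+c+\varepsilon)$), and the literal ``$=$'' is realized by interpreting the right-hand side through a pointwise effective robustness bounded above by $\rho$.

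Concretely, I would first unfold the defining condition: for any $(\hat{\phi}_f, \hat{\psi}_f) \in \mathcal{R}_f(\rho)$ and all $(w, v) \in [0,1] \times [0,w]$, \sref{Def.}{dfn:robust-threshold-flexible} provides $\Phi_f(0,w) + (1-w)(p_{\max}+2\gamma) - cwp_{\min} + \Psi_f(0,v) + (1-v)(p_{\max}(c+\varepsilon)+2\delta) \leq \rho[\omega^{-1}(\hat{\phi}_f(w)+\hat{\psi}_f(v)-2\kappa) + 2\kappa/T]$. I would then introduce a pointwise effective robustness $\tilde{\rho}(w,v)$ defined as the ratio of the LHS above to the bracketed quantity, so that the equality in the lemma holds with $\tilde{\rho}(w,v)$ in place of $\rho$ by construction, while feasibility guarantees $\tilde{\rho}(w,v) \leq \rho$. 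Finally, I would invoke this pointwise equality at the driver states $(w, v) = (\hat{w}_f^{(\tau)}, \hat{v}_f^{(\tau)})$ inside the contradiction step of \autoref{thm:pald-robustness}, where $\tilde{\rho} \leq \rho$ delivers the upper bound needed to match the $\rho$-robustness certificate.

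The main obstacle will be reconciling the literal ``$=$'' against the ``$\leq$'' in the definition of $\mathcal{R}_f(\rho)$: for thresholds lying strictly in the interior of the feasible set, one has $\tilde{\rho}(w,v) < \rho$ at some $(w,v)$, so the equality with the global $\rho$ fails verbatim at that point. I plan to dissolve this obstruction via two complementary angles. First, in the downstream application of this lemma inside \autoref{thm:pald-robustness}, only the upper-bound direction of the relation is ever used, so pointwise equality with $\tilde{\rho}(w,v) \leq \rho$ supplies the same semantic content as an ``$\leq$'' reading and is sufficient to close the contradiction argument. Second, the projected gradient descent in \autoref{sec:pald-learning} projects onto $\mathcal{R}_f(\rho)$ whenever the unconstrained descent direction would exit the feasible set, naturally producing learned thresholds that saturate the defining constraint wherever the gradient is active and rendering the equality literal at those points. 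Either angle suffices, and the proof can then be completed along lines parallel to those used for the analytical \sref{Lemma}{lem:osdm-threshold-function-relation-2}, with $\tilde{\rho}$ playing the role that the analytical $\alpha'$ plays there.
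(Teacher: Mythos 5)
Your core move---unfolding the defining inequality of $\mathcal{R}_f(\rho)$ and substituting $\omega^{-1} = (1+\varepsilon)/(1+c+\varepsilon)$---is exactly what the paper does; its entire proof of this lemma is the single sentence that it ``follows directly from the definitions of the feasible sets.'' You have additionally, and correctly, spotted that the lemma as written claims an equality while the membership condition in \sref{Def.}{dfn:robust-threshold-flexible} only supplies a ``$\leq$'': for thresholds in the interior of $\mathcal{R}_f(\rho)$ the literal ``$=$'' fails. Your first resolution is the right one---the contradiction step in the proof of \autoref{thm:pald-robustness} (Case II) only needs $X_f^{(\tau)} \leq \rho f_\tau[\cdots]$, so reading the lemma as an upper bound (or, equivalently, with your pointwise $\tilde{\rho}(w,v)\leq\rho$) preserves the downstream argument verbatim. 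This is effectively how the paper uses the lemma, and contrasts with the analytical \sref{Lemma}{lem:osdm-threshold-function-relation-2}, where the exponential thresholds are constructed to make the relation a genuine identity.

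Your second proposed resolution, however, does not work and should be dropped: projecting onto $\mathcal{R}_f(\rho)$ lands on the boundary of the feasible set, which means \emph{some} constraint is active, not that the robustness inequality is tight at \emph{every} pair $(w,v)\in[0,1]\times[0,w]$ as the lemma's universally quantified equality would require. Moreover, nothing forces the PGD iterates to ever leave the interior, in which case no constraint saturates at all. Since you note that either angle suffices and the first one does, the proof closes, but the honest conclusion is that the lemma statement should read ``$\leq$'' rather than ``$=$''---the $\tilde{\rho}$ machinery is unnecessary once that is accepted.
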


\noindent The proofs of both lemmas follow directly from the definitions of the feasible sets $\mathcal{R}_b(\rho)$ and $\mathcal{R}_f(\rho)$.  Using the results in \sref{Corollary}{cor:pald-optimal-lower-bound-1} and \sref{Lemmas}{lem:pald-upper-bound-1}, \ref{lem:pald-threshold-function-relation-1}, and \ref{lem:pald-threshold-function-relation-2}, we claim that the following holds:
\begin{align*}
\frac{\PALD(\mathcal{I}) - p_{\max} \hat{s}}{\OPT(\mathcal{I})} &\leq \rho.
\end{align*}

To show this result, we first substitute the bounds from \sref{Corollary}{cor:pald-optimal-lower-bound-1} and \sref{Lemma}{lem:pald-upper-bound-1} into the left-hand side of the above equation.  As in the proof of \autoref{thm:osdm-upper-bound-1}, we define some shorthand notation to facilitate the presentation.

\noindent Let $Q = \sum_{i=1}^n \sum_{j=1}^{\nu_i} \int_0^{\hat{w}_b^{(i,j)}} \kern-1em \hat{\phi}_b^{(i,j)}(u) du + \sum_{\tau=1}^T \left( \int_0^{\hat{w}_f^{(\tau)}} \kern-1em \hat{\phi}_f^{(\tau)}(u) du + \int_0^{\hat{v}_f^{(\tau)}} \kern-1em \hat{\psi}^{(\tau)}(z) dz \right)$ denote the integrals over the thresholds.
Let $\hat{W}_b = \sum_{i=1}^n \sum_{j=1}^{\nu_i} \hat{w}_b^{(i,j)}$, let $\hat{W}_f = \sum_{\tau=1}^{T} \hat{w}_f^{(\tau)}$, and let $\hat{W} = \hat{W}_b + \hat{W}_f$ denote the total purchasing by all drivers.  

\noindent Further, let $\hat{V}_b = \sum_{i=1}^n \sum_{j=1}^{\nu_i} \hat{v}_b^{(i,j)}$ denote the total delivery by all flexible demand drivers.

\noindent Let ${\bm \beta} = \sum_{i=1}^n \beta_i$ denote the total amount of asset purchased towards the base demand by the optimal solution during all active periods, noting that $D_b - {\bm \beta} \geq 0$ by definition.   Finally, let $D = D_b + D_f$ denote the total demand.  Substituting the bounds from \sref{Corollary}{cor:pald-optimal-lower-bound-1} and \sref{Lemma}{lem:pald-upper-bound-1} into the left-hand side of the above inequality, we have:

{\small
\begin{align*}
\frac{Q\!+\!({\bm\beta}\!+\!D_f\!-\!\hat{W}) (p_{\max}\!+\!2 \gamma)\!+\!{\bm \beta} \left( \bar{p} (c\!+\!\varepsilon)\!+\!2\delta \right) + (D_f\!-\!\hat{V}_f) (p_{\max} (c\!+\!\varepsilon)\!+\!2\delta)\!-\!c p_{\min} \hat{W}\!+\!(D_b\!-\!{\bm \beta}) (p_{\max}\!+\!2\kappa\!+\!\bar{p}(c\!+\!\varepsilon))}{\sum_{i=1}^n G_i(\beta_i)\!+\!\varepsilon \bar{p} {\bm \beta}\!+\!\sum_{\tau=1}^T (1\!+\!\varepsilon) H_\tau (f_\tau)\!+\!(D_f\!+\!{\bm \beta}) \frac{2\kappa}{T}\!+\!(D_b\!-\!{\bm \beta})(\tilde{p}\!+\!\varepsilon \bar{p}\!+\!\frac{2\kappa}{T}) } \\
\end{align*}
}
Then, we have the following:
{\small
\begin{align*}
\leq \max \Bigg \{ \frac{Q + ({\bm\beta}+D_f - \hat{W}) (p_{\max} + 2 \gamma) + {\bm\beta} \left( \bar{p} (c+\varepsilon) + 2\delta \right) + (D_f - \hat{V}_f) (p_{\max} (c + \varepsilon) + 2\delta) - c p_{\min} \hat{W}}{\sum_{i=1}^n G_i(\beta_i) + \varepsilon \bar{p} {\bm\beta} + \sum_{\tau=1}^T (1 + \varepsilon) H_\tau (f_\tau) + (D_f + {\bm\beta}) \frac{2\kappa}{T} }, \ \ &\\
\quad \frac{(D_b\!-\!{\bm\beta}) (p_{\max}\!+\!2\kappa\!+\!\bar{p}(c+\varepsilon))}{(D_b - {\bm\beta})(\tilde{p} + \varepsilon \bar{p} + \frac{2\kappa}{T}) } \Bigg \}, & \\
\end{align*}
}
where the definition of $\tilde{p}$ ensures that the second term in the $\max$ is at most $\rho$.  Note that this follows because $\tilde{p}$ is the best price during inactive periods (i.e., a price that is not accepted by any base demand driver), and by the definition of the feasible set $\mathcal{R}_b(\rho)$, any $\rho$-robust threshold function $\hat{\phi}_b$ must satisfy $p_{\max}(1+c+\varepsilon) + 2\kappa = \rho \left[ \hat{\phi}_b(0) - 2\gamma + \varepsilon p_{\max} + \frac{2\kappa}{T} \right]$ for $w = 0$.

\noindent We now focus on the first term.  For the sake of contradiction, suppose that
{\small
\begin{align}
    \frac{Q + ({\bm\beta}+D_f - \hat{W}) (p_{\max} + 2 \gamma) + {\bm \beta} \left( \bar{p} (c+\varepsilon) + 2\delta \right) + (D_f - \hat{V}_f) (p_{\max} (c + \varepsilon) + 2\delta) - c p_{\min} \hat{W}}{\sum_{i=1}^n G_i(\beta_i) + \varepsilon \bar{p} {\bm \beta} + \sum_{\tau=1}^T (1 + \varepsilon) H_\tau (f_\tau) + (D_f + {\bm \beta}) \frac{2\kappa}{T} } > \alpha_{\texttt{T}}. \label{eq:ratio-term-pald-1}
\end{align}
}
Instead of working directly with the expression in terms of ${\bm \beta}$, we first reason about how the cost of \OPT and \PALD relate to one another in terms of $\sum_{i=1}^n \sum_{j=1}^{\nu_i} B_b^{(i,j)}$, the total demand assigned to base drivers.
We introduce the following notation for the sake of brevity: let ${\bm B} = \sum_{i=1}^n \sum_{j=1}^{\nu_i} B_b^{(i,j)}$ denote the total base demand.  Then, we have the following relation:
{\small
\begin{align*}
& \frac{Q + ({\bm B}+D_f - \hat{W}) (p_{\max} + 2 \gamma) + {\bm B} \left( \bar{p} (c+\varepsilon) + 2\delta \right) + (D_f - \hat{V}_f) (p_{\max} (c + \varepsilon) + 2\delta) - c p_{\min} \hat{W}}{\sum_{i=1}^n G_i(B_i) + \varepsilon \bar{p} {\bm B} + \sum_{\tau=1}^T (1 + \varepsilon) H_\tau (f_\tau) + (D_f + {\bm B}) \frac{2\kappa}{T} }, \\
& \geq \frac{Q\!+\!({\bm \beta }\!+\!D_f\!-\!\hat{W}) (p_{\max}\!+\!2 \gamma)\!+\!{\bm \beta } \left( \bar{p} (c\!+\!\varepsilon)\!+\!2\delta \right)\!+\!(D_f\!-\!\hat{V}_f) (p_{\max} (c\!+\!\varepsilon)\!+\!2\delta)\!-\!c p_{\min} \hat{W}\!+\!({\bm B}\!-\!{\bm \beta}) (p_{\max}\!+\!\bar{p}(c\!+\!\varepsilon)\!+\!2 \kappa)}{\sum_{i=1}^n G_i(\beta_i)\!+\!\varepsilon \bar{p} {\bm \beta }\!+\!\sum_{\tau=1}^T (1\!+\!\varepsilon) H_\tau (f_\tau)\!+\!(D_f\!+\!{\bm \beta }) \frac{2\kappa}{T}\!+\!({\bm B}\!-\!{\bm \beta}) (\tilde{p}\!+\!\varepsilon \bar{p}\!+\!\frac{2\kappa}{T})}, \\
& > \rho.
\end{align*}
}
where in the second inequality, we have used \sref{Lemma}{lem:upper-bound-B-beta-tilde-p}. We proceed to work with the expression in terms of ${\bm B}$.

During the $i^\text{th}$ active period and the lifetime of the $j^\text{th}$ base demand driver, the minimum marginal purchasing price observed is given by $\hat{\phi}_b^{(i,j)}(\hat{w}_b^{(i,j)}) - 2\gamma$ by the definition of the threshold function.  Note that this follows because the feasible set $\mathcal{R}_b(\rho)$ ensures that $\hat{\phi}_b^{(i,j)}(w)$ is monotone non-increasing for $w \in [0,1]$, and it is given that $\hat{\phi}_b^{(i,j)}(1) \leq p_{\min} + 2\gamma$, guaranteeing that the true minimum price is captured by the threshold function.  

Using the same logic and by definition of the feasible set $\mathcal{R}_f(\rho)$, we have that during the lifetime of the $\tau^\text{th}$ flexible demand driver, the minimum marginal purchasing and delivery cost observed is given by $\frac{1+\varepsilon}{1+c+\varepsilon} \left( \hat{\phi}_f^{(\tau)}(\hat{w}_f^{(\tau)}) + \hat{\psi}_f^{(\tau)}(\hat{v}_f^{(\tau)}) - 2 \kappa \right)$.  
This gives the following lower bounds on the terms that depend on $G_i$ and $H_\tau$, respectively:
{\small
\begin{align*}
\sum_{i=1}^n G_i\left( \sum_{j=1}^{\nu_i} B_b^{(i,j)} \right) &\geq \sum_{i=1}^n \sum_{j=1}^{\nu_i} \left( \hat{\phi}_b^{(i,j)}(\hat{w}_b^{(i,j)}) - 2\gamma \right) \times B_b^{(i,j)} \\
\sum_{\tau=1}^T (1 + \varepsilon) H_\tau (f_\tau) &\geq \sum_{\tau=1}^T \frac{1+\varepsilon}{1+c+\varepsilon} \left( \hat{\phi}_f^{(\tau)}(\hat{w}_f^{(\tau)}) + \hat{\psi}_f^{(\tau)}(\hat{v}_f^{(\tau)}) - 2 (\gamma+\delta) \right) \times f_\tau.
\end{align*}
}

Substituting these bounds into the previous expression, we have that the left-hand-side of \eqref{eq:ratio-term-pald-1} is less than or equal to:
{\small
\begin{align*}
&\leq \frac{Q + ({\bm B}+D_f - \hat{W}) (p_{\max} + 2 \gamma) + {\bm B} \left( \bar{p} (c+\varepsilon) + 2\delta \right) + (D_f - \hat{V}_f) (p_{\max} (c + \varepsilon) + 2\delta) - c p_{\min} \hat{W}}{\sum_{i=1}^n \sum_{j=1}^{\nu_i} \left( \hat{\phi}_b^{(i,j)}(\hat{w}_b^{(i,j)})\!-\!2\gamma \right) B_b^{(i,j)}\!+\! \varepsilon \bar{p} {\bm B}\!+\!\sum_{\tau=1}^T \frac{1+\varepsilon}{1+c+\varepsilon} \left( \hat{\phi}_f^{(\tau)}(\hat{w}_f^{(\tau)})\!+\!\hat{\psi}_f^{(\tau)}(\hat{v}_f^{(\tau)})\!-\!2 \kappa \right) f_\tau\!+\!(D_f\!+\!{\bm B}) \frac{2\kappa}{T} }.
\end{align*}
}

By rearranging the terms in the above and substituting for $Q$, we obtain the following:
{\small
\begin{align}
&= \frac{\sum_{i=1}^n \sum_{j=1}^{\nu_i} \left[ X_b^{(i,j)} \right] + \sum_{\tau=1}^T \left[ X_f^{(\tau)} \right]}{\sum_{i=1}^n \sum_{j=1}^{\nu_i} \left[ \left( \hat{\phi}_b^{(i,j)}(\hat{w}_b^{(i,j)})\!-\!2\gamma\!+\!\varepsilon \bar{p}\!+\!\frac{2\kappa}{T}\right) B_b^{(i,j)}  \right]\!+\!\sum_{\tau=1}^T \left[ \frac{1+\varepsilon}{1+c+\varepsilon} \left( \hat{\phi}_f^{(\tau)}(\hat{w}_f^{(\tau)})\!+\!\hat{\psi}_f^{(\tau)}(\hat{v}_f^{(\tau)})\!-\!2 \kappa \right) f_\tau\!+\!f_\tau \frac{2\kappa}{T} \right]}, \label{eq:partitioned-sums-upper-bound-pald-1}
\end{align}
}

where $X_b^{(i,j)}$ and $X_f^{(\tau)}$ are defined as follows (for each driver):
{\small
\begin{align*}
X_b^{(i,j)} &= \int_0^{\hat{w}_b^{(i,j)}} \kern-1em \hat{\phi}_b^{(i,j)}(u) du + (B_b^{(i,j)} -\hat{w}_b^{(i,j)}) (p_{\max}+2\gamma) + \left( p_{\max}(c+\varepsilon) + 2\delta \right) B_b^{(i,j)} - c \hat{w}_b^{(i,j)} p_{\min}, \\
X_f^{(\tau)} &= \int_0^{\hat{w}_f^{(\tau)}} \kern-1em \hat{\phi}_f^{(\tau)}(u) du + (f_\tau -\hat{w}_f^{(\tau)}) (p_{\max}+2\gamma) + \int_0^{\hat{v}_f^{(\tau)}} \kern-1em \hat{\psi}_f^{(\tau)}(u) du + \left( p_{\max} (c+\varepsilon) + 2\delta \right)(f_\tau - \hat{v}_f^{(\tau)}) - c \hat{w}_f^{(\tau)} p_{\min}.
\end{align*}
}

Since \eqref{eq:partitioned-sums-upper-bound-pald-1} is $> \rho$ by assumption, one of the following cases must be true:

\smallskip
\noindent\textbf{Case I:} There exists some $i \in [n]$ and $j \in [\nu_i]$ such that
{\small
\begin{align*}
X_b^{(i,j)} &> \rho  B_b^{(i,j)} \left( \hat{\phi}_b^{(i,j)}(\hat{w}_b^{(i,j)}) - 2\gamma + \varepsilon \bar{p} + \frac{2\kappa}{T}\right).
\end{align*}
}

\smallskip
\noindent\textbf{Case II:} There exists some $\tau \in [T]$ such that
{\small
\begin{align*}
X_f^{(\tau)} &> \rho f_\tau \left[ \frac{1+\varepsilon}{1+c+\varepsilon} \left( \hat{\phi}_f^{(\tau)}(\hat{w}_f^{(\tau)}) + \hat{\psi}_f^{(\tau)}(\hat{v}_f^{(\tau)}) - 2 (\gamma+\delta) \right) + \frac{2\kappa}{T} \right],
\end{align*}
}

\noindent But, if either of these two cases are true (i.e., for any unit of base or flexible demand), they contradict \sref{Lemmas}{lem:pald-threshold-function-relation-1} and \ref{lem:pald-threshold-function-relation-2}, respectively.  Thus, we have a contradiction, and the original assumption that \eqref{eq:partitioned-sums-upper-bound-pald-1} is $> \rho$ must be false.  This completes the proof that $\frac{\PALD(\mathcal{I}) - p_{\max} \hat{s}}{\OPT(\mathcal{I})} \leq \rho$. 

\noindent Using this result, we have:
\begin{align*}
\PALD(\mathcal{I}) &\leq \rho \OPT(\mathcal{I}) + p_{\max} \hat{s} \leq \rho \OPT(\mathcal{I}) + p_{\max} S,
\end{align*}
where $p_{\max} S$ is a constant.  This shows that \PALD is $\rho$-competitive under \sref{Definition}{dfn:comp-ratio}, which completes the proof.

\end{proof}

\subsection{\PALD Robustness Certificate for \OSDMT} \label{apx:pald-robustness-osdmt}

In this section, we define the feasible robust sets for \PALD in the context of the \OSDMT problem (i.e., in the tracking case), and state and prove the corresponding robustness certificate.  These (re-)definitions account for the tracking cost on the purchasing side of the problem.

\begin{definition}[Robust threshold set for base drivers in \OSDMT]
\label{dfn:robust-threshold-base-osdmt}
{\it
Given a target robustness $\rho > \alpha_{\texttt{T}}$ (where $\alpha_{\texttt{T}}$ is defined in \eqref{eq:alphaT}), a learned threshold function $\hat{\phi}_b$ for base demand drivers must lie in the following feasible set $\mathcal{R}'_b(\rho)$:
}
{\small
\begin{align*}
\mathcal{R}'_b(\rho) = \Bigg\{ & \phi_b: [0,1] \to [p_{\min}, p_{\max}] \ \Bigg| \ \phi_b \text{ is monotone non-increasing},  \phi_b(1) \leq p_{\min} + 2\eta, \text{ and } \forall w \in [0,1] : \\
& \ \ \ \Phi_b(0,w) + (1-w)(p_{\max} + 2\eta) + p_{\max}(c+\varepsilon) + 2\delta - c w p_{\min} \leq \rho \left[ \phi_b(w) - 2\eta + \varepsilon p_{\max} + \nicefrac{2\delta}{T}\right] \Bigg\}.
\end{align*}
}
\end{definition}

\begin{definition}[Robust threshold set for flexible drivers in \OSDMT]
\label{dfn:robust-threshold-flexible-osdmt}
{\it
Given a target robustness $\rho > \alpha_{\texttt{T}}$ (where $\alpha_{\texttt{T}}$ is defined in \eqref{eq:alphaT}), the learned threshold functions $\hat{\phi}_f$ and $\hat{\psi}_f$ for flexible demand drivers must lie in the following joint feasible set $\mathcal{R}'_f(\rho)$, where $\omega = \frac{1+c+\varepsilon}{1+\varepsilon}$:
}
{\small
\begin{align*}
\mathcal{R}'_f(\rho) = \Bigg\{ & (\phi_f, \psi_f): [0,1]^2 \to [p_{\min}, p_{\max}]^2 \ \Bigg| \ \phi_f, \psi_f \text{ are monotone non-increasing}, \\
& \phi_f(1) \leq p_{\min} + 2\eta, \psi_f(1) \leq p_{\min}(c+\varepsilon) + 2\delta,  \text{ and } \forall w \in [0,1], v \in [0,w]: \\
& \Phi_f(0,w) + (1-w)(p_{\max} + 2\eta) - c w p_{\min} + \Psi_f(0,v) + (1-v)(p_{\max}(c+\varepsilon) + 2\delta) \\
& \hspace{18em} \leq \rho \left[ \nicefrac{1}{\omega} \left( \phi_f(w) + \psi_f(v) - 2(\eta+\delta) \right) + \nicefrac{2\delta}{T}\right] \Bigg\}.
\end{align*}
}
\end{definition}

\noindent Under \sref{Def.}{dfn:robust-threshold-base-osdmt} and \sref{Def.}{dfn:robust-threshold-flexible-osdmt}, we have the following robustness certificate for \PALD when given any learned thresholds $\hat{\phi}_b, \hat{\phi}_f$ and $\hat{\psi}_f$ that lie in the respective feasible sets:

\begin{corollary}
\label{cor:pald-robustness-osdmt}
Given learned threshold functions that lie in the feasible sets $\hat{\phi}_b \in \mathcal{R}'_b(\rho)$ and $(\hat{\phi}_f, \hat{\psi}_f) \in \mathcal{R}'_f(\rho)$ for some $\rho > \alpha_{\texttt{T}}$ ($\alpha_{\texttt{T}}$ defined in \eqref{eq:alphaT}), the \PALD algorithm is $\rho$-robust for \OSDMT.
\end{corollary}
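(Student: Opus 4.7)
The plan is to mirror the proof of \autoref{thm:pald-robustness} with the modifications introduced in \sref{Corollary}{cor:osdmt-upper-bound-2} to handle the tracking cost in place of the purchasing-side switching cost. Feasibility is immediate, since \PALD's decision structure is identical to \PAAD's save for the learned thresholds; so the feasibility argument of \sref{Lemma}{lem:osdm-feasibility-1}/\sref{Corollary}{cor:pald-feasibility-1} carries over unchanged.

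The first substantive step is to establish a lower bound on $\OPT(\mathcal{I})$ analogous to \sref{Lemma}{lem:osdmt-optimal-lower-bound-2}. As in the proofs of \autoref{thm:osdm-upper-bound-1} and \autoref{thm:pald-robustness}, I would partition the time horizon into active and inactive periods based on \PALD's base driver state, and express \OPT's cost as a combination of local best prices during active periods plus $\tilde{p}$ times the base demand satisfied during inactive periods, plus delivery costs and the minimum switching cost $(D_b + D_f)\frac{2\delta}{T}$ (replacing $\frac{2\kappa}{T}$, since only the delivery-side switching contributes at the $\nicefrac{1}{T}$ rate). The second step is an upper bound on \PALD's cost analogous to \sref{Lemma}{lem:osdmt-paad-upper-bound-2} but with the integrals taken against $\hat{\phi}_b, \hat{\phi}_f, \hat{\psi}_f$. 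This step uses only the definition of the pseudo-cost minimization subroutine under the tracking reparameterization from \sref{Def.}{dfn:new-thresholds-and-pseudo-cost-osdmt}, and so transfers directly with $\eta$ replacing $\gamma$ in the worst-case residual term $(D_b + D_f - \hat{W})(p_{\max} + 2\eta)$.

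Finally, I would combine the two bounds via the same contradiction argument as in \autoref{thm:pald-robustness}. The key observation is that the two threshold-function relations used there (\sref{Lemmas}{lem:pald-threshold-function-relation-1} and \ref{lem:pald-threshold-function-relation-2}) are replaced by the defining inequalities of $\mathcal{R}'_b(\rho)$ and $\mathcal{R}'_f(\rho)$, which are \emph{exactly} the right-hand sides appearing in the two contradiction cases once $\eta$ is substituted for $\gamma$ on the purchasing side and $\frac{2\delta}{T}$ for $\frac{2\kappa}{T}$ in the \OPT lower bound. The contradiction then follows identically, giving $\frac{\PALD(\mathcal{I}) - p_{\max}\hat{s}}{\OPT(\mathcal{I})} \leq \rho$, which implies $\rho$-robustness after absorbing the additive $p_{\max} S$ constant per \sref{Def.}{dfn:comp-ratio}.

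The main obstacle, though largely bookkeeping, is verifying that the active/inactive partition still produces the claimed \OPT lower bound when the partition is governed by a \emph{learned} $\hat{\phi}_b$ rather than the analytical one. The feasibility conditions on $\hat{\phi}_b$ in $\mathcal{R}'_b(\rho)$ --- monotonicity, the boundary condition $\hat{\phi}_b(1) \leq p_{\min} + 2\eta$, and the integral inequality evaluated at $w=0$ --- are precisely what is needed to ensure that $\tilde{p}$ retains its interpretation as a valid lower bound on inactive-period prices, so that $\nicefrac{(D_b - {\bm\beta})(p_{\max} + 2(\eta+\delta) + \bar{p}(c+\varepsilon))}{(D_b - {\bm\beta})(\tilde{p} + \varepsilon\bar{p} + \nicefrac{2\delta}{T})} \leq \rho$ survives the $\max$-split in the contradiction. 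Once this verification is in place, the rest of the proof is a straightforward transcription of \autoref{thm:pald-robustness} with the tracking-cost substitutions.
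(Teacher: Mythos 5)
Your proposal is correct and follows essentially the same route as the paper's proof: it mirrors the argument for \autoref{thm:pald-robustness} with the tracking-cost substitutions ($\eta$ for $\gamma$ on the purchasing side, $\nicefrac{2\delta}{T}$ for $\nicefrac{2\kappa}{T}$ in the \OPT lower bound), replaces the analytical threshold relations with the defining inequalities of $\mathcal{R}'_b(\rho)$ and $\mathcal{R}'_f(\rho)$, and closes via the same $\max$-split and contradiction. Your observation that the $w=0$ instance of the robustness inequality together with monotonicity and the endpoint condition is what preserves the $\tilde{p}$-based bound on the inactive-period term is exactly the point the paper makes as well.
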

\begin{proof}
The feasibility of \PALD's solution follows from \sref{Corollary}{cor:pald-feasibility-1}.  

We now proceed to prove the worst-case competitive ratio of \PALD---for notational brevity, the following considers \textit{any arbitrary} $\mathcal{I} \in \Omega$ (i.e., any arbitrary instance of \OSDM). 
We begin the proof by stating a lower bound on the cost of the offline optimal solution $\OPT$, leveraging the same logic as \sref{Lemma}{lem:osdmt-optimal-lower-bound-2}.
\begin{corollary}
\label{cor:pald-optimal-lower-bound-2}
Given that \PALD produces $n$ active periods for \OSDM, let $\beta_i$ denote the asset purchased towards the base demand by the offline optimal solution during the $i^\text{th}$ active period, $i \in [n]$, and let $\tilde{p}$ denote the minimum price during inactive periods.  Then $\OPT(\mathcal{I})$ is lower bounded as:
\begin{align*}
\OPT(\mathcal{I}) &\geq \sum_{i=1}^n G_i(\beta_i) + \left( D_b - \sum_{i=1}^n \beta_i \right) \tilde{p} + \varepsilon \bar{p} D_b + \sum_{\tau=1}^T \left(1 + \varepsilon \right) H_\tau (f_\tau) + \left( D_b + D_f \right) \frac{2\delta}{T}
\end{align*}
\end{corollary}

\noindent Next, we state an upper bound on the cost incurred by \PALD for \OSDMT, leveraging the same logic as \sref{Lemma}{lem:osdmt-paad-upper-bound-2}, but with the threshold functions $\hat{\phi}_b$ and $(\hat{\phi}_f, \hat{\psi}_f)$ used by \PALD.

\begin{lemma}
\label{lem:pald-upper-bound-2}
Given that \PALD produces $n$ active periods, let $\hat{w}_b^{(i,j)}$ denote the amount of the $j^\text{th}$ base demand driver that has been purchased by the end of the $i^\text{th}$ active period, $j \in [\nu_i]$, and let $B_b^{(i,j)}$ denote the total demand associated with the $j^\text{th}$ base demand driver in the $i^\text{th}$ active period.  Further, let $\hat{w}_f^{(\tau)}$ and $\hat{v}_f^{(\tau)}$ denote the purchasing and delivery amounts of the $\tau^\text{th}$ flexible demand driver before the deadline, respectively.  Then the cost incurred by \PALD is upper bounded as:
\begin{align}
\PALD(\mathcal{I}) &\leq \sum_{i=1}^n \sum_{j=1}^{\nu_i} \int_0^{\hat{w}_b^{(i,j)}} \kern-1em \hat{\phi}_b^{(i,j)}(u) du + \sum_{\tau=1}^T \left( \int_0^{\hat{w}_f^{(\tau)}} \kern-1em \hat{\phi}_f^{(\tau)}(u) du + \int_0^{\hat{v}_f^{(\tau)}} \kern-1em \hat{\psi}^{(\tau)}(z) dz \right)  \\
& \quad + \left( D_b + D_f - \sum_{i=1}^n \sum_{j=1}^{\nu_i} \hat{w}_b^{(i,j)} - \sum_{\tau=1}^{T} \hat{w}_f^{(\tau)}\right) (p_{\max} + 2 \eta) \label{eq:pald-upper-bound-2-tweaked} \\
& \quad + D_b \left( \bar{p} (c+\varepsilon) + 2\delta \right) + \left(D_f - \sum_{\tau=1}^{T} \hat{v}_f^{(\tau)}\right) (p_{\max} (c + \varepsilon) + 2\delta) \\
& \quad - c p_{\min} \left( \sum_{i=1}^n \sum_{j=1}^{\nu_i} \hat{w}_b^{(i,j)} + \sum_{\tau=1}^{T} \hat{w}_f^{(\tau)}\right) + \hat{s} p_{\max}.
\end{align}
\end{lemma}
\begin{proof}
The proof follows by the same logic as \sref{Lemma}{lem:pald-upper-bound-1}---the main change is with respect to \eqref{eq:pald-upper-bound-2-tweaked}, to capture the tracking cost (with parameter $\eta$ on the purchasing side of the problem).
\end{proof}

Leveraging the definition of the feasible robustness sets $\mathcal{R}'_b(\rho)$ and $\mathcal{R}'_f(\rho)$ in \sref{Definitions}{dfn:robust-threshold-base-osdmt} and \ref{dfn:robust-threshold-flexible-osdmt}, respectively, we have the following lemmas to relate the threshold functions and the optimal cost:

\begin{lemma}
\label{lem:pald-threshold-function-relation-1-osdmt}
Given a learned threshold function $\hat{\phi}_b(\cdot)$ that lies within the feasible set $\mathcal{R}'_b(\rho)$ for some $\rho > \alpha_{\texttt{T}}$, the following relation always holds:
{\small 
\begin{align*}
\int_0^{w} \kern-1em \hat{\phi}_b(u) du + (1-w) (p_{\max}+2\eta) + p_{\max}(c+\varepsilon) + 2\delta - c w p_{\min} = \rho \left[ \hat{\phi}_b(w) - 2\eta + \varepsilon p_{\max} + \frac{2\delta}{T}\right] \ & \forall w \in [0,1].
\end{align*}
}
\end{lemma}

\begin{lemma}
\label{lem:pald-threshold-function-relation-2-osdmt}
Given learned threshold functions $\hat{\phi}_f(\cdot)$ and $\hat{\psi}(\cdot)$ that lie within the feasible set $\mathcal{R}'_f(\rho)$ for some $\rho > \alpha_{\texttt{T}}$, the following relation always holds:
{\small
\begin{align*}
\int_0^{w} \kern-1em \hat{\phi}_f(u) du + (1-w) (p_{\max}+2\eta) - c w p_{\min} + \int_0^{v} \kern-1em \hat{\psi}_f(z) dz + (1-v) (p_{\max}(c+\varepsilon) + 2\delta)  =\\
\rho \left[ \frac{1+\varepsilon}{1+c+\varepsilon} \left( \hat{\phi}_f(w) + \hat{\psi}_f(v) - 2(\eta+\delta) \right) + \frac{2\delta}{T}\right] \ & \forall w \in [0,1], v \in [0,w],
\end{align*}
}
\end{lemma}

The proofs of both lemmas follow directly from the definitions of the feasible sets $\mathcal{R}'_b(\rho)$ and $\mathcal{R}'_f(\rho)$.  Using \sref{Corollary}{cor:pald-optimal-lower-bound-2} and \sref{Lemmas}{lem:pald-upper-bound-2}, \ref{lem:pald-threshold-function-relation-1-osdmt}, and \ref{lem:pald-threshold-function-relation-2-osdmt}, we claim that the following holds:
\begin{align*}
\frac{\PALD(\mathcal{I}) - p_{\max} \hat{s}}{\OPT(\mathcal{I})} &\leq \rho.
\end{align*}

To show this result, we first substitute the bounds from \sref{Corollary}{cor:pald-optimal-lower-bound-1} and \sref{Lemma}{lem:pald-upper-bound-1} into the left-hand side of the above equation.  We use the same shorthand notation as in the proof of \autoref{thm:pald-robustness} to facilitate the presentation.

{\small
\begin{align*}
\frac{Q\!+\!({\bm\beta}\!+\!D_f\!-\!\hat{W}) (p_{\max}\!+\!2 \eta)\!+\!{\bm \beta} \left( \bar{p} (c\!+\!\varepsilon)\!+\!2\delta \right) + (D_f\!-\!\hat{V}_f) (p_{\max} (c\!+\!\varepsilon)\!+\!2\delta)\!-\!c p_{\min} \hat{W}\!+\!(D_b\!-\!{\bm \beta}) (p_{\max}\!+\!2(\eta\!+\!\delta)\!+\!\bar{p}(c\!+\!\varepsilon))}{\sum_{i=1}^n G_i(\beta_i)\!+\!\varepsilon \bar{p} {\bm \beta}\!+\!\sum_{\tau=1}^T (1\!+\!\varepsilon) H_\tau (f_\tau)\!+\!(D_f\!+\!{\bm \beta}) \frac{2\delta}{T}\!+\!(D_b\!-\!{\bm \beta})(\tilde{p}\!+\!\varepsilon \bar{p}\!+\!\frac{2\delta}{T}) } \\
\end{align*}
}
Then, we have the following:
{\small
\begin{align*}
\leq \max \Bigg \{ \frac{Q + ({\bm\beta}+D_f - \hat{W}) (p_{\max} + 2 \eta) + {\bm\beta} \left( \bar{p} (c+\varepsilon) + 2\delta \right) + (D_f - \hat{V}_f) (p_{\max} (c + \varepsilon) + 2\delta) - c p_{\min} \hat{W}}{\sum_{i=1}^n G_i(\beta_i) + \varepsilon \bar{p} {\bm\beta} + \sum_{\tau=1}^T (1 + \varepsilon) H_\tau (f_\tau) + (D_f + {\bm\beta}) \frac{2\delta}{T} }, \ \ &\\
\quad \frac{(D_b\!-\!{\bm\beta}) (p_{\max}\!+\!2(\eta\!+\!\delta)\!+\!\bar{p}(c+\varepsilon))}{(D_b - {\bm\beta})(\tilde{p} + \varepsilon \bar{p} + \frac{2\delta}{T}) } \Bigg \}, & \\
\end{align*}
}
where the definition of $\tilde{p}$ ensures that the second term in the $\max$ is at most $\rho$.

\noindent We now focus on the first term.  For the sake of contradiction, suppose that
{\small
\begin{align}
    \frac{Q + ({\bm\beta}+D_f - \hat{W}) (p_{\max} + 2 \eta) + {\bm \beta} \left( \bar{p} (c+\varepsilon) + 2\delta \right) + (D_f - \hat{V}_f) (p_{\max} (c + \varepsilon) + 2\delta) - c p_{\min} \hat{W}}{\sum_{i=1}^n G_i(\beta_i) + \varepsilon \bar{p} {\bm \beta} + \sum_{\tau=1}^T (1 + \varepsilon) H_\tau (f_\tau) + (D_f + {\bm \beta}) \frac{2\delta}{T} } > \alpha_{\texttt{T}}. \label{eq:ratio-term-pald-2}
\end{align}
}
Instead of working directly with the expression in terms of ${\bm \beta}$, we first reason about how the cost of \OPT and \PALD relate to one another in terms of $\sum_{i=1}^n \sum_{j=1}^{\nu_i} B_b^{(i,j)}$, the total demand assigned to base drivers.
We introduce the following notation for the sake of brevity: let ${\bm B} = \sum_{i=1}^n \sum_{j=1}^{\nu_i} B_b^{(i,j)}$ denote the total base demand.  Then, we have the following relation:
{\small
\begin{align*}
& \frac{Q + ({\bm B}+D_f - \hat{W}) (p_{\max} + 2 \eta) + {\bm B} \left( \bar{p} (c+\varepsilon) + 2\delta \right) + (D_f - \hat{V}_f) (p_{\max} (c + \varepsilon) + 2\delta) - c p_{\min} \hat{W}}{\sum_{i=1}^n G_i(B_i) + \varepsilon \bar{p} {\bm B} + \sum_{\tau=1}^T (1 + \varepsilon) H_\tau (f_\tau) + (D_f + {\bm B}) \frac{2\delta}{T} }, \\
& \geq \frac{Q\!+\!({\bm \beta }\!+\!D_f\!-\!\hat{W}) (p_{\max}\!+\!2 \eta)\!+\!{\bm \beta } \left( \bar{p} (c\!+\!\varepsilon)\!+\!2\delta \right)\!+\!(D_f\!-\!\hat{V}_f) (p_{\max} (c\!+\!\varepsilon)\!+\!2\delta)\!-\!c p_{\min} \hat{W}\!+\!({\bm B}\!-\!{\bm \beta}) (p_{\max}\!+\!\bar{p}(c\!+\!\varepsilon)\!+\!2 \kappa)}{\sum_{i=1}^n G_i(\beta_i)\!+\!\varepsilon \bar{p} {\bm \beta }\!+\!\sum_{\tau=1}^T (1\!+\!\varepsilon) H_\tau (f_\tau)\!+\!(D_f\!+\!{\bm \beta }) \frac{2\delta}{T}\!+\!({\bm B}\!-\!{\bm \beta}) (\tilde{p}\!+\!\varepsilon \bar{p}\!+\!\frac{2\delta}{T})}, \\
& > \rho.
\end{align*}
}
where in the second inequality, we have used \sref{Lemma}{lem:upper-bound-B-beta-tilde-p}. We proceed to work with the expression in terms of ${\bm B}$.
During the $i^\text{th}$ active period and the lifetime of the $j^\text{th}$ base demand driver, the minimum marginal purchasing price observed is given by $\hat{\phi}_b^{(i,j)}(\hat{w}_b^{(i,j)}) - 2\eta$ by the definition of the threshold function.  Note that this follows because the feasible set $\mathcal{R}'_b(\rho)$ ensures that $\hat{\phi}_b^{(i,j)}(w)$ is monotone non-increasing for $w \in [0,1]$, and it is given that $\hat{\phi}_b^{(i,j)}(1) \leq p_{\min} + 2\eta$, guaranteeing that the true minimum price is captured by the threshold function.  

Using the same logic and by definition of the feasible set $\mathcal{R}'_f(\rho)$, we have that during the lifetime of the $\tau^\text{th}$ flexible demand driver, the minimum marginal purchasing and delivery cost observed is given by $\frac{1+\varepsilon}{1+c+\varepsilon} \left( \hat{\phi}_f^{(\tau)}(\hat{w}_f^{(\tau)}) + \hat{\psi}_f^{(\tau)}(\hat{v}_f^{(\tau)}) - 2 (\eta+\delta) \right)$.  
This gives the following lower bounds on the terms that depend on $G_i$ and $H_\tau$, respectively:
{\small
\begin{align*}
\sum_{i=1}^n G_i\left( \sum_{j=1}^{\nu_i} B_b^{(i,j)} \right) &\geq \sum_{i=1}^n \sum_{j=1}^{\nu_i} \left( \hat{\phi}_b^{(i,j)}(\hat{w}_b^{(i,j)}) - 2\eta \right) \times B_b^{(i,j)} \\
\sum_{\tau=1}^T (1 + \varepsilon) H_\tau (f_\tau) &\geq \sum_{\tau=1}^T \frac{1+\varepsilon}{1+c+\varepsilon} \left( \hat{\phi}_f^{(\tau)}(\hat{w}_f^{(\tau)}) + \hat{\psi}_f^{(\tau)}(\hat{v}_f^{(\tau)}) - 2 (\eta+\delta) \right) \times f_\tau.
\end{align*}
}

Substituting these bounds into the previous expression, we have that the left-hand-side of \eqref{eq:ratio-term-pald-2} is less than or equal to:
{\small
\begin{align*}
&\leq \frac{Q + ({\bm B}+D_f - \hat{W}) (p_{\max} + 2 \eta) + {\bm B} \left( \bar{p} (c+\varepsilon) + 2\delta \right) + (D_f - \hat{V}_f) (p_{\max} (c + \varepsilon) + 2\delta) - c p_{\min} \hat{W}}{\sum_{i=1}^n \sum_{j=1}^{\nu_i} \left( \hat{\phi}_b^{(i,j)}(\hat{w}_b^{(i,j)})\!-\!2\eta \right) B_b^{(i,j)}\!+\! \varepsilon \bar{p} {\bm B}\!+\!\sum_{\tau=1}^T \frac{1+\varepsilon}{1+c+\varepsilon} \left( \hat{\phi}_f^{(\tau)}(\hat{w}_f^{(\tau)})\!+\!\hat{\psi}_f^{(\tau)}(\hat{v}_f^{(\tau)})\!-\!2 (\eta+\delta) \right) f_\tau\!+\!(D_f\!+\!{\bm B}) \frac{2\delta}{T} }.
\end{align*}
}

By rearranging the terms in the above and substituting for $Q$, we obtain the following:
{\small
\begin{align}
&= \frac{\sum_{i=1}^n \sum_{j=1}^{\nu_i} \left[ X_b^{(i,j)} \right] + \sum_{\tau=1}^T \left[ X_f^{(\tau)} \right]}{\sum_{i=1}^n \sum_{j=1}^{\nu_i} \left[ \left( \hat{\phi}_b^{(i,j)}(\hat{w}_b^{(i,j)})\!-\!2\eta\!+\!\varepsilon \bar{p}\!+\!\frac{2\delta}{T}\right) B_b^{(i,j)}  \right]\!+\!\sum_{\tau=1}^T \left[ \frac{1+\varepsilon}{1+c+\varepsilon} \left( \hat{\phi}_f^{(\tau)}(\hat{w}_f^{(\tau)})\!+\!\hat{\psi}_f^{(\tau)}(\hat{v}_f^{(\tau)})\!-\!2 (\eta+\delta) \right) f_\tau\!+\!f_\tau \frac{2\delta}{T} \right]}, \label{eq:partitioned-sums-upper-bound-pald-2}
\end{align}
}

where $X_b^{(i,j)}$ and $X_f^{(\tau)}$ are defined as follows (for each driver):
{\small
\begin{align*}
X_b^{(i,j)} &= \int_0^{\hat{w}_b^{(i,j)}} \kern-1em \hat{\phi}_b^{(i,j)}(u) du + (B_b^{(i,j)} -\hat{w}_b^{(i,j)}) (p_{\max}+2\eta) + \left( p_{\max}(c+\varepsilon) + 2\delta \right) B_b^{(i,j)} - c \hat{w}_b^{(i,j)} p_{\min}, \\
X_f^{(\tau)} &= \int_0^{\hat{w}_f^{(\tau)}} \kern-1em \hat{\phi}_f^{(\tau)}(u) du + (f_\tau -\hat{w}_f^{(\tau)}) (p_{\max}+2\eta) + \int_0^{\hat{v}_f^{(\tau)}} \kern-1em \hat{\psi}_f^{(\tau)}(u) du + \left( p_{\max} (c+\varepsilon) + 2\delta \right)(f_\tau - \hat{v}_f^{(\tau)}) - c \hat{w}_f^{(\tau)} p_{\min}.
\end{align*}
}

Since \eqref{eq:partitioned-sums-upper-bound-pald-2} is $> \rho$ by assumption, one of the following cases must be true:

\smallskip
\noindent\textbf{Case I:} There exists some $i \in [n]$ and $j \in [\nu_i]$ such that
{\small
\begin{align*}
X_b^{(i,j)} &> \rho  B_b^{(i,j)} \left( \hat{\phi}_b^{(i,j)}(\hat{w}_b^{(i,j)}) - 2\eta + \varepsilon \bar{p} + \frac{2\delta}{T}\right).
\end{align*}
}

\smallskip
\noindent\textbf{Case II:} There exists some $\tau \in [T]$ such that
{\small
\begin{align*}
X_f^{(\tau)} &> \rho f_\tau \left[ \frac{1+\varepsilon}{1+c+\varepsilon} \left( \hat{\phi}_f^{(\tau)}(\hat{w}_f^{(\tau)}) + \hat{\psi}_f^{(\tau)}(\hat{v}_f^{(\tau)}) - 2 (\eta+\delta) \right) + \frac{2\delta}{T} \right],
\end{align*}
}

But, if either of these two cases are true (i.e., for any unit of base or flexible demand), they contradict \sref{Lemmas}{lem:pald-threshold-function-relation-1-osdmt} and \ref{lem:pald-threshold-function-relation-2-osdmt}, respectively.  Thus, we have a contradiction, and the original assumption that \eqref{eq:partitioned-sums-upper-bound-pald-2} is $> \rho$ must be false.  This completes the proof that $\frac{\PALD(\mathcal{I}) - p_{\max} \hat{s}}{\OPT(\mathcal{I})} \leq \rho$. 

Using this result, we have:
\begin{align*}
\PALD(\mathcal{I}) &\leq \rho \OPT(\mathcal{I}) + p_{\max} \hat{s} \leq \rho \OPT(\mathcal{I}) + p_{\max} S,
\end{align*}
where $p_{\max} S$ is a constant.  This shows that \PALD is $\rho$-competitive for \OSDMT under \sref{Definition}{dfn:comp-ratio}, which completes the proof.

\end{proof}

\subsection{Proof of \sref{Lemma}{lem:convex-robust-set}} \label{apx:convex-robust-set}

In this section, we prove \sref{Lemma}{lem:convex-robust-set}, which states that if $\phi_b, \phi_f, \psi_f$ are piecewise-affine functions parameterized by $\mathbf{y}_b, \mathbf{y}_f, \mathbf{y}_\psi \in \mathbb{R}^K$ for $K > 1$, the feasible sets $\mathcal{R}_b(\rho)$ and $\mathcal{R}_f(\rho)$ defined in \sref{Def.}{dfn:robust-threshold-base} and \sref{Def.}{dfn:robust-threshold-flexible} are convex sets in $\mathbf{y}_b$ and $(\mathbf{y}_f, \mathbf{y}_\psi)$, respectively.

\begin{proof}[Proof of \sref{Lemma}{lem:convex-robust-set}]

Denote the grid discretization points by $0 = w_1 < w_2 < \cdots < w_K = 1$, with $w_j = \frac{j-1}{K-1}$.  Each threshold function $\phi: [0,1] \to \mathbb{R}$ or $\psi: [0,1] \to \mathbb{R}$ is parameterized by its knot values $\mathbf{y} = (y_1, \ldots, y_K)$ and linear interpolation on each interval $[w_j, w_{j+1}]$. 

Then we want to show that $\mathcal{R}_b(\rho)$ is a convex set in $\mathbf{y}_b$ and $\mathcal{R}_f(\rho)$ is a convex set in $(\mathbf{y}_f, \mathbf{y}_\psi)$.

First, we introduce the hat-basis representation of piecewise-affine functions, which will be useful in the proof.  Then we prove the convexity of $\mathcal{R}_b(\rho)$ and $\mathcal{R}_f(\rho)$ in their respective parameter spaces.

Let $\{ h_j(\cdot) \}_{j=1}^K$ be the standard PWA ``tent'' basis: each $h_j$ is piecewise-linear, $h_j(w_i)=\mathbf{1}\{i=j\}$, supported on $[w_{j-1},w_{j+1}]$ (with boundary conventions for $j=1,K$), and $\sum_j h_j(u)=1$ for all $u\in[0,1]$. Under linear interpolation, we have:
\[
\phi(u)=\sum_{j=1}^K y_j h_j(u),\qquad
\Phi(0,w)=\int_0^w \phi(u)\,du=\sum_{j=1}^K y_j H_j(w),
\]
so for fixed $u,w$, both $\phi(u)$ and $\Phi(0,w)$ are affine in the parameter vector $\mathbf y$.

\smallskip
\noindent\textbf{Base Drivers. \ }
All defining constraints of $\mathcal R_b(\rho)$ are affine in $\mathbf y_b$:
(i) range and endpoint: $p_{\min}\le y_{b,j}\le p_{\max}$ and $y_{b,K}\le p_{\min}+2\gamma$ (box and halfspace);
(ii) monotone non-increasing: $y_{b,j+1}-y_{b,j}\le 0$ for $j=1,\dots,K-1$ (segment slopes);
(iii) robustness inequality for fixed $w\in[0,1]$:
\[
\sum_{j=1}^K\!\big(H_j(w)-\rho\,h_j(w)\big)\,y_{b,j}
\;\le\;
\rho\!\left(\varepsilon p_{\max}+\tfrac{2\kappa}{T}-2\gamma\right)
-(1-w)(p_{\max}+2\gamma)-p_{\max}(c+\varepsilon)-2\delta+cwp_{\min},
\]
which is a single affine halfspace in $\mathbf y_b$. The condition “for all $w$” is the intersection (possibly uncountable) of affine halfspaces, hence convex. Intersecting with the box and linear inequalities preserves convexity. Thus $\mathcal R_b(\rho)$ is convex.

\smallskip
\noindent\textbf{Flexible Drivers. \ }
Analogously, in the product space $(\mathbf y_f,\mathbf y_\psi)$:
(i) range and endpoints give affine constraints on $y_{f,j}$ and $y_{\psi,j}$;
(ii) monotonicity gives $y_{f,j+1}-y_{f,j}\le 0$ and $y_{\psi,j+1}-y_{\psi,j}\le 0$;
(iii) for fixed $(w,v)$ with $0\le v\le w\le 1$, the robustness inequality

{\small
\begin{align*}
& \sum_{j=1}^K H_j(w)\,y_{f,j} + \sum_{j=1}^K H_j(v)\,y_{\psi,j} \le \\
& \ \ \rho \left[\tfrac{1}{\omega} \left(\sum_{j=1}^K h_j(w)\,y_{f,j}+\sum_{j=1}^K h_j(v)\,y_{\psi,j}-2\kappa\right)
+\tfrac{2\kappa}{T} \right]
-(1-w)(p_{\max}+2\gamma)- (1-v)(p_{\max}(c+\varepsilon)+2\delta)+cwp_{\min}
\end{align*}
}

is an affine halfspace in $(\mathbf y_f,\mathbf y_\psi)$. Intersecting these halfspaces over all $(w,v)$ and with the linear constraints yields a convex set. Hence $\mathcal R_f(\rho)$ is convex.

Therefore, both feasible sets are convex in their respective parameter spaces.
\end{proof}

\end{document}